\newcommand{\declarecolor}[2]{\definecolor{#1}{RGB}{#2}\expandafter\newcommand\csname #1\endcsname[1]{\textcolor{#1}{##1}}}
\newtheorem{definition}{Definition}
\newtheorem{theorem}{Theorem}
\newtheorem*{theorem*}{Theorem}
\newtheorem{lemma}{Lemma}
\newtheorem{fact}{Fact}
\newtheorem{corollary}{Corollary}
\newtheorem{example}{Example}
\newtheorem{assumption}{Assumption}
\newtheorem{claim}{Claim}
\newtheorem{proposition}{Proposition}
\newtheorem{remark}{Remark}
\newcommand{\reg}{\mathrm{Reg}}
\newcommand{\cmark}{\ding{51}}%
\newcommand{\xmark}{\ding{55}}%
\DeclareMathOperator*{\argmin}{argmin}
\DeclareMathOperator{\poly}{poly}
\def\+#1{\mathcal{#1}}
\def\-#1{\mathbb{#1}}
\newcommand{\notshow}[1]{{}}
\newcommand{\AutoAdjust}[3]{{ \mathchoice{ \left #1 #2  \right #3}{#1 #2 #3}{#1 #2 #3}{#1 #2 #3} }}
\newcommand{\Xcomment}[1]{{}}
\newcommand{\InParentheses}[1]{\AutoAdjust{(}{#1}{)}}
\newcommand{\InBrackets}[1]{\AutoAdjust{[}{#1}{]}}
\newcommand{\InAngles}[1]{\AutoAdjust{\langle}{#1}{\rangle}}
\newcommand{\InNorms}[1]{\AutoAdjust{\|}{#1}{\|}}
\renewcommand{\part}[2]{\frac{\partial #1}{\partial #2}}
\newcommand{\R}{\mathbbm{R}}
\newcommand{\X}{\mathcal{X}}
\newcommand{\Y}{\mathcal{Y}}
\newcommand{\bbP}{\mathbb{P}}
\newcommand{\bbE}{\mathbb{E}}
\newcommand{\KL}{\mathrm{KL}}
\newcommand{\prox}{\mathrm{prox}}
\newcommand{\lscc}{\mathrm{lsc,c}}
\newcommand{\sm}{\mathrm{sm}}
\newcommand{\lce}{equilibrium\xspace}
\newcommand{\Int}{\mathrm{int}}
\newcommand{\Proj}{\mathrm{proj}}
\newcommand{\Beam}{\mathrm{beam}}
\newcommand{\Conv}{\mathrm{conv}}
\newcommand{\Phiint}{\Phi^{\X}_{\Int}}
\newcommand{\Phiintone}{\Phi^{\X}_{\Int^+}}
\newcommand{\Phiproj}{\Phi^{\X}_{\Proj}}
\newcommand{\Phiprojeq}{\Phi_{\Proj}}
\newcommand{\Phiprox}{\Phi^{\X}_{\prox}}
\newcommand{\Phiproxeq}{\Phi_{\prox}}
\newcommand{\Phibeam}{\Phi^{\X}_{\Beam}}
\newcommand{\Phiinteq}{\Phi_{\Int}}
\newcommand{\Phibeameq}{\Phi_{\Beam}}
\newcommand{\Phiintoneeq}{\Phi_{\Int^+}}
\newcommand{\regproj}{\reg_{\Proj,\delta}}
\newcommand{\Dx}{D_{\X}}
\title{On Tractable $\Phi$-Equilibria in Non-Concave Games\thanks{Authors are listed in alphabetical order. A preliminary version of the paper has been accepted to NeurIPS 2024.}}
\author{
\qquad 
Yang Cai\thanks{Yale University. Email: \texttt{yang.cai@yale.edu}}\\
\and 
Constantinos Daskalakis\thanks{MIT CSAIL. Email:\texttt{costis@csail.mit.edu}}\\
\and 
Haipeng Luo\thanks{University of Southern California. Email: \texttt{haipengl@usc.edu}} \qquad \\
\and
Chen-Yu Wei\thanks{University of Virginia. Email: \texttt{chenyu.wei@virginia.edu}}\\
\and 
Weiqiang Zheng\thanks{Yale University. Email: \texttt{weiqiang.zheng@yale.edu}}\\
}
\begin{document}

\maketitle

\begin{abstract}%
While Online Gradient Descent and other no-regret learning procedures are known to efficiently converge to a coarse correlated equilibrium in games where each agent's utility is concave in their own strategy, this is not the case when utilities are non-concave -- a common scenario in machine learning applications involving strategies parameterized by deep neural networks, or when agents' utilities are computed by neural networks, or both. Non-concave games introduce significant game-theoretic and optimization challenges: (i) Nash equilibria may not exist; (ii) local Nash equilibria, though they exist, are intractable; and (iii) mixed Nash, correlated, and coarse correlated equilibria generally have infinite support and are intractable. To sidestep these challenges, we revisit the classical solution concept of $\Phi$-equilibria introduced by~\citet{greenwald2003general}, which is guaranteed to exist for an arbitrary set of strategy modifications $\Phi$ even in non-concave games~\citep{stoltz2007learning}. However, the tractability of $\Phi$-equilibria in such games remains elusive. 

In this paper, we initiate the study of tractable $\Phi$-equilibria in non-concave games and examine several natural families of strategy modifications. We show that when $\Phi$ is finite, there exists an efficient uncoupled learning algorithm that converges to the corresponding $\Phi$-equilibria. Additionally, we explore cases where $\Phi$ is infinite but consists of local modifications. We show that approximating local $\Phi$-equilibria beyond the first-order stationary regime is computationally intractable. In contrast, within this regime, learning $\Phi$-equilibria reduces to achieving low $\Phi$-regret in online learning with convex loss functions, and we show Online Gradient Descent efficiently converges to $\Phi$-equilibria for several natural infinite families of modifications. 

A byproduct of our convergence analysis is a new structural family of modifications inspired by the well-studied proximal operator, and we refer to the corresponding regret as \emph{proximal regret}. This set of modifications is rich, and we show that small proximal regret implies not only small external regret but also other desirable properties, such as marginal coverage in online conformal prediction. To our knowledge, this notion has not been previously studied, even in online convex optimization. Despite the complexity of handling a rich set of modifications, we prove that Online Gradient Descent achieves sublinear \emph{proximal regret} fore convex loss functions.
\end{abstract}

\tableofcontents
\newpage
\section{Introduction}
Von Neumann's celebrated minimax theorem establishes the existence of Nash equilibrium in all two-player zero-sum games where the players' utilities are continuous as well as {\em concave} in their own strategy~\citep{v1928theorie}.\footnote{Throughout this paper, we model games using the standard convention in  Game Theory that each player has a utility function that they want to maximize. This is, of course, equivalent to modeling the players as loss minimizers, a convention more common in learning. When we say that a player's utility is concave (respectively non-concave) in their strategy, this is the same as saying that the player's loss is convex (respectively non-convex) in their strategy.} This assumption that players' utilities are concave, or quasi-concave, in their own strategies has been a cornerstone for the development of equilibrium theory in Economics, Game Theory, and a host of other theoretical and applied fields that make use of equilibrium concepts. In particular, (quasi-)concavity is key for showing the existence of many types of equilibrium, from generalizations of min-max equilibrium~\citep{fan1953minimax, sion1958general} to competitive equilibrium in exchange economies~\citep{arrow1954existence, mckenzie1954equilibrium}, mixed Nash equilibrium in finite normal-form games~\citep{nash1950equilibrium}, and, more generally, Nash equilibrium in (quasi-)concave games~\citep{debreu1952social, rosen1965existence}. 

Not only are equilibria guaranteed to exist in concave games, but it is also well-established---thanks to a long line of work at the interface of game theory,  learning and optimization whose origins can be traced to Dantzig's work on linear programming~\citep{george_b_dantzig_linear_1963}, Brown and Robinson's work on fictitious play~\citep{brown1951iterative,robinson1951iterative}, Blackwell's approachability theorem~\citep{blackwell_analog_1956} and Hannan's consistency theory~\citep{hannan1957approximation}---that several solution concepts are efficiently {computable} both centrally and via decentralized learning dynamics. For instance, it is well-known that the learning dynamics produced when the players of a game iteratively update their strategies using no-regret learning algorithms, such as online gradient descent, is guaranteed to converge to Nash equilibrium in two-player zero-sum concave games, and to coarse correlated equilibrium in multi-player general-sum concave games~\citep{cesa2006prediction}. The existence of such simple decentralized dynamics further justifies using these solution concepts to predict the outcome of real-life multi-agent interactions where agents deploy strategies, obtain feedback, and use that feedback to update their strategies.

While (quasi-)concave utilities have been instrumental in the development of equilibrium theory, as described above, they are also too restrictive an assumption. Several modern applications and outstanding challenges in Machine Learning, from training Generative Adversarial Networks (GANs) to Multi-Agent Reinforcement Learning (MARL) as well as generic multi-agent Deep Learning settings where the agents' strategies are parameterized by deep neural networks or their utilities are computed by deep neural networks, or both, give rise to games where the agents' utilities are {\em non-concave} in their own strategies. We call these games {\em non-concave}, following~\cite{daskalakis2022non}.

Unfortunately, classical equilibrium theory quickly hits a wall in non-concave games.
First, Nash equilibria are no longer guaranteed to exist. Second, while mixed Nash, correlated and coarse correlated equilibria do exist---under convexity and compactness of the strategy sets~\citep{glicksberg1952further}, which we have been assuming all along in our discussion so far, they have infinite support, in general~\citep{karlin1959mathematical}. Finally, they are computationally intractable; so, a fortiori, they are also intractable to attain via decentralized learning dynamics.


In view of the importance of non-concave games in emerging ML applications and the afore-described state-of-affairs, our investigation is motivated by the following broad and largely open question:

\textbf{Question from~\citep{daskalakis2022non}:} \textit{Is there a theory of non-concave games? What solution concepts are meaningful, universal, and tractable?}

\begin{table}[t]
\caption{A comparison between different solution concepts in multi-player non-concave games. We include definitions of Nash equilibrium, mixed Nash equilibrium, (coarse) correlated equilibrium, strict local Nash equilibrium, and second-order local Nash equilibrium in \Cref{app:preliminaries}. We also give a detailed discussion on the existence and computational/representation complexity of these solution concepts in \Cref{app:preliminaries}.}
\label{tab:equilibrium notions}
\resizebox{\textwidth}{!}{%
\begin{tabular}{cccc}
\hline
Solution Concept                                                     & Incentive Guarantee                                                                                                                             & Existence                      & Complexity (to compute or check existence)                                                                                                                          \\ \hline
Nash equilibrium                                                     &                                                                                                                                                 & \cellcolor[HTML]{C0C0C0}\xmark & \cellcolor[HTML]{C0C0C0}NP-hard                                                                                                                                     \\ \cline{1-1} \cline{3-4} 
Mixed Nash equilibrium                                               &                                                                                                                                                 & \cmark                         & \cellcolor[HTML]{C0C0C0}NP-hard                                                                                                                                     \\ \cline{1-1} \cline{3-4} 
(Coarse) Correlated equilibrium                                      & \multirow{-3}{*}{Global stability}                                                                                                              & \cmark                         & \cellcolor[HTML]{C0C0C0}NP-hard                                                                                                                                     \\ \hline
Strict local Nash equilibrium                                        & \cellcolor[HTML]{EFEFEF}Local stability                                                                                                         & \cellcolor[HTML]{C0C0C0}\xmark & \cellcolor[HTML]{C0C0C0}NP-hard                                                                                                                                     \\ \hline
Second-order local Nash equilibrium                                  & \cellcolor[HTML]{EFEFEF}Second-order stability                                                                                                  & \cellcolor[HTML]{C0C0C0}\xmark & \cellcolor[HTML]{C0C0C0}NP-hard                                                                                                                                     \\ \hline
Local Nash equilibrium                                               & \cellcolor[HTML]{EFEFEF}First-order stability                                                                                                   & \cmark                         & \cellcolor[HTML]{C0C0C0}PPAD-hard                                                                                                                                   \\ \hline
\textbf{$\Phi_{\mathrm{finite}}$-equilibrium (\cref{sec: finite-phi})}                        & \cellcolor[HTML]{EFEFEF}\begin{tabular}[c]{@{}c@{}}Stability against\\ finite deviations\end{tabular}                                           & \cmark                         & \begin{tabular}[c]{@{}c@{}}Effifient $\varepsilon$-approximation\\ for any $\varepsilon > 0$\end{tabular}                                                         \\ \hline
\textbf{$\Conv(\Phi_{\mathrm{finite}}(\delta))$-equilibrium (\cref{sec:convex-phi})} & \cellcolor[HTML]{EFEFEF}                                                                                                                        & \cmark                         & \cellcolor[HTML]{EFEFEF}\begin{tabular}[c]{@{}c@{}}Effifient $\varepsilon$-approximation\\ for $\varepsilon = \Omega(\delta^2)$\end{tabular}                        \\ \cline{1-1} \cline{3-4} 
\textbf{$\Phi_{\mathrm{prox}}(\delta)$-equilibrium (\cref{sec:proximal regret})}                  & \cellcolor[HTML]{EFEFEF}                                                                                                                        & \cmark                         & \cellcolor[HTML]{EFEFEF}\begin{tabular}[c]{@{}c@{}}Effifient $\varepsilon$-approximation\\ via GD/OG for $\varepsilon = \Omega(\delta^2)$\end{tabular}              \\ \cline{1-1} \cline{3-4} 
\textbf{$\Phiinteq(\delta)$-equilibrium (\cref{sec:phi-int-regret minization})}                   & \multirow{-3}{*}{\cellcolor[HTML]{EFEFEF}\begin{tabular}[c]{@{}c@{}}First-order stability\\ when $\varepsilon = \Omega(\delta^2)$\end{tabular}} & \cmark                         & \cellcolor[HTML]{EFEFEF}\begin{tabular}[c]{@{}c@{}}Effifient $\varepsilon$-approximation\\ via no-regret learning for $\varepsilon = \Omega(\delta^2)$\end{tabular} \\ \hline
\textbf{$\Phiintoneeq(\delta)$-equilibrium (\cref{sec:first-order regime hardness})}                 & \cellcolor[HTML]{EFEFEF}\begin{tabular}[c]{@{}c@{}}High-order stability\\ when $\varepsilon = o(\delta^2)$\end{tabular}                         & \cmark                         & \cellcolor[HTML]{C0C0C0}\begin{tabular}[c]{@{}c@{}}NP-hard $\varepsilon$-approximation\\ for $\varepsilon = o(\delta^2)$\end{tabular}                               \\ \hline
\end{tabular}%
}
\end{table}

\begin{figure}[ht]
    \centering
    \caption{The relationship between different solution concepts in non-concave games. An arrow from one solution concept to another means the former is contained in the latter. The dashed arrow from $\Conv(\Phi(\delta))$-equilibria to $\Phi_{\mathrm{finite}}$-equilibria means the former is contained in the latter when $\Phi(\delta) = \Phi_{\mathrm{finite}}$. }
    \includegraphics[width=0.5\textwidth]{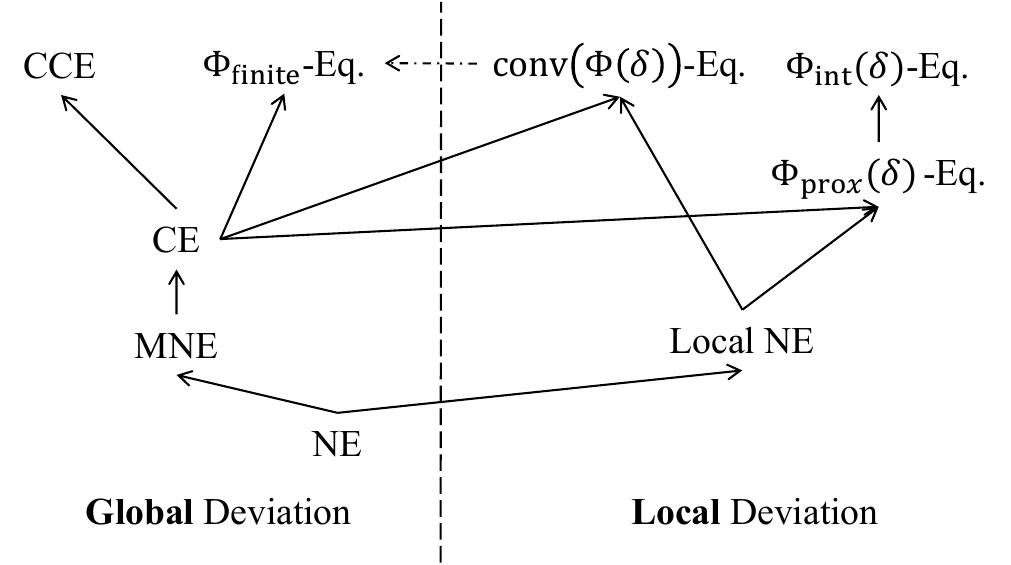}
    \label{fig:equilibria}
\end{figure}


\subsection{Contributions}
We study Daskalakis' question through the lens of the classical solution concept of $\Phi$-equilibria introduced by~\citet{greenwald2003general}. This concept is guaranteed to exist for virtually any set of strategy modifications $\Phi$, even in non-concave games, as demonstrated by~\citet{stoltz2007learning}.\footnote{\citet{stoltz2007learning} only require the elements of $\Phi$ to be measurable functions.} However, the tractability of $\Phi$-equilibria in such games remains elusive. In this paper, we initiate the study of tractable $\Phi$-equilibria in non-concave games and examine several natural families of strategy modifications. 

\paragraph{$\Phi$-Equilibrium.} 
The concept of $\Phi$-equilibrium generalizes (coarse) correlated equilibrium. A $\Phi$-equilibrium is a joint distribution over $\Pi_{i=1}^n \X_i$, the Cartesian product of all players' strategy sets, and is defined in terms of a set, $\Phi^{\X_i}$, of \emph{ strategy modifications}, for each player $i$. The set~$\Phi^{\X_i}$ contains functions mapping $\X_i$ to itself. A joint distribution over strategy profiles qualifies as a $\Phi = \Pi_{i=1}^n \Phi^{\mathcal{X}_i}$-equilibrium if no player $i$ can increase their expected utility by using any strategy modification function, $\phi_i \in \Phi^{\mathcal{X}_i}$, on the strategy sampled from the joint distribution. The larger the set $\Phi$, the stronger the incentive guarantee offered by the $\Phi$-equilibrium. For example, if $\Phi^{\X_i}$ contains all constant functions, the corresponding $\Phi$-equilibrium coincides with the notion of coarse correlated equilibrium. Throughout the paper, we also consider $\varepsilon$-approximate $\Phi$-equilibria, where no player can gain more than $\varepsilon$ by deviating using any function from $\Phi^{\mathcal{X}_i}$. We study several families of $\Phi$ and illustrate their relationships in \Cref{fig:equilibria}.

\paragraph{Finite Set of Global Deviations.} The first case we consider is when each player $i$'s set of strategy modifications, $\Phi^{\mathcal{X}_i}$, contains a finite number of arbitrary functions mapping $\mathcal{X}_i$ to itself. As shown in~\citep{greenwald2003general}, if there exists an online learning algorithm where each player $i$ is guaranteed to have sublinear $\Phi^{\mathcal{X}_i}$-regret, the empirical distribution of joint strategies played converges to a $\Phi = \Pi_{i=1}^n \Phi^{\mathcal{X}_i}$-equilibrium. \citet{gordon2008no} consider $\Phi$-regret minimization but for concave reward functions, and their results, therefore, do not apply to non-concave games. \citet{stoltz2007learning} provide an algorithm that achieves no $\Phi^{\mathcal{X}_i}$-regret in non-concave games; however, their algorithm requires a fixed-point computation per step, making it computationally inefficient.\footnote{The existence of the fixed point is guaranteed by the Schauder-Cauty fixed-point theorem~\citep{cauty_solution_2001}, a generalization of the Brouwer fixed-point theorem. Hence, it's unlikely such fixed points are tractable.} Our first contribution is to provide an efficient randomized algorithm that achieves no $\Phi^{\X_i}$-regret for each player $i$ with high probability. 

\medskip\noindent
    \hspace{.1in}\begin{minipage}{0.95\textwidth}\textbf{Contribution 1:} Let $\mathcal{X}$ be a strategy set (not necessarily compact or convex), and $\Psi$ an arbitrary finite set of strategy modification functions for $\mathcal{X}$. We design a randomized online learning algorithm that achieves $O\left(\sqrt{T \log |\Psi|}\right)$ $\Psi$-regret, with high probability, for \emph{arbitrary} bounded reward functions on $\mathcal{X}$ (\Cref{theorem:finite-phi-regret}). The algorithm operates in time $\sqrt{T}|\Psi|$ per iteration. If every player in an $n$-player \emph{non-concave} game adopts this algorithm, the empirical distribution of strategy profiles played forms an $\varepsilon$-approximate $\Phi = \Pi_{i=1}^n \Phi^{\mathcal{X}_i}$-equilibrium, with high probability, for any $\varepsilon > 0$, after $\frac{1}{\varepsilon^2} \cdot \poly\left(\log \left(\max_i |\Phi^{\mathcal{X}_i}|\right), \log n\right)$ iterations.
\end{minipage}
\medskip

A notable feature of our result is that it only requires the loss functions to be bounded. In contrast, existing results for efficient $\Phi$-regret minimization typically assume stronger conditions, such as continuity, convexity, or smoothness.

If players have a continuous action set with  infinitely many global strategy modifications, we can extend \Cref{alg:phi-reg} by discretizing the set of strategy modifications under mild assumptions, such as the modifications being Lipschitz (\Cref{cor:infinite global deviations via discretization}). The empirical distribution of the strategy profiles still converges to the corresponding $\Phi$-equilibrium, but at a much slower rate of $O(T^{-\frac{1}{d+2}})$, where $d$ is the dimension of the set of strategies. Additionally, the algorithm requires exponential time per iteration, making it inefficient. This inefficiency is unavoidable, as the problem remains intractable even when $\Phi$ contains only constant functions.

To address the limitations associated with infinitely large global strategy modifications, a natural approach is to focus on local deviations instead. The corresponding $\Phi$-equilibrium will guarantee local stability. The study of local equilibrium concepts in non-concave games has received significant attention in recent years---see e.g.,~\citep{ratliff2016characterization,hazan2017efficient,daskalakis2018limit,jin2020local,daskalakis2021complexity}. However, these solution concepts either are not guaranteed to exist, are restricted to sequential two-player zero-sum games~\citep{mangoubi2021greedy}, only establish local convergence guarantees for learning dynamics---see e.g.,~\citep{daskalakis2018limit,wang2020solving,fiez2020implicit}, only establish asymptotic convergence guarantees---see e.g.,~\citep{daskalakis2023stay}, or involve non-standard solution concepts where local stability is not with respect to a distribution over strategy profiles~\citep{hazan2017efficient}. 

We study the tractability of $\Phi$-equilibrium with infinitely large $\Phi$ sets that consist solely of local strategy modifications. A strategy modification $\phi \in \+X^\+X$ is $\delta$-local if $\InNorms{\phi(x)-x}\le \delta$ for all $x \in \+X$. These local solution concepts are guaranteed to exist in general multi-player non-concave games. Specifically, we focus on the following three families of natural deviations.

\begin{itemize}[leftmargin=*]
    \item[-] \textbf{Proximal Operator based Local Deviations:} 
    Each player $i$'s set of strategy deviations, denoted by $\Phi_\prox^{\+X_i}(\delta)$, contains all $\delta$-local deviations induced by the proximal operator \[\phi_f(x) = \prox_f(x):= \argmin_{x' \in \+X_i} \left\{ f(x') + \frac{1}{2}\InNorms{x'-x}^2 \right\},\] where $f$ is an arbitrary convex or smooth function. See \Cref{dfn:proximal operator} for the formal definition and more details on the notion of \emph{proximal regret} $\reg_f^T$. The deviation set \(\Phi_\prox^{\+X_i}(\delta)\) is rich, encompassing many interesting deviations. For instance, it contains the set of all $\delta$-local deviations based on projections $\Phi_{\text{Proj}}^{\X_i}(\delta)$. Specifically, $\Phi_{\text{Proj}}^{\X_i}(\delta)$ includes deviations of the form: $\phi_v(x) = \Pi_{\X_i}[x-v]$, where $\InNorms{v}\leq \delta$ and $\Pi_{\X_i}$ stands for the $\ell_2$-projection onto $\X_i$. In other words, $\Phi_{\text{Proj}}^{\X_i}(\delta)$ contains all deviations that attempt a small step from their input in a fixed direction and project if necessary. By choosing the function $f$ appropriately, we can show that $\Phi_\prox^{\+X_i}(\delta)$ also contains other interesting deviations. 
    
    \item[-] \textbf{Convex Combination of Finitely Many Local Deviations:} Each player $i$'s set of strategy modifications, denoted by $\Conv(\Phi^{\mathcal{X}_i}(\delta))$, contains all deviations that can be represented as a convex combination of a finite set of $\delta$-local strategy modifications, i.e., $\|\phi(x) - x\| \leq \delta$ for all $\phi \in \Phi^{\mathcal{X}_i}(\delta)$.
    
    \item [-] \textbf{Interpolation based Local Deviations:} each player $i$'s set of local strategy modifications, denoted by $\Phi^{\X_i}_{\Int, \Psi}(\delta)$, that contains all deviations that \emph{interpolate} between the input strategy and the strategy modification by $\Psi$. Formally, each element $\phi_{\lambda, \Psi}(x)$ of $\Phi^{\X_i}_{\Int, \Psi}(\delta)$ can be represented as $(1 - \lambda)x + \lambda \psi(x)$ for some $\psi \in \Psi$ and  $\lambda\leq \delta / D_{\X_i}$ ($D_{\X_i}$ is the diameter of $\X_i$). When $\Psi$ contains only constant functions, we denote the corresponding $\Phi^{\X_i}_{\Int, \Psi}(\delta)$ simply as $\Phi^{\X_i}_{\Int}(\delta)$. We remark that $\Phi^{\X_i}_{\Int}(\delta)\subseteq \Phi_{\text{prox}}^{\X_i}(\delta)$. See Section~\ref{sec:phi-int-regret minization} for details. 
\end{itemize}    

 For our three families of local strategy modifications, we explore the tractability of $\Phi$-equilibrium within a regime we term the \emph{first-order stationary regime}, where $\varepsilon = \Omega(\delta^2)$,\footnote{The regime $\varepsilon = \Omega(\delta)$ is trivial when the utility is Lipschitz.} with $\delta$ representing the maximum deviation allowed for a player. An $\varepsilon$-approximate $\Phi$-equilibrium in this regime ensures first-order stability. This regime is particularly interesting for two reasons:
(i) \citet{daskalakis2021complexity} have demonstrated that computing an $\varepsilon$-approximate $\delta$-local Nash equilibrium in this regime is intractable.\footnote{A strategy profile is considered an $\varepsilon$-approximate $\delta$-local Nash equilibrium if no player can gain more than $\varepsilon$ by deviating within a $\delta$ distance.} This poses an intriguing question: can correlating the players' strategies, as in a $\Phi$-equilibrium, potentially make the problem tractable?
(ii) Extending our algorithm, initially designed for finite sets of strategy modifications, to these three sets of local deviations results in inefficiency; specifically, the running time becomes exponential in one of the problem's natural parameters. Designing efficient algorithms for this regime thus presents challenges. 

To address these challenges, we first present a reduction showing that, in an $L$-smooth, possibly non-concave game (see \Cref{assumption:smooth games}), computing an $\varepsilon$-approximate $\Phi(\delta)$-equilibrium reduces to minimizing $\Phi^{\mathcal{X}_i}(\delta)$-regret against \emph{linear} losses, provided that $\varepsilon > \frac{\delta^2 L}{2}$ (i.e., within the first-order stationary regime). This reduction leverages a local first-order approximation of $L$-smooth functions, which incurs at most an approximation error of $\frac{\delta^2 L}{2}$. Thus, under the first-order stationary regime condition $\varepsilon > \frac{\delta^2 L}{2}$, it suffices to consider only linear losses when bounding $\Phi^{\mathcal{X}_i}(\delta)$-regret.

However, given that our strategy modifications are non-standard, it is a priori unclear how to minimize the corresponding $\Phi$-regret. For instance, to our knowledge, no algorithm is known to minimize $\Phiprox$-regret even when the loss functions are linear. Moreover, minimizing $\Phiprox$-regret is provably more challenging than minimizing external regret, in the sense that small $\Phiprox$-regret implies small external regret, but not vice versa. 

More specifically, proximal regret introduces a new notion of regret that lies between external regret and swap regret. Despite its generality, we prove that Online Gradient Descent achieves sublinear proximal regret.

\medskip\noindent
    \hspace{.1in}
    \begin{minipage}{0.95\textwidth}\textbf{Contribution 2:} We show that \hyperref[GD]{Online Gradient Descent (GD)} minimizes proximal regret in online convex optimization. Specifically, \hyperref[GD]{GD} guarantees $O(\sqrt{T})$ proximal regret for 
    for all convex / smooth functions $f$ simultaneously 
    (See \Cref{theorem:GD proximal regret} for formal statements).
\end{minipage}
\medskip
\noindent

Our result significantly extends the classical result of \hyperref[GD]{Online Gradient Descent}~\citep{zinkevich2003online} by showing that \hyperref[GD]{GD} not only minimizes external regret but also the more general proximal regret. This demonstrates that the sequence produced by \hyperref[GD]{GD} is not merely a no-external-regret sequence but also satisfies additional constraints. Moreover, our result implies that \hyperref[GD]{GD} dynamics in concave games converge to a refined subset of coarse correlated equilibria, which we refer to as \emph{proximal correlated equilibria}.\footnote{The general idea of providing tighter guarantees for \hyperref[GD]{GD} dynamics and refined classes of CCE is also explored in \citep{ahunbay2024first} but our results are different in various aspects. See \Cref{app:comparison} for details.} 

Our analysis introduces a novel perspective on \hyperref[GD]{Online Gradient Descent}, which we further extend to Optimistic Gradient Descent (\ref{OG}), yielding improved instance-dependent regret bounds (\Cref{thm:adversaril regret of OG}) and faster convergence in the game setting (\Cref{thm:game regret of OG}).

Using the non-concave to linear reduction, we further prove efficient equilibrium computation results for other classes of local strategy modifications.

\medskip\noindent
    \hspace{.1in}
    \begin{minipage}{0.95\textwidth}\textbf{Contribution 3:} For any $\delta > 0$, for each of the three families of infinite $\delta$-local strategy modifications mentioned above, there exists an efficient uncoupled learning algorithm that converges to an $\varepsilon$-approximate $\Phi$-equilibrium of the non-concave game in the first-order stationary regime, i.e., $\varepsilon = \Omega(\delta^2)$.
\end{minipage}
\medskip
\noindent 

We present our results for the proximal operator-based local deviation in \Cref{corollary:local proximal equilibrium} and \Cref{thm:game regret of OG}. Our result for the convex combination of local deviations can be found in \Cref{theorem:convex finite-phi-regret}. \Cref{thm:lce_int} contains our result for the interpolation-based local deviations. We remark that our consideration of strategy modifications based on proximal operators was influenced by the directions explored in~\citet{ahunbay2024first}. However, the two works study different and incomparable sets of modifications in several important respects. Moreover, our results and techniques are fundamentally distinct.\footnote{The first two versions of the paper only contain results for $\Phiproj(\delta)$ and $\Phiint(\delta)$. Following our initial version,~\citep{ahunbay2024first} proposed a generalized set of strategy modifications based on gradient fields, which includes both $\Phiproj(\delta)$ and $\Phiint(\delta)$ as special cases. His work encouraged us to consider a broader family of strategy modifications based on proximal operators, which we present here. While the two approaches are different and incomparable, the key distinction is that \citet{ahunbay2024first} studies a different notion of equilibrium and does not address $\Phi$-regret in the adversarial online learning setting or the approximation of standard $\Phi$-equilibria (\Cref{def:local CE}), which are the main focus of our work.  A detailed comparison with the results in \citet{ahunbay2024first} is provided in \Cref{app:comparison}.
}

Given that our results provide efficient uncoupled algorithms to compute $\varepsilon$-approximate $\Phi(\delta)$-equilibria in the first-order stationary regime $\varepsilon = \Omega(\delta^2)$, a natural question arises: \begin{equation*}
    \textnormal{\textit{Can we approximate $\Phi$-equilibria within $\varepsilon = o(\delta^2)$ efficiently?}}
\end{equation*} We answer this question in the negative: it is in fact NP-hard to achieve $\varepsilon = o(\delta^2)$. This hardness result further illustrates the necessity to consider the $\varepsilon=\Omega(\delta^2)$ regime. 

\medskip\noindent
    \hspace{.1in}
    \begin{minipage}{0.95\textwidth}\textbf{Contribution 4:} We show that the first-order stationary regime, i.e., $\varepsilon = \Omega(\delta^2)$, is the best one can hope for in terms of efficient computation of an \(\varepsilon\)-approximate \(\Phi(\delta)\)-equilibrium when \(\Phi(\delta)\) contains all \(\delta\)-local strategy modifications. Specifically, in \Cref{thm:hardnessFOSall swap}, we prove that, unless \(\textsc{P} = \textsc{NP}\), for any polynomial function $p(d,n,G,L)$ over $d$, $n$, $G$, and $L$, and for any \(\varepsilon \le p(d,n,G,L) \cdot \delta^{2+c}\)  with any constant \(c > 0\), there is no algorithm with running time \(\text{poly}(d,n, G, L,1/\varepsilon, 1/\delta)\) that can find an \(\varepsilon\)-approximate \(\Phi(\delta)\)-equilibrium. Here, $G$ is the Lipschitzness, and $L$ is the smoothness of the players' utilities. This result holds even for a single-player game over $\+X \subseteq [0,1]^d$. Moreover, we show in \Cref{thm:hardnessFOS restricted} that a similar hardness result holds even for a very restricted set of strategy modifications $\Phiintone(\delta)$ that is the union of $\Phiint(\delta)$ and one additional strategy modification.
\end{minipage}
\medskip
\noindent

We remark that existing hardness results for local maximizer \citep{daskalakis2021complexity} could be extended to approximate $\Phi(\delta)$-equilibrium. However, these results only apply to the ``global regime", where $\delta$ equals $D_\+X$, the diameter of the action set.\footnote{In \Cref{sec:hardness in global regime}, we also prove NP-hardness of $\varepsilon$-approximate $\Phiprojeq(\delta)$-equilibrium ($\Phiinteq(\delta)$-equilibrium) when $\delta = D_{\+X}$ and $\varepsilon = 1$.} In contrast, our hardness results hold for a range of $\delta$ and rule out efficient approximation for any $\varepsilon = o(\delta^2)$, which cannot be directly derived from results that concern the ``global regime'' \citep{daskalakis2021complexity}. Our analysis is also novel and different from \citep{daskalakis2021complexity}: we construct a reduction from the NP-hard maximum clique problem using the Motzkin-Straus Theorem~\citep{motzkin_maxima_1965}. As a byproduct of our analysis, we also obtain new hardness results for computing approximate local maximizer and approximate local Nash equilibrium (\Cref{corollary:local maximizer hardness}). See \Cref{fig:complexity} for a summary.

\begin{figure}[ht]
    \centering
    \includegraphics[width=.99\textwidth]{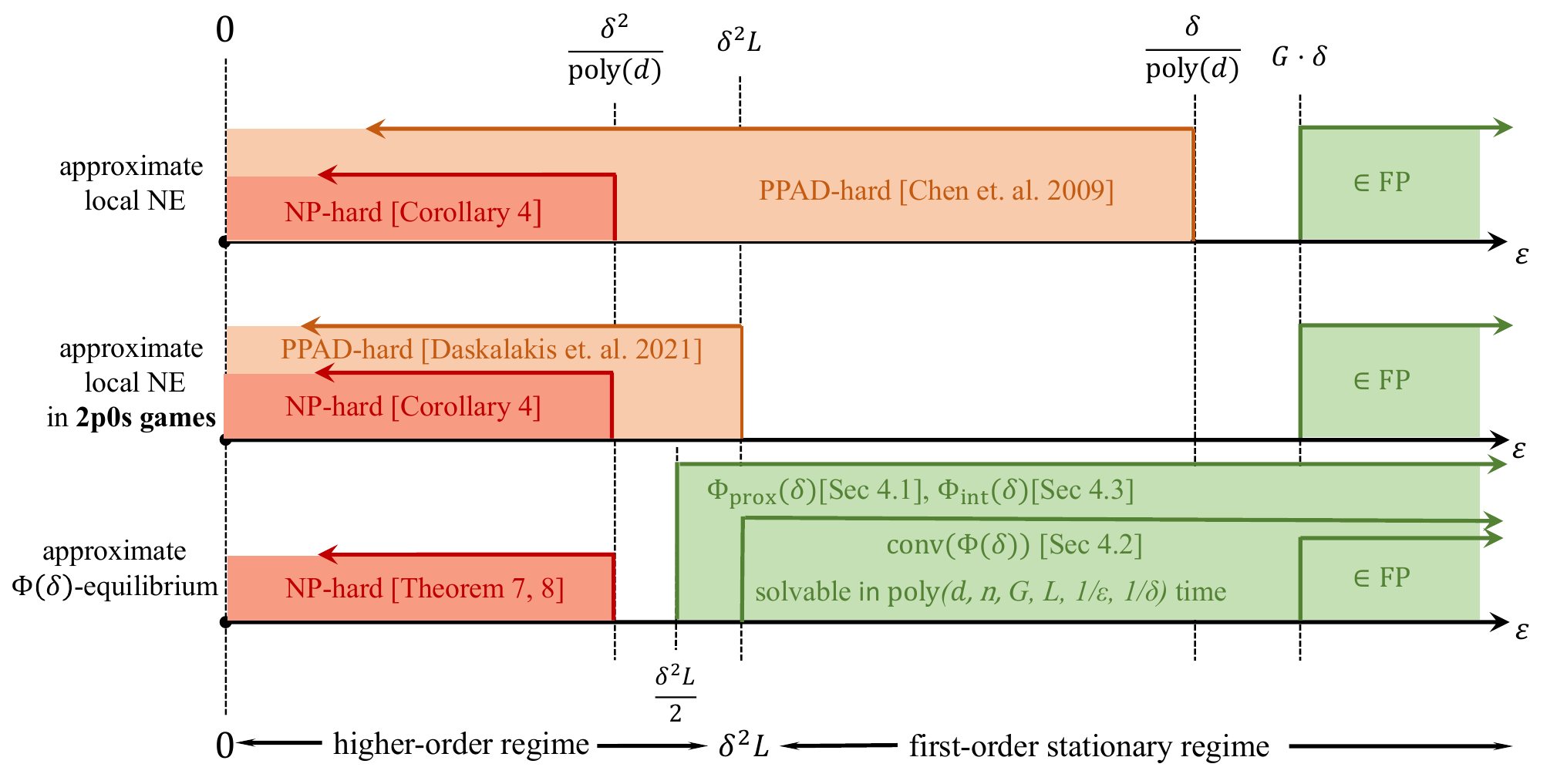}
    \caption{Complexity of computing an $\varepsilon$-approximate $\delta$-local Nash equilibrium and $\varepsilon$-approximate $\Phi(\delta)$-equilibrium in $G$-Lipschitz and $L$-smooth $d$-dimensional games. We consider cases where $G, L= O(\poly(d))$.
    The regime $\varepsilon \ge G\delta$ is trivial since the game is $G$-Lipschitz. The PPAD-hardness of approximate local Nash equilibrium follows from approximate (global) Nash equilibrium in bimatrix games due to linearity of the utility function~\citep{chen2009settling}. The PPAD-hardness of approximate local Nash equilibrium in two-player zero-sum games is proved in~\citep{daskalakis2021complexity}. The NP-hardness of $\varepsilon$-approximate $\Phi(\delta)$-equilibrium is proven for $\Phi_{\mathrm{All}}(\delta)$ (\Cref{thm:hardnessFOSall swap}) and $\Phi_{\mathrm{Int}^+}(\delta)$ (\Cref{thm:hardnessFOS restricted}) in \Cref{sec:first-order regime hardness}. The NP-hardness of $\varepsilon$-approximate $\delta$-local Nash equilibrium is implied by \Cref{corollary:local maximizer hardness}. The positive results for $\varepsilon$-approximate $\Phi(\delta)$-equilibrium in the first-order stationary regime hold for $\Phiproxeq(\delta)$ (\Cref{sec:proximal regret}),  $\Phiinteq(\delta)$ (\Cref{sec:phi-int-regret minization}), and $\Conv(\Phi(\delta))$ when $|\Phi(\delta)|$ is finite (\Cref{sec:convex-phi}).}
    \label{fig:complexity}
\end{figure}

Further related work is discussed in \Cref{app:related works}.

\section{Preliminaries}
We use $\|\cdot\|$ for $\ell_2$-norm throughout and specify other norms with subscripts. A ball of radius $r > 0$ centered at $x \in \-R^d$ is denoted by $B_d(x, r) := \{ x' \in \R^d: \InNorms{x - x'} \le r \}$.  We also write $B_d(\delta)$ for a ball centered at the origin with radius $\delta$. For $a \in \R$, we use $[a]^+$ to denote $\max\{0, a\}$. For a set $\+X$, we use $\Delta(\+X)$ to denote the set of distributions over $\+X$. For a set $\+X \subseteq \-R^d$,  we use $\Dx = \max_{x, x'}\InNorms{x-x'}$ to denote its diameter.

\paragraph{Games} An $n$-player game has a set of $n$ players $[n] := \{1,2,\ldots, n\}$. Each player $i$ has a strategy set $\+X_i$, and we use $x_i \in \+X_i$ to denote one strategy. We note that the strategy set could be discrete or continuous. We denote a joint strategy profile as $x = (x_i, x_{-i}) \in \Pi_{j=1}^n \+X_j$. The utility of each player $i$ is determined by a utility function $u_i: \Pi_{j=1}^n \+X_j \rightarrow [0,1]$. We denote a game instance as $\+G = \{ [n], \{\+X_i\}, \{u_i\} \}$. 

\paragraph{$\Phi$-\lce}
We introduce the concept of $\Phi$-\lce and its relationship with online learning and $\Phi$-regret minimization. For a strategy set $\+Y$, we use $\Phi^\+Y$ to denote a set of \emph{strategy modifications}, i.e., mappings from $\+Y$ to itself.

\begin{definition}[$\Phi$-\lce~\citep{greenwald2003general,stoltz2007learning}]\label{def:local CE} 
In a game $\+G = \{ [n], \{\+X_i\}, \{u_i\} \}$, let $\Phi = \Pi_{i=1}^n\Phi^{\+X_i}$ be a profile of strategy modifications. A distribution over joint strategy profiles $\sigma \in \Delta(\Pi_{i=1}^n \X_i)$ is an $\varepsilon$-approximate $\Phi$-\lce for $\varepsilon \ge 0$  if and only if for all player $i \in [n]$, 
\[\max_{\phi\in \Phi^{\+X_i}} 
\-E_{x \sim \sigma} \InBrackets{u_i(\phi(x_i), x_{-i})} \le \-E_{x \sim \sigma} \InBrackets{u_i(x)} + \varepsilon.\]
When $\varepsilon = 0$, we call $\sigma$ a $\Phi$-equilibrium.
\end{definition}

\paragraph{Online Learning and $\Phi$-regret} We consider the standard online learning setting: at each day $t \in [T]$, the learner chooses an action $x^t$ from an action set $\X$ and the adversary chooses a loss function $f^t: \X \rightarrow \R$, then the learner suffers a loss $f^t(x^t)$ and receives feedback. 
The classic goal of an online learning algorithm is to minimize the \emph{external regret} defined as $
\reg^T :=\max_{x\in \X}\sum_{t=1}^T (f^t(x^t) - f^t(x)).$ 
An algorithm is called \emph{no-regret} if its external regret is sublinear in $T$. The notion of $\Phi$-regret generalizes external regret by allowing more general strategy modifications.
\begin{definition}[$\Phi$-regret]
\label{def:phi-regret}
    Let $\Phi$ be a set of strategy modification functions $\{\phi: \X \rightarrow \X\}$. For $T \ge 1$, the $\Phi$-regret of an online learning algorithm is 
        $\reg_\Phi^T := \max_{\phi \in \Phi} \sum_{t=1}^T \InParentheses{ f^t(x^t) -  f^t(\phi(x^t))}.$
     An algorithm is called \emph{no $\Phi$-regret} if its $\Phi$-regret is sublinear in $T$.
\end{definition}
The framework of $\Phi$-regret generalizes many classical notions of regret. For example, the external regret is $\Phi_{\mathrm{ext}}$-regret where $\Phi_{\mathrm{ext}}$ contains all constant strategy modifications $\phi_{x^*}(x) = x^*$ for all $x^*\in \X$. The classic notion of \emph{swap regret}~\citep{blum_external_2007} defined on the simplex $\Delta([m])$, corresponds to $\Phi_{\mathrm{swap}}$-regret where $\Phi_{\mathrm{swap}}$ contains all linear transformations. In this paper, we also use swap regret to refer to the more general notion of $\Phi_{\mathrm{All}}$-regret where $\Phi_{\mathrm{All}}$ contains all the strategy modifications over any strategy set $\+X$. 

A fundamental result for learning in games is that when each player $i$ employs a no-$\Phi_i$-regret learning algorithm in a game, then the empirical distribution of their strategy profiles converges to the set of approximate $\Phi$-equilibria~\citep{greenwald2003general}. As a result, computing an approximate $\Phi$-equilibrium often reduces to designing a no $\Phi$-regret online learning algorithm. 
\begin{theorem}[\citep{greenwald2003general}]
\label{theorem: no-phi-regret-2-phi-eq}
    If each player $i \in [n]$ has $\Phi_i$-regret that is upper bounded by $\reg^T_{\Phi_i}$, then their empirical distribution of strategy profiles played is an $(\max_{i\in [n]} \reg^T_{\Phi_i} / T)$-approximate $\Phi$-equilibrium.
\end{theorem}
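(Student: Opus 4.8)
\textbf{Proof proposal for \Cref{theorem: no-phi-regret-2-phi-eq}.}

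The plan is to unfold the definitions of $\Phi$-regret and of an $\varepsilon$-approximate $\Phi$-equilibrium and observe that they match, once we identify the empirical distribution $\sigma^T := \frac{1}{T}\sum_{t=1}^T \delta_{x^t}$ with the uniform measure over the realized play $(x^1,\dots,x^T)$, where $x^t = (x^t_i, x^t_{-i})$. In learning-in-games, each player $i$ runs their own no-$\Phi_i$-regret algorithm, and from player $i$'s vantage point the sequence of loss (here: negated utility) functions is $f^t_i(\cdot) := -u_i(\cdot, x^t_{-i})$, i.e. the utility of $i$ as a function of their own strategy with the opponents' realized actions plugged in. (The sign flip is the utility-vs-loss convention flagged in the paper's footnote; I will just phrase everything directly in terms of utilities to avoid carrying minus signs.) The key observation is that both sides of the $\Phi$-equilibrium inequality are \emph{averages over $t$} of quantities that appear inside the $\Phi_i$-regret, so the bound transfers termwise.

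Concretely, the steps I would carry out are: (1) Fix a player $i$ and an arbitrary modification $\phi \in \Phi^{\X_i}$. Write out $\bbE_{x\sim\sigma^T}[u_i(\phi(x_i), x_{-i})] - \bbE_{x\sim\sigma^T}[u_i(x)] = \frac{1}{T}\sum_{t=1}^T \big(u_i(\phi(x^t_i), x^t_{-i}) - u_i(x^t_i, x^t_{-i})\big)$, using that $\sigma^T$ is the empirical distribution over the tuples $(x^t_i, x^t_{-i})$ and that applying $\phi$ only to the $i$-th coordinate commutes with the expectation. (2) Recognize the right-hand sum as exactly $\frac{1}{T}\sum_{t=1}^T (f^t_i(\phi(x^t_i)) - f^t_i(x^t_i))$ with $f^t_i = -u_i(\cdot,x^t_{-i})$, which is $-\frac{1}{T}$ times a single term in the max defining $\reg^T_{\Phi_i}$; hence it is at most $\frac{1}{T}\reg^T_{\Phi_i}$. (3) Since $\phi$ was arbitrary, take the max over $\phi \in \Phi^{\X_i}$ to get $\max_{\phi\in\Phi^{\X_i}} \bbE_{x\sim\sigma^T}[u_i(\phi(x_i),x_{-i})] \le \bbE_{x\sim\sigma^T}[u_i(x)] + \reg^T_{\Phi_i}/T$. (4) Since this holds for every player $i$, and $\reg^T_{\Phi_i}/T \le \max_{i\in[n]} \reg^T_{\Phi_i}/T =: \varepsilon$, the empirical distribution $\sigma^T$ satisfies \Cref{def:local CE} with that $\varepsilon$. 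This is the whole argument.

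There is essentially no genuine obstacle here — the statement is a bookkeeping translation between the online-learning and game-theoretic languages — so the "hard part" is purely a matter of being careful with two bookkeeping points: first, matching conventions (the learner minimizes loss $f^t$ while the game is stated with utilities to be maximized, so the inequality direction must be tracked through the sign flip), and second, being precise that the loss sequence faced by player $i$ is the \emph{realized} one $f^t_i = -u_i(\cdot, x^t_{-i})$ determined by the opponents' actual plays, which is legitimate because $\Phi_i$-regret bounds hold against adversarial loss sequences and in particular against this one. One might also note for completeness that the bound is deterministic given the realized trajectory; if the algorithms are randomized, the same inequality holds along every sample path, and one can additionally take expectations or invoke high-probability regret bounds if a statement about the distribution of $\sigma^T$ is desired, but the theorem as stated is just the per-trajectory claim.
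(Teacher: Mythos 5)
Your proof is correct and is essentially the same argument the paper relies on: the paper cites this result from \citet{greenwald2003general} without reproving it, and the identical bookkeeping (writing the deviation benefit under the empirical distribution as the time-average of per-round deviation gains, bounding it by $\reg^T_{\Phi_i}/T$, then maximizing over $\phi$ and taking the worst player) appears verbatim in the paper's proof of \Cref{lemma:no-regret-2-CE} in \Cref{app:lemma:no-regret-2-CE}. Your care with the loss/utility sign convention and the realized loss sequence $f^t_i = -u_i(\cdot, x^t_{-i})$ is exactly the right way to make the translation precise.
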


\section{Tractable $\Phi$-Equilibrium for Finite $\Phi$ via Sampling}
\label{sec: finite-phi}
In this section, we revisit the problem of computing and learning an $\Phi$-equilibrium in non-concave games when each player's set of strategy modifications  $\Phi^{\X_i}$ is finite. 

The pioneering work of \citet{stoltz2007learning} gives a no-$\Phi$-regret algorithm for this case where each player chooses a distribution over strategies in each round. This result also implies convergence to $\Phi$-equilibrium. However, the algorithm by \citet{stoltz2007learning} is not computationally efficient. In each iteration, their algorithm requires computing a distribution that is stationary under a transformation that can be represented as a mixture of the modifications in $\Phi$. The existence of such a stationary distribution is guaranteed by the Schauder-Cauty fixed-point theorem~\cite{cauty_solution_2001}, but the distribution might require exponential support and be intractable to find.

Our main result in this section is an efficient $\Phi$-regret minimization algorithm (\Cref{alg:phi-reg}) that circumvents the step of the exact computation of a stationary distribution. Consequently, our algorithm also ensures efficient convergence to a $\Phi$-equilibrium when adopted by all players. 

\begin{algorithm}[!ht]
    \KwIn{$x_{\text{root}} \in \X$, $h \ge 2$, an external regret minimization algorithm $\mathfrak{R}_{\Phi}$ over $\Phi$} 
    \KwOut{A $\Phi$-regret minimization algorithm for $\X$}
    \caption{$\Phi$-regret minimization for non-concave reward via sampling}
    \label{alg:phi-reg}
    \Fn{\textsc{NextStrategy()}}{
    $p^t \leftarrow$ $\mathfrak{R}_\Phi$.\textsc{NextStrategy}(). Note that $p^t$ is a distribution over $\Phi$.\\
    \textbf{return} $x^t \leftarrow$ \textsc{SampleStrategy}($x_{\text{root}}, h, p^t$) using \Cref{alg:sample}.}
    \Fn{\textsc{ObserveReward}$(u^t(\cdot))$}{
    Set $u^t_{\Phi}(\phi) = u^t(\phi(x^t))$ for all $\phi \in \Phi$. \\
    $\mathfrak{R}_{\Phi}$.\textsc{ObserveReward}($u^t_\Phi(\cdot)$).
    }
\end{algorithm}
\begin{algorithm}[!ht]
    \KwIn{$x_{\text{root}} \in \X$, $h \ge 2$, $p^t \in \Delta(\Phi)$ } 
    \KwOut{$x \in \X$}
    \caption{\textsc{SampleStrategy}}
    \label{alg:sample}
    $x_1 \leftarrow x_{\text{root}}$.\\
    \For{$2 \le k \le h$}{
    $\phi \leftarrow$ sample form $\Phi$ according to $p^t$. \\
    $x_{k} = \phi(x_{k-1})$.
    }
    \textbf{return} $x$ from $\{x_1, \ldots, x_h\}$ uniformly at random.
\end{algorithm}

\begin{theorem}
\label{theorem:finite-phi-regret} Let $\X$ be an arbitrary set, $\Phi$ be an arbitrary finite set of strategy modification functions for $\X$, and $u^1(\cdot),\ldots, u^T(\cdot)$ be an arbitrary sequence of possibly non-concave reward functions from $\X$ to $[0,1]$. If we instantiate \Cref{alg:phi-reg} with the Hedge algorithm as the regret minimization algorithm $\mathfrak{R}_{\Phi}$ over $\Phi$ and $h=\sqrt{T}$, the algorithm guarantees that, with probability at least $1 - \beta$, it produces a sequence of strategies $x^1, \ldots, x^T$ with $\Phi$-regret at most 
\[
\max_{\phi \in \Phi} \sum_{t=1}^T u^t(\phi(x^t)) - \sum_{t=1}^T u^t(x^t) \le 8\sqrt{T (\log |\Phi| + \log(1/\beta))}.
\]
Moreover, the algorithm runs in time $O(\sqrt{T}|\Phi|)$ per iteration.

If all players in a non-concave continuous game employ \Cref{alg:phi-reg}, then with probability at least $1-\beta$, for any $\varepsilon > 0$, the empirical distribution of strategy profiles played forms an $\varepsilon$-approximate $\Phi = \Pi_{i=1}^n \Phi^{\mathcal{X}_i}$-equilibrium, after $\poly\left(\frac{1}{\varepsilon}, \log \left(\max_i |\Phi^{\mathcal{X}_i}|\right), \log \frac{n}{\beta}\right)$ iterations.
\end{theorem}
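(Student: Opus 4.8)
The plan is to reduce the problem of $\Phi$-regret minimization over the (arbitrary, possibly non-convex) action set $\X$ to ordinary external regret minimization over the finite set $\Phi$, using the sampling device in \Cref{alg:sample} to sidestep the fixed-point computation of \citet{stoltz2007learning}. First I would set up the ``ideal'' version of the argument. Suppose at round $t$ we had access to a distribution $q^t \in \Delta(\X)$ that is \emph{stationary} under the mixture transformation $\Phi_{p^t}$ induced by $p^t = \mathfrak{R}_\Phi.\textsc{NextStrategy}()$, i.e. $x \sim q^t$ implies $\phi(x) \sim q^t$ for $\phi \sim p^t$. Then feeding $\mathfrak{R}_\Phi$ the linearized rewards $u^t_\Phi(\phi) = \E_{x \sim q^t}[u^t(\phi(x))]$ and invoking its external-regret guarantee would give $\max_{\phi\in\Phi}\sum_t u^t_\Phi(\phi) - \sum_t \E_{\psi\sim p^t}[u^t_\Phi(\psi)] \le O(\sqrt{T\log|\Phi|})$; stationarity makes the second term equal to $\sum_t \E_{x\sim q^t}[u^t(x)]$, and $u^t_\Phi(\phi) = \E_{x\sim q^t}[u^t(\phi(x))]$ gives exactly $\Phi$-regret $O(\sqrt{T\log|\Phi|})$. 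This is the Gordon--Greenwald--Jafari / Stoltz--Lugosi template.

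The core of the proof is that \Cref{alg:sample} with horizon $h = \sqrt T$ gives a \emph{cheap approximate sample} from such a stationary $q^t$. The key observation: run the Markov chain $x_1 = x_{\text{root}}$, $x_{k+1} = \phi_k(x_k)$ with $\phi_k \stackrel{iid}{\sim} p^t$, and output $x^t$ uniformly from $\{x_1,\dots,x_h\}$; this is exactly the empirical (Cesàro) average of the chain's occupation measure. For any single modification $\psi$, $\E[u^t(\psi(x^t))] - \E[u^t(x^t)] = \frac1h\big(\E[u^t(\psi(x_h))]-\E[u^t(x_1)]\big)$, wait—more carefully, $\E[u^t(x^t)] = \frac1h\sum_{k=1}^h \E[u^t(x_k)]$ and applying a fresh $\phi\sim p^t$ to $x^t$ is distributed as $\frac1h\sum_{k=1}^h \E[u^t(x_{k+1})]=\frac1h\sum_{k=2}^{h+1}\E[u^t(x_k)]$, so the ``one-step drift'' telescopes to $\frac1h(\E[u^t(x_{h+1})] - \E[u^t(x_1)])$, which is at most $1/h$ in absolute value since $u^t \in [0,1]$. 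Thus $|\E_{\psi\sim p^t}\E[u^t(\psi(x^t))] - \E[u^t(x^t)]| \le 1/h = 1/\sqrt T$ per round, contributing $\sqrt T$ total — this is where $h=\sqrt T$ balances against the $\sqrt{T\log|\Phi|}$ regret term. Then I would feed $\mathfrak{R}_\Phi$ the \emph{observed} rewards $u^t_\Phi(\phi) = u^t(\phi(x^t))$ (a one-sample unbiased-ish estimate), so two more error sources appear: (i) replacing $\E[u^t(\phi(x^t))]$ by the single realization $u^t(\phi(x^t))$, and (ii) the sampling randomness in $x^t$ itself. Both are controlled by a martingale concentration argument (Azuma--Hoeffding on the sequence of differences between realized and conditionally-expected per-round rewards), which costs an additional $O(\sqrt{T\log(1/\beta)})$ and accounts for the high-probability nature of the bound and the $\log(1/\beta)$ term. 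Collecting: Hedge regret $\le \sqrt{T\log|\Phi|}$ (scaled), stationarity defect $\le \sqrt T$, two concentration terms $\le O(\sqrt{T\log(1/\beta)})$ each, summing to the claimed $8\sqrt{T(\log|\Phi| + \log(1/\beta))}$. The per-iteration runtime is immediate: \textsc{SampleStrategy} does $h=\sqrt T$ evaluations of members of $\Phi$, and the \textsc{ObserveReward} step evaluates $u^t_\Phi(\phi)$ for all $\phi\in\Phi$ plus a Hedge update, giving $O(\sqrt T |\Phi|)$.

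For the game corollary I would invoke \Cref{theorem: no-phi-regret-2-phi-eq}: if every player $i$ runs \Cref{alg:phi-reg} with its own finite $\Phi^{\X_i}$, then by the single-agent bound each player's realized $\Phi^{\X_i}$-regret against the adversarially-chosen reward sequence $u^t_i(\cdot) = u_i(\cdot, x^t_{-i})$ is, with probability $\ge 1-\beta/n$, at most $8\sqrt{T(\log|\Phi^{\X_i}| + \log(n/\beta))}$; union-bounding over the $n$ players, all hold simultaneously w.p. $\ge 1-\beta$, and \Cref{theorem: no-phi-regret-2-phi-eq} then says the empirical distribution of profiles is a $(\max_i \reg^T_{\Phi^{\X_i}}/T)$-approximate $\Phi$-equilibrium. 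Setting this $\le \varepsilon$ and solving for $T$ gives $T = \Theta\!\big(\tfrac{1}{\varepsilon^2}(\log\max_i|\Phi^{\X_i}| + \log\tfrac{n}{\beta})\big)$ iterations, which is $\poly(1/\varepsilon, \log\max_i|\Phi^{\X_i}|, \log(n/\beta))$ as stated.

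The main obstacle I expect is pinning down the telescoping/stationarity-defect step cleanly in the presence of the online adaptivity of $p^t$: because $p^t$ is chosen by $\mathfrak{R}_\Phi$ based on past (random) rewards, the chain in \Cref{alg:sample} is run with a \emph{fixed} $p^t$ within round $t$ but $p^t$ itself is random and history-dependent, so all the expectation manipulations above must be conditioned on the $\sigma$-algebra generated through round $t-1$ (and, for the concentration step, a carefully chosen filtration that makes the per-round error terms a martingale difference sequence). Getting the conditioning right — and verifying that the rewards handed to $\mathfrak{R}_\Phi$ still satisfy the hypotheses of its external-regret bound despite being random — is the delicate part; the rest is bookkeeping of the four error terms and a standard Azuma application.
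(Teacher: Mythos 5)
Your proposal is correct and follows essentially the same route as the paper: the same decomposition into Hedge's external regret over $\Phi$ plus a stationarity defect, with your Markov-chain/occupation-measure telescoping being exactly the paper's tree-structured approximate stationary distribution $\mu^t$ viewed differently (yielding the same $T/h = \sqrt{T}$ defect), followed by the same Azuma--Hoeffding step and the same union-bound-plus-\Cref{theorem: no-phi-regret-2-phi-eq} argument for the game statement. The only cosmetic difference is that you budget an extra concentration term for feeding realized rewards to Hedge, which the paper avoids since Hedge's bound holds deterministically for the realized reward sequence; this does not affect correctness or the claimed constant.
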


\begin{remark}
    We remark that the regret bound $O(\sqrt{T \log |\Phi|})$ is tight: for the $K$-expert problem where $\+X = \{1, 2, \ldots, K\}$ and $\Phi = \Phi_{\mathrm{ext}}$ contains all constant strategy modifications, since $|\Phi|=K$, our algorithm gives $O(\sqrt{T\log K})$ regret. This matches the lower bound $\Omega(\sqrt{T\log K})$ on the external regret~\citep{cesa2006prediction}.
\end{remark}

\paragraph{High-level ideas} We adopt the framework in \citep{stoltz2007learning}. The framework contains two steps in each iteration $t$: (1) the learner runs a no-external-regret algorithm over $\Phi$ which outputs $p^t \in \Delta(\Phi)$ in each iteration $t$; (2) the learner chooses a stationary distribution $\mu^t = \sum_{\phi \in \Phi} p^t \phi(\mu^t)$, where we slightly abuse notation to use $\phi(\mu^t)$ to denote the image measure of $\mu$ by $\phi$. The idea is we can decompose the $\Phi$-regret into two parts:
\begin{align*}
     \reg^T_{\Phi}= \underbrace{\max_{\phi \in \Phi}\left\{ \sum_{t=1}^T u^t(\phi(x^t)) - \-E_{\phi' \sim p^t} \InBrackets{ u^t(\phi'(x^t)) } \right\}}_{\text{I: external regret over $\Phi$}} +  \underbrace{\sum_{t=1}^T \-E_{\phi' \sim p^t} \InBrackets{ u^t(\phi'(x^t)) } - u^t(x^t)}_{\text{II: approximation error of stationary distribution}}.
\end{align*}
The first term can be bounded by the external regret over $\Phi$. If we play an action $x^t \sim \mu^t$ sampled from a stationary distribution $\mu^t$, then in expectation, the second term would be zero. However, how to compute the stationary distribution $\mu^t$ efficiently is unclear. We essentially provide a computationally efficient way to carry out step (2) without computing this stationary distribution.
\begin{itemize}[leftmargin=*]
    \item We first construct an $\varepsilon$-approximate stationary distribution by recursively applying strategy modifications from $\Phi$. The constructed distribution can be viewed as a tree. Our construction is inspired by the recent work of~\citet{zhang2024efficient} for concave games. The main difference here is that for non-concave games, the distribution needs to be approximately stationary with respect to a \emph{mixture} of strategy modifications rather than a single one as in concave games. Consequently, this leads to an approximate stationary distribution with prohibitively high support size $(|\Phi|)^{\sqrt{T}}$, as opposed to $\sqrt{T}$ in \citep{zhang2024efficient} for concave games.
    \item Despite the exponentially large support size of the distribution, we utilize its tree structure to design a simple and efficient sampling procedure that runs in time $\sqrt{T}$. Equipped with such a sampling procedure, we provide an efficient randomized algorithm that generates a sequence of strategies so that, with high probability, the $\Phi$-regret for this sequence of strategies is at most $O(\sqrt{T\log|\Phi|})$.
\end{itemize}

Before we present the proof of \Cref{theorem:finite-phi-regret} in \Cref{sec:proof of finite-phi-regret}, we remark that an extension of \Cref{theorem:finite-phi-regret} to infinite $\Phi$ holds when the rewards $\{u^t\}_{t \in [T]}$ are $G$-Lipschitz and $\Phi$ admits an $\alpha$-cover with size $N(\alpha)$. In particular, when $\Phi$ is the set of all $M$-Lipschitz functions over $[0,1]^{d}$, $\Phi$ admits an $\alpha$-cover with $\log N(\alpha)$ of the order $(1/\alpha)^{d}$~\citep{stoltz2007learning}. In this case, we can run our algorithm over the finite cover of $\Phi$ and get 
 \begin{corollary}\label{cor:infinite global deviations via discretization}
           There is a randomized algorithm such that, with probability at least $1-\beta$, the $\Phi$-regret is bounded by $c \cdot T^{\frac{d+1}{d+2}} \cdot \log(1/\beta)$, where $c$ only depends on $G$ and $M$. The per-iteration complexity is $O(\sqrt{T} N(T^{-\frac{1}{d+2}})) = O(\sqrt{T}\exp(T^{\frac{d}{d+2}}))$.
 \end{corollary}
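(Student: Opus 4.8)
The plan is to deduce \Cref{cor:infinite global deviations via discretization} from the finite-$\Phi$ guarantee of \Cref{theorem:finite-phi-regret} (whose proof is the sampling/tree construction sketched above) by a covering argument, followed by an optimization over the discretization scale. Fix a parameter $\alpha>0$ and let $\Phi_\alpha$ be an $\alpha$-net of $\Phi$---the class of $M$-Lipschitz self-maps of $[0,1]^d$---in the sup-norm $\|\phi-\psi\|_\infty:=\sup_{x\in[0,1]^d}\|\phi(x)-\psi(x)\|$. By the classical metric-entropy bound for Lipschitz function classes (cf.\ \citet{stoltz2007learning}) one may take $\log|\Phi_\alpha|=\log N(\alpha)=O((1/\alpha)^d)$, with the hidden constant depending only on $M$ (and the fixed $d$); crucially, the net elements need not be Lipschitz---any finite family of self-maps of $[0,1]^d$ that $\alpha$-covers $\Phi$ suffices---so \Cref{theorem:finite-phi-regret} applies verbatim to the finite modification set $\Phi_\alpha$ (with $h=\sqrt T$).

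By \Cref{theorem:finite-phi-regret}, running \Cref{alg:phi-reg} over $\Phi_\alpha$ yields, with probability at least $1-\beta$, $\reg_{\Phi_\alpha}^T\le 8\sqrt{T(\log N(\alpha)+\log(1/\beta))}$. To pass from $\Phi_\alpha$-regret to $\Phi$-regret, fix any $\phi\in\Phi$ and pick $\psi\in\Phi_\alpha$ with $\|\phi-\psi\|_\infty\le\alpha$; then $G$-Lipschitzness of each $u^t$ gives $u^t(\phi(x^t))\le u^t(\psi(x^t))+G\alpha$ for every $t$, hence
\[
\textstyle\sum_{t=1}^T u^t(\phi(x^t))-\sum_{t=1}^T u^t(x^t)\le \reg_{\Phi_\alpha}^T+GT\alpha\le 8\sqrt{T(\log N(\alpha)+\log(1/\beta))}+GT\alpha .
\]
Taking the maximum over $\phi\in\Phi$ shows that, on the same high-probability event, $\reg_\Phi^T\le 8\sqrt{T(\log N(\alpha)+\log(1/\beta))}+GT\alpha$.

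It then remains to tune $\alpha$. Using $\log N(\alpha)=\Theta((1/\alpha)^d)$, the bound behaves---up to constants and the additive $\sqrt{T\log(1/\beta)}$ term---like $\sqrt T\,\alpha^{-d/2}+GT\alpha$; equating the two terms gives $\alpha^{d/2+1}\asymp T^{-1/2}$, i.e.\ $\alpha\asymp T^{-1/(d+2)}$, at which scale each term is $\Theta(T^{(d+1)/(d+2)})$. Since $(d+1)/(d+2)\ge 1/2$, the residual $\sqrt{T\log(1/\beta)}$ is dominated by $T^{(d+1)/(d+2)}\log(1/\beta)$, giving a $\Phi$-regret bound of $c\cdot T^{(d+1)/(d+2)}\log(1/\beta)$ for a constant $c$ depending only on $G$ and $M$. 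The per-iteration complexity is inherited directly from \Cref{theorem:finite-phi-regret}, namely $O(\sqrt T\,|\Phi_\alpha|)=O(\sqrt T\,N(T^{-1/(d+2)}))$; since $\log N(T^{-1/(d+2)})=\Theta(T^{d/(d+2)})$ this is $O(\sqrt T\exp(T^{d/(d+2)}))$.

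I do not anticipate a genuine obstacle: the corollary is essentially \Cref{theorem:finite-phi-regret} applied to a net, followed by a standard rebalancing of two competing terms. The only points requiring (mild) care are (i) importing the right metric-entropy estimate $\log N(\alpha)=O((1/\alpha)^d)$ for the $M$-Lipschitz class, and (ii) choosing the net in the sup-norm rather than pointwise, so that the per-round covering error $G\alpha$ holds \emph{uniformly} over the a priori-unknown realized play $x^1,\dots,x^T$; a pointwise-at-$x^t$ guarantee would in principle suffice but is awkward to secure in advance, whereas a sup-norm $\alpha$-net makes the conversion immediate.
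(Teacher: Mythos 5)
Your proposal is correct and follows essentially the same route the paper intends: run \Cref{alg:phi-reg} over a sup-norm $\alpha$-cover of the $M$-Lipschitz class with $\log N(\alpha)=O((1/\alpha)^d)$, pay an extra $GT\alpha$ via $G$-Lipschitzness of the rewards, and balance the two terms at $\alpha\asymp T^{-1/(d+2)}$ to get the $T^{(d+1)/(d+2)}$ rate and the stated per-iteration cost. The paper gives only this sketch (citing \citet{stoltz2007learning} for the covering number), and your write-up fills in exactly those details, including the valid observation that \Cref{theorem:finite-phi-regret} applies to any finite set of self-maps, so no additional ideas are needed.
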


\subsection{Proof of \Cref{theorem:finite-phi-regret}}
\label{sec:proof of finite-phi-regret}
For a distribution $\mu \in \Delta(\X)$ over strategy space $\X$, we slightly abuse notation and define its expected utility 
as
\begin{align*}
    u^t(\mu) := \-E_{x \sim \mu} \InBrackets{u^t(x)} \in [0,1].
\end{align*}
We define $\phi(\mu)$ the image measure of $\mu$ under transformation $\phi$. In each iteration $t$, the learner chooses their strategy $x^t \in \+X$ according to the distribution $\mu^t$. For a sequence of strategies $\{x^t\}_{t\in[T]}$, the $\Phi$-regret is
\begin{align*}
    \reg^T_{\Phi}:=  \max_{\phi \in \Phi}\left\{ \sum_{t=1}^T \InParentheses{ u^t(\phi(x^t)) - u^t(x^t)} \right\}.
\end{align*}
Algorithm~\ref{alg:phi-reg} uses an external regret minimization algorithm $R_\Phi$ over $\Phi$ which outputs a distribution $p^t \in \Delta(\Phi)$. We can then decompose the $\Phi$-regret into two parts.
\begin{align*}
     \reg^T_{\Phi}= \underbrace{\max_{\phi \in \Phi}\left\{ \sum_{t=1}^T u^t(\phi(x^t)) - \-E_{\phi' \sim p^t} \InBrackets{ u^t(\phi'(x^t)) } \right\}}_{\text{I: external regret over $\Phi$}} +  \underbrace{\sum_{t=1}^T \-E_{\phi' \sim p^t} \InBrackets{ u^t(\phi'(x^t)) } - u^t(x^t)}_{\text{II: approximation error of stationary distribution}}.
\end{align*}
\paragraph{I: Bounding the external regret over $\Phi$.} The external regret over $\Phi$ can be bounded directly. This is equivalent to an online expert problem: in each iteration $t$, the external regret minimizer $\mathfrak{R}_{\Phi}$ chooses $p^t \in \Delta(\Phi)$ and the adversary then determines the utility of each expert $\phi \in \Phi$ as $u^t(\phi(x^t))$. We choose the external regret minimizer $\mathfrak{R}_{\Phi}$ to be the Hedge algorithm~\citep{freund_adaptive_1999}, which gives
\begin{align}\label{eq: external regret}
    \max_{\phi \in \Phi}\left\{ \sum_{t=1}^T u^t(\phi(x^t)) - \-E_{\phi' \sim p^t} \InBrackets{ u^t(\phi'(x^t)) } \right\} \le 2\sqrt{T \log |\Phi|},
\end{align}
where we use the fact that the utility function $u^t$ is bounded in $[0,1]$.

\paragraph{II: Bounding error due to sampling from an approximately stationary distribution.} We first construct an approximately stationary distribution $\mu^t$ using a complete $|\Phi|$-ary tree with depth $h$. The root of this tree is an arbitrary strategy $x_{\text{root}} \in \mathcal{X}$. Each internal node $x$ has exactly $|\Phi|$ children, denoted as $\{\phi(x)\}_{\phi \in \Phi}$. The distribution $\mu^t$ is supported on the nodes of this tree. Next, we define the probability for each node under the distribution $\mu^t$.
The root node $x_{\text{root}}$ receives probability $\frac{1}{h}$. The probability of other nodes is defined in a recursive manner. For every node $x = \phi(x_p)$ where $x_p$ is its parent, $x$ receives probability $\Pr_{\mu^t}[x] = \Pr_{\mu^t}[x_p] \cdot p^t(\phi)$. It is then clear that the total probability of the children of a node $x_p$ is exactly $\Pr_{\mu^t}[x~\text{is $x_p$'s child}] = \sum_{\phi} \Pr[x_p] \cdot p^t(\phi) = \Pr[x_p]$. Denote the set of nodes in depth $k$ as $N_k$. We have $\Pr_{\mu^t}[x\in N_k] = \frac{1}{h}$ for every depth $1 \le k \le h$. Thus the distribution $\mu^t$ supports on $\frac{|\Phi|^h-1}{|\Phi| -1}$ points and is well-defined. By the construction above, we know $x^t$ output by \Cref{alg:sample} is a sample from $\mu^t$.
\begin{claim}
    \Cref{alg:sample} generates a sample from the distribution $\mu^t$.
\end{claim}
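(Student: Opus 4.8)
The plan is to unfold the recursive definition of the sampling procedure in \Cref{alg:sample} and match it term-by-term with the recursive definition of $\mu^t$. First I would observe that \Cref{alg:sample} produces a random path $x_1, x_2, \ldots, x_h$ from the root: $x_1 = x_{\text{root}}$, and for each $2 \le k \le h$, the child $x_k = \phi(x_{k-1})$ is obtained by drawing $\phi \sim p^t$ independently of all previous draws. After building this path, the algorithm returns one of $x_1, \ldots, x_h$ chosen uniformly at random. So the law of the returned point is the mixture $\frac{1}{h}\sum_{k=1}^h \mathcal{L}(x_k)$, where $\mathcal{L}(x_k)$ denotes the distribution of the $k$-th path vertex.

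The key step is to compute $\mathcal{L}(x_k)$ and show it is exactly $\mu^t$ restricted (and renormalized) to the depth-$k$ layer $N_k$, scaled by $\frac{1}{h}$ — i.e. that for each node $x$ at depth $k$, $\Pr[x_k = x] = h \cdot \Pr_{\mu^t}[x]$. I would prove this by induction on $k$. For $k=1$, both sides are supported only on $x_{\text{root}}$ with the claimed value (since $\Pr_{\mu^t}[x_{\text{root}}] = \frac1h$). For the inductive step, any depth-$k$ node $x$ has a unique parent $x_p$ at depth $k-1$ together with a (possibly non-unique, but we can fix) representation $x = \phi(x_p)$; the path reaches $x$ at step $k$ exactly when it reached $x_p$ at step $k-1$ and then the $k$-th independent draw selected $\phi$, and these events are independent, so $\Pr[x_k = x] = \Pr[x_{k-1} = x_p]\cdot p^t(\phi) = h\cdot\Pr_{\mu^t}[x_p]\cdot p^t(\phi) = h \cdot \Pr_{\mu^t}[x]$, using the inductive hypothesis and the recursive definition of $\mu^t$. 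Summing $\frac1h \mathcal{L}(x_k)$ over $k = 1, \ldots, h$ then reconstructs $\mu^t$ over all layers $N_1, \ldots, N_h$, since $\mu^t$ places total mass $\frac1h$ on each layer and these layers partition the support of $\mu^t$.

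The one subtlety — and the only place the argument needs a word of care rather than being purely mechanical — is that a node $x$ in the tree is formally identified by its position in the complete $|\Phi|$-ary tree, not merely by its value in $\mathcal{X}$: two different paths may produce the same element of $\mathcal{X}$, and a single node may coincide with its parent (if $\phi(x_p) = x_p$). Treating $\mu^t$ and \Cref{alg:sample} both as distributions over tree nodes (equivalently, over finite sequences of modifications from $\Phi$) sidesteps this entirely; the pushforward to $\mathcal{X}$ is then consistent on both sides, so the identity of distributions over $\mathcal{X}$ follows. With this bookkeeping fixed, the induction above goes through without obstruction, and the claim is established.
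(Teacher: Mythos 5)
Your proposal is correct and follows essentially the same route as the paper, which simply asserts the claim ``by the construction above'': your induction on depth showing $\Pr[x_k = x] = h\cdot\Pr_{\mu^t}[x]$ for each depth-$k$ node, followed by averaging over the uniformly chosen index, is exactly the explicit version of that implicit argument. Your remark about treating nodes as tree positions (sequences of modifications) rather than values in $\mathcal{X}$ is a sensible piece of bookkeeping that the paper glosses over, and it does not change the substance.
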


Now we show that the approximation error of $\mu^t$ is bounded by $O(\frac{1}{h})$. 
We can evaluate the approximation error of $\mu^t$:
\begin{align*}
    &\-E_{\phi \sim p^t} \InBrackets{ u^t(\phi(\mu^t)) } - u^t(\mu^t) \\
    &=\-E_{\phi \sim p^t} \InBrackets{ 
    \sum_{k=1}^h \sum_{x \in N_k} \Pr_{\mu^t}[x] u^t(\phi(x)) } - \InBrackets{ 
    \sum_{k=1}^h \sum_{x \in N_k} \Pr_{\mu^t}[x] u^t(x)} \\
    &= \sum_{k=1}^h \sum_{x\in N_k} \InParentheses{ \-E_{\phi \sim p^t} \InBrackets{  \Pr_{\mu^t}[x]  u^t(\phi(x))} -  \Pr_{\mu^t}[x] u^t(x)    }.
\end{align*}
We recall that for a node $x = \phi(x_p)$ with $x_p$ being its parent, we have $\Pr_{\mu^t}[x] = \Pr_{\mu^t}[x_p] \cdot p^t(\phi)$. Thus for any $1 \le k \le h-1$, we have
\begin{align*}
    &\sum_{x \in N_k}\InParentheses{\-E_{\phi \sim p^t} \InBrackets{  \Pr_{\mu^t}[x]  u^t(\phi(x))} -  \Pr_{\mu^t}[x] u^t(x)} \\
    & = \sum_{x \in N_k}\InParentheses{ \sum_{\phi \in \Phi} p^t(\phi) \Pr_{\mu^t}[x]  u^t(\phi(x)) -  \Pr_{\mu^t}[x] u^t(x)} \\
    & = \sum_{x \in N_{k+1}} \Pr_{\mu^t}[x] u^t(x) - \sum_{x \in N_{k}} \Pr_{\mu^t}[x] u^t(x).
\end{align*}
Using the above equality, we get
\begin{align*}
    &\-E_{\phi \sim p^t} \InBrackets{ u^t(\phi(\mu^t)) } - u^t(\mu^t) \\
    &= \sum_{k=1}^{h-1} \sum_{x \in N_{k+1}} \Pr_{\mu^t}[x] u^t(x)+\sum_{x\in N_h}\sum_{\phi\in \Phi}p^t(\phi)\Pr_{\mu^t}[x]u^t(\phi(x)) - \sum_{k=2}^h \sum_{x \in N_{k}} \Pr_{\mu^t}[x] u^t(x) -\Pr_{\mu^t}[x_{\text{root}}]u^t(x_{\text{root}})\\
    &= \sum_{x\in N_h}\sum_{\phi\in \Phi}p^t(\phi)\Pr_{\mu^t}[x]u^t(\phi(x)) - \Pr_{\mu^t}[x_{\text{root}}]u^t(x_{\text{root}}) \\
    &\le \frac{1}{h},
\end{align*}
where in the last inequality we use the fact that $\sum_{x \in N_k} \Pr_{\mu^t}[x] = \frac{1}{h}$ for all $1 \le k \le h$ and the utility function $u^t$ is bounded in $[0,1]$. Therefore, for $x^t \sim \mu^t$, the sequence of random variables 
\begin{align*}
    \sum_{t=1}^\tau \InParentheses{\-E_{\phi \sim p^t} \InBrackets{ u^t(\phi(x^t)) } - u^t(x^t) - \frac{1}{h}}, \text{for~}\tau \ge 1.
\end{align*}
is a super-martingale. Thanks to the boundedness of the utility function, we can apply the Hoeffding-Azuma Inequality and get for any $\varepsilon > 0$.
\begin{align}\label{eq:approx fix distribution}
    \Pr\InBrackets{\sum_{t=1}^T \InParentheses{\-E_{\phi \sim p^t} \InBrackets{ u^t(\phi(x^t)) } - u^t(x^t) - \frac{1}{h}} \ge \varepsilon} \le \exp\InParentheses{ - \frac{\varepsilon^2}{8T}}. 
\end{align}
Combining \eqref{eq: external regret} and \eqref{eq:approx fix distribution} with $\varepsilon = \sqrt{8 T \log (1/\beta)}$ and $h = \sqrt{T}$, we have that, with probability $1 - \beta$, that
\begin{align*}
    \reg^T_{\Phi}&= \max_{\phi \in \Phi}\left\{ \sum_{t=1}^T u^t(\phi(x^t)) - \-E_{\phi' \sim p^t} \InBrackets{ u^t(\phi'(x^t)) } \right\} +  \sum_{t=1}^T \-E_{\phi' \sim p^t} \InBrackets{ u^t(\phi'(x^t)) } - u^t(x^t) \\
    &\le 2\sqrt{T\log |\Phi|} + \frac{T}{h} + \sqrt{8 T \log (1/\beta)} \\
    &\le 8\sqrt{T (\log |\Phi| + \log(1/\beta))}.
\end{align*}

\paragraph{Convergence to $\Phi$-equilibrium} If all players in a non-concave continuous game employ \Cref{alg:phi-reg}, then we know for each player $i$,  with probability $1 - \frac{\beta}{n}$, its $\Phi^{\+X_i}$-regret is upper bounded by
\begin{align*}
    8\sqrt{T (\log |\Phi^{\+X_i}| + \log(n/\beta))}.
\end{align*}
By a union bound over all $n$ players, we get with probability $1 - \beta$, every player $i$'s $\Phi^{\+X_i}$-regret is upper bounded by $8\sqrt{T (\log |\Phi^{\+X_i}| + \log(n/\beta))}$. Now by \Cref{theorem: no-phi-regret-2-phi-eq}, we know the empirical distribution of strategy profiles played forms an $\varepsilon$-approximate $\Phi = \Pi_{i=1}^n \Phi^{\mathcal{X}_i}$-equilibrium, as long as $T \ge  \frac{64(\log |\Phi^{\+X_i}| + \log(n/\beta))}{\varepsilon^2}$ iterations. This completes the proof of \Cref{theorem:finite-phi-regret}

\section{Approximate $\Phi$-Equilibria under  Infinite Local Strategy Modifications}
This section studies $\Phi$-equilibrium when $|\Phi|$ is infinite. We focus on \emph{continuous games} where each agent has infinite many strategies. 
We first formally define the class of games of interest. 
\paragraph{Continuous / Smooth Games.} 
An $n$-player \emph{continuous game} has a set of $n$ players $[n]:= \{1,2, \ldots, n\}$. Each player $i \in [n]$ has a nonempty convex and compact strategy set $\X_i \subseteq \R^{d_i}$.  For a joint strategy profile $x = (x_i, x_{-i}) \in \prod_{j=1}^n \X_j$, the reward of player $i$ is determined by a utility function $u_i: \prod_{j=1}^n \X_j \rightarrow [0,1]$. We denote by $d = \sum_{i=1}^n d_i$ the dimensionality of the game and assume $\max_{i\in[n]}\{D_{\X_i}\} \le D$. Examples of continuous games include mixed strategy extensions of normal-form games where $\+X_i = \Delta^{d_i}$, nonconvex-nonconcave minimax optimization, and many other machine learning applications with high-dimensional strategies. When each player's utility is further differentiable, we also call the game a \emph{differentiable game}.

A \emph{smooth game} is a continuous game whose utility functions further satisfy the following assumption.

\begin{assumption}[Smooth Games]
\label{assumption:smooth games}
    The utility function $u_i(x_i, x_{-i})$ for any player $i \in [n]$ is differentiable and satisfies
    \begin{itemize}[leftmargin=*]
        \item[1.] ($G$-Lipschitzness) $\InNorms{\nabla_{x_i} u_i(x)} \le G$ for all $i$ and  $x\in\prod_{j=1}^n \X_j$;  
        \item[2.] ($L$-smoothness) there exists $L_i > 0$ such that $\InNorms{\nabla_{x_i} u_i(x_i, x_{-i}) - \nabla_{x_i} u_i(x_i', x_{-i})} \le  L_i\InNorms{x_i  - x_i'} $ for all $x_i, x_i' \in \X_i$ and $x_{-i} \in \Pi_{j\ne i}\X_j$. We denote $L = \max_{i}L_i$ as the smoothness of the game.
    \end{itemize}
\end{assumption}

Crucially, we make no assumption on the concavity of $u_i(x_i, x_{-i})$. It is, in general, computationally hard to compute a $\Phi$-equilibrium even if $\Phi$ contains all constant deviations since it would imply finding the global optimum of a nonconcave function. Instead, we focus on $\Phi$ that consists solely of local strategy modifications.  

\begin{definition}[$\delta$-local strategy modification]
    For each agent $i$, we call a set of strategy modifications $\Phi^{\X_i}$ \emph{$\delta$-local} if for all $x \in \X_i$ and $\phi_i \in \Phi^{\X_i}$, $\InNorms{\phi_i(x)- x}\le \delta$. We use notation $\Phi^{\X_i}(\delta)$ to denote a $\delta$-local strategy modification set for agent $i$. We also use $\Phi(\delta) = \Pi_{i=1}^n \Phi^{\X_i}(\delta)$ to denote a profile of $\delta$-local strategy modification sets.
\end{definition}

\paragraph{A Reduction in the First-Order Stationary Regime} Below we present a simple yet useful reduction from computing an $\varepsilon$-approximate $\Phi(\delta)$-\lce in \emph{non-concave} smooth games to $\Phi^{\X_i}(\delta)$-regret minimization against \emph{linear} losses for any $\varepsilon \ge \frac{\delta^2 L}{2}$. The key observation here is that the $L$-smoothness of the utility function permits the approximation of a non-concave function with a linear function within a $\delta$-neighborhood. This approximation yields an error of at most $\frac{\delta^2L}{2}$ that lies in the first-order stationary regime. Formally, by $L$-smoothness of the utility function $u_i$, we have
\begin{align*}
    u_i(\phi(x_i), x_{-i}) - u_i(x) \le \InAngles{\nabla_{x_i} u_i(x),\phi(x_i) -  x_i } + \frac{L}{2}\InNorms{\phi(x_i) -x_i}^2.
\end{align*}
As a result, computing an $(\varepsilon+\frac{\delta^2L}{2})$-approximate $\Phi(\delta)$-equilibrium reduces to computing a joint strategy distribution $\sigma \in \Delta(\Pi_{j=1}^n \+X_j)$ such that for every player $i$\footnote{We note that we may define local $\Phi$-equilibrium directly using \eqref{eq:first-order phi-equilibrium} by the reduction, since any joint distribution satisfies \eqref{eq:first-order phi-equilibrium} is also an approximate $\Phi(\delta)$-equilibrium with error $\varepsilon+\frac{\delta^2L}{2}$ in $L$-smooth games. Several follow-up works study local $\Phi$-equilibrium notions using similar first-order approximation ideas~\citep{ahunbay2024first,zhang2025expected}.},
\begin{align}\label{eq:first-order phi-equilibrium}
    \max_{\phi \in \Phi^{\+X_i}(\delta)}\-E_{x \sim \sigma} \InBrackets{\InAngles{\nabla_{x_i} u_i(x), \phi(x_i) -  x_i} }  \le \varepsilon. 
\end{align}
This also implies when every agent runs an no $\Phi^{\+X_i}(\delta)$ algorithm in a smooth game against the linearized reward $\InAngles{\nabla_{x_i}u_i(x),\cdot}$, the empirical distribution of their played strategy profiles is a $(\nicefrac{\max_{i\in[n]}\{\reg_{\Phi^{\X_i}(\delta)}^T\}}{T}+\frac{\delta^2L}{2})$-approximate $\Phi(\delta)$-equilibrium. We summarize the results in the following lemma.

\begin{lemma}[From Non-Concave to Linear Losses in First-Order Stationary Regime] 
\label{lemma:no-regret-2-CE}
Let $\mathcal{G} = \{[n], \{\+X_i\}, \{u_i\}\}$ be a $L$-smooth game. Then

1. A distribution $\sigma \in \Delta(\Pi_{j=1}^n \+X_j)$ is an $(\varepsilon+\frac{\delta^2L}{2})$-approximate $\Phi(\delta)$-equilibrium if 
\begin{align*}
    \-E_{x \sim \sigma} \InBrackets{\InAngles{\nabla_{x_i} u_i(x), \phi(x_i) -  x_i }}  \le \varepsilon, \forall i\in [n], \forall \phi \in \Phi^{\+X_i}(\delta).
\end{align*}

2. For any $T \ge 1$ and $\delta > 0$, let $\+A$ be an algorithm that guarantees to achieve no more than $\reg_{\Phi^{\X_i}(\delta)}^T$ $\Phi^{\X_i}(\delta)$-regret for \emph{linear} loss functions for each agent $i\in[n]$. Then when every agent employs $\+A$ in a $L$-smooth game, their empirical distribution of the joint strategies played converges to a $( \nicefrac{\max_{i\in[n]}\{\reg_{\Phi^{\X_i}(\delta)}^T\}}{T} + \frac{\delta^2L}{2})$-approximate $ \Phi(\delta)$-equilibrium after $T$ iterations.
\end{lemma}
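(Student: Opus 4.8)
The plan is to prove both parts directly from the $L$-smoothness inequality $u_i(\phi(x_i), x_{-i}) - u_i(x) \le \InAngles{\nabla_{x_i} u_i(x), \phi(x_i) - x_i} + \frac{L}{2}\InNorms{\phi(x_i) - x_i}^2$ already displayed in the excerpt, combined with the $\delta$-locality bound $\InNorms{\phi(x_i) - x_i} \le \delta$ and, for Part 2, with \Cref{theorem: no-phi-regret-2-phi-eq}.

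For Part 1, fix a player $i$ and a modification $\phi \in \Phi^{\+X_i}(\delta)$. Taking expectations over $x \sim \sigma$ in the smoothness inequality and using $\InNorms{\phi(x_i)-x_i}^2 \le \delta^2$ pointwise, I get $\-E_{x\sim\sigma}\InBrackets{u_i(\phi(x_i), x_{-i})} - \-E_{x\sim\sigma}\InBrackets{u_i(x)} \le \-E_{x\sim\sigma}\InBrackets{\InAngles{\nabla_{x_i}u_i(x), \phi(x_i)-x_i}} + \frac{\delta^2 L}{2} \le \varepsilon + \frac{\delta^2 L}{2}$, where the last step invokes the hypothesis of Part 1. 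Since $i$ and $\phi$ were arbitrary, taking the max over $\phi \in \Phi^{\+X_i}(\delta)$ shows $\sigma$ satisfies \Cref{def:local CE} with approximation parameter $\varepsilon + \frac{\delta^2 L}{2}$, which is exactly the claim.

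For Part 2, each agent $i$ runs $\+A$ against the linear loss $\ell_i^t(\cdot) := -\InAngles{\nabla_{x_i}u_i(x^t), \cdot}$ (negated because the paper's convention is utility maximization / loss minimization — I will phrase it as reward $\InAngles{\nabla_{x_i}u_i(x^t),\cdot}$ to match the excerpt), so by assumption agent $i$'s $\Phi^{\+X_i}(\delta)$-regret against these linear rewards is at most $\reg_{\Phi^{\X_i}(\delta)}^T$. Applying \Cref{theorem: no-phi-regret-2-phi-eq} with the linearized rewards, the empirical distribution $\bar\sigma$ of played profiles satisfies, for every $i$ and every $\phi \in \Phi^{\+X_i}(\delta)$, $\-E_{x\sim\bar\sigma}\InBrackets{\InAngles{\nabla_{x_i}u_i(x), \phi(x_i)-x_i}} \le \nicefrac{\max_{i\in[n]}\{\reg_{\Phi^{\X_i}(\delta)}^T\}}{T}$. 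This is precisely the hypothesis of Part 1 with $\varepsilon = \nicefrac{\max_{i\in[n]}\{\reg_{\Phi^{\X_i}(\delta)}^T\}}{T}$, so Part 1 yields that $\bar\sigma$ is a $(\nicefrac{\max_{i\in[n]}\{\reg_{\Phi^{\X_i}(\delta)}^T\}}{T} + \frac{\delta^2 L}{2})$-approximate $\Phi(\delta)$-equilibrium, as desired.

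The argument is essentially bookkeeping; there is no real obstacle. The one point requiring slight care is the application of \Cref{theorem: no-phi-regret-2-phi-eq} to the linearized rewards: that theorem is stated for a fixed sequence of reward functions, whereas here the reward function $\InAngles{\nabla_{x_i}u_i(x^t),\cdot}$ faced by player $i$ at round $t$ depends on the opponents' realized strategies $x_{-i}^t$. This is handled exactly as in the standard no-regret-to-equilibrium reduction: one treats $\InAngles{\nabla_{x_i}u_i(x^t),\cdot}$ as the adversarially chosen reward at round $t$, and then the $\Phi$-regret guarantee of $\+A$ together with averaging over $t$ gives the stated inequality for $\bar\sigma$; the correlation introduced by the shared history is precisely what makes $\bar\sigma$ a correlated (rather than product) distribution. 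I would state this reduction step explicitly but not belabor it, since it is identical to the classical argument underlying \Cref{theorem: no-phi-regret-2-phi-eq}.
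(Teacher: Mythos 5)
Your proposal is correct and follows essentially the same route as the paper: both parts rest on taking expectations of the $L$-smoothness linearization with the $\delta$-locality bound, and Part 2 is the standard no-$\Phi$-regret-to-equilibrium averaging applied to the linearized rewards $\InAngles{\nabla_{x_i}u_i(x^t),\cdot}$. The only difference is organizational — the paper writes out the Part 2 averaging computation directly rather than factoring it through Part 1 and \Cref{theorem: no-phi-regret-2-phi-eq} — which changes nothing of substance.
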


In subsequent sections,  we introduce several natural classes of local strategy modifications and provide efficient online learning algorithms that converge to $\varepsilon$-approximate $\Phi$-equilibrium in the first-order stationary regime where $\varepsilon = \Omega(\delta^2 L)$. These approximate $\Phi$-equilibria guarantee first-order stability. On the other hand, in \Cref{sec:first-order regime hardness}, we show that a better approximation beyond the first-order stationary regime such as $\varepsilon = o(\delta^2)$ is impossible unless P = NP.

\subsection{Proximal Regret: A Tractable Notion Between External and Swap Regret}
\label{sec:proximal regret}
In this section, we introduce a class of $\Phi$-regret called \emph{proximal regret} parameterized by a convex or smooth function. The main result of this section shows that the classic \hyperref[GD]{Online Gradient Descent} algorithm~\citep{zinkevich2003online} minimizes proximal regret. Since external regret is an instantiation of proximal regret (see discussion below), our result shows that the proximal regret is an efficient regret notion that lies between external regret and swap regret.

\paragraph{Proximal Operator and Proximal Regret} The strategy modification in proximal regret is defined using the proximal operator.
\begin{definition}[Proximal Operator]\label{dfn:proximal operator}
    The \emph{proximal operator} of a  proper functions $f : \+X \rightarrow [-\infty, +\infty]$ is defined as\footnote{A proper function is not identically equal to $+\infty$ over its domain and $-\infty$ is not in its image.}
    \begin{align*}
    \prox_{f}(x) = \argmin_{x' \in \+X} \left\{ f(x') + \frac{1}{2}\InNorms{x'-x}^2 \right\}.
    \end{align*}
\end{definition}
Certain conditions on the function $f$ are required to ensure the proximal operator is well-defined, i.e., the optimization problem has a unique solution. Two sufficient conditions are:
\begin{itemize}
    \item[1.] $f$ is lower semicontinuous and convex. We use $\+F_{\lscc}$ to denote this class of functions.
    \item[2.] $f$ is $1^{-}$-smooth, i.e., $L$-smooth with $L < 1$.  We use $\+F_{\sm}$ to denote this class of functions.
\end{itemize}
The proximal operator has been extensively studied in the optimization literature. We list one example here for illustration and refer the readers to the book \citep{parikh2014proximal} for a comprehensive review. When $f = \mathbb{I}_{\+X}$ is the indicator function for a convex set $\+X \subseteq\-R^d$ (that is, $f(x) = 0$ if $x \in \+X$ and $f(x) = +\infty$ otherwise),  $f$ is convex and $\prox_f(x) = \Pi_{\+X}[x]$ is the Euclidean projection onto $\+X$. Proximal operator naturally induces the following proximal regret notion. 

\begin{definition}[Proximal Regret]\label{dfn:proximal regret}
    The \emph{proximal regret} associated with $f$ is 
    \begin{align*}
    \reg^T_f:= \sum_{t=1}^T \ell^t(x^t) - \ell^t(\prox_f(x^t)).
\end{align*}
We say an algorithm is \emph{no-proximal regret} w.r.t. $\+F$ if $\reg^T_f = o(T)$ for all $f \in \+F$.
\end{definition}
Note that if we choose $\+F_{\mathrm{ext}}$ to be all the indicator functions for one point in $\+X$, then $\max_{f \in \+F_{\mathrm{ext}}}\reg_f^T = \reg^T_{\mathrm{ext}}$ is exactly the external regret. The proximal regret is more general than the external regret as it simultaneously guarantees regret defined by any other convex or smooth function $f \in \+F_{\lscc} \cup \+F_{\sm}$.

\subsubsection{Online Gradient Descent Minimizes Proximal Regret}

\begin{algorithm}[!ht]\label{GD}
    \KwIn{strategy space $\+X$, non-increasing step sizes $\{\eta_t\}$} 
    \KwOut{strategies $\{x^t\}$}
    \caption{Online Gradient Descent (GD)}
    Initialize $x^1 \in \+X$ arbitrarily\\
    \For{$t = 1,2, \ldots,$}{
    play $x^t$ and receive $\nabla \ell^t(x^t)$.\footnotemark\\
    update $x^{t+1} = \Pi_\+X\InBrackets{x^t - \eta_t \nabla \ell^t(x^t)}$.
    }
\end{algorithm}\footnotetext{When the function is non-differentiable, the algorithm receives a subgradient in $\partial \ell^t(x^t)$.}

We show a surprising result that \hyperref[GD]{Online Gradient Descent (GD)} achieves $\reg_f^T = O(\sqrt{T})$ \emph{simultaneously} for all $f \in \+F_{\lscc} \cup \+F_{\sm}$, without any modification. Our result significantly extends the classical result of \hyperref[GD]{GD}~\citep{zinkevich2003online} by showing that \hyperref[GD]{GD} not only minimizes external regret but simultaneously the more general proximal regret. It shows that the sequence produced by \hyperref[GD]{GD} is not only a no-external-regret sequence but has additional features. It also implies that the \hyperref[GD]{GD} dynamics in concave games converges to a refined subset of coarse correlated equilibria, which we may refer to as \emph{proximal correlated equilibria}. 

\begin{theorem}\label{theorem:GD proximal regret}[Online Gradient Descent Minimizes Proximal Regret]
    Let $\+X \subseteq \-R^d$ be a closed convex set and $\{\ell^t: \+X \rightarrow \-R\}$ be a sequence of convex loss functions. 

    1. For any convex function $f \in \+F_{\mathrm{lsc,c}}(\+X)$, denote $p^t = \prox_f(x^t)$, \hyperref[GD]{GD} guarantees for all $T \ge 1$,
    \begin{align*}
        \sum_{t=1}^T \InAngles{\nabla \ell^t(x^t), x^t - \prox_f(x^t)} \le \frac{D^2 + 2B_f - \InNorms{x^{T+1}-p^T}^2}{2\eta_T} + \sum_{t=1}^T\frac{\eta_t}{2}\InNorms{\nabla \ell^t(x^t)}^2 - \sum_{t=1}^{T-1} \frac{1}{2\eta_t} \InNorms{p^t - p^{t+1}}^2,
    \end{align*}
    where $D = \max_{t \in T} \InNorms{x^t-\prox_f(x^t)}$ and $B_f = \max_{t \in [T]}f(\prox_f(x^t)) - \min_{t \in [T]} f(\prox_f(x^t))$. 

    2. For any smooth function $f \in \+F_{\mathrm{sm}}(\+X)$, \hyperref[GD]{GD} guarantees for all $T \ge 1$,
    \begin{align*}
        \sum_{t=1}^T \InAngles{\nabla \ell^t(x^t), x^t - \prox_f(x^t)} \le \frac{D^2 + 2B_f - \InNorms{x^{T+1}-p^T}^2}{2\eta_T} + \sum_{t=1}^T\frac{\eta_t}{2}\InNorms{\nabla \ell^t(x^t)}^2,
    \end{align*}
    where $D = \max_{t \in T} \InNorms{x^t-\prox_f(x^t)}$ and $B_f = \max_{t \in [T]}f(\prox_f(x^t)) - \min_{t \in [T]} f(\prox_f(x^t))$. 

    3. If the step size is constant $\eta_t = \eta$, then the above two bounds hold with $D = \InNorms{x^1 - \prox_f(x^1)}$ and $B_f = f(\prox_f(x^1)) - f(\prox_f(x^T))$.
\end{theorem}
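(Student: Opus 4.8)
The plan is to run the textbook Online Gradient Descent analysis with the \emph{time-varying} comparator $p^t := \prox_f(x^t)$ in place of a fixed point, and to control the resulting ``moving target'' error using the strong convexity built into the proximal operator. First I would record the standard one-step inequality: since $p^t \in \+X$ and $x^{t+1} = \Pi_\+X\InBrackets{x^t - \eta_t \nabla\ell^t(x^t)}$, nonexpansiveness of the Euclidean projection gives $\InNorms{x^{t+1}-p^t}^2 \le \InNorms{x^t - p^t}^2 - 2\eta_t \InAngles{\nabla\ell^t(x^t),\, x^t - p^t} + \eta_t^2\InNorms{\nabla\ell^t(x^t)}^2$, hence $\InAngles{\nabla\ell^t(x^t),\, x^t-p^t} \le \tfrac{1}{2\eta_t}\InParentheses{\InNorms{x^t-p^t}^2 - \InNorms{x^{t+1}-p^t}^2} + \tfrac{\eta_t}{2}\InNorms{\nabla\ell^t(x^t)}^2$. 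Note no convexity of $\ell^t$ is needed here, only the (sub)gradient at $x^t$.

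The heart of the argument is a stability estimate for $\prox_f$ that makes the potential terms telescope despite $p^t$ changing with $t$. Writing $g^{t+1}(x') := f(x') + \tfrac12\InNorms{x'-x^{t+1}}^2$, the function $g^{t+1}$ is $1$-strongly convex when $f \in \+F_{\lscc}$ (and $(1-L)$-strongly convex when $f \in \+F_{\sm}$ is $L$-smooth with $L<1$), so for its unique minimizer $p^{t+1}$ and any $z \in \+X$ we have $g^{t+1}(z) \ge g^{t+1}(p^{t+1}) + \tfrac12\InNorms{z-p^{t+1}}^2$ (resp.\ with $\tfrac{1-L}{2}$), using the first-order optimality condition $\InAngles{\partial g^{t+1}(p^{t+1}),\, z - p^{t+1}} \ge 0$. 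Taking $z = p^t$ and rearranging yields the key inequality
\[
\InNorms{x^{t+1}-p^t}^2 \ge \InNorms{x^{t+1}-p^{t+1}}^2 + \InNorms{p^t-p^{t+1}}^2 + 2\InParentheses{f(p^{t+1}) - f(p^t)}
\]
in the convex case (with $\InNorms{p^t-p^{t+1}}^2$ replaced by $(1-L)\InNorms{p^t-p^{t+1}}^2$, a nonnegative quantity one may simply discard, in the smooth case). Substituting this into the one-step bound for $t = 1,\dots,T-1$ (keeping $\InNorms{x^{T+1}-p^T}^2$ intact for $t=T$) replaces $-\InNorms{x^{t+1}-p^t}^2$ by $-\InNorms{x^{t+1}-p^{t+1}}^2 - \InNorms{p^t-p^{t+1}}^2 - 2(f(p^{t+1})-f(p^t))$, producing telescoping-ready terms.

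Summing over $t$, I would group the distance terms $\tfrac{1}{2\eta_t}\InParentheses{\InNorms{x^t-p^t}^2 - \InNorms{x^{t+1}-p^{t+1}}^2}$ via Abel summation; since $\{\eta_t\}$ is non-increasing, the coefficients $\tfrac{1}{2\eta_t}-\tfrac{1}{2\eta_{t-1}}$ are nonnegative, and bounding each $\InNorms{x^t-p^t}^2 \le D^2$ collapses this contribution to $\tfrac{D^2}{2\eta_T}$. Applying the same monotonicity to $-\sum_{t=1}^{T-1}\tfrac{1}{\eta_t}\InParentheses{f(p^{t+1})-f(p^t)}$, after shifting $f$ by a constant so that $0 \le f(p^t) - \min_s f(p^s) \le B_f$ (legitimate since $\prox_f = \prox_{f+c}$ and $B_f$ is shift-invariant), bounds that sum by $\tfrac{B_f}{\eta_T}$. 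Collecting everything gives part 1; part 2 is identical except the $-\InNorms{p^t-p^{t+1}}^2$ terms are absent because we discarded the corresponding nonnegative quantity in the smooth case; and part 3 follows by setting $\eta_t \equiv \eta$, where the Abel sums telescope exactly and leave $D = \InNorms{x^1-p^1}$, $B_f = f(p^1) - f(p^T)$. The main obstacle I anticipate is establishing and correctly deploying the prox-stability inequality — in particular pinning down the strong-convexity modulus in the $\+F_{\sm}$ case and tracking signs through the Abel summation so that monotonicity of $\{\eta_t\}$ is used in the right direction.
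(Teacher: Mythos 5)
Your proposal is correct and follows essentially the same route as the paper: the standard one-step GD inequality with the moving comparator $p^t=\prox_f(x^t)$, the same key prox-stability inequality $\InNorms{x^{t+1}-p^{t+1}}^2-\InNorms{x^{t+1}-p^t}^2\le 2\InParentheses{f(p^t)-f(p^{t+1})}-\InNorms{p^t-p^{t+1}}^2$ (dropping the quadratic term in the smooth case), and the same Abel-summation bookkeeping with $D$ and $B_f$ under non-increasing step sizes. The only difference is cosmetic: you obtain the key inequality from the $1$-strong (resp.\ $(1-L)$-strong) convexity of the proximal objective $f+\tfrac12\InNorms{\cdot-x^{t+1}}^2$ at its minimizer, whereas the paper derives the identical bound directly from the prox optimality condition together with convexity (resp.\ $1$-smoothness) of $f$.
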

\begin{remark}
    If the set $\+X$ is bounded, we can directly apply the bound with $D$ being the diameter of $\+X$. When the loss functions are $G$-Lipschitz, we can substitute the summation $\sum_{t=1}^T\frac{\eta_t}{2}\InNorms{\nabla \ell^t(x^t)}^2$ with $\frac{G^2}{2} \sum_{t=1}^T \eta_t$.  Without any knowledge of $B_f$, we can choose decreasing step size $\eta_t = \frac{1}{\sqrt{t}}$ or fixed step size $\eta_t = \frac{1}{\sqrt{T}}$, then 
    \begin{align*}
        \reg_f^T \le (D^2 + B_f + G^2) \cdot \sqrt{T}.
    \end{align*}
    If we know $B_f$ and choose the optimized step size $\eta_t = \sqrt{\frac{D^2+2B_f}{G^2 T}}$, then
    \begin{align*}
        \reg_f^T \le G\sqrt{D^2 + 2B_f}\cdot\sqrt{T}.
    \end{align*}
    We note that when $f$ is an indicator function as for the external regret, then $B_f = 0$. Then we recover the optimal $O(GD\sqrt{T})$ bound for the external regret~\citep{zinkevich2003online}.
\end{remark}
\begin{remark}
    Our results also generalize to the Bregman setup where the proximal operator is defined using general Bregman divergence instead of the Euclidean distance. In that case, we show that \hyperref[MD]{Mirror Descent} minimizes the Bregman proximal regret.  See \Cref{sec:bregman proximal} for details.
\end{remark}

\begin{proof}[Proof of \Cref{theorem:GD proximal regret}]
    We define $p^t := \prox_f(x^t)$ for $t \in [1,T]$. By standard analysis of \hyperref[GD]{Online Gradient Descent} (see also the proof of \citealp[Theorem 3.2]{bubeck_convex_2015}), we have
    \begin{align*}
        \sum_{t=1}^T \ell^t(x^t) - \ell^t(p^t) \le  \sum_{t=1}^T \InAngles{\nabla \ell^t(x^t), x^t - p^t}\le \sum_{t=1}^T \frac{1}{2\eta_t} \InParentheses{\InNorms{x^t- p^t}^2 -\InNorms{x^{t+1} - p^t}^2} + \frac{\eta_t}{2}\InNorms{\nabla \ell^t(x^t)}^2.
    \end{align*}
    Since $\eta_t$ is non-increasing, we simplify the above inequality
    \begin{align}
       &\sum_{t=1}^T \InAngles{\nabla \ell^t(x^t), x^t - p^t} \nonumber \\
        &\le \sum_{t=1}^{T} \frac{1}{2\eta_t} \InParentheses{\InNorms{x^{t}- p^{t}}^2 -\InNorms{x^{t+1} - p^t}^2} + \sum_{t=1}^T\frac{\eta_t}{2}\InNorms{\nabla\ell^t(x^t)}^2 \nonumber\\
        &= \frac{\InNorms{x^1-p^1}^2}{2\eta_1} + \sum_{t=1}^{T-1} \InParentheses{ \frac{1}{2\eta_t} \InParentheses{\InNorms{x^{t+1}- p^{t+1}}^2 -\InNorms{x^{t+1} - p^t}^2}  + \InNorms{x^{t+1} - p^{t+1}}^2 \InParentheses{ \frac{1}{2\eta_{t+1}}-\frac{1}{2\eta_t}}  }\nonumber \\
        & \quad -\frac{1}{2\eta_T} \InNorms{x^{T+1} - p^T}^2 + \sum_{t=1}^T\frac{\eta_t}{2}\InNorms{\nabla\ell^t(x^t)}^2 \nonumber \\
        & \le \frac{D^2 - \InNorms{x^{T+1} - p^T}^2}{2\eta_T} + \sum_{t=1}^T\frac{\eta_t}{2}\InNorms{\nabla\ell^t(x^t)}^2 + \sum_{t=1}^{T-1} \frac{1}{2\eta_t} \InParentheses{ \InNorms{x^{t+1} - p^{t+1}}^2 - \InNorms{x^{t+1} - p^t}^2}. \label{eq:gd-1}
    \end{align}
    where in the last step we use $\max_{t \in [T]} \InNorms{x^t -  p^t} \le D$. Also note that for constant step size $\eta_t = \eta$, the above inequality holds with $D = \InNorms{x^1 - p^1}^2$. 
    \paragraph{Key Step}  Now we focus on the term $\InNorms{x^{t+1} - p^{t+1}}^2 - \InNorms{x^{t+1} - p^t}^2$ which does not telescope. We can use \Cref{lemma:telescope} to replace it with an upper bound that telescopes. We present proofs for $f \in \+F_{\mathrm{lsc,c}}$ and $f \in \+F_{\mathrm{smooth}}$ separately.
    
    \paragraph{When $f \in \+F_{\mathrm{lsc,c}}$:} 
    Recall that for any $t$, $p^{t+1} = \prox_f(x^{t+1})$. By item 1 in \Cref{lemma:telescope}, we have
    \begin{align}
        \InNorms{x^{t+1} - p^{t+1}}^2 - \InNorms{x^{t+1} - p^t}^2   \le 2(f(p^t) - f(p^{t+1})) - \InNorms{p^t - p^{t+1}}^2. \label{eq:telescope inequality-1}
    \end{align}
    We can then plug \eqref{eq:telescope inequality-1} back to \eqref{eq:gd-1}. Let $f^* = \min_{t \in [T]} f(p^t)$, we have
    \begin{align*}
        &\sum_{t=1}^T \InAngles{\nabla \ell^t(x^t), x^t - p^t}\\ &\le \frac{D^2 - \InNorms{x^{T+1} - p^T}^2}{2\eta_T} + \sum_{t=1}^T\frac{\eta_t}{2}\InNorms{\nabla\ell^t(x^t)}^2 + \sum_{t=1}^{T-1} \frac{1}{\eta_t} \InParentheses{ f(p^t) - f^* - (f(p^{t+1}) - f^*)} - \sum_{t=1}^{T-1} \frac{1}{2\eta_t} \InNorms{p^t- p^{t+1}}^2 \\
        &= \frac{D^2 - \InNorms{x^{T+1} - p^T}^2}{2\eta_T} + \sum_{t=1}^T\frac{\eta_t}{2}\InNorms{\nabla\ell^t(x^t)}^2 + \sum_{t=1}^{T-1}\InParentheses{ \frac{f(p^t)-f^*}{\eta_t} - \frac{f(p^{t+1})-f^*}{\eta_{t+1}}}  \\
        & \quad \quad + \sum_{t=1}^{T-1} (f(p^{t+1})-f^*)\InParentheses{\frac{1}{\eta_{t+1}} - \frac{1}{\eta_t}} - \sum_{t=1}^{T-1} \frac{1}{2\eta_t} \InNorms{p^t- p^{t+1}}^2\\
        &\le \frac{D^2 + 2B_f - \InNorms{x^{T+1} - p^T}^2}{2\eta_T} + \sum_{t=1}^T\frac{\eta_t}{2}\InNorms{\nabla\ell^t(x^t)}^2 - \sum_{t=1}^{T-1} \frac{1}{2\eta_t} \InNorms{p^t- p^{t+1}}^2,
    \end{align*}
    where in the last inequality, we use $\{\eta_t\}$ is non-increasing again and $f(p^t) -f^*\le B_f$ for all $p^t$ since $B_f  = \max_{t \in [T]} f(p^t) - \min_{t \in [T]} f(p^t)$. 
    Note that when the step size is constant $\eta_t = \eta$, the same inequality holds for $B_f = f(p^1) - f(p^T)$.

    \paragraph{When $f \in \+F_{\mathrm{sm}}$:} Recall that for any $t$, $p^{t+1} = \prox_f(x^{t+1})$. By item 2 in \eqref{lemma:telescope}, we have
    \begin{equation}
        \InNorms{x^{t+1} - p^{t+1}}^2 - \InNorms{x^{t+1} - p^t}^2  \le 2(f(p^t) - f(p^{t+1})). \label{eq:telescope inequality-2}
    \end{equation}
    Now we can plug \eqref{eq:telescope inequality-2} back to \eqref{eq:gd-1} and telescope. By the same analysis above in the convex case, we have
    \begin{align*}
        \sum_{t=1}^T \InAngles{\nabla \ell^t(x^t), x^t - p^t} \le  \frac{D^2 + 2B_f - \InNorms{x^{T+1} - p^T}^2}{2\eta_T} + \sum_{t=1}^T\frac{\eta_t}{2}\InNorms{\nabla\ell^t(x^t)}^2.
    \end{align*}
    Similarly, when the step size is constant $\eta_t = \eta$, the same inequality holds for $B_f = f(p^1) - f(p^T)$.
\end{proof}
The following lemma is used in the proof of \Cref{theorem:GD proximal regret}.
\begin{lemma}[Key Inequality]
\label{lemma:telescope}
Let $x, p \in \+X \subseteq \-R^d$. 

    1. For any convex function $f \in \+F_{\mathrm{lsc,c}}(\+X)$, denote $p_x = \prox_f(x)$, then 
    \begin{align*}
        \InNorms{x - p_x}^2 - \InNorms{x - p}^2  \le 2f(p) -2f(p_x) - \InNorms{p - p_x}^2.
    \end{align*}
    
    2. For any smooth function $f \in \+F_{\mathrm{sm}}(\+X)$, denote $p_x = \prox_f(x)$, then
    \begin{align*}
        \InNorms{x - p_x}^2 - \InNorms{x - p}^2  \le 2f(p) -2f(p_x).
    \end{align*}
\end{lemma}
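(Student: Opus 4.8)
The plan is to prove \Cref{lemma:telescope} directly from the variational characterization of the proximal operator. Recall that $p_x = \prox_f(x) = \argmin_{x' \in \+X}\{f(x') + \frac12\InNorms{x'-x}^2\}$. In both cases I want to compare the objective value at the optimizer $p_x$ against the objective value at the competitor $p$, and then rearrange the resulting inequality into the stated form. The key algebraic identity I will use repeatedly is the three-point expansion $\InNorms{x-p}^2 = \InNorms{x-p_x}^2 + 2\InAngles{p_x - x, p - p_x} + \InNorms{p - p_x}^2$, which lets me trade the difference $\InNorms{x-p_x}^2 - \InNorms{x-p}^2$ for $-2\InAngles{p_x-x,p-p_x} - \InNorms{p-p_x}^2$.

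For the convex case (item 1), I would use the first-order optimality condition for the strongly convex problem defining $p_x$: since $g(x') := f(x') + \frac12\InNorms{x'-x}^2$ is $1$-strongly convex and minimized at $p_x$ over $\+X$, we have $g(p) \ge g(p_x) + \frac12\InNorms{p - p_x}^2$ for every $p \in \+X$. Writing this out gives $f(p) + \frac12\InNorms{p-x}^2 \ge f(p_x) + \frac12\InNorms{p_x - x}^2 + \frac12\InNorms{p-p_x}^2$. Multiplying by $2$ and rearranging yields exactly $\InNorms{x-p_x}^2 - \InNorms{x-p}^2 \le 2f(p) - 2f(p_x) - \InNorms{p-p_x}^2$, which is the claim. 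The only subtlety is justifying the strong-convexity inequality when $f$ is merely lsc convex (not differentiable), but this is standard: $g$ is the sum of a convex function and $\frac12\InNorms{\cdot - x}^2$, hence $1$-strongly convex, and a $1$-strongly convex function $g$ with minimizer $p_x$ on a convex set satisfies $g(p) - g(p_x) \ge \frac12\InNorms{p-p_x}^2$ for all $p$ in the set.

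For the smooth case (item 2), $g(x') = f(x') + \frac12\InNorms{x'-x}^2$ is no longer necessarily strongly convex (it need not even be convex if $f$ is only $1^-$-smooth and possibly nonconvex), but it is still true that $p_x$ is the \emph{global} minimizer: the assumption $f \in \+F_{\sm}$ means $f$ is $L$-smooth with $L < 1$, so $x' \mapsto \frac{L}{2}\InNorms{x'}^2 + f(x')$ plus the rest makes $g$ a sum where the $\frac12\InNorms{\cdot-x}^2$ term dominates the possible nonconvexity, rendering $g$ $(1-L)$-strongly convex and hence with a unique global minimizer $p_x$. Then simply $g(p_x) \le g(p)$ gives $f(p_x) + \frac12\InNorms{p_x-x}^2 \le f(p) + \frac12\InNorms{p-x}^2$, and multiplying by $2$ and rearranging gives $\InNorms{x-p_x}^2 - \InNorms{x-p}^2 \le 2f(p) - 2f(p_x)$, which is item 2. (One could also squeeze out the extra $-(1-L)\InNorms{p-p_x}^2$ term from strong convexity, but the lemma only asks for the weaker bound, so optimality alone suffices.)

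The main obstacle, such as it is, is purely a matter of care rather than depth: making sure the well-definedness conditions on $f$ (lsc convex, or $1^-$-smooth) are invoked correctly so that $p_x$ genuinely is the minimizer and the strong-convexity constant is right, and keeping track of which inequality ($g(p) \ge g(p_x) + \frac12\InNorms{p-p_x}^2$ versus plain $g(p) \ge g(p_x)$) is needed to land the $-\InNorms{p-p_x}^2$ term in item 1 but not in item 2. There is no real analytic difficulty here; it is a two-line computation from the optimality inequality in each case, and the reason item 1 retains the extra negative square while item 2 drops it is precisely the strong convexity of the convex-case subproblem.
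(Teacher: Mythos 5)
Your proof is correct, but it takes a different route from the paper's. The paper proves both items from the first-order optimality condition of the prox subproblem: it picks a subgradient $v \in \partial f(p_x)$ with $\InAngles{p - p_x,\, x - v - p_x} \le 0$, expands $\InNorms{x-p_x}^2 - \InNorms{x-p}^2 = 2\InAngles{p-p_x, x-p_x} - \InNorms{p-p_x}^2$, and then bounds $2\InAngles{p-p_x,v}$ by the subgradient inequality in the convex case, or by the smoothness lower bound $f(p)\ge f(p_x)+\InAngles{v,p-p_x}-\tfrac12\InNorms{p-p_x}^2$ in the smooth case. You instead argue at the level of function values of the prox objective $g(\cdot)=f(\cdot)+\tfrac12\InNorms{\cdot-x}^2$: for item 1 you invoke the quadratic-growth property of the $1$-strongly convex $g$ at its constrained minimizer, and for item 2 you use nothing but $g(p_x)\le g(p)$. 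The two arguments are essentially equivalent (quadratic growth is itself usually proved via the same variational inequality), but yours is slightly more streamlined: your item 2 needs only that $\prox_f(x)$ minimizes $g$ over $\+X$, with $1^-$-smoothness entering solely through well-definedness, whereas the paper re-derives it from the smoothness inequality; and your item 1 makes transparent that the extra $-\InNorms{p-p_x}^2$ is exactly the strong-convexity modulus of the subproblem. One small slip: in item 2 you first assert that $g$ ``need not even be convex'' and then correctly conclude it is $(1-L)$-strongly convex; the first remark is wrong (since $L<1$, the proximal quadratic always dominates), but it is harmless because your argument only uses minimality of $p_x$. Note also that both proofs implicitly use convexity of $\+X$ (yours for the quadratic-growth fact at a constrained minimizer, the paper's for the normal-cone form of optimality), consistent with the setting of \Cref{theorem:GD proximal regret}.
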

\begin{proof}
    Since $p_x = \prox_f(x)$, we know there exists a subgradient $v \in \partial f(p_x)$ (if $f$ is differentiable, then $v = \nabla f(p_x)$ is its gradient) such that $\InAngles{p - p_x, x - v - p_x} \le 0$. Then we have 
    \begin{align*}
        &\InNorms{x - p_x}^2 - \InNorms{x - p}^2 \\
        &= 2 \InAngles{p - p_x, x - p_x} - \InNorms{p - p_x}^2 \\
        &= 2 \InAngles{p - p_x, v} + 2\InAngles{p - p_x, x -v - p_x} - \InNorms{p - p_x}^2 \\
        & \le 2 \InAngles{p - p_x, v}- \InNorms{p - p_x}^2.
    \end{align*}
    
    If $f$ is convex, we further have $2 \InAngles{p - p_x, v} \le 2f(p) - 2f(p_x)$, which proves the first inequality. If $f$ is $1$-smooth, we further have $
        2 \InAngles{p - p_x, v}- \InNorms{p - p_x}^2 \le 2f(p) - 2f(p_x)$, which proves the second inequality.
\end{proof}

\subsubsection{Proximal Operator based Local Strategy Modifications}
We consider the set of local strategy modifications based on the proximal operator. Formally, the set $\Phiprox(\delta)$ encompasses all strategy modifications $\prox_f$ that are $\delta$-local.
\begin{align*}
    \Phiprox(\delta) = \{\prox_f: f \in \+F_{\lscc} \cup \+F_{\sm}, \InNorms{\prox_f(x) -x} \le \delta,\forall x \in \+X\}
\end{align*}
We define $\Phiproxeq(\delta) = \Pi_{i=1}^n \Phi^{\+X_i}_{\prox}(\delta)$. By \Cref{lemma:no-regret-2-CE} and \Cref{theorem:GD proximal regret}, we immediately get the following corollary on learning approximate $\Phiproxeq(\delta)$-equilibrium. 
\begin{corollary}
\label{corollary:local proximal equilibrium}
    For any $\delta, \varepsilon > 0$, when each players in a $L$-smooth and $G$-Lipschitz game employ \hyperref[GD]{Online Gradient Descent}, their empirical distribution of played strategy profiles converges to an $(\varepsilon+\frac{L\delta^2}{2})$-approximate $\Phiproxeq(\delta)$-equilibrium in $O(1/\varepsilon^2)$ iterations.
\end{corollary}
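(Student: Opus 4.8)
The plan is to obtain the corollary by plugging \Cref{theorem:GD proximal regret} into the reduction of \Cref{lemma:no-regret-2-CE}. Each player $i$ runs \hyperref[GD]{Online Gradient Descent} on $\X_i$ against the linearized loss $\ell_i^t(\cdot) = -\InAngles{\nabla_{x_i}u_i(x^t),\,\cdot\,}$; since only the gradient feedback $\nabla_{x_i}u_i(x^t)$ is used, the dynamics are uncoupled. These losses are linear, hence convex --- so they satisfy the hypothesis of \Cref{theorem:GD proximal regret} --- and $G$-Lipschitz because $\InNorms{\nabla_{x_i}u_i(x^t)}\le G$. By part 2 of \Cref{lemma:no-regret-2-CE}, if for each $i$ the $\Phi^{\X_i}_{\prox}(\delta)$-regret of \hyperref[GD]{GD} against these losses is at most $R^T$, then the empirical distribution of the played profiles is an $(R^T/T + \tfrac{L\delta^2}{2})$-approximate $\Phiproxeq(\delta)$-equilibrium, so it remains only to bound $R^T$.

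By definition the $\Phi^{\X_i}_{\prox}(\delta)$-regret equals $\sup_f \reg_f^T$, the supremum over all $f\in\+F_{\lscc}\cup\+F_{\sm}$ whose proximal operator is $\delta$-local. Fixing such an $f$ and running \hyperref[GD]{GD} with the $f$-independent step size $\eta_t = 1/\sqrt{t}$, \Cref{theorem:GD proximal regret} (part 1 if $f\in\+F_{\lscc}$, part 2 if $f\in\+F_{\sm}$) together with the subsequent remark gives $\reg_f^T \le (D_f^2 + B_f + G^2)\sqrt{T}$, where $D_f = \max_t \InNorms{x^t - \prox_f(x^t)} \le \delta$ by $\delta$-locality, and $B_f = \max_t f(\prox_f(x^t)) - \min_t f(\prox_f(x^t))$. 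The one step needing a short argument is a uniform bound on $B_f$: writing $p_x=\prox_f(x)$ and using that $p_x$ minimizes $f(x')+\tfrac12\InNorms{x'-x}^2$ over $x'\in\X_i$, for any $x,y\in\X_i$ we get $f(p_x)-f(p_y) \le \tfrac12\InNorms{p_y - x}^2 - \tfrac12\InNorms{p_x-x}^2 \le \tfrac12(\InNorms{p_y-y}+\InNorms{y-x})^2 \le \tfrac12(\delta+D)^2$, and by symmetry $B_f\le\tfrac12(\delta+D)^2$. (Equivalently, for convex $f$ one may invoke $\nabla M_f = \mathrm{id}-\prox_f$ for the Moreau envelope $M_f$, which is $\delta$-Lipschitz on $\X_i$.) Hence $\reg_f^T \le \big(\delta^2 + \tfrac12(\delta+D)^2 + G^2\big)\sqrt{T} = O\big((D^2+G^2)\sqrt{T}\big)$ uniformly over all admissible $f$, so $R^T = O\big((D^2+G^2)\sqrt{T}\big)$.

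Substituting this $R^T$ into \Cref{lemma:no-regret-2-CE} yields an $\big(O((D^2+G^2)/\sqrt{T}) + \tfrac{L\delta^2}{2}\big)$-approximate $\Phiproxeq(\delta)$-equilibrium; taking $T = \Theta\big((D^2+G^2)^2/\varepsilon^2\big) = O(1/\varepsilon^2)$ drives the first term below $\varepsilon$, proving the claim. I expect the only non-routine point to be the uniform control of $B_f$ over the infinite family of $\delta$-local proximal modifications, since the regret bound of \Cref{theorem:GD proximal regret} depends on it and $\delta$-locality a priori constrains only $\prox_f(x)-x$, not the values of $f$ itself; the elementary optimality argument above resolves this, and the remainder is direct substitution.
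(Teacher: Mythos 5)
Your proposal is correct and follows essentially the same route as the paper, which obtains this corollary immediately by combining \Cref{lemma:no-regret-2-CE} with \Cref{theorem:GD proximal regret} applied to the linearized losses, exactly as you do. Your explicit uniform bound $B_f \le \tfrac{1}{2}(\delta + D)^2$ over all $\delta$-local proximal modifications (via optimality of $\prox_f$) is a detail the paper leaves implicit but is needed for the $O(1/\varepsilon^2)$ rate to hold uniformly over the class, and your argument for it is valid.
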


\subsubsection{Examples of Proximal Regret}
We have shown that external regret is an instantiation of the proximal regret with indicator functions for single points. In this subsection, we give several other instantiations of the proximal regret to illustrate its generality.
\begin{example}[Projection-based deviations]\label{ex:projection}
    Let $f = \InAngles{v,x}$ be a linear function. Then $\prox_f(x) = \Pi_\+X[x - v]$ for all $x \in \+X$. This deviation adds a fixed displacement vector $v$ to the input strategy and projects back to the feasible set if necessary. Let $\Phiproj:=\{\prox_f: f(x) = \InAngles{v, x}\}$ and $\Phiproj(\delta)$ be the corresponding subset of $\delta$-local strategy modifications (all the functions $\InAngles{v,x}$ with $\InNorms{v}\le \delta$). We also denote $\regproj$ the corresponding $\Phiproj(\delta)$-regret. In an earlier version of the paper, we proved $O(\sqrt{T})$ regret for this class of ``projection-based strategy modifications", which is now implied by the more general \Cref{theorem:GD proximal regret}. 
    
    Moreover, we provide examples in \Cref{sec:diff external-proj} showing that the external regret is incomparable with $\Phiproj(\delta)$-regret, which implies that the proximal regret is strictly more general than the external regret. Also, in \Cref{sec:lower bound proj regret}, we give two lower bounds showing that (1) $\regproj^T =\Omega(\sqrt{\delta T})$ against linear losses for all algorithms; (2) $\regproj^T =\Omega(\delta^2LT)$ against $L$-smooth non-concave losses for all algorithms that satisfy the ``linear span" assumption which includes \hyperref[GD]{GD} and \ref{OG}. Since $\Phiproj(\delta)$-regret is a special case of the proximal regret, these lower bounds shows the near-optimality of our upper bounds in \Cref{theorem:GD proximal regret} and \Cref{corollary:local proximal equilibrium}, respectively.

\end{example}

\begin{example}[Application of Online Gradient Descent in Conformal Prediction]
    We note that when $\+X = \-R^d$ is unconstrained, then \Cref{theorem:GD proximal regret} with $\Phiproj$ implies that \hyperref[GD]{GD} guarantees the following property: the norm of the averaged gradient $\InNorms{\frac{1}{T}\sum_{t=1}^T \nabla\ell^t(x^t)} \le \frac{\InNorms{x^1-x^{T+1}}}{\eta T}$. So as long as $\InNorms{x^T}$ grows sublinear in $T$,  the norm of the averaged gradient $\InNorms{\frac{1}{T}\sum_{t=1}^T \nabla\ell^t(x^t)} \rightarrow 0$. This property is called \emph{gradient equilibrium}~\citep{angelopoulos2025gradient}. The property of gradient equilibrium implies marginal coverage for online conformal prediction, while the property of no external regret is not sufficient~\citep{angelopoulos2025gradient, ramalingam2025relationship}. This implication shows the benefit of analyzing \hyperref[GD]{GD} through the lens of the more general proximal regret notion.

    Although the property directly follows from the fact that $\sum_{t=1}^T \eta \nabla \ell^t(x^t) = x^1 - x^{T+1}$, we show this property as a direct implication of proximal regret guarantee and the bound provided in \Cref{theorem:GD proximal regret} to illustrate the generality of proximal regret and tightness of \Cref{theorem:GD proximal regret}.
\begin{corollary}
    Let $\+X = \-R^d$ and $\{\ell^t: \-R^d \rightarrow \-R\}$ be a sequence of convex loss functions. Then \hyperref[GD]{GD} with any constant step size $\eta$ satisfies for any $T \ge 1$ and any vector $v \in \-R^d$,
    \begin{align*}
        \InAngles{\frac{1}{T}\sum_{t=1}^T \nabla \ell^t(x^t), v} \le \frac{\InAngles{v, x^1-x^{T+1}}}{\eta T}.
    \end{align*}
    In particular, this implies
    \begin{align*}
        \InNorms{\frac{1}{T}\sum_{t=1}^T \nabla \ell^t(x^t)} \le \frac{\InNorms{x^1-x^{T+1}}}{\eta T}.
    \end{align*}
\end{corollary}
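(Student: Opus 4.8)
The plan is to derive both inequalities from \Cref{theorem:GD proximal regret} applied to the affine function $f(x) = \InAngles{v,x}$, exploiting that on the unconstrained domain $\+X = \-R^d$ its proximal operator is an exact translation:
\[
\prox_f(x) = \argmin_{x'\in\-R^d}\left\{ \InAngles{v,x'} + \tfrac12\InNorms{x'-x}^2 \right\} = x - v .
\]
Since $f$ is linear it lies in $\+F_{\lscc}(\-R^d)$, and since $\InNorms{\prox_f(x)-x}=\InNorms{v}$ is finite and independent of $x$, the first (convex) case of \Cref{theorem:GD proximal regret} applies; for a constant step size $\eta$ (its third item) it is invoked with $D=\InNorms{v}$ and $B_f = f(\prox_f(x^1)) - f(\prox_f(x^T)) = \InAngles{v,x^1-x^T}$. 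Writing $p^t := \prox_f(x^t) = x^t - v$, the left-hand side of the theorem is $\sum_{t=1}^T \InAngles{\nabla\ell^t(x^t),\,x^t-p^t} = \InAngles{\sum_{t=1}^T\nabla\ell^t(x^t),\,v}$, which is exactly the quantity we want to bound (after dividing by $T$).

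The next step is to simplify the right-hand side, using that on $\+X = \-R^d$ the \hyperref[GD]{GD} update carries no projection: $x^{t+1} = x^t - \eta\nabla\ell^t(x^t)$. Then $p^t - p^{t+1} = x^t - x^{t+1} = \eta\nabla\ell^t(x^t)$, so the slack term $-\sum_{t=1}^{T-1}\tfrac{1}{2\eta}\InNorms{p^t-p^{t+1}}^2$ equals $-\tfrac{\eta}{2}\sum_{t=1}^{T-1}\InNorms{\nabla\ell^t(x^t)}^2$ and cancels all but the $t=T$ term of $\tfrac{\eta}{2}\sum_{t=1}^T\InNorms{\nabla\ell^t(x^t)}^2$; moreover $x^{T+1}-p^T = v - \eta\nabla\ell^T(x^T)$, so $\tfrac{D^2 - \InNorms{x^{T+1}-p^T}^2}{2\eta} = \InAngles{v,\nabla\ell^T(x^T)} - \tfrac{\eta}{2}\InNorms{\nabla\ell^T(x^T)}^2$, whose last piece cancels the remaining $\tfrac{\eta}{2}\InNorms{\nabla\ell^T(x^T)}^2$. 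Collecting, the right-hand side collapses to $\tfrac{1}{\eta}\InAngles{v,x^1-x^T} + \InAngles{v,\nabla\ell^T(x^T)}$, and substituting $x^1 - x^T = (x^1 - x^{T+1}) - \eta\nabla\ell^T(x^T)$ turns this into exactly $\tfrac{1}{\eta}\InAngles{v,x^1-x^{T+1}}$. Dividing by $T$ gives the first displayed inequality. The second follows by taking $v = \tfrac1T\sum_{t=1}^T\nabla\ell^t(x^t)$: the left-hand side becomes $\InNorms{\tfrac1T\sum_{t=1}^T\nabla\ell^t(x^t)}^2$, the right-hand side is at most $\tfrac{1}{\eta T}\InNorms{\tfrac1T\sum_{t=1}^T\nabla\ell^t(x^t)}\,\InNorms{x^1-x^{T+1}}$ by Cauchy--Schwarz, and dividing by $\InNorms{\tfrac1T\sum_{t=1}^T\nabla\ell^t(x^t)}$ (the vanishing case being trivial) finishes the argument.

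The only point requiring care is the cancellation bookkeeping in the middle step: one must check that every slack term carried by \Cref{theorem:GD proximal regret} — the $-\InNorms{x^{T+1}-p^T}^2$ term, the $-\sum\tfrac{1}{2\eta}\InNorms{p^t-p^{t+1}}^2$ term, and the $B_f$ term — contributes exactly enough to leave the tight bound. This is no accident: for an affine $f$ the map $\prox_f$ is an isometric translation, so the inequality of \Cref{lemma:telescope} (item~1) that underlies \Cref{theorem:GD proximal regret} is attained with equality, and the whole bound becomes the elementary telescoping identity $\eta\sum_{t=1}^T\nabla\ell^t(x^t) = x^1-x^{T+1}$ in disguise. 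Presenting the corollary through this lens is exactly the point, as it illustrates both the generality of proximal regret and the tightness of \Cref{theorem:GD proximal regret}; there is no genuine obstacle.
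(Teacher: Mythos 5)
Your proposal is correct and follows essentially the same route as the paper's proof: instantiate \Cref{theorem:GD proximal regret} (constant-step case) with the linear $f(x)=\InAngles{v,x}$ so that $\prox_f(x)=x-v$, use the unconstrained update $x^{t+1}=x^t-\eta\nabla\ell^t(x^t)$ to cancel the gradient-norm terms against the $\InNorms{p^t-p^{t+1}}^2$ slack and the boundary terms, arrive at $\InAngles{v,x^1-x^{T+1}}/\eta$, and finish with Cauchy--Schwarz for the norm bound. The only differences are cosmetic bookkeeping in how the final cancellation is organized, and your choice of $v$ as the averaged rather than summed gradient, which is equivalent by homogeneity.
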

\begin{proof}
    Fix any vector $v$. We can apply \Cref{theorem:GD proximal regret} with the linear $f(x) = \InAngles{v,x}$, which defines $\prox_f(x) = x - v$ for all $x$. Also note that $x^{t+1} = x^t - \eta \nabla \ell^t(x^t)$ and $p^{t}:=\prox_f(x^t) = x^t - v$ for all $t \ge 1$. Moreover, we have $D = \InNorms{x^1 - p^1} = \InNorms{v}$ and $B_f = f(p^1) - f(p^T) = \InNorms{x^1 - x^T, v}$.
    Then by item 1 in \Cref{theorem:GD proximal regret}, we have
    \begin{align*}
        \sum_{t=1}^T \InAngles{\nabla\ell^t(x^t), v} &\le \frac{D^2 + 2B_f - \InNorms{x^{T+1}-p^t}^2}{2\eta} + \sum_{t=1}^T\frac{\eta}{2}\InNorms{\nabla \ell^t(x^t)}^2 - \sum_{t=1}^{T-1} \frac{1}{2\eta} \InNorms{p^t - p^{t+1}}^2, \\
        &=\frac{D^2 + 2B_f - \InNorms{x^{T+1}-x^T + v}^2}{2\eta} + \sum_{t=1}^T\frac{1}{2\eta}\InNorms{x^t - x^{t+1}}^2 - \sum_{t=1}^{T-1} \frac{1}{2\eta} \InNorms{x^t - x^{t+1}}^2 \\
        & = \frac{D^2 + 2B_f -\InNorms{x^{T+1}-x^T + v}^2 + \InNorms{x^T - x^{T+1}}^2 }{2\eta} \\
        & = \frac{\InNorms{v}^2 + 2 \InAngles{x^1 - x^T, v} - 2 \InAngles{x^{T+1} - x^{T}, v} - \InNorms{v}^2}{2 \eta}\\
        & = \frac{\InAngles{x^1 - x^{T+1}, v}}{\eta}.
    \end{align*}
    This completes the proof for the first inequality. 

    The second inequality immediately follows by choosing $v = \sum_{t=1}^T \nabla\ell^t(x^t)$ and then applying Cauchy-Schwarz.
\end{proof}
\end{example}
\begin{example}[Symmetric Linear Swap Regret]
    Linear swap regret is $\Phi$-regret where $\Phi$ contains all the affine \emph{endomorphisms} over $\+X$: any affine strategy modification $\phi(x) = Ax + b$ is an \emph{endomorphism} that maps from $\+X$ to $\+X$ itself.  We consider \emph{symmetric} linear swap regret where we impose the additional assumption that $A$ is symmetric. We show that symmetric linear swap regret can be instantiated as proximal regret so \hyperref[GD]{GD} minimizes symmetric linear swap regret (after some equivalent transformation, see \Cref{sec:symmetric linear swap regret} for details). We remark that very recently~\citep{daskalakis2024efficient} gives an efficient learning algorithm that minimizes the general linear swap regret, but their algorithm is considerably much more complicated than \hyperref[GD]{GD}. Moreover, \hyperref[GD]{GD} simultaneously minimizes other proximal regret $\reg^T_f$ for different convex/smooth functions $f$.  Since \hyperref[GD]{GD} is widely applied in the practice, we view providing stronger guarantees for this classic algorithm as an important contribution.
\end{example}

\subsubsection{Faster Convergence with Optimism in the Game Setting}
\label{sec:OG game setting}
Any online algorithm suffers an $\Omega(\sqrt{T})$ $\Phiprox(\delta)$-regret even against linear loss functions since $\Phiprox(\delta)$-regret covers the external regret as a special case.  This lower bound, however, holds only in the \emph{adversarial} setting.  
In this section, we show an improved $O(T^{\frac{1}{4}})$ individual $\Phiprox(\delta)$-regret bound under a slightly stronger smoothness assumption (\Cref{assumption:stronger smoothness}) in the \emph{game} setting, where players interact with each other using the same algorithm. This assumption is naturally satisfied by finite normal-form games and is also made for results about concave games~\citep{farina2022near}.
\begin{assumption}\label{assumption:stronger smoothness}
For any player $i \in [n]$, the utility $u_i(x)$ satisfies $\InNorms{\nabla_{x_i} u_i(x) - \nabla_{x_i} u_i(x')} \le L\InNorms{x - x'}$ for all $x, x' \in \X$. 
\end{assumption}
We remark that improved regret guarantees have been shown for the external regret~\citep{syrgkanis2015fast,chen2020hedging,daskalakis2021near-optimal, farina2022near} in general concave games or the swap regret in the subclass of normal-form games~\citep{chen2020hedging, anagnostides2022near-optimal, anagnostides2022uncoupled}. Our result fits in the line of literature as an extension from external to proximal regret for general concave games, which also implies a faster $O(\varepsilon^{-\frac{4}{3}})$ convergence to an $(\varepsilon+\frac{L\delta^2}{2})$-approximate $\Phiprox(\delta)$-equilibria for non-concave smooth games in the first-order stationary regime.

We study the Optimistic Gradient \eqref{OG} algorithm~\citep{rakhlin2013optimization}, an optimistic variant of \hyperref[GD]{GD} that has been shown to have improved individual \emph{external} regret guarantee in the game setting~\citep{syrgkanis2015fast}. The \ref{OG} algorithm initializes $w^0 \in \X$ arbitrarily and $g^0 = 0$. In each step $t \ge 1$, the algorithm plays $x^t$, receives gradient feedback $g^t := \nabla f^t(x^t)$, and updates $w^t$, as follows:
\begin{equation}
\label{OG}
\tag{OG}
    \begin{aligned}
        x^t = \Pi_\X \InBrackets{ w^{t-1} - \eta g^{t-1}},  \quad w^t = \Pi_\X \InBrackets{ w^{t-1} - \eta g^t}. 
    \end{aligned}
\end{equation}
We show that for any $f \in \+F_{\lscc}$, the adversarial $\Phiprox(\delta)$-regret of \ref{OG} can be $O(\sqrt{P^t})$   where $P^t:=\sum_{t=1}^T \InNorms{g^t- g^{t-1}}^2$ is the total sum of gradient variation (\Cref{thm:adversaril regret of OG} in \Cref{app:proofs proj regret}). In the game setting, the gradient variation can be further bounded as players' strategies are more stable. This improved gradient variation dependent regret bound leads to a fast $O(T^{1/4})$ $\Phiprox(\delta)$-regret and convergence to approximate $\Phiprox(\delta)$-equilibrium in games.

\begin{theorem}[Improved Individual $\Phiproj(\delta)$-Regret of \ref{OG} in the Game Setting]
\label{thm:game regret of OG}
    In a $G$-Lipschitz $L$-smooth (in the sense of \Cref{assumption:stronger smoothness}) game $\+G = \{[n], \{\+X_i\}, \{u_i\}\}$, when all players employ \ref{OG} with step size $\eta > 0$, then for each player $i$, any $f \in \+F_{\lscc}$, and $T \ge 1$, we have
    \begin{align*}
        \sum_{t=1}^T \InAngles{\nabla_{x_i} u_i(x^t), \prox_f(x^t_i)-x^t_i }\le \frac{D_{\+X_i}^2 + 2B_f}{\eta} +  2\eta G^2 + 3n L^2 G^2 \eta^3 T, \forall f \in \+F_{\lscc}. 
    \end{align*}
    Choosing $\eta = T^{-\frac{1}{4}}$, it is bounded by $(D_{\+X_i}^2 + 2B_f + 4n L^2 G^2)T^{\frac{1}{4}}$. Furthermore, for any $\delta> 0$ and any $\varepsilon> 0 $, their empirical distribution of played strategy profiles converges to an $(\varepsilon + \frac{\delta^2L}{2})$-approximate $\Phiproxeq(\delta)$-\lce in 
    $O(1/\varepsilon^{\frac{4}{3}})$ iterations.
\end{theorem}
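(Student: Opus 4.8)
The plan is to obtain the individual regret bound as a \emph{self-bounding} consequence of the adversarial, gradient-variation-dependent proximal regret bound for \ref{OG} in \Cref{thm:adversaril regret of OG}, and then read off the equilibrium guarantee from the non-concave--to--linear reduction \Cref{lemma:no-regret-2-CE}. Fix a player $i$ and view its play as online learning against the linear losses $\ell^t(\cdot) = -\InAngles{\nabla_{x_i}u_i(x^t),\cdot}$, so the gradient feedback is $g^t = -\nabla_{x_i}u_i(x^t)$ and, for any $f$, $\reg^T_f = \sum_{t=1}^T\InAngles{\nabla_{x_i}u_i(x^t),\prox_f(x^t_i)-x^t_i}$. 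By \Cref{thm:adversaril regret of OG} — whose proof follows the template of \Cref{theorem:GD proximal regret}, tracking $\InNorms{x^t_i-\prox_f(x^t_i)}$, invoking \Cref{lemma:telescope} to bound the non-telescoping increment $\InNorms{x^{t+1}_i-p^{t+1}}^2-\InNorms{x^{t+1}_i-p^t}^2$ by a telescoping quantity, and using the optimistic step so that the step size multiplies $\InNorms{g^t-g^{t-1}}^2$ rather than $\InNorms{g^t}^2$ — we have, for every $f \in \+F_{\lscc}$ (discarding a nonpositive stability remainder),
\[
\sum_{t=1}^T\InAngles{\nabla_{x_i}u_i(x^t),\prox_f(x^t_i)-x^t_i} \;\le\; \frac{D_{\+X_i}^2+2B_f}{\eta} \;+\; 2\eta\sum_{t=1}^T\InNorms{g^t-g^{t-1}}^2.
\]

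The remaining work is to self-bound the total gradient variation in the game setting. The $t=1$ term is $\InNorms{g^1}^2 \le G^2$ by $G$-Lipschitzness, which accounts for the $2\eta G^2$ term. For $t \ge 2$, \Cref{assumption:stronger smoothness} gives $\InNorms{g^t-g^{t-1}} \le L\InNorms{x^t-x^{t-1}}$, and $\InNorms{x^t-x^{t-1}}^2 = \sum_{j\in[n]}\InNorms{x^t_j-x^{t-1}_j}^2$. Since every player runs \ref{OG}, each coordinate is stable: writing $x^t_j = \Pi_{\+X_j}[w^{t-1}_j-\eta g^{t-1}_j]$, $w^{t-1}_j = \Pi_{\+X_j}[w^{t-2}_j-\eta g^{t-1}_j]$, $x^{t-1}_j = \Pi_{\+X_j}[w^{t-2}_j-\eta g^{t-2}_j]$ and using nonexpansiveness of the Euclidean projection, $\InNorms{x^t_j-w^{t-1}_j} \le \InNorms{w^{t-1}_j-w^{t-2}_j} \le \eta\InNorms{g^{t-1}_j}$ and $\InNorms{w^{t-1}_j-x^{t-1}_j} \le \eta\InNorms{g^{t-1}_j-g^{t-2}_j}$, so $\InNorms{x^t_j-x^{t-1}_j} = O(\eta G)$ and hence $\InNorms{x^t-x^{t-1}}^2 = O(n\eta^2G^2)$. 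Plugging back, $2\eta\sum_{t=2}^T\InNorms{g^t-g^{t-1}}^2 = O(nL^2G^2\eta^3 T)$, which together with the previous display gives the stated bound $\tfrac{D_{\+X_i}^2+2B_f}{\eta}+2\eta G^2+O(nL^2G^2\eta^3 T)$; taking $\eta=T^{-1/4}$ collapses this to $(D_{\+X_i}^2+2B_f+O(nL^2G^2))T^{1/4}$.

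For the equilibrium claim, one passes from the per-$f$ bound to a bound on the $\Phi_{\prox}^{\+X_i}(\delta)$-regret by taking the supremum over all $f$ with $\prox_f$ being $\delta$-local — over $\+F_{\lscc}$ and, by the same computation using item 2 of \Cref{lemma:telescope}, over $\+F_{\sm}$. The one point to verify is that $B_f = \max_t f(\prox_f(x^t)) - \min_t f(\prox_f(x^t))$ is bounded uniformly over this class, which follows from the first-order optimality characterization of the proximal map together with convexity: $\delta$-locality forces $f$ to admit a subgradient of norm at most $\delta$ at every point in the image of $\prox_f$, so $B_f \le \delta D_{\+X_i}$. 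Hence each player's $\Phi_{\prox}^{\+X_i}(\delta)$-regret against linear losses is $O\big((D^2+\delta D+nL^2G^2)T^{1/4}\big)$; dividing by $T$ and applying part 2 of \Cref{lemma:no-regret-2-CE}, the empirical distribution of the played profiles is an $\big(O((D^2+\delta D+nL^2G^2)T^{-3/4}) + \tfrac{\delta^2 L}{2}\big)$-approximate $\Phiproxeq(\delta)$-equilibrium, and choosing $T = \Theta(\varepsilon^{-4/3})$ (hidden constant polynomial in $D,\delta,n,L,G$) makes the first term at most $\varepsilon$, proving convergence in $O(1/\varepsilon^{4/3})$ iterations.

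The main obstacle is really packed into \Cref{thm:adversaril regret of OG}: making the optimistic analysis yield a $\sum_t\InNorms{g^t-g^{t-1}}^2$-dependent bound for \emph{proximal} regret — where the comparator $p^t = \prox_f(x^t)$ moves from round to round and must be handled uniformly over all convex/smooth $f$ through \Cref{lemma:telescope} — is the technically delicate part. Given that bound, the present theorem is a short self-bounding argument: the only other ingredients are the elementary per-step stability estimate for \ref{OG} iterates and the uniform control of $B_f$ via $\delta$-locality.
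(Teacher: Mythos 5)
Your proposal follows the paper's proof essentially step for step: invoke the gradient-variation bound of \Cref{thm:adversaril regret of OG}, bound $\InNorms{g^t-g^{t-1}}^2 = O(nL^2\eta^2G^2)$ via \Cref{assumption:stronger smoothness} and the stability of all players' \ref{OG} iterates, tune $\eta = T^{-1/4}$, and pass to the equilibrium claim through \Cref{lemma:no-regret-2-CE}. Your two additions—handling $\+F_{\sm}$ via item 2 of \Cref{lemma:telescope} and the uniform bound $B_f \le \delta D_{\+X_i}$ from prox optimality (where, in the constrained case, one must also use that the normal-cone term has the right sign, since the subgradient itself need not have norm $\le \delta$)—are details the paper leaves implicit, and they are correct.
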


\subsection{Convex Combination of Finite Local Strategy Modifications}
\label{sec:convex-phi}
This section considers $\Conv(\Phi)$ where $\Phi$ is a finite set of local strategy modifications. The set of infinite strategy modifications $\Conv(\Phi)$ is defined as $
    \Conv(\Phi) =\{ \phi_p(x) = \sum_{\phi \in \Phi} p(\phi) \phi(x): p \in \Delta(\Phi)\}.$ Our main result is an efficient algorithm (\Cref{alg:convex phi-reg}) that guarantees convergence to an $\varepsilon$-approximate $\Conv(\Phi)$-equilibrium in a smooth game satisfying \Cref{assumption:smooth games} for any $\varepsilon > \delta^2 L$. Due to space constraints, we defer \Cref{alg:convex phi-reg} and the proof to \Cref{app:convex-phi}.
\begin{theorem}
\label{theorem:convex finite-phi-regret} Let $\X$ be a convex and compact set, $\Phi$ be an arbitrary finite set of $\delta$-local strategy modification functions for $\X$, and $u^1(\cdot),\ldots, u^T(\cdot)$ be a sequence of $G$-Lipschitz and $L$-smooth but possibly non-concave reward functions from $\X$ to $[0,1]$. If we instantiate \Cref{alg:convex phi-reg} with the Hedge algorithm as the regret minimization algorithm $\mathfrak{R}_{\Phi}$ over $\Delta(\Phi)$ and $K=\sqrt{T}$, the algorithm guarantees that, with probability at least $1 - \beta$, it produces a sequence of strategies $x^1, \ldots, x^T$ with $\Conv(\Phi)$-regret at most $8\sqrt{T} \InParentheses{ G\delta \sqrt{\log |\Phi|} + \sqrt{\log(1/\beta)}} + \delta^2L T$. The algorithm runs in time $\sqrt{T}|\Phi|$ per iteration.

If all players in a non-concave smooth game employ \Cref{alg:convex phi-reg}, then with probability $1-\beta$, for any $\varepsilon > 0$, the empirical distribution of strategy profiles played forms an $(\varepsilon+\delta^2 L)$-approximate $\Phi = \Pi_{i=1}^n \Phi^{\mathcal{X}_i}$-equilibrium,, after $\poly\left(\frac{1}{\varepsilon}, G, \log \left(\max_i |\Phi^{\mathcal{X}_i}|\right), \log \frac{n}{\beta}\right)$ iterations.
\end{theorem}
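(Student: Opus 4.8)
The plan is to reduce the $\Conv(\Phi)$-regret against the genuinely non-concave utilities to the ordinary \emph{finite}-$\Phi$-regret already controlled by \Cref{theorem:finite-phi-regret}, paying only an additive $O(\delta^2 L T)$ term that comes from $L$-smoothness (in the spirit of \Cref{lemma:no-regret-2-CE}, but reducing to finitely many generators rather than to linear losses), and then to re-run the analysis behind \Cref{alg:phi-reg} with two small twists so that the leading term scales with $G\delta$ instead of with the ambient scale of the utilities. The key point of the reduction is that we never need a fixed point of a map lying in the infinite set $\Conv(\Phi)$, which would be intractable; instead \Cref{alg:convex phi-reg} only runs Hedge over the finitely many generators $\Phi$ and samples from a depth-$\sqrt T$ tree exactly as in \Cref{alg:phi-reg}.

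The first step I would carry out is a ``reverse Jensen'' estimate for smooth functions. Writing $\phi_p(x) = \sum_{\phi\in\Phi} p(\phi)\,\phi(x)$ for $p\in\Delta(\Phi)$, the two-sided descent lemma gives $u^t(\phi(x^t)) \ge u^t(\phi_p(x^t)) + \InAngles{\nabla u^t(\phi_p(x^t)), \phi(x^t)-\phi_p(x^t)} - \frac{L}{2}\InNorms{\phi(x^t)-\phi_p(x^t)}^2$; averaging over $\phi$ with weights $p$ makes the inner product vanish and leaves
\begin{equation*}
u^t(\phi_p(x^t)) \;\le\; \sum_{\phi\in\Phi} p(\phi)\, u^t(\phi(x^t)) + \frac{L}{2}\sum_{\phi\in\Phi} p(\phi)\InNorms{\phi(x^t)-\phi_p(x^t)}^2 \;\le\; \sum_{\phi\in\Phi} p(\phi)\, u^t(\phi(x^t)) + \frac{L\delta^2}{2},
\end{equation*}
where the last step uses that the ``variance'' $\sum_\phi p(\phi)\InNorms{\phi(x^t)-\phi_p(x^t)}^2$ is at most $\sum_\phi p(\phi)\InNorms{\phi(x^t)-x^t}^2 \le \delta^2$ by $\delta$-locality. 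Subtracting $u^t(x^t)$, summing over $t$ \emph{with $p$ held fixed}, and only then taking $\max_{p\in\Delta(\Phi)}$ — the order matters, since $\sum_t\sum_\phi p(\phi)(\cdots)=\sum_\phi p(\phi)\sum_t(\cdots)\le\max_{\phi\in\Phi}\sum_t(\cdots)$ — yields the clean reduction $\reg^T_{\Conv(\Phi)} \le \reg^T_{\Phi} + \tfrac{L\delta^2}{2}T \le \reg^T_{\Phi} + \delta^2 L T$, where $\reg^T_{\Phi}$ is the ordinary finite-$\Phi$-regret of the played sequence.

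The second step is to bound $\reg^T_{\Phi}$ of the played sequence by instantiating \Cref{alg:phi-reg} on the finite set $\Phi$ with tree depth $K=\sqrt T$, but feeding Hedge the \emph{centered} rewards $r^t(\phi):=u^t(\phi(x^t))-u^t(x^t)$ rather than $u^t(\phi(x^t))$. As in \Cref{sec:proof of finite-phi-regret}, $\reg^T_{\Phi}$ splits into the external regret of the Hedge instance over $\Phi$ plus the error of playing from the approximately stationary tree distribution $\mu^t$. For the first piece, $r^t(\phi)\in[-G\delta,G\delta]$ by $G$-Lipschitzness and $\delta$-locality, so Hedge tuned to this range incurs only $O\InParentheses{G\delta\sqrt{T\log|\Phi|}}$; the centering is harmless because it shifts all experts' rewards by a per-round constant and does not change Hedge's iterates $p^t$. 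For the second piece, the identical telescoping-across-tree-levels argument together with the super-martingale and Hoeffding--Azuma step (which uses only $u^t\in[0,1]$) bounds it by $\tfrac{T}{K}+O\InParentheses{\sqrt{T\log(1/\beta)}}=O\InParentheses{\sqrt T+\sqrt{T\log(1/\beta)}}$ with probability $1-\beta$. Adding the $\tfrac{L\delta^2}{2}T$ term from the reduction gives the stated bound $8\sqrt T\InParentheses{G\delta\sqrt{\log|\Phi|}+\sqrt{\log(1/\beta)}}+\delta^2 L T$, and the per-iteration cost is one Hedge update over $|\Phi|$ experts plus sampling one depth-$\sqrt T$ path, i.e.\ $O(\sqrt T\,|\Phi|)$.

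For the game statement I would run \Cref{alg:convex phi-reg} for all $n$ players, apply the single-agent bound with failure probability $\beta/n$ and union-bound, so that with probability $1-\beta$ every player $i$ has $\Conv(\Phi^{\X_i})$-regret at most $8\sqrt T\InParentheses{G\delta\sqrt{\log|\Phi^{\X_i}|}+\sqrt{\log(n/\beta)}}+\delta^2 L T$, and then invoke \Cref{theorem: no-phi-regret-2-phi-eq} in its empirical-distribution form (handling the randomization of the algorithms exactly as in the proof of \Cref{theorem:finite-phi-regret}) to conclude that the empirical distribution of played profiles is an $(\varepsilon+\delta^2 L)$-approximate $\Conv(\Phi)$-equilibrium once $\reg^T/T \le \varepsilon + \delta^2 L$, i.e.\ after $\poly\InParentheses{1/\varepsilon,\,G,\,\log\max_i|\Phi^{\X_i}|,\,\log(n/\beta)}$ rounds. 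The step I expect to require the most care is the reverse-Jensen reduction of the second paragraph: everything afterwards is a lightly refined re-run of \Cref{theorem:finite-phi-regret}, but it is precisely the smoothness-based ``linearization over convex combinations'' that turns the a priori infinite-dimensional $\Conv(\Phi)$-regret into something the finite-$\Phi$ sampling machinery can handle — and it is also the unavoidable origin of the $\delta^2 L T$ term.
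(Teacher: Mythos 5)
Your proposal is correct in its mathematics, but it takes a genuinely different route from the paper, and it analyzes a variant of the algorithm rather than \Cref{alg:convex phi-reg} itself. The paper decomposes $\reg^T_{\Conv(\Phi)}$ into the external regret over $\Delta(\Phi)$ plus the ``expected fixed point'' error $\sum_t \InParentheses{u^t(\phi_{p^t}(x^t)) - u^t(x^t)}$ for the single averaged map $\phi_{p^t}=\sum_\phi p^t(\phi)\phi$, and then uses $L$-smoothness to linearize both non-concave comparisons around the played point $x^t$; this costs $\frac{\delta^2L}{2}$ twice (hence $\delta^2LT$), but it is exactly what lets the algorithm feed Hedge the linear rewards $\InAngles{\nabla u^t(x^t),\phi(x^t)-x^t}$ and hence require only one gradient query per round, with the deterministic chain $x_k=\phi_{p^t}(x_{k-1})$ giving exact telescoping of the fixed-point error. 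You instead prove a ``reverse Jensen'' inequality, $u^t(\phi_p(x^t)) \le \mathbb{E}_{\phi\sim p}\InBrackets{u^t(\phi(x^t))} + \frac{L}{2}\mathbb{E}_{\phi\sim p}\InNorms{\phi(x^t)-\phi_p(x^t)}^2 \le \mathbb{E}_{\phi\sim p}\InBrackets{u^t(\phi(x^t))} + \frac{L\delta^2}{2}$ (the variance bound via the bias--variance decomposition is valid), which reduces $\Conv(\Phi)$-regret to plain finite-$\Phi$-regret at cost $\frac{L\delta^2}{2}T$, and then rerun the \Cref{theorem:finite-phi-regret} machinery with the per-round expert spread bounded by $2G\delta$ so that Hedge contributes $O(G\delta\sqrt{T\log|\Phi|})$; the Azuma step and the game-convergence union bound carry over verbatim. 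This is clean, and it even halves the smoothness penalty. The caveat is that your argument is about a cousin of \Cref{alg:convex phi-reg}: you feed Hedge the exact (centered) utilities $u^t(\phi(x^t))-u^t(x^t)$, which in general produce different iterates $p^t$ than the gradient-linearized rewards the stated algorithm uses (centering is indeed harmless, but replacing $\InAngles{\nabla u^t(x^t),\phi(x^t)-x^t}$ by $u^t(\phi(x^t))-u^t(x^t)$ is not), and you sample from the random composition tree of \Cref{alg:sample} rather than the deterministic averaged-map chain; your variant also needs $|\Phi|$ utility evaluations per round instead of a single gradient. So as written you establish the stated regret and equilibrium bounds for a modified algorithm with the same $O(\sqrt{T}|\Phi|)$ per-iteration cost; to prove the statement for \Cref{alg:convex phi-reg} as specified, one should linearize at $x^t$ as the paper does, at the price of the extra $\frac{\delta^2L}{2}T$ term.
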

\paragraph{Proof Sketch} We adopt the framework in \citep{stoltz2007learning,gordon2008no} (as described in \Cref{sec: finite-phi}) with two main modifications. First, we utilize the $L$-smoothness of the utilities to transform the problem of external regret over $\Delta(\Phi)$ against non-concave rewards into a linear optimization problem.  Second, we use the technique of ``expected fixed point"~\citep{zhang2024efficient} to circumvent the intractable problem of finding a fixed point.

\subsection{Interpolation-Based Local Strategy Modifications}
\label{sec:phi-int-regret minization}
We introduce a natural set of local strategy modifications and the corresponding local equilibrium notion. Given any set of (possibly non-local) strategy modifications $\Psi = \{ \psi: \X \rightarrow \X\}$, we define a set of \emph{local} strategy modifications as follows: for $\delta \le D_\X$ and $\lambda \in [0,1]$, each strategy modification $\phi_{\lambda, \psi}$ interpolates the input strategy $x$ with the modified strategy $\psi(x)$: formally, 
\begin{align*}
    \Phi^\X_{\Int, \Psi}(\delta):=\left\{ \phi_{\lambda, \psi}(x) := (1- \lambda)x + \lambda \psi(x): \psi \in \Psi, \lambda \le \delta / D_\X\right\}.
\end{align*}
Note that for any $\psi \in \Psi$ and $\lambda \le \frac{\delta}{D_\X}$, we have $\InNorms{\phi_{\lambda, \psi}(x) - x} = \lambda \InNorms{x - \psi(x)} \le \delta$, respecting the locality constraint. The induced $\Phi^\X_{\Int, \Psi}(\delta)$-regret can be written as 
\[ \reg^T_{\Int,\Psi, \delta} :=   \max_{\psi \in \Psi, \lambda \le \frac{\delta}{D_\X}} \sum_{t=1}^T \InParentheses{\ell^t(x^t) - \ell^t\InParentheses{(1-\lambda)x^t + \lambda \psi(x^t)}}.\]
To guarantee convergence to the corresponding $\Phi_{\Int,\Psi}(\delta)$-equilibrium in the first-order stationary regime, it suffices to minimize $\Phi^\X_{\Int, \Psi}(\delta)$-regret against linear losses, which we show further reduces to $\Psi$-regret minimization against linear losses (\Cref{thm:int reduction} in \Cref{app:phi-int-regret minization}). Therefore, we can apply any efficient algorithms for computing/learning $\Psi$-equilibrium in concave setting for $\Phi_{\Int,\Psi}(\delta)$-equilibrium in smooth non-concave games in the first-order stationary regime.

\paragraph{CCE-like Instantiation} In the special case where $\Psi$ contains only \emph{constant} strategy modifications (i.e., $\psi(x) = x^*$ for all $x$), we get a coarse correlated equilibrium (CCE)-like instantiation of local equilibrium, which limits the gain by interpolating with any \emph{fixed} strategy. We denote the resulting set of local strategy modification simply as $\Phiint(\delta)$\footnote{We note that $\Phiint(\delta)$ is in fact contained in $\Phiprox(\delta)$. Let $f_{x'}(x) = \frac{1}{2}\InNorms{x-x'}^2$, then $\prox_{\frac{\lambda}{1-\lambda} f_{x'}} = (1-\lambda)x + \lambda x'$. }. We can apply any no-external regret algorithm for efficient $\Phiint(\delta)$-regret minimization and computation of $\varepsilon$-approximate $\Phiinteq(\delta)$-\lce in the first-order stationary regime as summarized in \Cref{thm:lce_int}. We also discuss faster convergence rates in the game setting in \Cref{app:phi-int-regret minization}.

\begin{theorem}\label{thm:lce_int}
     For the \hyperref[GD]{Online Gradient Descent} algorithm ~\citep{zinkevich2003online} with step size $\eta = \frac{\Dx}{G\sqrt{T}}$, its $\Phiint(\delta)$-regret is at most $2\delta G\sqrt{T}$. Furthermore, for any $\delta> 0$ and any $\varepsilon>\frac{\delta^2L}{2}$, when all players employ the \hyperref[GD]{GD}  algorithm in a smooth game, their empirical distribution of played strategy profiles converges to an $(\varepsilon + \frac{\delta^2L}{2})$-approximate $\Phiinteq(\delta)$-\lce in $O(1/\varepsilon^2)$ iterations. 
\end{theorem}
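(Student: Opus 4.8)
The plan is to reduce $\Phiint(\delta)$-regret to ordinary external regret and then invoke the classical Zinkevich guarantee for \hyperref[GD]{GD}. Fix a sequence of $G$-Lipschitz convex losses $\{\ell^t\}$ (the case relevant to the game reduction is linear losses, where the inequality below is an identity). For any $x^\ast \in \+X$ and any $\lambda \in [0,\delta/\Dx]$, convexity of $\ell^t$ gives $\ell^t(x^t) - \ell^t\InParentheses{(1-\lambda)x^t + \lambda x^\ast} \le \InAngles{\nabla\ell^t(x^t),\, x^t - ((1-\lambda)x^t+\lambda x^\ast)} = \lambda\InAngles{\nabla\ell^t(x^t),\, x^t - x^\ast}$. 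Summing over $t$ and optimizing over the interpolation weight $\lambda$ — which equals $\delta/\Dx$ when the linearized external regret is positive and collapses to $0$ otherwise — yields $\reg^T_{\Int,\delta} \le \tfrac{\delta}{\Dx}\cdot\InBrackets{\,\max_{x^\ast \in \+X}\sum_{t=1}^T \InAngles{\nabla\ell^t(x^t), x^t - x^\ast}\,}^{+}$. The bracketed term is exactly the (linearized) external regret of \hyperref[GD]{GD}, which the standard analysis bounds by $\tfrac{\InNorms{x^1-x^\ast}^2}{2\eta} + \tfrac{\eta}{2}\sum_{t=1}^T\InNorms{\nabla\ell^t(x^t)}^2 \le \tfrac{\Dx^2}{2\eta} + \tfrac{\eta G^2 T}{2}$; with $\eta = \Dx/(G\sqrt T)$ this is at most $\Dx G\sqrt T$, and in any case at most $2\Dx G\sqrt T$ with room to spare. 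Multiplying by $\delta/\Dx$ gives $\reg^T_{\Int,\delta}\le 2\delta G\sqrt T$, the first claim.

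For the equilibrium statement I would apply \Cref{lemma:no-regret-2-CE}(2). In an $L$-smooth game, running \hyperref[GD]{GD} with only gradient feedback is identical to running it against the linearized reward $\InAngles{\nabla_{x_i}u_i(x^t),\cdot}$, with each player $i$ using step size scaled to their own diameter $D_{\X_i}$; by the first part, player $i$'s $\Phiint(\delta)$-regret against these linear losses is at most $\delta G\sqrt T$ (the $D_{\X_i}$ cancels). \Cref{lemma:no-regret-2-CE}(2) then says the empirical distribution of played profiles is a $\InParentheses{\nicefrac{\max_i \reg^T_{\Phiint(\delta)}}{T} + \tfrac{\delta^2 L}{2}}$-approximate $\Phiinteq(\delta)$-\lce. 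Since $\nicefrac{\max_i\reg^T_{\Phiint(\delta)}}{T} \le \delta G/\sqrt T$, this quantity drops below $\varepsilon + \tfrac{\delta^2 L}{2}$ once $T = \Omega(\delta^2 G^2/\varepsilon^2) = O(1/\varepsilon^2)$, giving the stated iteration bound. The hypothesis $\varepsilon > \tfrac{\delta^2 L}{2}$ merely ensures the reduction's $\tfrac{\delta^2 L}{2}$ slack is dominated by $\varepsilon$, i.e.\ keeps us in the first-order stationary regime.

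There is no genuinely hard step here; the argument is essentially bookkeeping. The one point requiring care is the reduction in the first paragraph: one must handle the sign of the external-regret term correctly (the optimal interpolation weight is $0$ whenever deviating cannot help), use the constraint $\lambda \le \delta/\Dx$ to introduce exactly the factor $\delta/\Dx$, and note that for general convex — as opposed to linear — losses it is the first-order inequality that lets us pass to the linearized regret. An alternative route is to invoke the containment $\Phiint(\delta)\subseteq\Phiprox(\delta)$, witnessed by $\prox_{\frac{\lambda}{1-\lambda}f_{x'}}(x) = (1-\lambda)x + \lambda x'$ with $f_{x'}(\cdot) = \tfrac12\InNorms{\cdot-x'}^2$, and quote \Cref{theorem:GD proximal regret} and \Cref{corollary:local proximal equilibrium} directly; however, the self-contained reduction above is what produces the tight $\delta$-linear scaling $2\delta G\sqrt T$ (the proximal-regret bound, instantiated at this $f$, carries additional $G^2\sqrt T$ and $\delta\Dx\sqrt T$ terms) and the clean $O(1/\varepsilon^2)$ iteration count, so I would present it.
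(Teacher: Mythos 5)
Your proposal is correct and follows essentially the same route as the paper: your first paragraph reproduces the reduction of \Cref{thm:int reduction} (convexity plus the interpolation weight $\lambda\le\delta/\Dx$ gives the $\tfrac{\delta}{\Dx}\InBrackets{\cdot}^{+}$ factor times the linearized external regret), after which the classical \hyperref[GD]{GD} bound and \Cref{lemma:no-regret-2-CE} yield the stated regret and the $O(1/\varepsilon^2)$ iteration count exactly as in the paper.
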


\section{Hardness Beyond First-Order Stationary Regime}
\label{sec:first-order regime hardness}

For an infinite set of $\delta$-local strategy modifications $\Phi(\delta)$, we focus on the complexity of computing $\varepsilon$-approximate $\Phi(\delta)$-equilibrium when $\varepsilon = \Omega(\delta^2)$. We call $\varepsilon = \Omega(\delta^2)$ the \emph{first-order stationary regime} since a deviation bounded by $\delta$ only gives $\delta^2$ gain in utility. We show that by considering $\Phi(\delta)$-equilibria that correlate players' strategy, efficient algorithms exist for $\varepsilon = O( L\delta^2)$ for certain classes of $\Phi(\delta)$ (\Cref{sec:proximal regret,sec:convex-phi,sec:phi-int-regret minization}). A natural question on the complexity of computing an $\varepsilon$-approximate $\Phi(\delta)$-equilibrium is  
\begin{equation}
    \text{Is first-order stationary regime the best we can hope for? Can we get $\varepsilon = o(\delta^2)$ efficiently?}
\end{equation}
We show that unless \textsc{P}$=$\textsc{NP}, there is no algorithm that computes an $\varepsilon$-approximate $\Phi(\delta)$-equilibrium for $\varepsilon = o(\delta^2)$ and runs in time $\poly(d, G, L, 1/\delta,1/\varepsilon)$. This hardness result complements our positive results where we design algorithms that run in time $\poly(d, G, L, 1/\delta,1/\varepsilon)$ in the first-order stationary regime $\varepsilon = \Omega(\delta^2L)$. Together, they present a more complete picture of the complexity landscape of approximate $\Phi(\delta)$-equilibrium. 

We first present a weaker hardness that holds for $\Phi(\delta) = \Phi_{\mathrm{All}}(\delta)$ that contains all local strategy modifications. Then we present a hardness result that holds even for a restricted set of strategy modifications  $\Phi_{\mathrm{Int}^+}(\delta)$ is the combination of $\Phiinteq(\delta)$ and one more local strategy modification. 

\subsection{Hardness When $\Phi(\delta)$ Contains All Local Strategy Modifications}
We start with the case when $\Phi(\delta)$ contains all $\delta$-local strategy modifications. We consider a single-player game with an action set $\+X := \{x \in \-R^d_{+}: \InNorms{x}_1 \le 1\}$ and quadratic utility functions. As a byproduct of our analysis, we also prove the NP-hardness of computing a $(\varepsilon, \delta)$-local maximizer of a smooth and Lipschitz quadratic function over a polytope.\footnote{A point $x$ is an $(\varepsilon, \delta)$-local maximizer of a function $f: \+X \rightarrow \mathbb{R}$ if and only if $f(x) \ge f(x') - \varepsilon, \forall x' \in B_\delta(x) \cap \+X$.} This is a generalization of the hardness of computing an \emph{exact} local maximizer of a quadratic function over a polytope~\cite{ahmadi2022complexity}. 

We will construct a reduction from the Maximum Clique problem denoted as \textsc{Max Clique}. 
\begin{definition}[\textsc{Max Clique}]
    Given a graph $G = (V, E)$ with vertices $V$ and edges $E$. The \textsc{Max Clique} problem asks to compute the clique number $\omega(G)$ defined as the size of the maximum clique in the graph $G$.
\end{definition}
The \textsc{Max Clique} problem is NP-hard. Moreover, even approximating \textsc{Max Clique} with a factor of $O(d^{1-o(1)})$ is NP-hard~\citep{zuckerman2006linear}, where $d$ is the number of nodes in the graph.

\paragraph{Hard Functions} Fix any graph $G = (V, E)$ with $|V| = d \ge 3$. We denote $A \in \{0,1\}^{d \times d}$ as $G$'s adjacent matrix. Then by then Motzkin-Straus Theorem~\citep{motzkin_maxima_1965}, we have
\begin{align*}
    \max_{x \in \Delta^d} x^\top A x = 1 - \frac{1}{\omega(G)}.
\end{align*}
We denote $x^* \in \Delta^d$ as an optimal solution to the above optimization problem. 

We define a family of functions $f_k: \+X \rightarrow \-R$ indexed by $k \in [1,d]$:
\begin{align}
     f_k(x) := \InParentheses{ x^\top A x - \InParentheses{1 - \frac{1}{k}}\cdot \InNorms{x}_1^2}, \quad, 1 \le k \le d. \label{dfn:f_k}
\end{align}
We can verify that each $f_k$ is $G$-Lipschitz and $L$-smooth with $G, L = O(\poly(d))$.  
\begin{lemma}\label{lem:properties of f_k}
    The following holds for any $f_k$ over $\+X = \{x \in \-R^d_+: \InNorms{x}_1\le 1\}$:
    \begin{itemize}
        \item[1.] \textbf{Boundedness:} $f_k(x) \in [-1, 1]$ for all $x\in [0,1]^d$. If $k> \omega(G)$, $f_k(x)\le 0$ for all $x\in \+X$.
        \item[2.] \textbf{(Local) Lipschitzness:} For any $x \in [0,1]^d$, $\InNorms{\nabla_x f_k(x)}_2 \le 3\sqrt{d} \InNorms{x}_1$. Therefore, $f_k$ is $3\sqrt{d}$-Lipschitz.
        \item[3.] \textbf{Smoothness:} $f_k$ is $2d$-smooth.
    \end{itemize}
\end{lemma}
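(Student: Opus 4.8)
The plan is to establish the three items separately, exploiting throughout that each $f_k$ is a quadratic form, so its gradient is affine and its Hessian is a constant matrix.

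For \textbf{boundedness}, I would first note that on $\+X$ we have $x \ge 0$ and $\InNorms{x}_1 \le 1$, so, since $A$ has $\{0,1\}$ entries, $0 \le x^\top A x = \sum_{i,j} A_{ij} x_i x_j \le \sum_{i,j} x_i x_j = \InNorms{x}_1^2 \le 1$, while $(1-\tfrac{1}{k})\InNorms{x}_1^2 \in [0,1]$; subtracting gives $f_k(x) \in [-1,1]$. For the sharper claim when $k > \omega(G)$, the idea is to factor out the scale: writing $x = \InNorms{x}_1\, y$ with $y \in \Delta^d$ (the case $x = 0$ being trivial with $f_k(0)=0$), one gets $f_k(x) = \InNorms{x}_1^2\big(y^\top A y - (1-\tfrac{1}{k})\big)$, and the Motzkin--Straus identity quoted above bounds $y^\top A y \le \max_{z\in\Delta^d} z^\top A z = 1 - \tfrac{1}{\omega(G)} < 1 - \tfrac{1}{k}$, so the bracket is negative and $f_k(x)\le 0$.

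For \textbf{Lipschitzness}, I would compute $\nabla f_k(x) = 2Ax - 2(1-\tfrac{1}{k})\InNorms{x}_1 \mathbf{1}$ on the nonnegative orthant (where $\InNorms{x}_1 = \sum_i x_i$ is smooth), observe that each coordinate of $\tfrac12 \nabla f_k(x)$ lies in $\big[-(1-\tfrac{1}{k})\InNorms{x}_1,\, \InNorms{x}_1\big]$ because $0 \le (Ax)_i \le \sum_j x_j = \InNorms{x}_1$, hence $\InNorms{\nabla f_k(x)}_2 \le 2\sqrt{d}\,\InNorms{x}_1 \le 3\sqrt{d}\,\InNorms{x}_1$; on $\+X$ this is $\le 3\sqrt{d}$, and convexity of $\+X$ upgrades the gradient bound to $3\sqrt{d}$-Lipschitzness of $f_k$ via the mean-value inequality. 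For \textbf{smoothness}, since $f_k$ is quadratic its Hessian is the constant matrix $\nabla^2 f_k = 2A - 2(1-\tfrac{1}{k})\mathbf{1}\mathbf{1}^\top$, and I would bound $\InNorms{\nabla^2 f_k}_{\mathrm{op}}$ by combining $\InNorms{A}_{\mathrm{op}} \le \max_i \sum_j A_{ij} \le d-1$ (maximum row sum of a symmetric $0/1$ matrix) with $\InNorms{\mathbf{1}\mathbf{1}^\top}_{\mathrm{op}} = d$, which gives the claimed $O(d)$-smoothness (indeed $\nabla^2 f_k \preceq 2A \preceq 2dI$ for the upper side, and the triangle inequality for the lower side).

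I do not expect any of these computations to be difficult; the only place that genuinely uses the structure of the construction rather than generic $0/1$-matrix norm estimates is the implication ``$k > \omega(G) \Rightarrow f_k \le 0$ on $\+X$'', which is precisely where the Motzkin--Straus theorem enters. I therefore view that as the conceptual heart of the lemma, with the Lipschitz and smoothness bounds being routine bookkeeping.
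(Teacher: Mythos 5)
Your treatment of items 1 and 2 matches the paper's proof in substance: the $k>\omega(G)$ claim is handled by exactly the same rescaling $f_k(x)=\InNorms{x}_1^2\,\big(y^\top A y-(1-\tfrac1k)\big)$ with $y=x/\InNorms{x}_1\in\Delta^d$ followed by Motzkin--Straus, and your generic $[-1,1]$ and gradient bounds are, if anything, cleaner. In particular, your coordinatewise estimate on $\tfrac12\nabla f_k(x)=Ax-(1-\tfrac1k)\InNorms{x}_1\mathbf{1}$ gives $\InNorms{\nabla f_k(x)}_2\le 2\sqrt d\,\InNorms{x}_1\le 3\sqrt d\,\InNorms{x}_1$ while working with the correct gradient $2Ax-2(1-\tfrac1k)\InNorms{x}_1\mathbf{1}$; the paper's displayed gradient drops the factor $2$ on $Ax$, and a plain triangle inequality on the correct gradient would only give $4\sqrt d\,\InNorms{x}_1$, so your coordinatewise route is the right repair. (Both you and the paper actually prove boundedness on $\+X$, where $\InNorms{x}_1\le 1$; the statement's ``for all $x\in[0,1]^d$'' in item 1 is evidently a slip.)

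The one step that falls short of the statement is smoothness. With the correct Hessian $\nabla^2 f_k=2A-2(1-\tfrac1k)\mathbf{1}\mathbf{1}^\top$, your upper side $\nabla^2 f_k\preceq 2A\preceq 2(d-1)I$ is fine, but ``triangle inequality for the lower side'' only yields $\nabla^2 f_k\succeq -\big(2\InNorms{A}_{\mathrm{op}}+2(1-\tfrac1k)\InNorms{\mathbf{1}\mathbf{1}^\top}_{\mathrm{op}}\big)I\succeq -(4d-2)I$, i.e.\ roughly $4d$-smoothness, not the claimed $2d$. The exact constant is immaterial downstream (any $\poly(d)$ smoothness suffices for the hardness reductions), but as a proof of the lemma as stated this is a gap; note the paper glosses over the same point (it writes the Hessian as $A-2(1-\tfrac1k)\mathbf{1}\mathbf{1}^\top$, again missing the factor $2$, and simply asserts the $2d$ eigenvalue bound). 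A short fix: write $A=\mathbf{1}\mathbf{1}^\top-\bar A-I$, where $\bar A$ is the adjacency matrix of the complement graph, so that $\nabla^2 f_k=\tfrac2k\mathbf{1}\mathbf{1}^\top-2\bar A-2I$; since $\tfrac2k\mathbf{1}\mathbf{1}^\top\succeq 0$ and $\InNorms{\bar A}_{\mathrm{op}}\le d-1$, this gives $\nabla^2 f_k\succeq -2dI$, which combined with your upper side $\preceq 2(d-1)I$ recovers the stated $2d$-smoothness.
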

\begin{proof}
    For any $x \in \+X$ such that $x \ne 0$, we have
    \begin{align}
        f_k(x) &= x^\top A x - \InParentheses{1 - \frac{1}{k}}\cdot \InNorms{x}_1^2 \nonumber \\
        &= \InNorms{x}_1^2 \cdot \InParentheses{ \frac{x}{\InNorms{x}_1}^\top A  \frac{x}{\InNorms{x}_1} - \InParentheses{1 - \frac{1}{k}}}\nonumber \\
        &\le \InNorms{x}_1^2 \InParentheses{ 1 - \frac{1}{\omega(G)} - \InParentheses{1 - \frac{1}{k}}}\nonumber \tag{Motzkin-Straus Theorem} \\
        & = \InNorms{x}_1^2 \InParentheses{ \frac{1}{k} - \frac{1}{\omega(G)}}. \label{eq:f_k}
    \end{align}
    We note that the inequality becomes equality when $\frac{x}{\InNorms{x}_1} = x^*$ the optimal solution for $\max_{x \in \Delta^d} x^\top Ax$. Thus $f_k(x) \le \InNorms{x}_1^2 \le 1$. Moreover, we have $f_k(x) \ge -(1-\frac{1}{k})\InNorms{x}_1^2\ge -\InNorms{x}_1^2 \ge -1$. Thus $f_k(x) \in [-1, 1]$. If $k>\omega(G)$, then we have $f_k(x)\leq \InNorms{x}_1^2 \InParentheses{ \frac{1}{k} - \frac{1}{\omega(G)}}<0$.
    
    For the second part, we can bound the $\ell_2$-norm of the gradient $\nabla_x f_k(x)$ by
    \begin{align*}
        \InNorms{\nabla_x f_k(x)}_2 &= \InNorms{Ax - 2\InParentheses{1-\frac{1}{k}}\InNorms{x}_1 \cdot 1_d}_2  \\
        &\le \InNorms{Ax}_2 + 2\InNorms{{\InNorms{x}_1 \cdot 1_d}}_2 \\
        &\le \sqrt{d}\InNorms{x}_1 + 2\sqrt{d} \InNorms{x}_1 \\
        &= 3\sqrt{d}\InNorms{x}_1.
    \end{align*}
    where in the first inequality, we apply the triangle inequality; in the second inequality, we use the fact that $A \in [0,1]^d$. Therefore, $f_k$ is $3\sqrt{d}$-Lipschitz over $\+X$.

    For the third part, we compute the the Hessian of $f_k$
    \begin{align*}
        \nabla^2_x f(x) = A - 2(1-\frac{1}{k}) \cdot 1_{d \times d},
    \end{align*}
    where we use $1_{d \times d}$ to denote the all-one matrix. Since $A \in \{0,1\}^{d \times d}$, we know the absolute eigenvalues of $\nabla^2_x f(x)$ is bounded by $2d$. Thus $f_k$ is $2d$-smooth.
\end{proof}

The following technical lemma relates the local maximizer of $f_k$ for $1 \le k \le d$ and the clique number $\omega(G)$. Specifically, \Cref{lem:localMax} shows that when $k\ge \omega(G) + 1$, all $(\frac{\delta^2}{8d^2}, \delta)$-local maximizer must have small $\ell_2$-norms, i.e., $\InNorms{x} < \frac{\delta}{2}$; when $k < \omega(G)$, all $(\frac{\delta^2}{8d^2}, \delta)$-local maximizer must has large $\ell_2$-norm, i.e., $\InNorms{x} > \frac{\delta}{2}$. 
\begin{lemma}
\label{lem:localMax}
    The following holds for all $\delta \in (0,1]$:
    \begin{itemize}
        \item[1.] If $k \ge \omega(G)+1$, then any $x$ with $\InNorms{x}_1 \ge \frac{\delta}{2}$ is not a $(\frac{\delta^2}{8d^2}, \delta)$-local maximizer.
        \item[2.] If $k < \omega(G)$, then any $x$ with $\InNorms{x}_1 \le \frac{\delta}{2}$ is not a $(\frac{\delta^2}{8d^2}, \delta)$-local maximizer.
    \end{itemize}
\end{lemma}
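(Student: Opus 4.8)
The plan is to prove each part by producing, for any $x \in \+X$ violating the stated norm bound, an explicit point $x' \in B_\delta(x) \cap \+X$ with $f_k(x') > f_k(x) + \tfrac{\delta^2}{8d^2}$; by the definition of an $(\varepsilon,\delta)$-local maximizer this is exactly what it means for $x$ not to be a $(\tfrac{\delta^2}{8d^2},\delta)$-local maximizer. The whole argument runs on two facts: (i) the sharpened identity $f_k(x) = \InNorms{x}_1^2\big(\hat x^\top A \hat x - 1 + \tfrac1k\big)$ for $x \neq 0$, where $\hat x := x/\InNorms{x}_1 \in \Delta^d$, which together with the Motzkin--Straus bound $\hat x^\top A \hat x \le 1 - \tfrac1{\omega(G)}$ yields $f_k(x) \le \InNorms{x}_1^2\big(\tfrac1k - \tfrac1{\omega(G)}\big)$ with equality when $\hat x = x^*$; and (ii) $f_k$ is $2$-homogeneous along rays, $f_k(cx) = c^2 f_k(x)$, so moving toward or away from the origin rescales the value in a controlled way. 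Throughout I would use $\InNorms{x}\le\InNorms{x}_1$ and $\omega(G)\le d$ to reduce everything to elementary inequalities in $d$ (which all hold for $d\ge 3$).

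For Part 1, $k \ge \omega(G)+1$ gives $\tfrac1k - \tfrac1{\omega(G)} \le -\tfrac1{\omega(G)(\omega(G)+1)} \le -\tfrac1{d(d+1)}$, so any $x$ with $\InNorms{x}_1 \ge \tfrac\delta2$ has $f_k(x) \le -\InNorms{x}_1^2/(d(d+1)) < 0$. I would then move toward the origin, $x' = (1-t)x$ with $t = \min\{1,\,\delta/\InNorms{x}\}$: this stays in $\+X \cap B_\delta(x)$ (its $\ell_1$-norm only shrinks, and $\InNorms{x'-x} = t\InNorms{x}\le\delta$), and by $2$-homogeneity the gain is $t(2-t)(-f_k(x)) \ge t(-f_k(x))$. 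Handling the two cases $t=1$ (so $\InNorms{x}\le\delta$, using $\InNorms{x}_1\ge\tfrac\delta2$) and $t = \delta/\InNorms{x}<1$ (so $\InNorms{x}>\delta$, using $\InNorms{x}_1\ge\InNorms{x}$) gives a gain of at least $\tfrac{\delta^2}{4d(d+1)}$, which exceeds $\tfrac{\delta^2}{8d^2}$ for $d\ge 2$.

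Part 2, where $k < \omega(G)$ and $\alpha := \tfrac1k - \tfrac1{\omega(G)} \ge \tfrac1{d(d-1)}>0$, is the crux, and the main obstacle is that a point of tiny $\ell_1$-norm pointing in a near-optimal direction $x^*$ already attains roughly the best $f_k$-value available among points of comparable norm, so one cannot improve by moving to small-norm points and must instead move \emph{outward} along the ray through $x$ --- which is possible only because $\InNorms{x}_1 \le \tfrac\delta2 < 1$ keeps us away from the boundary $\InNorms{\cdot}_1 = 1$. I would split on the size of $f_k(x)$. If $f_k(x) \le \tfrac{\delta^2}{24d^2}$, move to $x' = \tfrac\delta2 x^* \in \+X$, which lies within $\ell_2$-distance $\delta$ of $x$ since both points have $\ell_1$-norm $\le \tfrac\delta2$, and which has $f_k(x') = \tfrac{\delta^2}{4}\alpha$; the gain is then at least $\tfrac{\delta^2}{4}\alpha - \tfrac{\delta^2}{24d^2} > \tfrac{\delta^2}{8d^2}$. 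If instead $f_k(x) > \tfrac{\delta^2}{24d^2}$ (so $x\neq 0$ and $\beta := f_k(x)/\InNorms{x}_1^2 > 0$), move to $x' = c\,\hat x$ with $c = \min\{\InNorms{x}_1+\delta,\,1\}$; then $x' \in \+X$, $\InNorms{x'-x} = (c-\InNorms{x}_1)\InNorms{\hat x}\le c-\InNorms{x}_1\le \delta$, and by $2$-homogeneity $f_k(x')-f_k(x) = (c^2 - \InNorms{x}_1^2)\beta = \big(c^2/\InNorms{x}_1^2 - 1\big)f_k(x) \ge 3 f_k(x) > \tfrac{\delta^2}{8d^2}$, where $c \ge 2\InNorms{x}_1$ (valid both when $c=\InNorms{x}_1+\delta$ and when $c=1$, using $\InNorms{x}_1 \le \tfrac\delta2\le\tfrac12$) supplies the factor $3$. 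Assembling these three cases proves both statements; the only nontrivial choices are the splitting threshold $\tfrac{\delta^2}{24d^2}$ and the direction of motion in each regime, the remaining verifications being the elementary inequalities in $d$ noted above.
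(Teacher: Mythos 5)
Your proof is correct and follows essentially the same route as the paper's: for $k \ge \omega(G)+1$ shrink toward the origin using the Motzkin--Straus bound and $2$-homogeneity, and for $k < \omega(G)$ split into two cases, jumping to $\frac{\delta}{2}x^*$ when the current value is small and dilating along the ray through $x$ when it is large. The only differences are cosmetic --- you threshold $f_k(x)$ at $\frac{\delta^2}{24d^2}$ rather than the normalized value $f_k(x)/\InNorms{x}_1^2$ at $\frac{1}{2d^2}$, and use slightly different scaling factors --- and all of your distance, feasibility, and gain estimates check out.
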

\begin{proof} 
We first prove the case when $k \ge \omega(G) + 1$.
\paragraph{When $k \ge \omega(G) + 1$:} Let $x \in \+X$ be any point such that $\InNorms{x}_1 \ge \frac{\delta}{2}$. Define $x' := \InParentheses{1- \frac{\delta}{2 \InNorms{x}_1}}x$ the point by moving $x$ towards the origin. Then we know the distance between $x$ and $x'$ is bounded by
\begin{align*}
    \InNorms{x - x'}_2 = \frac{\delta}{2 \InNorms{x}_1}\InNorms{x}_2 \le \frac{\delta}{2} \le \delta.
\end{align*}
Moreover, we have $\InNorms{x'}_1 = \InParentheses{1- \frac{\delta}{2 \InNorms{x}_1}} \cdot \InNorms{x}_1$ and thus
\begin{align*}
    f_k(x') - f_k(x) &= \InParentheses{\InParentheses{1- \frac{\delta}{2 \InNorms{x}_1}}^2 - 1} \cdot f_k(x) \\
    & = \InParentheses{\frac{\delta^2}{4 \InNorms{x}^2_1} -2 \cdot \frac{\delta}{ 2\InNorms{x}_1} } \cdot f_k(x) \\
    & \ge \frac{\delta}{2\InNorms{x}_1} \cdot (-f_k(x)) \tag{ $0 < \frac{\delta}{2\InNorms{x}_1} \le 1$ and $f_k(x)<0$} \\
    & \ge \frac{\delta}{2\InNorms{x}_1} \cdot \InNorms{x}_1^2 \cdot \InParentheses{-\frac{1}{k} + \frac{1}{\omega(G)}} \tag{by \eqref{eq:f_k}} \\
    & \ge \frac{\delta^2}{4}\cdot \InParentheses{\frac{k-\omega(G)}{k \cdot \omega(G)}} \tag{$\InNorms{x}_1 \ge \frac{\delta}{2}$} \\
    & > \frac{\delta^2}{4d(d+1)}. \tag{$\omega(G)+1 \le k \le d+1$}
\end{align*}
Thus $x$ is not a $(\frac{\delta^2}{8d^2}, \delta)$-local maximizer.

\paragraph{When $k < \omega(G)$:} Let $x \in \+X$ be any point such that $\InNorms{x}_1 \le \frac{\delta}{2}$. We further define a threshold $v_x$ for $x$:
\begin{align*}
    v_x := \frac{f_k(x)}{\InNorms{x}_1^2}.
\end{align*}
and proceed by two cases depending on $v_x$.
\paragraph{Case 1: $v_x \ge \frac{1}{2d^2}$.} In this case, we consider a deviation $x' := x + \frac{\delta}{2\InNorms{x}_1} x$. We note that $x'$ still lies in $\+X$ since $\InNorms{x'}_1 = (1+\frac{\delta}{2\InNorms{x}_1})\InNorms{x}_1 = \InNorms{x}_1 + \frac{\delta}{2} \le 1$. The magnitude of deviation is bounded by $\InNorms{x'-x}_2 = \frac{\delta}{2 \InNorms{x}_1}\InNorms{x}_2 \le \frac{\delta}{2} $. Moreover, the function value increases at least
\begin{align*}
    f_k(x') - f_k(x) &= \InParentheses{\InParentheses{1+\frac{\delta}{ 2\InNorms{x}_1}}^2 - 1} \cdot f_k(x) \\
    & > \frac{\delta^2}{4\InNorms{x}_1^2} \cdot f_k(x) \\
    & \ge \frac{\delta^2}{8d^2} \tag{$v_x = \frac{f_k(x)}{\InNorms{x}_1^2} \ge \frac{1}{2d^2}$}.
\end{align*}
Thus $x$ is not a  $(\frac{\delta^2}{8d^2}, \delta)$-local maximizer.

\paragraph{Case 2: $v_x < \frac{1}{2d^2}$.} We consider deviation towards the optimal solution $x^* \in \Delta^d$ with function value $f_k(x^*) = \frac{1}{k} - \frac{1}{\omega(G)} \ge \frac{1}{d^2}$. Specifically, we let $x' = \frac{\delta}{2} x^*$. It is clear that $\InNorms{x'} = \frac{\delta}{2} \le 1$ so $x'$ is valid deviation in $\+X$.  The deviation distance from $x$ to $x'$ is bounded by $\InNorms{x' - x}_2 \le \InNorms{x - x'}_1 \le \InNorms{x}_1 + \InNorms{x'}_1 \le \delta$. Moreover, deviation from $x$ to $x'$ increases the function value by 
\begin{align*}
    f_k(x') - f_k(x) &= \frac{\delta^2}{4} \cdot f_k(x^*) - f_k(x)\\
    &>  \frac{\delta^2}{4d^2} - f_k(x).
\end{align*}
If $f_k(x) \le 0$, then we know $f_k(x') - f_k(x) \ge \frac{\delta^2}{4d^2}$. Otherwise, we can further lower bound the increase in function value by
\begin{align*}
    f_k(x') - f_k(x) &\ge \frac{\delta^2}{4d^2}- f_k(x)\\
    & \ge \frac{\delta^2}{4d^2} - \frac{\delta^2}{4} \cdot \frac{1}{\InNorms{x}_1^2} \cdot f_k(x)\tag{$\InNorms{x}_1^2 \le \frac{\delta^2}{4}$} \\
    & > \frac{\delta^2}{8d^2}. \tag{$v_x = \frac{f_k(x)}{\InNorms{x}_1^2} < \frac{1}{2d^2}$}
\end{align*}
Combining the above, we know $x$ is not a $(\frac{\delta^2}{8d^2}, \delta)$-local maximizer.
\end{proof}

Denote $\Phi_{\mathrm{All}}(\delta)$ the set of all $\delta$-local strategy modifications. Using \Cref{lem:localMax}, we can prove NP-hardness of $\varepsilon$-approximate $\Phi_{\mathrm{All}}(\delta)$-equilibrium for two cases: (1) $\varepsilon \le \frac{\delta^2}{16d}$; (2) $\varepsilon \le \poly(G,L,d) \cdot \delta^{2+c}$ for any $\poly(G,L,d)$ and $c > 0$.
\begin{theorem}[Hardness of Approximate $\Phi_\mathrm{All}(\delta)$-Equilibrium]
\label{thm:hardnessFOSall swap}
   There exists a family of single-player games with utility functions $f:\{x\in \R^d_+: \InNorms{x}_1\le 1\} \rightarrow [-1,1]$ that is $3\sqrt{d}$-Lipschitz and  $2d$-smooth, such that the followings are true.
    \begin{itemize}
        \item[1.] For any $0 < \delta \le 1$, if there is an algorithm that computes an $\varepsilon$-approximate $\Phi_{\mathrm{All}}(\delta)$-equilibrium for $\varepsilon \le \frac{\delta^2}{16d^2}$ in time $\poly(d)$, then \textsc{P} $=$ \textsc{NP}.
        \item[2.] For any $g = \poly(G,L,d)$ and $c > 0$, if  there is an algorithm that computes an $\varepsilon$-approximate $\Phi_{\mathrm{All}}(\delta)$-equilibrium for any $\delta \in \left(0, \InParentheses{\frac{1}{16d^2\cdot g}}^{\frac{1}{c}}\right]$ and $\varepsilon = g \cdot \delta^{2+c}$ in time $\poly(d)$, then \textsc{P} $=$ \textsc{NP}.  
    \end{itemize}
\end{theorem}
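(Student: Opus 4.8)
The plan is to reduce the exact \textsc{Max Clique} problem to the computation of an \(\varepsilon\)-approximate \(\Phi_{\mathrm{All}}(\delta)\)-equilibrium, built on the quadratic utilities \(f_k\) of \eqref{dfn:f_k} and the dichotomy in \Cref{lem:localMax}. Fix a graph \(G\) on \(d\ge 3\) vertices. For each \(k\in\{1,\dots,d\}\) consider the single-player game on \(\+X=\{x\in\R^d_+:\InNorms{x}_1\le 1\}\) whose utility is an affine rescaling of \(f_k\) into \([0,1]\); by \Cref{lem:properties of f_k} this game is \(3\sqrt d\)-Lipschitz and \(2d\)-smooth, as the statement requires. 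A \(\Phi_{\mathrm{All}}(\delta)\)-equilibrium here is just a distribution \(\sigma\in\Delta(\+X)\) against which no \(\delta\)-local self-map of \(\+X\) gains more than \(\varepsilon\) in expectation, and the point is that \emph{where} \(\sigma\) concentrates its mass reveals the sign of \(k-\omega(G)\): by \eqref{eq:f_k}, when \(k<\omega(G)\) the function \(f_k\) has a strictly positive maximum, attained only at points of \(\ell_1\)-norm one, so a \(\Phi_{\mathrm{All}}(\delta)\)-equilibrium is pushed away from the origin; when \(k>\omega(G)\) we have \(f_k\le 0\) everywhere with maximum \(0\) at the origin, so the equilibrium is pulled toward it.

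To extract a detectable signal I will not use \Cref{lem:localMax} as a black box --- knowing only that a constant fraction of \(\sigma\)'s mass lies on approximate local maximizers is too weak to separate the two cases --- but instead use the \emph{explicit} improving deviations constructed inside its proof. Let \(\phi_{\mathrm{in}}(x):=\bigl(1-\tfrac{\delta}{2\InNorms{x}_1}\bigr)_{+}x\) (the move toward the origin), and let \(\phi_{\mathrm{out}}\) be the identity on \(\{\InNorms{x}_1>\delta/2\}\) and, on \(\{\InNorms{x}_1\le\delta/2\}\), the case-based ``expand'' / ``step toward \(x^*\)'' map of Case~2 of \Cref{lem:localMax}, where \(x^*\) is any Motzkin--Straus maximizer (it need not be computable, since \(\phi_{\mathrm{out}}\) appears only in the analysis). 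Both are \(\delta\)-local self-maps of \(\+X\), hence lie in \(\Phi_{\mathrm{All}}(\delta)\), and the estimates in the proof of \Cref{lem:localMax} give: if \(k\ge\omega(G)+1\) then \(f_k(\phi_{\mathrm{in}}(x))-f_k(x)\ge\tfrac{\delta^2}{4d^2}\) whenever \(\InNorms{x}_1\ge\delta/2\) (and \(\ge 0\) always); if \(k\le\omega(G)-1\) then \(f_k(\phi_{\mathrm{out}}(x))-f_k(x)\ge\tfrac{\delta^2}{8d^2}\) whenever \(\InNorms{x}_1\le\delta/2\) (and \(=0\) otherwise). Feeding these into the equilibrium inequality \(\-E_{x\sim\sigma}\bigl[f_k(\phi(x))-f_k(x)\bigr]\le\varepsilon\) with \(\varepsilon\le\tfrac{\delta^2}{16d^2}\) yields
\[
k\ge\omega(G)+1 \;\Longrightarrow\; \Pr_{x\sim\sigma}\!\bigl[\InNorms{x}_1\ge\tfrac{\delta}{2}\bigr]\le\tfrac{4d^2\varepsilon}{\delta^2}\le\tfrac14, \qquad k\le\omega(G)-1 \;\Longrightarrow\; \Pr_{x\sim\sigma}\!\bigl[\InNorms{x}_1>\tfrac{\delta}{2}\bigr]\ge 1-\tfrac{8d^2\varepsilon}{\delta^2}\ge\tfrac12 .
\]
Thus the statistic \(\Pr_{x\sim\sigma}[\InNorms{x}_1>\delta/2]\) is \(\le\tfrac14\) in the first case and \(\ge\tfrac12\) in the second; this gap is exactly what the threshold \(\varepsilon\le\tfrac{1}{16d^2}\delta^2\) is calibrated to produce.

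The reduction for part~1 then runs: for \(k=1,\dots,d\), call the hypothesized \(\poly(d)\)-time algorithm on the \(f_k\)-game with its \(\varepsilon\le\tfrac{\delta^2}{16d^2}\in(0,1]\) to obtain \(\sigma_k\); compute (or, if needed, estimate with \(O(\log(d/\beta))\) samples and a Chernoff bound) the value \(p_k:=\Pr_{\sigma_k}[\InNorms{x}_1>\delta/2]\), and classify \(k\) as ``\(\le\omega(G)\)'' if \(p_k>\tfrac38\) and as ``\(\ge\omega(G)\)'' otherwise. By the previous paragraph every \(k<\omega(G)\) is classified ``\(\le\)'' and every \(k>\omega(G)\) is classified ``\(\ge\)'', so the largest index \(k^\star\) classified ``\(\le\)'' satisfies \(\omega(G)\in\{k^\star,k^\star+1\}\) --- i.e.\ we have computed \(\omega(G)\) up to an additive one (the only index whose classification is not forced is \(k=\omega(G)\) itself). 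To remove this residual \(\pm1\), apply the whole procedure to the lexicographic blow-up \(H=G[K_2]\), which has \(2d\) vertices and \(\omega(H)=2\omega(G)\): the two-element candidate set returned for \(H\) contains a unique even integer, that integer is \(\omega(H)\), and \(\omega(G)=\omega(H)/2\) exactly. Hence \textsc{Max Clique} is solved in polynomial time, and \textsc{P}\(=\)\textsc{NP}. Part~2 follows immediately: given \(g=\poly(G,L,d)=\poly(d)\) and a constant \(c>0\), choose \(\delta:=(16d^2g)^{-1/c}\) --- the top of the admissible range, so \(1/\delta\) and \(1/\varepsilon\) remain \(\poly(d)\) --- which makes \(\varepsilon=g\delta^{2+c}=\tfrac{\delta^2}{16d^2}\), reducing to the regime of part~1.

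The main obstacle is the quantitative step in the second paragraph: a naive ``have the oracle output an equilibrium, sample a point, read off its \(\ell_1\)-norm'' argument only controls about half of \(\sigma\)'s mass and so destroys the \(\tfrac14\)-versus-\(\tfrac12\) separation, so one must descend into the proof of \Cref{lem:localMax}, recognize its improving steps as honest members of \(\Phi_{\mathrm{All}}(\delta)\), and carry the constants carefully enough that the gap survives precisely at \(\varepsilon=\tfrac{1}{16d^2}\delta^2\) (the slightly larger \(\tfrac{\delta^2}{4d^2}\) improvement available from \(\phi_{\mathrm{in}}\), versus \(\tfrac{\delta^2}{8d^2}\) from \(\phi_{\mathrm{out}}\), is what produces the asymmetric probability bounds). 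The remaining loose ends are routine: the affine rescaling of \(f_k\) to the \([0,1]\)-valued utility convention, the measurable-selection bookkeeping in defining \(\phi_{\mathrm{out}}\) (including at \(x=0\)), standard amplification of the sampling estimates, and the blow-up gadget that disposes of the unavoidable \(k=\omega(G)\) ambiguity --- which could equivalently be avoided by re-running with half-integer indices \(k=s-\tfrac12\) to test ``\(\omega(G)\ge s\)'' directly.
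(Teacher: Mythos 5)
Your proposal is correct and follows essentially the same route as the paper's proof: the same hard functions $f_k$, the same dichotomy from \Cref{lem:localMax} turned into a single $\delta$-local deviation that improves by $\Omega(\delta^2/d^2)$ on one norm-region and is the identity elsewhere, and the same scan over $k=1,\dots,d$ that pins $\omega(G)$ down to an additive one; the only genuine divergence is the last step, where you upgrade to exact \textsc{Max Clique} via the $G[K_2]$ blow-up, while the paper simply invokes the NP-hardness of approximating \textsc{Max Clique} within an $O(d^{1-o(1)})$ factor, and both conclusions are valid. One small caution: drop the affine rescaling of $f_k$ into $[0,1]$ --- the theorem explicitly permits $[-1,1]$-valued utilities, and rescaling by $\tfrac12$ halves every deviation gain, which would make your ``out''-case bound $\Pr_{x\sim\sigma_k}[\InNorms{x}_1\le\delta/2]\le 16d^2\varepsilon/\delta^2$ vacuous at $\varepsilon=\delta^2/(16d^2)$ unless you also shrink $\varepsilon$ accordingly (your displayed estimates are carried out on the unrescaled $f_k$, which is the right choice).
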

\begin{proof}
    We consider utility functions $f_k: \+X \rightarrow [-1, 1]$ for $1 \le k \le d$ as defined in \eqref{dfn:f_k}. By \Cref{lem:properties of f_k}, we know $f_k$ is $G$-Lipschitz and $L$-smooth with $G = 3\sqrt{d}$ and $L = 2d$. 
    
    \paragraph{Part 1:} We fix any $\delta \in (0,1]$ and let $\varepsilon = \frac{\delta^2}{16d^2}$. Suppose $\+A$ is an algorithm  that computes an $\varepsilon$-approximate $\Phi_{\mathrm{All}}(\delta)$-equilibrium in time $\poly(d)$. Now consider the following algorithm for approximating \textsc{Max Clique}.
    \begin{itemize}
        \item[1.]  Run $\+A$ for the game $f_k$ and denote the outcome as $\sigma_k$ for $1 \le k \le d$.
        \item[2.]  Return the smallest $1 \le k \le d$ such that $\Pr_{x\sim \sigma_k}[\-I[\InNorms{x}_2 \le \frac{\delta}{2}]] \ge \frac{1}{2}$.
    \end{itemize}
    We claim that the above algorithm outputs either $\omega(G)$ or $\omega(G)+1$. We first prove that $\Pr_{x\sim \sigma_k}[\-I[\InNorms{x}_1 \le \frac{\delta}{2}]] < \frac{1}{2}$ for all $k < \omega(G)$. Suppose the claim is false, then consider the following deviation for all $x$ in $\sigma_k$'s support: (1) if $\InNorms{x}_1 \le \frac{\delta}{2}$, then deviate $x$ to the point $x'$ with $\InNorms{x'-x}\le \delta$ and $f_k(x') - f_k(x) > \frac{\delta^2}{8d^2}$. This deviation is possible since $x$ is not a $(\frac{\delta^2}{8d^2}, \delta)$-local maximizer by \Cref{lem:localMax}. (2) If $\InNorms{x}_1 > \frac{\delta}{2}$, then keep $x$ unchanged. We denote the resulting distribution after deviation as $\sigma_k'$. Then $f_k(\sigma_k') - f_k(\sigma_k) > \frac{\delta^2}{8d^2} \cdot \frac{1}{2} = \frac{\delta^2}{16d^2}$, which contradicts the fact that $\sigma_k$ is an $\frac{\delta^2}{16d^2}$-approximate $\Phi(\delta)$-equilibrium. Using a similar argument, we can prove that $\Pr_{x\sim \sigma_k}[\-I[\InNorms{x}_1 \le \frac{\delta}{2}]] \ge \frac{1}{2}$ for all $k \ge \omega(G) + 1$. Combining the above, we know the algorithm either outputs $\omega(G)$ or $\omega(G) + 1$. Moreover, the algorithm runs in time $\poly(d)$. But since it is NP-hard to approximate \textsc{Max Clique} even within $O(d^{1-o(1)})$ factor~\citep{zuckerman2006linear}, the above implies \textsc{P} $=$ \textsc{NP}.

    \paragraph{Part 2:} We fix any $g = \poly(G, L, d) = \poly(d)$ and $c > 0$. We choose $\delta = (\frac{1}{16d^2 \cdot g})^{\frac{1}{c}} = \poly(\frac{1}{d})$ so that $\varepsilon = g \cdot \delta^{2+c} \le \frac{\delta^2}{16d^2}$. Then by Part 1, we know if there is an algorithm that computes an $\varepsilon$-approximate $\Phi_{\mathrm{All}}(\delta)$-equilibrium in time $\poly(d)$ and, then \textsc{P} $=$ \textsc{NP}.
\end{proof}

As a byproduct of \Cref{lem:localMax}, we can prove hardness results for $(\varepsilon, \delta)$-local maximizer similar to \Cref{thm:hardnessFOSall swap}. The proof is simpler than that of \Cref{thm:hardnessFOSall swap}, and we omitted it here.
\begin{corollary}[Hardness of $(\varepsilon, \delta)$-Local maximizer]
\label{corollary:local maximizer hardness}
    Consider a class of functions $f:\{x\in \R^d_+: \InNorms{x}_1\le 1\} \rightarrow [-1,1]$ that is $3\sqrt{d}$-Lipschitz and  $2d$-smooth Then the following hold.
    \begin{itemize}
        \item[1.] For any $0 < \delta \le 1$ , if there is an algorithm that computes an $(\varepsilon, \delta)$-local maximizer for $\varepsilon \le \frac{\delta^2}{16d^2}$ in time $\poly(d)$, then \textsc{P} $=$ \textsc{NP}.
        \item[2.] For any $g = \poly(G,L,d)$ and $c > 0$, if 
        there is an algorithm that computes an $(\varepsilon, \delta)$-local maximizer for any $\delta \in \left(0, \InParentheses{\frac{1}{16d^2\cdot g}}^{\frac{1}{c}}\right]$ and $\varepsilon = g \cdot \delta^{2+c}$ in time $\poly(d)$, then \textsc{P} $=$ \textsc{NP}.
    \end{itemize}
\end{corollary}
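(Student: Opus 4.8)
The plan is to reuse the hard family $\{f_k\}_{1\le k\le d}$ defined in \eqref{dfn:f_k} together with \Cref{lem:localMax}, but now reason about a single deterministic point rather than a distribution; this is exactly what makes the argument shorter than the proof of \Cref{thm:hardnessFOSall swap}. First I note that, since $\varepsilon\le \frac{\delta^2}{16d^2}\le \frac{\delta^2}{8d^2}$, any $(\varepsilon,\delta)$-local maximizer of $f_k$ is in particular a $(\frac{\delta^2}{8d^2},\delta)$-local maximizer of $f_k$, so \Cref{lem:localMax} applies verbatim and yields the dichotomy: for such a point $x$, if $k\ge \omega(G)+1$ then $\InNorms{x}_1<\delta/2$, and if $k<\omega(G)$ then $\InNorms{x}_1>\delta/2$. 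By \Cref{lem:properties of f_k}, each $f_k$ is $3\sqrt{d}$-Lipschitz, $2d$-smooth, and maps $\{x\in\R^d_+:\InNorms{x}_1\le 1\}$ into $[-1,1]$, so it is a legitimate instance of the function class in the statement.

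For Part 1, fix $\delta\in(0,1]$, set $\varepsilon=\frac{\delta^2}{16d^2}$, and suppose $\+A$ is a $\poly(d)$-time algorithm that, given $f_k$, returns an $(\varepsilon,\delta)$-local maximizer $x_k$. Run $\+A$ on $f_k$ for every $k\in\{1,\dots,d\}$ and output the smallest index $k^*$ with $\InNorms{x_{k^*}}_1<\delta/2$ (output $d$ if no such index exists). By the dichotomy above, the test fails for every $k<\omega(G)$ and succeeds for every $k\ge\omega(G)+1$, hence $k^*\in\{\omega(G),\omega(G)+1\}$; equivalently $\omega(G)\le k^*\le 2\,\omega(G)$, so this computes a factor-$2$ (indeed additive-$1$) approximation of the clique number in $\poly(d)$ time. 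Since approximating \textsc{Max Clique} within a factor $O(d^{1-o(1)})$ is NP-hard~\citep{zuckerman2006linear}, this forces $\textsc{P}=\textsc{NP}$. Part 2 then follows exactly as in the proof of \Cref{thm:hardnessFOSall swap}: given $g=\poly(G,L,d)=\poly(d)$ and $c>0$, pick $\delta=\left(\frac{1}{16d^2 g}\right)^{1/c}=\poly(1/d)$ so that $\varepsilon=g\,\delta^{2+c}\le\frac{\delta^2}{16d^2}$, and invoke Part 1.

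I expect essentially no obstacle: the entire geometric content --- pulling a large-$\ell_1$-norm point toward the origin to strictly increase $f_k$ when $k\ge\omega(G)+1$, or pushing a small-norm point outward, or toward the Motzkin-Straus optimizer $x^*$, to strictly increase $f_k$ when $k<\omega(G)$ --- is already carried out inside \Cref{lem:localMax}. The only two points needing a word of care are minor: the boundary index $k=\omega(G)$, for which \Cref{lem:localMax} gives no guarantee and which is precisely why the reduction only pins down $\omega(G)$ up to an additive $1$ (still enough, since even constant-factor approximation of \textsc{Max Clique} is NP-hard); and the degenerate case $\omega(G)=d$, where no index is certified to pass the test, in which case the output $d$ is already exactly correct (and a complete graph is in any case recognizable in $\poly(d)$ time).
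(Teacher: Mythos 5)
Your proposal is correct and follows exactly the route the paper intends (it omits the proof, noting only that it is a simpler analogue of \Cref{thm:hardnessFOSall swap}): you invoke \Cref{lem:localMax} on the hard family $f_k$, use the returned single point's $\ell_1$-norm to locate $\omega(G)$ up to an additive $1$, and conclude via the inapproximability of \textsc{Max Clique}, with Part 2 reduced to Part 1 by the same choice of $\delta$ as in the theorem. The edge cases you flag ($k=\omega(G)$ giving no guarantee, and $\omega(G)=d$) are handled correctly, so nothing further is needed.
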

Since we can define a function $f: \+X \times \+Y \rightarrow \-R$ such that $f'(x,y) = f(x)$. \Cref{corollary:local maximizer hardness} also shows the hardness of computing an $(\varepsilon, \delta)$-local Nash equilibrium in a two-player zero-sum non-concave game. 

\subsection{Hardness under Restricted Deviations}
In this section, we show the hardness of approximating a local maximizer and approximating a $\Phi(\delta)$-equilibrium when the class of strategy modifications $\Phi(\delta)$ is $\Phiint(\delta)$ plus one additional deviation. We let $\+X = \{\-R^d_+: \InNorms{x}_1 \le 1\}$ where $D_\+X = \sqrt{d}$. Recall that each strategy modification $\phi_{x'}$ in $\Phiint(\delta)$ is of the form 
\begin{align*}
    \phi_{\lambda, x'}(x) = \InParentheses{1 - \lambda} x + \lambda  x',\quad \forall \lambda \in [0, \frac{\delta}{\sqrt{d}}] \text{ and } x' \in \+X
\end{align*}
The additional strategy modification we consider is 
\begin{align*}
    \phi(x) = \begin{cases}
        \InParentheses{1- \frac{\delta}{12d^3 \InNorms{x}_1}}x, & \InNorms{x}_1 \ge \frac{\delta}{12d^3}, \\
        x , & \mathrm{Otherwise}.
    \end{cases} 
\end{align*}
We can check that $\phi$ is a well-defined $\delta$-local strategy modification for $\+X$. We denote the union of $\{\phi\}$ and $\Phiint(\delta)$ as $\Phiintone(\delta)$.

The hard functions we consider is still $f_k$ as defined in \eqref{dfn:f_k} .The following technical lemma is similar to \Cref{lem:localMax}. The main difference is that \Cref{lem:localMax} allows arbitrary $\delta$-local strategy modifications, while in \Cref{lem:restricted-localMax}, we are restricted to strategy modifications in $\Phiintone(\delta)$.
\begin{lemma}
\label{lem:restricted-localMax}
    For all $\delta\in (0,1]$, the following holds for $f_k$ over $\+X = \{x\in \R^d_+: \InNorms{x}_1\le 1\}$
    \begin{itemize}
        \item[1.] If $k \ge \omega(G)+1$, there exists a \emph{single} $\phi \in \Phiintone(\delta)$ such that $f_k(\phi(x)) - f_k(x) > \frac{\delta^2}{144d^8}$ \emph{for all} $x$ with $\InNorms{x}_1 \ge \frac{\delta}{12d^3}$. 
        \item[2.] If $k < \omega(G)$, there exists a \emph{single} $\phi \in \Phiintone(\delta)$ such that $f_k(\phi(x)) - f_k(x) \ge \frac{\delta^2}{2d^3} > \frac{\delta^2}{144d^8}$ \emph{for all $x$} with $\InNorms{x}_1 \le \frac{\delta}{12d^3}$, .
    \end{itemize}
\end{lemma}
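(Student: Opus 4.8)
The plan is to prove the two parts separately, exploiting the two ingredients of $\Phiintone(\delta)$ in turn: for $k\ge\omega(G)+1$ I would use the single extra contraction modification $\phi$ (the one that pulls points toward the origin), and for $k<\omega(G)$ I would use one interpolation modification $\phi_{\lambda,x^\ast}\in\Phiint(\delta)$ aimed at the Motzkin--Straus maximizer. The structural facts I would lean on are: (i) $f_k$ is a quadratic form on $\R^d_+$ (on that cone $\InNorms{x}_1=\mathbf 1^\top x$), hence homogeneous of degree $2$, so $f_k(cx)=c^2 f_k(x)$ for $c\ge 0$; (ii) writing $\hat x=x/\InNorms{x}_1\in\Delta^d$, the Motzkin--Straus theorem gives $\hat x^\top A\hat x\le 1-\tfrac1{\omega(G)}$, hence $f_k(x)\le\InNorms{x}_1^2\InParentheses{\tfrac1k-\tfrac1{\omega(G)}}$ (the estimate already used inside the proof of \Cref{lem:properties of f_k}), while $|f_k(x)|\le\InNorms{x}_1^2$; and (iii) for the simplex maximizer $x^\ast$ one has $f_k(x^\ast)=\tfrac1k-\tfrac1{\omega(G)}$, which is $\le-\tfrac1{d^2}$ when $k\ge\omega(G)+1$ and $\ge\tfrac1{d^2}$ when $k<\omega(G)$.

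For Part~1, take $\phi$ to be the contraction modification. Any $x$ with $\InNorms{x}_1\ge\frac{\delta}{12d^3}$ satisfies $\phi(x)=cx$ with $c=1-\frac{\delta}{12d^3\InNorms{x}_1}\in[0,1)$, so by homogeneity the gain equals $f_k(\phi(x))-f_k(x)=(1-c^2)\,(-f_k(x))$. I would bound the two factors: $1-c^2\ge 1-c=\frac{\delta}{12d^3\InNorms{x}_1}$, and $-f_k(x)\ge\InNorms{x}_1^2\InParentheses{\tfrac1{\omega(G)}-\tfrac1k}\ge\frac{\InNorms{x}_1^2}{d^2}$ because $k\ge\omega(G)+1$, $k-\omega(G)\ge1$ and $k\omega(G)\le d^2$. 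Multiplying and using $\InNorms{x}_1\ge\frac{\delta}{12d^3}$ gives $f_k(\phi(x))-f_k(x)\ge\frac{\delta\InNorms{x}_1}{12d^5}\ge\frac{\delta^2}{144d^8}$ (the bound is monotone in $\InNorms{x}_1$, so the inner radius of the shell is the worst case), and strictness follows with a slightly sharper use of Motzkin--Straus ($\tfrac1{\omega(G)}-\tfrac1k>\tfrac1{d^2}$). The same $\phi$ serves every such $x$, and $\phi(x)=cx\in\+X$ since $c\in[0,1]$.

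For Part~2, fix once and for all a maximizer $x^\ast\in\Delta^d\subseteq\+X$ of $x\mapsto x^\top Ax$ over the simplex --- crucially the same $x^\ast$ for all $k$ --- and take $\phi=\phi_{\lambda,x^\ast}$ with the largest admissible weight $\lambda=\delta/\sqrt d$. Expanding the quadratic,
\[
 f_k(\phi(x))-f_k(x)=(\lambda^2-2\lambda)f_k(x)+2\lambda(1-\lambda)\,x^\top M x^\ast+\lambda^2 f_k(x^\ast),
\]
where $M=A-(1-\tfrac1k)\mathbf 1\mathbf 1^\top$ has entries in $[-1,1]$. The leading term is positive and not too small: $\lambda^2 f_k(x^\ast)=\frac{\delta^2}{d}\cdot\InParentheses{\tfrac1k-\tfrac1{\omega(G)}}\ge\frac{\delta^2}{d^3}$ since $k<\omega(G)$. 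The other two terms are lower order precisely because the hypothesis forces $\InNorms{x}_1\le\frac{\delta}{12d^3}$: from $|f_k(x)|\le\InNorms{x}_1^2$ and $|x^\top M x^\ast|\le\InNorms{x}_1\InNorms{x^\ast}_1=\InNorms{x}_1$, both $|(\lambda^2-2\lambda)f_k(x)|$ and $|2\lambda(1-\lambda)x^\top M x^\ast|$ are $O\InParentheses{\delta^2/d^{7/2}}$, which is dominated by $\delta^2/d^3$ for $d\ge 3$. Hence $f_k(\phi(x))-f_k(x)\ge\frac{\delta^2}{2d^3}>\frac{\delta^2}{144d^8}$, uniformly over all $x$ with $\InNorms{x}_1\le\frac{\delta}{12d^3}$.

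I expect the delicate step to be the uniformity in Part~2: a single interpolation modification must beat \emph{every} small $x$ simultaneously, while the cross term $2\lambda(1-\lambda)x^\top M x^\ast$ can have either sign depending on the direction of $x$. Controlling it is exactly what dictates the threshold $\InNorms{x}_1\le\frac{\delta}{12d^3}$ in the hypothesis --- it must be small enough to be swamped by the $\Theta(\delta^2/d^3)$ leading term --- so the constant $12d^3$ (and the matching cutoff $\frac{\delta}{12d^3}$ in Part~1) is not arbitrary but calibrated to make both bookkeeping estimates close. The remaining checks --- that $\phi\in\Phiintone(\delta)$, that $\lambda=\delta/\sqrt d$ is admissible, and the degree-$2$ homogeneity of $f_k$ on $\R^d_+$ --- are routine.
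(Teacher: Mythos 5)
Your proposal is correct. Part 1 is essentially the paper's own argument: the same contraction modification, the same identity $f_k(\phi(x))-f_k(x)=a(2-a)\,(-f_k(x))$ with $a=\delta/(12d^3\InNorms{x}_1)$, and the same Motzkin--Straus lower bound on $-f_k(x)$, giving the $\delta^2/(144d^8)$ threshold with strictness from $k\,\omega(G)\le d(d-1)<d^2$. Part 2, however, takes a genuinely different route. The paper uses the same fixed deviation $\phi_{\lambda,x^*}$ with $\lambda=\delta/\sqrt d$, but lower-bounds $f_k(\phi(x))$ by a mean-value (first-order) expansion around the point $(\delta/\sqrt d)x^*$, invoking the local Lipschitz estimate $\InNorms{\nabla f_k(y)}_2\le 3\sqrt d\,\InNorms{y}_1$ from \Cref{lem:properties of f_k}, and separately upper-bounds $f_k(x)$ by homogeneity. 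You instead expand the quadratic form $f_k(z)=z^\top M z$ with $M=A-(1-\tfrac1k)\mathbf 1\mathbf 1^\top$ exactly, isolate the leading term $\lambda^2 f_k(x^*)\ge\delta^2/d^3$, and kill the residuals via the entrywise bound $|M_{ij}|\le 1$: they are at most $2\lambda\InNorms{x}_1^2\le\delta^2/(72d^{6.5})$ and $2\lambda\InNorms{x}_1\le\delta^2/(6d^{3.5})$, which for $d\ge 3$ (as the construction assumes) indeed leaves at least $\delta^2/(2d^3)$. Your version is purely algebraic and self-contained (no gradient bound needed), works uniformly over all small $x$ because $x^*$ and $\lambda$ are fixed, and actually delivers the constant $\delta^2/(2d^3)$ stated in the lemma, whereas the paper's own chain terminates at $\delta^2/(3d^3)$; the paper's mean-value argument is in exchange less tied to $f_k$ being exactly quadratic. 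Either proof suffices for the downstream use in \Cref{thm:hardnessFOS restricted}.
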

\begin{proof} We first prove the claim for $k \ge \omega(G) + 1$.
    \paragraph{When $k \ge \omega(G) + 1$:} Let $x \in \+X$ be any point such that $\InNorms{x}_1 \ge \frac{\delta}{12d^3}$. Define $x' = \InParentheses{1- \frac{\delta}{12d^3 \InNorms{x}_1}}x$. We note that $x' \in [0,1]^d$ is well-defined and the distance between $x$ and $x'$ is bounded by
    \begin{align*}
        \InNorms{x - x'}_2 = \frac{\delta}{12d^3\InNorms{x}_1}\InNorms{x}_2 \le \frac{\delta}{12d^3} \le \delta.
    \end{align*}
    Moreover, we have $\InNorms{x'}_1 = \InParentheses{1- \frac{\delta}{12d^3\InNorms{x}_2}} \cdot \InNorms{x}_1$
    \begin{align*}
        f_k(x') - f_k(x) &= \InParentheses{\InParentheses{1- \frac{\delta}{12d^3 \InNorms{x}_1}}^2 - 1} \cdot f_k(x) \\
        & = \InParentheses{\InParentheses{\frac{\delta}{12 d^3 \InNorms{x}_1}}^2 -2 \cdot \frac{\delta}{12 d^3 \InNorms{x}_1} } \cdot f_k(x) \\
        & \ge \frac{\delta}{12 d^3\InNorms{x}_1} \cdot (-f_k(x)) \tag{ $0 < \frac{\delta}{12 d^3\InNorms{x}_1} \le 1$ and $f_k(x)<0$} \\
        & \ge \frac{\delta}{12 d^3\InNorms{x}_1} \cdot \InNorms{x}_1^2 \cdot \InParentheses{-\frac{1}{k} + \frac{1}{\omega(G)}} \tag{by \eqref{eq:f_k}}\\
        & \ge \frac{\delta^2}{144d^{6}}\cdot \InParentheses{-\frac{1}{k} + \frac{1}{\omega(G)}} \tag{$\InNorms{x}_1 \ge \frac{\delta}{12 d^3}$} \\
        & > \frac{\delta^2}{144d^8}. \tag{$d \ge k \ge \omega(G)+1$}
    \end{align*}
    Thus $x$ is not a $(\frac{\delta^2}{144d^8}, \delta)$-local maximizer.

    \paragraph{When $k < \omega(G)$:} Let $x \in \+X$ be any point such that $\InNorms{x}_1 \le \frac{\delta}{12d^3}$. We consider the deviation that interpolates with the optimal solution $x^* \in \Delta^d$, whose function value is $f_k(x^*) = \frac{1}{k} - \frac{1}{\omega(G)} \ge \frac{1}{d^2}$. Specifically, we let $x' = (1-\frac{\delta}{\sqrt{d}})x + \frac{\delta}{\sqrt{d}}x^*$. We know the deviation distance from $x$ to $x'$ is bounded by $\InNorms{x' - x}_2 \le \delta$. Moreover, for any $y$ lies in the line segment between $x$ and $x'$, we have $\InNorms{y}_1 \le \InNorms{x}_1 +\InNorms{ \frac{\delta}{\sqrt{d}}x^*}_1 \le \frac{2\delta}{\sqrt{d}}$. Thus by the mean value theorem, there exists a $y$ in the line segment between $x$ and $x'$, such that \[f_k(x') = f_k\InParentheses{\frac{\delta}{\sqrt{d}} \cdot x^*} +  \InAngles{\nabla f_k(y), x' - \frac{\delta}{\sqrt{d}} x^*}\geq f_k\InParentheses{\frac{\delta}{\sqrt{d}} \cdot x^*} - \InNorms{\nabla f_k(y)}_2 \cdot \InNorms{x' - \frac{\delta}{\sqrt{d}} x^*}_2.\] By the local Lipschitzness of $f_k$, the function value of $f_k(x')$ is at least
    \begin{align*}
        f_k(x') &\ge f_k(\frac{\delta}{\sqrt{d}} \cdot x^*) - \InNorms{\nabla f_k(y)}_2 \cdot \InNorms{x' - \frac{\delta}{\sqrt{d}} x^*}_2\\
        &\ge \frac{\delta^2}{d} \cdot \frac{1}{d^2} - 3\sqrt{d} \cdot \frac{2\delta}{\sqrt{d}} \cdot \InNorms{(1- \frac{\delta}{\sqrt{d}})x}_2 \\
        &\ge \frac{\delta^2}{d^3} - 6\delta \cdot \InNorms{x}_1\\
        &\ge \frac{\delta^2}{2d^3},
    \end{align*}
In the second inequality, we use $f_k(x^*) \ge \frac{1}{d^2}$ and $\InNorms{\nabla f_k(y)}_2 \le 3\sqrt{d}\InNorms{y}_1 \le 6\delta$ (\Cref{lem:properties of f_k}); in the third inequality, we use the fact that the $\ell_1$-norm upper bounds the $\ell_2$-norm.
    
    On the other hand, we have 
    \begin{align*}
        f_k(x) &= \InNorms{x}_1^2 \cdot f_k\InParentheses{\frac{x}{\InNorms{x}_1}} \\
        &\le \frac{\delta^2}{144 d^6} \cdot f_k(x^*) \\
        &\le \frac{\delta^2}{144 d^6}.
    \end{align*}
    Combining the above, we have $f_k(x') - f(x) \ge \frac{\delta^2}{3d^3}$ and we know $x$ is not a $(\frac{\delta^2}{3d^3}, \delta)$-local maximizer.
\end{proof}

Combining \Cref{lem:restricted-localMax} and similar analysis as \Cref{thm:hardnessFOSall swap}, we have the following hardness results for computing an $\varepsilon$-approximate $\Phiintoneeq(\delta)$-equilibrium.
\begin{theorem}[Hardness of Approximate $\Phiintoneeq(\delta)$-Equilibrium]
\label{thm:hardnessFOS restricted}
   There exists a family of single-player games with utility functions $f:\{x\in \R^d_+: \InNorms{x}_1\le 1\} \rightarrow [-1,1]$ that is $3\sqrt{d}$-Lipschitz and  $2d$-smooth, such that the followings are true.
    \begin{itemize}
        \item[1.] For any $0 < \delta \le 1$ such that $\delta = \poly(1/d)$, if there is an algorithm that computes an $\varepsilon$-approximate $\Phiintoneeq(\delta)$-equilibrium for $\varepsilon \le \frac{\delta^2}{576d^8}$ in time $\poly(d)$, then \textsc{P} $=$ \textsc{NP}. 
        \item[2.] For any $g = \poly(G,L,d)$ and $c > 0$, if  there is an algorithm that computes an $\varepsilon$-approximate $\Phiintoneeq(\delta)$-equilibrium for any $\delta \in \left(0, \InParentheses{\frac{1}{576d^8\cdot g}}^{\frac{1}{c}}\right]$ and $\varepsilon = g \cdot \delta^{2+c}$ in time $\poly(d)$, then \textsc{P} $=$ \textsc{NP}.  
    \end{itemize}
\end{theorem}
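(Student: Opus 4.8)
The plan is to reduce from the (hard-to-approximate) \textsc{Max Clique} problem, following the template of \Cref{thm:hardnessFOSall swap} but with \Cref{lem:restricted-localMax} in place of \Cref{lem:localMax}. Given a graph $G=(V,E)$ with $|V|=d\ge 3$, adjacency matrix $A$, and a Motzkin--Straus optimizer $x^*\in\Delta^d$, I would form the $d$ single-player games over $\+X=\{x\in\R^d_+:\InNorms{x}_1\le 1\}$ with utilities $f_1,\dots,f_d$ from \eqref{dfn:f_k} (by \Cref{lem:properties of f_k} each is $3\sqrt d$-Lipschitz, $2d$-smooth, valued in $[-1,1]$), write $\theta:=\tfrac{\delta}{12 d^3}$, and suppose for contradiction that $\+A$ computes an $\varepsilon$-approximate $\Phiintoneeq(\delta)$-equilibrium in time $\poly(d)$ with $\varepsilon\le\tfrac{\delta^2}{576 d^8}$. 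Running $\+A$ on each game gives $\sigma_1,\dots,\sigma_d\in\Delta(\+X)$.

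Next I would show that the mass $\sigma_k$ places on the region $\{\InNorms{x}_1\ge\theta\}$ detects which side of $\omega(G)$ the index $k$ lies on. For $k\ge\omega(G)+1$, using the single deviation $\phi\in\Phiintoneeq(\delta)$ of \Cref{lem:restricted-localMax}(1) — the additional, contraction-type modification, which is the identity on $\{\InNorms{x}_1<\theta\}$ and satisfies $f_k(\phi(x))-f_k(x)>\tfrac{\delta^2}{144 d^8}$ on $\{\InNorms{x}_1\ge\theta\}$ — the $\varepsilon$-equilibrium inequality $\-E_{\sigma_k}[f_k(\phi(x))-f_k(x)]\le\varepsilon$ forces $\Pr_{x\sim\sigma_k}[\InNorms{x}_1\ge\theta]<\tfrac{144 d^8\varepsilon}{\delta^2}\le\tfrac14$. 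For $k\le\omega(G)-1$, using the single interpolation deviation $\phi_{\delta/\sqrt d,\,x^*}\in\Phiintoneeq(\delta)$ of \Cref{lem:restricted-localMax}(2), which improves \emph{every} $x$ with $\InNorms{x}_1\le\theta$ by at least $\tfrac{\delta^2}{2 d^3}$, together with a crude uniform bound on how much this deviation (and the contraction-toward-origin deviations $\phi_{\lambda,0}$) can move the remaining atoms, the equilibrium inequalities should force $\sigma_k$ to place strictly more mass on $\{\InNorms{x}_1\ge\theta\}$ than the previous bound allows. Scanning $k=1,\dots,d$ and reading off the threshold index where this statistic crosses over then pins down $\omega(G)$ up to an additive $1$, in $\poly(d)$ time, contradicting the \textsc{NP}-hardness of approximating $\omega(G)$ within a factor $d^{1-o(1)}$~\citep{zuckerman2006linear}; this establishes Part 1.

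For Part 2, I would simply set $\delta:=\bigl(\tfrac{1}{576 d^8 g}\bigr)^{1/c}=\poly(1/d)\in(0,1]$ given $g=\poly(G,L,d)=\poly(d)$ and $c>0$, so that $\varepsilon=g\delta^{2+c}=(g\delta^c)\delta^2=\tfrac{\delta^2}{576 d^8}$, and apply Part 1 verbatim.

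The main obstacle, as in the restricted-deviation setting generally, is that a deviation in $\Phiintoneeq(\delta)$ applies one fixed rule to the whole distribution, so — unlike the $\Phi_{\mathrm{All}}(\delta)$ proof of \Cref{thm:hardnessFOSall swap} — one cannot route each atom of $\sigma_k$ to its own improving point; everything hinges on \Cref{lem:restricted-localMax} producing a \emph{single} deviation improving all atoms in the relevant region. The genuinely delicate case is $k\le\omega(G)-1$: the interpolation deviation toward $x^*$ that helps near-origin atoms can \emph{decrease} $f_k$ on atoms far from the origin, and because $\delta<1$ the worst-case per-atom harm there ($\Theta(\delta/\sqrt d)$) dwarfs the per-atom gain near the origin ($\Theta(\delta^2/d^3)$), so a one-shot averaging bound will not close the argument; handling it requires combining the $\varepsilon$-equilibrium condition against several deviations at once (the contraction map of \Cref{lem:restricted-localMax}(1), contraction toward the origin, and interpolation toward $x^*$), which I expect to be the technical crux.
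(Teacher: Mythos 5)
Your proposal follows the same route as the paper's proof: the same reduction from \textsc{Max Clique} via the functions $f_k$ of \eqref{dfn:f_k}, the same reliance on \Cref{lem:restricted-localMax}, a threshold test on how much mass $\sigma_k$ places near the origin, and an identical Part 2. The genuine gap is precisely the step you flag as the crux: for $k\le\omega(G)-1$ you prove no quantitative bound at all, so your scan over $k$ has no certified separation and Part 1 is not established. Note, however, that the paper closes this direction without combining several deviations: it does not use a constant threshold but tests $\Pr_{x\sim\sigma_k}[\InNorms{x}_1\le\frac{\delta}{12d^3}]\ge 1-p$ with $p=\frac{\delta}{1728 d^{10}}$ inverse-polynomially small. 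With that choice, the one-shot averaging you dismissed does work for $k<\omega(G)$: the single interpolation deviation of \Cref{lem:restricted-localMax}(2) gains more than $\frac{\delta^2}{144d^8}$ on mass at least $1-p$, while by $3\sqrt{d}$-Lipschitzness and $\InNorms{\phi(x)-x}\le\delta$ it loses at most $3\sqrt{d}\delta$ on mass at most $p$, and $3\sqrt{d}\delta p=\frac{\delta^2}{576 d^{9.5}}$ is dominated by the gain, so the expected improvement exceeds $\varepsilon=\frac{\delta^2}{576d^8}$. The harm is controlled by shrinking the exceptional mass, not by finding a cleverer deviation.

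That said, the threshold mismatch you identified is real and does not simply vanish: with the $1-p$ test the burden moves to the other side, since correctness of the scan requires that some $k\in\{\omega(G),\omega(G)+1\}$ is guaranteed to pass, and the paper asserts that all $k\ge\omega(G)+1$ pass ``by a similar argument.'' But the contraction deviation alone gives exactly your bound, $\Pr[\InNorms{x}_1\ge\frac{\delta}{12d^3}]\le\frac{144d^8\varepsilon}{\delta^2}\le\frac14$, and no deviation can force more: for $k=\omega(G)+1$, the distribution with mass $q$ at $x_0=\frac{\delta}{12d^3}x^*$ and mass $1-q$ at the origin is an $\varepsilon$-approximate equilibrium whenever $q\le\frac{144d^6\,\omega(\omega+1)\,\varepsilon}{\delta^2}=\frac{\omega(\omega+1)}{4d^2}$, because every modification gains at most $-f_k(x_0)=\frac{\delta^2}{144d^6\,\omega(\omega+1)}$ on the first atom and nothing at the origin (where $f_k$ attains its maximum $0$); this allows $q\gg p$, so such a $\sigma_k$ fails the $1-p$ test. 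Hence neither your constant-threshold plan nor a literal reading of the paper's ``similar argument'' separates the two regimes; fully repairing Part 1 requires additional work exactly where you predicted, e.g.\ a different test statistic or exploiting further deviations in $\Phiintone(\delta)$ against mass sitting at small but nonzero norm.
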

\begin{proof}
    We consider utility functions $f_k$ as defined in \eqref{dfn:f_k}. By \Cref{lem:properties of f_k}, we have $f_k$ is $3\sqrt{d}$-Lipschitz and $2d$-smooth.

    \paragraph{Part I:} We fix any $\delta \in (0,1]$ and let $\varepsilon = \frac{\delta^2}{576d^8}$. Suppose $\+A$ is an algorithm  that computes an $\varepsilon$-approximate $\Phi_{\mathrm{All}}(\delta)$-equilibrium in time $\poly(d)$. Now consider the following algorithm for approximating \textsc{Max Clique}: define $p = \frac{\delta}{1728 d^{10}} = \frac{1}{\poly(d)}$,
    \begin{itemize}
        \item[1.]  Run $\+A$ for the game $f_k$ and denote the outcome as $\sigma_k$ for $1 \le k \le d$.
        \item[2.]  Return the smallest $1 \le k \le d$ such that $\Pr_{x\sim \sigma_k}[\-I[\InNorms{x}_1 \le \frac{\delta}{12d^3}]] \ge 1 - p$.
    \end{itemize}
    We claim that the above algorithm outputs either $\omega(G)$ or $\omega(G)+1$. We first prove that $\Pr_{x\sim \sigma_k}[\-I[\InNorms{x}_1 \le \frac{\delta}{12d^3}]] < 1-p$ for all $k < \omega(G)$. Suppose the claim is false. By \Cref{lem:restricted-localMax}, we know there is a single deviation $\phi \in \Phiintone(\delta)$ such that $f_k(\phi(x)) - f_k(x) > \frac{\delta^2}{144d^8}$ for all $x$ with $\InNorms{x}_1 \ge \frac{\delta}{12d^3}$. Moreover, for any $x \in \+X$, since $f_k$ is $3\sqrt{d}$-Lipschitz and $\InNorms{\phi(x) -x}\le \delta$, we have $f_k(\phi(x)) - f_k(x) \ge -3\sqrt{d}\delta$. Combining the above gives
    \begin{align*}
        f_k(\phi(\sigma_k)) - f_k(\sigma_k) &\ge \frac{\delta^2}{144d^8}\cdot (1 - p) - 3\sqrt{d}\delta \cdot  p \\
        & > \frac{\delta^2}{288d^8} - \frac{\delta^2}{576d^8} = \varepsilon.
    \end{align*}
    This contradicts the fact that $\sigma_k$ is an $\varepsilon$-approximate $\Phi(\delta)$-equilibrium. Using a similar argument, we can prove that $\Pr_{x\sim \sigma_k}[\-I[\InNorms{x}_1 \le \frac{\delta}{12d^3}]] \ge 1- p$ for all $k \ge \omega(G) + 1$. Combining the above, we know the algorithm either outputs $\omega(G)$ or $\omega(G) + 1$. Moreover, the algorithm runs in time $\poly(d)$. But since it is NP-hard to approximate \textsc{Max Clique} even within $O(d^{1-o(1)})$ factor~\citep{zuckerman2006linear}, the above implies \textsc{P} $=$ \textsc{NP}.

    \paragraph{Part 2:} We fix any $g = \poly(G, L, d) = \poly(d)$ and $c > 0$. We choose $\delta = (\frac{1}{576 d^8 \cdot g})^{\frac{1}{c}} = \frac{1}{\poly(d)}$ so that $\varepsilon := g \cdot \delta^{2+c} \le \frac{\delta^2}{576d^2}$. Then, by Part 1, we know if there is an algorithm that computes an $\varepsilon$-approximate $\Phi_{\mathrm{All}}(\delta)$-equilibrium in time $\poly(d)$, then \textsc{P} $=$ \textsc{NP}.
\end{proof}

\section{Discussion and Future Directions}
In this paper, we initiate the study of tractable $\Phi$-equilibria in non-concave games and examine several natural families of strategy modifications. For any $\Phi$ that contains only a finite number of strategy modifications, we design an efficient randomized $\Phi$-regret minimization algorithm, which provides efficient uncoupled learning dynamics that converge to the corresponding $\Phi$-equilibria. Additionally, we study several classes of $\Phi(\delta)$ that contain an infinite number of $\delta$-local strategy modifications and show efficient uncoupled learning dynamics that converge to an $\varepsilon$-approximate $\Phi(\delta)$-equilibrium in the first-order stationary regime, where $\varepsilon = \Omega(\delta^2)$. We justify our focus on the first-order stationary regime by proving an \textsc{NP}-hardness result for achieving an approximation error $\varepsilon = o(\delta^2)$, even when the set of strategy modifications is simple, such as $\Phi(\delta) = \Phiintoneeq(\delta)$, which includes only $\Phiinteq$ and one additional strategy modification. 

Below, we discuss several future directions on $\Phi$-regret minimization and $\Phi$-equilibria computation.

\paragraph{Application of Proximal Regret} We propose the notion of proximal regret and show that it is a notion that lies between external regret and swap regret. We give a general analysis of the classic \hyperref[GD]{Gradient Descent (GD)} algorithm showing that it efficiently minimizes the proximal regret $\reg_f^T$ for all convex or smooth functions $f$ (See \Cref{sec:bregman proximal} for generalization on Bregman proximal regret and results for \hyperref[MD]{Mirror Descent (MD)}). Proximal regret has several interesting implications, such as gradient equilibrium and symmetric swap regret. Since \hyperref[GD]{GD} is widely applied in the practice, it would be interesting to show more implications of the proximal regret and our results on \hyperref[GD]{GD}. In particular, our results imply that \hyperref[GD]{GD} dynamics in concave games converge to a refined set of coarse correlated equilibria we call proximal correlated equilibria. Exploring further applications of proximal correlated equilibria in game theory may be a fruitful future direction. 

\paragraph{Other efficiently minimizable $\Phi$-regret.} In this paper, we propose several natural sets of $\Phi$'s that admit efficient $\Phi$-regret minimization. It will be interesting to investigate for which other strategy modifications $\Phi$, the corresponding $\Phi$-regret can be minimized efficiently. 
\paragraph{Improved $\Phiprox(\delta)$-regret in games.} We show in \Cref{thm:game regret of OG} that the optimistic gradient \eqref{OG} dynamics guarantees that an improved individual $\Phiprox(\delta)$-regret of $O(T^{1/4})$.  Could we design uncoupled learning dynamics with better individual regret guarantees, consequently leading to faster convergence to an approximate $\Phiproxeq(\delta))$-equilibrium?

\printbibliography

\appendix

\crefalias{section}{appendix} 

\newpage
\thispagestyle{empty}
\setcounter{tocdepth}{3}
\tableofcontents
\addtocounter{page}{-1}
\thispagestyle{empty}

\section{Related Work}
\label{app:related works}
\paragraph{Non-Concave Games.} An important special case of multi-player games are two-player zero-sum games, which are defined in terms of some function $f: \X \times \Y \rightarrow \R$ that one of the two players say the one choosing $x \in \X$, wants to minimize, while the other player, the one choosing $y \in \Y$, wants to maximize. Finding Nash equilibrium in such games is tractable in the \emph{convex-concave} setting, i.e.~when $f(x,y)$ is convex with respect to the minimizing player's strategy, $x$, and concave with respect to the maximizing player's strategy, $y$, but it is computationally intractable in the general \emph{nonconvex-nonconcave} setting. Namely, a Nash equilibrium may not exist, and it is NP-hard to determine if one exists and, if so, find it. Moreover, in this case, stable limit points of gradient-based dynamics are not necessarily Nash equilibria, not even local Nash equilibria~\citep{daskalakis2018limit, mazumdar2020gradient}. Moreover, there are examples including the “Polar Game”~\citep{pethick2022escaping} and the “Forsaken Matching Pennies”~\citep{hsieh2021limits} showing that for \hyperref[GD]{GD} / \ref{OG} and many other no-regret learning algorithms in nonconvex-nonconcave min-max optimization, the last-iterate does not converge and even the average-iterate fails to be a stationary point. We emphasize that the convergence guarantees we provide for \hyperref[GD]{GD}  / \ref{OG} in \Cref{sec:proximal regret} and \Cref{sec:phi-int-regret minization} holds
for the empirical distribution of play, not the average-iterate or the last-iterate.

A line of work focuses on computing Nash equilibrium under additional structure in the game.  This encompasses settings where the game satisfies the (weak) Minty variational inequality~\citep{mertikopoulos2019learning, diakonikolas2021efficient,pethick2022escaping, cai2023accelerated}, or is sufficiently close to being bilinear~\citep{abernethy2021last}. However, the study of universal solution concepts in the nonconvex-nonconcave setting is sparse. \citet{daskalakis2021complexity} proved the existence and computational hardness of local Nash equilibrium. In a more recent work, \citep{daskalakis2023stay} proposes second-order algorithms with asymptotic convergence to local Nash equilibrium. Several works study sequential two-player zero-sum games with additional assumptions about the player who goes second. They propose equilibrium concepts such as \emph{local minimax points}~\citep{jin2020local}, \emph{differentiable Stackelberg equilibrium}~\citep{fiez2020implicit}, and \emph{greedy adversarial equilibrium}~\citep{mangoubi2021greedy}. Notably, local minimax points are stable limit points of Gradient-Descent-Ascent (GDA) dynamics~\citep{jin2020local, wang2020solving, fiez2021local} while greedy adversarial equilibrium can be computed efficiently using second-order algorithms in the unconstrained setting~\citep{mangoubi2021greedy}. In contrast to these studies, we focus on the more general case of multi-player non-concave games.

\paragraph{Local Equilibrium.} To address the limitations associated with classical, global equilibrium concepts, a natural approach is to focus on developing equilibrium concepts that guarantee local stability instead. One definition of interest is the strict local Nash equilibrium, wherein each player's strategy corresponds to a local maximizer of their utility function, given the other players' strategies. Unfortunately, a strict local Nash equilibrium may not always exist, as demonstrated in \Cref{ex:strict local NE}. Furthermore, a weaker notion—the second-order local Nash equilibrium, where each player has no incentive to deviate based on the second-order Taylor expansion estimate of their utility, is also not guaranteed to exist as illustrated in \Cref{ex:strict local NE}. What's more, it is NP-hard to check whether a given strategy profile is a strict local Nash equilibrium or a second-order local Nash equilibrium, as implied by the result of~\citet{murty_np-complete_1987} and \citet{ahmadi2022complexity}.\footnote{\citet{murty_np-complete_1987} shows that checking whether a point is a local maximizer of a multi-variate quadratic function is NP-hard, while \citet{ahmadi2022complexity} shows that computing a local maximizer of a multi-variate quadratic function over a polytope is NP-hard.} Finally, one can consider \emph{local Nash equilibrium}, a first-order stationary solution, which is guaranteed to exist~\citep{daskalakis2021complexity}. Unlike non-convex optimization, where targeting first-order local optima sidesteps the intractability of global optima, this first-order local Nash equilibrium has been recently shown to be intractable, even in two-player zero-sum non-concave games with joint feasibility constraints~\citep{daskalakis2021complexity}.\footnote{In general sum games, it is not hard to see that the intractability results~\citep{daskalakis2009complexity, chen2009settling} for computing \emph{global} Nash equilibria in bimatrix games imply intractability for computing \emph{local} Nash equilibria.} See \Cref{tab:equilibrium notions} for a summary of solution concepts in non-concave games.

\paragraph{Online Learning with Non-Convex Losses.} A line of work has studied online learning against non-convex losses. To circumvent the computational intractability of this problem, various approaches have been pursued: some works assume a restricted set of non-convex loss functions~\citep{gao2018online}, while others assume access to a sampling oracle~\citep{maillard2010online,krichene2015hedge} or access to an offline optimization oracle~\citep{agarwal2019learning, suggala2020online, heliou2020online} or a weaker notion of regret~\citep{hazan2017efficient, aydore2019dynamic, hallak2021regret,guan2023online}. The work most closely related to ours is \citep{hazan2017efficient}. The authors propose a notion of \emph{$w$-smoothed local regret} against non-convex losses, and they also define a local equilibrium concept for non-concave games. They use the idea of \emph{smoothing} to average the loss functions in the previous $w$ iterations and design algorithms with optimal $w$-smoothed local regret. The concept of regret they introduce suggests a local equilibrium concept. However, their local equilibrium concept is non-standard in that its local stability is not with respect to a distribution over strategy profiles sampled by this equilibrium concept.
Moreover, the path to attaining this local equilibrium through decentralized learning dynamics remains unclear. The algorithms provided in~\citep{hazan2017efficient,guan2023online} require that every agent $i$ experiences (over several rounds) the average utility function of the previous 
 $w$ iterates, denoted as $F^t_{i,w}:=\frac{1}{w} \sum_{\ell=0}^{w-1} u_i^{t-\ell}(\cdot,x_{-i}^{t-\ell})$. Implementing this imposes a significant coordination burden on the agents.
In contrast, we focus on a natural concept of $\Phi(\delta)$-\lce, which is incomparable to that of~\cite{hazan2017efficient}, and we also show that efficient convergence to this concept is achieved via decentralized gradient-based learning dynamics.

\paragraph{$\Phi$-regret and $\Phi$-equilibrium.} The concept of $\Phi$-regret and the associated $\Phi$-equilibrium is introduced by~\citet{greenwald2003general} and has been broadly investigated in the context of concave games~\citep{greenwald2003general,stoltz2007learning, gordon2008no, rakhlin2011online, piliouras2022evolutionary, bernasconi2023constrained, dagan2024external,peng2024fast} and extensive-form games~\citep{von2008extensive,morrill2021hindsight, morrill2021efficient,Farina22:Simple, bai2022efficient, song2022sample, Anagnostides23:NearOptimal, zhang2024efficient}. The work of~\cite{sharma2024no} studies internal regret minimization against non-convex losses. To our knowledge, no efficient algorithm exists for the classes of $\Phi$-equilibria we consider for non-concave games. Specifically, all of these algorithms, when applied to compute a $(\varepsilon, \Phi(\delta))$-equilibrium for a general $\delta$-local strategy modification set $\Phi(\delta)$ (using \Cref{lemma:no-regret-2-CE}), require running time exponential in either $1/\varepsilon$ or the dimension $d$. In contrast, we show that for several natural choices of $\Phi(\delta)$, $\varepsilon$-approximate $\Phi(\delta)$-equilibrium can be computed efficiently, i.e. polynomial in $1/\varepsilon$ and $d$, using simple algorithms.

\subsection{Comparison with (First-Order) Local Coarse Correlated Equilibrium~\citep{ahunbay2024first}}\label{app:comparison}
In the first two versions of the paper, we analyzed two sets of strategy modifications $\Phiproj$ (\Cref{ex:projection}) and $\Phiint$ (\Cref{sec:phi-int-regret minization}), and their localized versions $\Phiproj(\delta)$ and $\Phiint(\delta)$. In this version, we propose a general strategy modification set $\Phiprox$ (\Cref{dfn:proximal regret}) that contains both $\Phiproj$ and $\Phiint$ as special cases. More specifically, we show that the prox operator $\prox_f(x) = \argmin_{x'\in \+X} \{f(x) + \frac{1}{2}\InNorms{x -x'}^2\}$ can be instantiated with $f(x) = \InAngles{v,x}$ to recover $\phi_v$ and $f(x) = \frac{\delta}{2(1-\delta)}\InNorms{x - x^*}^2$ to recover $\phi_{x^*,\delta}$. We show that \hyperref[GD]{GD} minimizes the proximal regret $\reg_f^T$ for all convex or smooth functions $f$ simultaneously in \Cref{theorem:GD proximal regret}.

Following our initial version,~\citep{ahunbay2024first} proposed a generalized set of strategy modifications based on gradient fields, which includes both $\Phiproj(\delta)$ and $\Phiint(\delta)$ as special cases. Specifically, given a gradient field $\nabla f$ of a continuously differentiable function $f$,  \cite{ahunbay2024first} consider the deviation $\phi_f(x) = \Pi_\+X[x - \nabla f(x)]$. ~\cite{ahunbay2024first}  show that 
\begin{itemize}
    \item a projection-based deviation $\phi_v(x) = \Pi_\+X[x - v]$ is equivalent to adding the gradient of the linear function $-\InAngles{v,x}$ to $x$: $\phi_v(x) = \Pi_\+X[x - \nabla_x\InAngles{v,x}] $
    \item an interpolation-based deviation $\phi_{x^*,\delta}(x) = (1-\delta)x + \delta x^*$ is equivalent to adding the gradient of the quadratic function $-\frac{\delta}{2}\InNorms{x - x^*}^2$ to $x$: $\phi_{x^*, \delta}(x) = x  - \nabla_x(\frac{\delta}{2}\InNorms{x - x^*}^2)$
\end{itemize}
\citep{ahunbay2024first} propose two notions of \emph{first-order} coarse corrected equilibria (CCE) based on the strategy modifications generated by gradient fields. 
Consider a $G$-Lipschitz and $L$-smooth game $\+G$, a family of Lipschit continuous gradient fields $\{\nabla f: f \in F\}$ where $F$ is a set of continuous differentiable functions with bounded gradient norm $G_f = \max_{x \in \+X} \InNorms{\nabla f(x)}$ and Lipschitz constant $L_f$. A distribution $\sigma$ over $\+X$ is an $\varepsilon$-\emph{local CCE} if 
\begin{align}\label{eq:Local CCE}
    \sum_{i\in[n]} \-E_{x \sim \sigma} \InBrackets{\InAngles{\Pi_{T_{\+X_i}(x_i)}[\nabla f_i(x)], \nabla_i u_i(x)   }}\le \varepsilon \cdot \poly(G,L, G_f, L_f), \forall f \in F.\footnotemark
\end{align}\footnotetext{$T_\+X(x)$ is the tangent cone of $x \in \+X$}
Let $\Phi_F = \{\phi_f: f \in F\}$. The definition of $\varepsilon$-\emph{local CCE} can be seen as the first-order approximation of the $\Phi_F$-equilibria.  A distribution $\sigma$ over $\+X$ is an $\varepsilon$-\emph{stationary CCE} if 
\begin{align}\label{eq:stationary CCE}
    \left | \sum_{i\in[n]} \-E_{x \sim \sigma} \InBrackets{\InAngles{\nabla f_i(x), \Pi_{T_{\+X_i}(x_i)}[\nabla_i u_i(x)]   }} \right|\le \varepsilon \cdot \poly(G,L, G_f, L_f), \forall f \in F.
\end{align}

\citep{ahunbay2024first} show that when every player runs gradient descent (\hyperref[GD]{GD}) with the same step sizes, sampling from a suitably-defined continuous curve extension of their generated strategies is an approximate local/stationary CCE. These results do not hold for all convex sets $\+X$ but require $\+X$ to be either (1) a compact convex set with a smooth boundary or (2) an acute polyhedron. 

We give a comparison of our results (\Cref{theorem:GD proximal regret}) and \citep{ahunbay2024first} and highlight the differences in various aspects. 
\paragraph{Strategy Modifications} Our strategy modifications based on the proximal operator are \emph{different} and \emph{incomparable} from their strategy modifications based on gradient fields of Lipschitz and smooth functions. For example, our $\Phiprox$-regret covers projection-based deviations such that $\phi_C(x) = \Pi_C[x]$ for any convex subset $C \subseteq \+X$. The deviation is generated by $\prox_{\mathbb{I}_C}$ where $\mathbb{I}_C$ is the indicator function for the set $C$. Such strategy modifications can not be generated using gradient fields. We also remark that there are strategy modifications based on gradient fields that proximal operators can not recover.

\paragraph{Incentive Guarantees} Our results hold for the standard $\Phi$-regret and $\Phi$-equilibria setting, in which no player can gain more than $\varepsilon$ utility by deviating within $\Phi$ (\Cref{def:local CE}). In contrast, the results in \citet{ahunbay2024first} concern local coarse correlated equilibria (local CCE) and stationary CCE, as defined in \eqref{eq:Local CCE} and \eqref{eq:stationary CCE}, which are first-order approximations of $\Phi$-equilibria. While these guarantees coincide for certain classes of $\Phi$, such as $\Phiint$, they are generally different and incomparable.

\paragraph{Constrained Sets} Our results hold for general convex sets while \citep{ahunbay2024first}'s results require additional assumptions on the constrained sets as discussed above. 

\paragraph{Adversarial Setting} We show that \hyperref[GD]{GD} minimizes $\Phiprox$-regret in the adversarial setting, which implies results when all players run \hyperref[GD]{GD}.  \citep{ahunbay2024first}'s results do not give the corresponding sublinear $\Phi$-regret bounds of \hyperref[GD]{GD} in the adversarial setting. 

\paragraph{Average-Iterate Convergence} Our results in the adversarial setting directly implies when all players use \hyperref[GD]{GD}, their empirical distribution of play converges to an approximate $\Phiprox$-equilibrium. \citep{ahunbay2024first}'s results do not hold for the empirical distribution of play but require sampling from the continuous curve extension of the dynamics.\footnote{\cite{ahunbay2024first} provides partial results for \emph{first-order} regret in the adversarial setting when the gradient fields satisfy an additional ``tangential" assumption. Under this assumption, the empirical distribution of play also satisfies \eqref{eq:Local CCE} and \eqref{eq:stationary CCE}. However, this assumption is somewhat restrictive and is not satisfied by natural deviations such as the projection based deviation 
$\Phiproj$.} Moreover, \citep{ahunbay2024first}'s results require all players to use the same step sizes, while our results do not.

\paragraph{Implications on Gradient Descent Dynamics} Our results show that the empirical distribution of play of the \hyperref[GD]{GD} dynamics converges to a refined subset of CCE we called proximal correlated equilibria. The general idea of providing tighter guarantees for \hyperref[GD]{GD} dynamics and refined class of CCE is also advocated by \citep{ahunbay2024first}, but our results and incentive guarantees are different as discussed above.

\section{Solution Concepts in Non-Concave Games: Existence and Complexity}
\label{app:preliminaries}
We present definitions of several solution concepts in the literature as well as the existence and computational complexity of each solution concept. 
\begin{definition}[Nash Equilibrium]
\label{def:Nash}
In a continuous game, a strategy profile $x \in \prod_{j=1}^n \X_j$ is a \emph{Nash equilibrium (NE)} if and only if for every player $i \in [n]$, \[
u_i(x_i', x_{-i}) \le u_i(x),  \forall x_i' \in \X_i\]
\end{definition}

\begin{definition}[Mixed Nash Equilibrium]
\label{def:mixed NE}
In a continuous game, a mixed strategy profile $p \in \prod_{j=1}^n \Delta(\X_j)$ (here we denote $\Delta(\X_i)$ as the set of probability measures over $\X_i$) is a \emph{mixed Nash equilibrium (MNE)} if and only if for every player $i \in [n]$, \[
u_i(p_i', p_{-i}) \le u_i(p),  \forall p_i' \in \Delta(\X_i)\]
\end{definition}

\begin{definition}[(Coarse) Correlated Equilibrium]\label{def:(C)CE}
In a continuous game, a distribution $\sigma$ over joint strategy profiles $\Pi_{i=1}^n \X_i$ is a \emph{correlated equilibrium (CE)} if and only if for all player $i \in [n]$, \[
\max_{\phi_i: \X_i \rightarrow \X_i} 
\-E_{x \sim \sigma} \InBrackets{u_i(\phi_i(x_i), x_{-i})} \le \-E_{x \sim \sigma} \InBrackets{u_i(x)}.\]
Similarly,  a distribution $\sigma$ over joint strategy profiles $\Pi_{i=1}^n \X_i$ is a \emph{coarse correlated equilibrium (CCE)}  if and only if for all player $i \in [n]$, \[
\max_{x_i' \in \X_i} 
\-E_{x \sim \sigma} \InBrackets{u_i(x_i', x_{-i})} \le \-E_{x \sim \sigma} \InBrackets{u_i(x)}.\]
\end{definition} 

\begin{definition}[Strict Local Nash Equilibrium]
\label{def:strict local NE}
In a continuous game, a strategy profile $x \in \prod_{j=1}^n \X_j$ is a \emph{strict local Nash equilibrium} if and only if for every player $i \in [n]$, there exists $\delta > 0$ such that \[
u_i(x_i', x_{-i}) \le u_i(x),  \forall x_i' \in B_{d_i}(x_i, \delta) \cap \X_i.\]
\end{definition}

\begin{definition}[Second-order Local Nash Equilibrium]
\label{def:2nd local NE}
Consider a continuous game where each utility function $u_i(x_i, x_{-i})$ is twice-differentiable with respect to $x_i$ for any fixed $x_{-i}$. A strategy profile $x \in \prod_{j=1}^n \X_j$ is a \emph{second-order local Nash equilibrium} if and only if for every player $i \in [n]$, $x_i$ maximizes the second-order Taylor expansion of its utility functions at $x_i$, or formally,  \[
\InAngles{ \nabla_{x_i} u_i(x), x_i' - x_i} + (x_i' - x_i)^\top \nabla^2_{x_i} u_i(x) (x_i' - x_i) \le 0,  \forall x_i' \in \X_i.\]
\end{definition}

\subsection{Existence}
Mixed Nash equilibria exist in continuous games, thus smooth games~\citep{debreu1952social, glicksberg1952further,fan1953minimax}. By definition, an MNE is also a CE and a CCE. This also proves the existence of CE and CCE. In contrast,  strict local Nash equilibria, second-order Nash equilibria, or (pure) Nash equilibria may not exist in a smooth non-concave game, as we show in the following example.
\begin{example}\label{ex:strict local NE}
    Consider a two-player zero-sum non-concave game: the action sets are $\X_1 = \X_2 = [-1, 1]$ and the utility functions are $u_1(x_1, x_2) = - u_2(x_1, x_2) = (x_1 - x_2)^2$. Let $x = (x_1, x_2) \in \X_1 \times \X_2$ be any strategy profile: if $x_1 = x_2$, then player 1 is not at a local maximizer; if $x_1 \ne x_2$, then player 2 is not at a local maximizer. Thus $x$ is not a strict local Nash equilibrium. Since the utility function is quadratic, we conclude that the game also has no second-order local Nash equilibrium. 
\end{example}

\subsection{Computational Complexity}
Consider a single-player smooth non-concave game with a quadratic utility function $f: \X \rightarrow \R$. The problem of finding a \emph{local} maximizer of $f$ can be reduced to the problem of computing a NE, a MNE, a CE, a CCE, a strict local Nash equilibrium, or a second-order local Nash equilibrium. Since computing a local maximizer or checking if a given point is a local maximizer is NP-hard~\citep{murty_np-complete_1987}, we know that the computational complexities of NE, MNE, CE, CCE, strict local Nash equilibria, and second-order local Nash equilibria are all NP-hard.

\subsection{Representation Complexity of Exact Equilibria}
\citet{karlin1959mathematical} present a two-player zero-sum non-concave game whose unique MNE has infinite support. Since in a two-player zero-sum game,  the marginal distribution of an (exact) CE or a (exact) CCE is an MNE, it also implies that the representation complexity of any CE or CCE is infinite. We present the example in \citet{karlin1959mathematical} here for completeness and also prove that the game is Lipschitz and smooth. 

\begin{example}[{\citep[Chapter 7.1, Example 3]{karlin1959mathematical}}]
\label{example:infintie}
We consider a two-player zero-sum game with action sets  $\X_1 = \X_2 = [0,1]$. Let $p$ and $q$ be two distributions over $[0,1]$. The only requirement for $p$ and $q$ is that their cumulative distribution functions are not finite-step functions. 
For example, we can take $p = q$ to be the uniform distribution.

Let $\mu_n$ and $\nu_n$ denote the $n$-th moments of $p$ and $q$, respectively. Define the utility function 
\begin{align*}
    u(x,y) = u_1(x,y) = -u_2(x,y) = \sum_{n=0}^\infty \frac{1}{2^n} (x^n - \mu_n)(y^n - \nu_n), \quad  0 \le x, y \le 1.
\end{align*}
\end{example}
\begin{claim}
    The game in \Cref{example:infintie} is $2$-Lipschitz and $6$-smooth, and $(p,q)$ is its unique (mixed) Nash equilibrium.
\end{claim}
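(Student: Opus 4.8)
The plan is to verify the three claims in turn: Lipschitzness, smoothness, and uniqueness of the Nash equilibrium $(p,q)$. For the first two, I would fix the player choosing $x$ and compute derivatives of $u(x,y)=\sum_{n\ge 0} 2^{-n}(x^n-\mu_n)(y^n-\nu_n)$ in $x$. Writing $g(y):=\sum_{n\ge 1} 2^{-n} n\, x^{n-1}(y^n-\nu_n)$ we have $\partial_x u(x,y)=g(y)$; since $|y^n-\nu_n|\le 1$ and $x\in[0,1]$, the series $\sum_{n\ge1} 2^{-n} n\, x^{n-1}$ is dominated by $\sum_{n\ge 1} n 2^{-n}=2$, so $|\partial_x u|\le 2$, giving $2$-Lipschitzness (and the same bound by symmetry for the $y$-player). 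For smoothness I would bound $|\partial_{xx} u(x,y)|=|\sum_{n\ge 2} 2^{-n} n(n-1) x^{n-2}(y^n-\nu_n)|\le \sum_{n\ge 2} n(n-1) 2^{-n}$, which evaluates to $6$ (differentiate the geometric series twice: $\sum_{n\ge2} n(n-1) t^{n-2}=2/(1-t)^3$ at $t=1/2$ gives $16$; I need to recheck the constant, but the point is it is a finite absolute constant that the authors have arranged to be $6$, so I'd just carry the computation carefully). The interchange of summation and differentiation is justified by uniform (dominated) convergence of the differentiated series on $[0,1]^2$, which the same geometric bounds give. These steps are routine once the dominating series are identified.

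The substantive part is uniqueness of the mixed Nash equilibrium. Here I would use the bilinear-in-moments structure: for mixed strategies $P,Q$ on $[0,1]$ with moment sequences $(a_n)=(\mathbb{E}_{X\sim P} X^n)$ and $(b_n)=(\mathbb{E}_{Y\sim Q} Y^n)$, the expected payoff is
\[
U(P,Q)=\sum_{n=0}^\infty \frac{1}{2^n}(a_n-\mu_n)(b_n-\nu_n),
\]
which is a convergent bilinear form in the shifted moment vectors $(a_n-\mu_n)$ and $(b_n-\nu_n)$. I would first check that $(p,q)$ is a Nash equilibrium: if the $x$-player plays $p$ then $a_n=\mu_n$ for all $n$, so $U(p,Q)=0$ for every $Q$; symmetrically $U(P,q)=0$ for every $P$; hence the value is $0$ and neither player can profit by deviating, so $(p,q)$ is an equilibrium with value $0$. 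For uniqueness, suppose $(P,Q)$ is any equilibrium. Since the value is $0$, we need $U(P,Q)=0$, $U(P',Q)\le 0$ for all $P'$, and $U(P,Q')\ge 0$ for all $Q'$. The middle condition says $\sum_n 2^{-n}(a'_n-\mu_n)(b_n-\nu_n)\le 0$ for every choice of $(a'_n)$ realizable as a moment sequence; choosing $P'=\delta_x$ a point mass gives $h(x):=\sum_n 2^{-n}(x^n-\mu_n)(b_n-\nu_n)\le 0$ for all $x\in[0,1]$, and integrating against $P$ gives $0\ge U(P,Q)=0$, so $h(x)=0$ $P$-almost everywhere; but $h$ is real-analytic on $[0,1]$ (again by the geometric domination) and not identically zero unless $b_n=\nu_n$ for all $n$ — because if some $b_n\ne\nu_n$ the analytic function $h$ has only finitely many zeros in any compact subinterval... \emph{this is where I must be careful}: $h$ could in principle vanish on a set that still supports $P$. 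The correct argument is: $h\le 0$ everywhere and $\int h\,dP=0$ forces $\mathrm{supp}(P)\subseteq\{h=0\}$; if $b\ne\nu$ as sequences, $h$ is a nonzero real-analytic function, hence $\{h=0\}$ is finite, so $P$ has finite support — but then $a_n=\mathbb{E}_P X^n$ would be the moment sequence of a finitely-supported measure, and I'd derive a contradiction from the other inequality $U(P',Q)\le 0$ combined with $U$ being identically zero along $q$. I'd then symmetrically conclude $Q=q$, and finally $b=\nu$ forces $U(P,Q')\ge 0$ to become an equality for all $Q'$, from which $a=\mu$, i.e. $P$ has the same moments as $p$; since a measure on $[0,1]$ is determined by its moments (Hausdorff moment problem), $P=p$.

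The main obstacle, as flagged, is making the uniqueness argument fully rigorous: turning ``$h\le 0$ and $\int h\,dP=0$'' into ``$P$ has the right moments'' without circularity, and correctly invoking that a nonzero real-analytic function on $[0,1]$ has finite zero set while a measure with non-finite-step CDF cannot be supported on a finite set — this is exactly why the hypothesis on $p,q$ (CDFs not finite-step functions) is imposed. I would structure it as: (i) value $=0$ and $(p,q)$ equilibrium; (ii) any equilibrium $(P,Q)$ has $h_Q(x):=\sum_n 2^{-n}(x^n-\mu_n)(b_n-\nu_n)$ satisfying $h_Q\le 0$ on $[0,1]$ and $=0$ on $\mathrm{supp}(P)$, symmetrically $h_P\ge 0$ and $=0$ on $\mathrm{supp}(Q)$; (iii) real-analyticity plus the non-finite-support hypothesis forces $h_Q\equiv 0$ and $h_P\equiv 0$, i.e. $b=\nu$ and $a=\mu$; (iv) Hausdorff moment determinacy gives $P=p$, $Q=q$. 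The Lipschitz/smooth constants are then just bookkeeping with the series $\sum n2^{-n}$ and $\sum n(n-1)2^{-n}$.
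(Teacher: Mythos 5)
Your Lipschitz/smoothness and existence computations are fine (the paper bounds the second derivative by $\sum_n n^2 2^{-n}=6$; your $\sum_{n\ge 2}n(n-1)2^{-n}=4$ also suffices, so the exact constant is not an issue). The genuine gap is in your uniqueness step (iii). The hypothesis ``the CDFs are not finite step functions'' is about $p$ and $q$, not about an arbitrary equilibrium $(P,Q)$, so ``real-analyticity plus the non-finite-support hypothesis forces $h_Q\equiv 0$'' does not follow from ``$h_Q=0$ on $\mathrm{supp}(P)$'': a priori $\mathrm{supp}(P)$ could be finite, and the contradiction you gesture at --- ``$U(P',Q)\le 0$ combined with $U$ being identically zero along $q$'' --- does not assemble, because the equalizer fact about $q$ says $U(\cdot,q)\equiv 0$, which constrains $h_P$, not $h_Q$. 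The fact you actually need is the one you already established in the existence step: $U(p,\cdot)\equiv 0$. Then $\int h_Q\,dp = U(p,Q)=0$ together with $h_Q\le 0$ forces $h_Q=0$ on $\mathrm{supp}(p)$, which is infinite by hypothesis; since $h_Q$ is given by a power series with radius of convergence at least $2$, the identity theorem yields $h_Q\equiv 0$ on $[0,1]$, hence $b_n=\nu_n$ for all $n$ and $Q=q$ by Hausdorff moment determinacy. Symmetrically, $h_P\ge 0$ and $\int h_P\,dq=U(P,q)=0$ give $a_n=\mu_n$ and $P=p$. With this substitution the whole finite-support detour (and the worry about zeros of $h$ supporting $P$) disappears, and your outline (i)--(iv) closes.

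For comparison, the paper proves uniqueness by a somewhat different route: it invokes interchangeability of equilibria in zero-sum games --- if $(p',q')$ is an equilibrium then so is $(p,q')$ --- and then uses that $p$ has full support (it effectively takes $p$ uniform), so the function $x\mapsto\int u(x,y)\,dF_{q'}(y)$, being nonpositive and integrating to zero against $p$, vanishes identically on $[0,1]$; matching power-series coefficients gives $\nu'_n=\nu_n$, hence $q'=q$, and symmetrically $p'=p$. That route gets vanishing on all of $[0,1]$ directly and so never needs the analytic identity theorem, whereas your argument, once repaired as above, works under the weaker stated hypothesis that $\mathrm{supp}(p)$ and $\mathrm{supp}(q)$ are merely infinite; both routes end with the same appeal to moment determinacy on $[0,1]$.
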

\begin{proof}
    Fix any $y \in [0,1]$, since $|\frac{1}{2^n}(y^n-\nu_n) n x^{n-1}| \le \frac{n}{2^n}$, the series of $\nabla_x u(x,y)$ is uniformly convergent. We have $ |\nabla_x u(x,y)| \le \sum_{n=0}^\infty \frac{n}{2^n} \le 2, \quad y \in [0,1].$
    Similarly, we have $|\nabla^2_x u(x,y)| \le \sum_{n=0}^\infty \frac{n^2}{2^n} \le 6$ for all $y \in [0,1]$. By symmetry, we also have $|\nabla_y (x,y)| \le 2$ and $|\nabla^2_y(x,y)| \le 6$ for all $x, y \in [0,1]$. Thus, the game is $2$-Lispchitz and $6$-smooth.
    
    Since $|\frac{1}{2^n} (x^n - \mu_n)(y^n - \nu_n)| \le \frac{1}{2^n}$, the series of $u(x,y)$ is absolutely and uniformly convergent. We have 
    \begin{align*}
        \int_{0}^1 u(x,y) \mathrm{d} F_p(x) = \sum_{n=0}^\infty \frac{1}{2^n} (y^n - \nu_n) \int_0^1 (x^n - \mu_n) \mathrm{d} F_p(x) \equiv 0,\\
        \int_{0}^1 u(x,y) \textbf{}F_q(y) = \sum_{n=0}^\infty \frac{1}{2^n} (x^n - \mu_n) \int_0^1 (y^n - \nu_n) \mathrm{d} F_q(y) \equiv 0.
    \end{align*}
    In particular, $(p, q)$ is a mixed Nash equilibrium, and the value of the game is $0$. Suppose $(p',q')$ is also a mixed Nash equilibrium. Then $(p, q')$ is a mixed Nash equilibrium. Note that $p$ supports on every point in $[0,1]$. As a consequence, we have
    \begin{align*}
        0 \equiv \int_0^1 u(x,y) \mathrm{d}F_{q'}(y) =\sum_{n=0}^\infty \frac{1}{2^n} (x^n - \mu_n) (\nu'_n- \nu_n) 
    \end{align*}
for all $x\in[0,1]$, where $\nu_n'$ is the $n$-th moment of $q'$. Since the series vanished identically, the coefficients of each power of $x$ must vanish. Thus $\nu'_n = \nu_n$ and $q' = q$. Similarly, we have $p'= p$, and the mixed Nash equilibrium is unique.  
\end{proof}

\subsection{Representation Complexity of Approximate Equilibria}
We now turn to the representation complexity of \emph{approximate} equilibrium. We give a smooth non-concave game such that any MNE, CE, CCE has exponential representation complexity, i.e., the size of the support of any equilibrium is exponential in the dimension. This result holds even for approximate equilibrium and for two-player zero-sum games. As an important implication, our example rules out any $\poly(\log(1/\varepsilon))$-time algorithms for MNE, CE, CCE for non-concave games even if they are given non-convex optimization (or any other) oracles since even writing down any $\varepsilon$-approximate equilibrium requires $\Omega(1/\varepsilon)$ time.

\begin{theorem}\label{thm:representation of approx}
    For any $\alpha \in (0,1)$, there exists a $2d$-dimensional, $O(\frac{1}{\alpha})$-Lipschitz, and $O(\frac{d}{\alpha^2})$-smooth two-player zero-sum non-concave game such that
    \begin{itemize}
        \item[1.] For any $\varepsilon < \frac{\alpha^d}{2}$, any $\varepsilon$-MNE has a support size at least $(\frac{1}{\alpha})^d$;
        \item[2.] For any $\varepsilon < \frac{\alpha^d}{2}$, any $\varepsilon$-CE or $\varepsilon$-CCE supports on at least $(\frac{1}{\alpha})^d$ joint strategy profiles.
    \end{itemize}
\end{theorem}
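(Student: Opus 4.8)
The plan is to exhibit a ``continuous hide-and-seek on a grid'' game. Fix an integer $N$ with $N^d=(1/\alpha)^d$ (so $N=1/\alpha$, WLOG an integer), let $\mathcal{G}\subseteq[0,1]^d$ be the set of $N^d$ grid points of spacing $\alpha$, and let $h:\mathbb{R}\to[0,1]$ be a fixed smooth ``partition-of-unity bump'' of period $\alpha$: $h(0)=1$, $\mathrm{supp}(h)\subseteq[-\alpha,\alpha]$, $h$ vanishes at every nonzero multiple of $\alpha$, and $\sum_{k\in\mathbb{Z}}h(t-k\alpha)\equiv1$. Put $\beta(z)=\prod_{i=1}^d h(z_i)$, so that $\sum_{g\in\mathcal{G}}\beta(x-g)\equiv1$ on $[0,1]^d$ (boundary grid points are handled by mild padding). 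Define the two-player zero-sum game with $\mathcal{X}_1=\mathcal{X}_2=[0,1]^d$ and $u_1(x,y)=-u_2(x,y)=u(x,y):=\sum_{g\in\mathcal{G}}\beta(x-g)\,\beta(y-g)$, with player $1$ the maximizer. Cauchy--Schwarz together with $\beta\le 1$ and the partition-of-unity identity gives $u\in[0,1]$, while $|h'|=O(1/\alpha)$ and $|h''|=O(1/\alpha^2)$ give, after differentiating through the locally finite sum, the claimed $O(1/\alpha)$-Lipschitzness and $O(d/\alpha^2)$-smoothness in each player's own block of variables.

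First I would pin down the value. Plugging the uniform distribution $U_{\mathcal G}$ over $\mathcal G$ into either argument and using $\beta(g-g')=\ind[g=g']$ together with $\sum_g\beta(\cdot-g)\equiv1$ shows $\mathbb{E}_{x\sim U_{\mathcal G}}[u(x,y)]=\mathbb{E}_{y\sim U_{\mathcal G}}[u(x,y)]=1/N^d$ for every pure $y$ (resp.\ $x$); hence by the minimax theorem for continuous games the value is exactly $v=1/N^d=\alpha^d$, with $(U_{\mathcal G},U_{\mathcal G})$ an exact Nash equilibrium.

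Next, the support lower bound. If $(p,q)$ is an $\varepsilon$-MNE then, by the standard zero-sum argument ($\mathbb{E}_{p,q}[u]\ge v-\varepsilon$ from player $1$'s constraint and then $\min_y\mathbb{E}_{x\sim p}[u(x,y)]\ge\mathbb{E}_{p,q}[u]-\varepsilon$ from player $2$'s), we get $\min_y\mathbb{E}_{x\sim p}[u(x,y)]\ge v-2\varepsilon$. Probing the pure strategy $y=g_0$ for each $g_0\in\mathcal G$ and using $\beta(g_0-g)=\ind[g=g_0]$ yields $\mathbb{E}_{x\sim p}[\beta(x-g_0)]\ge v-2\varepsilon=\alpha^d-2\varepsilon$, which is strictly positive once $\varepsilon<\alpha^d/2$. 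Therefore $\mathrm{supp}(p)$ must meet $\{x:\beta(x-g_0)>0\}$ for each of the $N^d$ grid points $g_0$; choosing the bump $h$ (and, if one wants the cleanest constant, the lattice) so that these supports have pairwise-bounded overlap, a counting / Hall-type argument upgrades ``$\mathrm{supp}(p)$ covers all $N^d$ bump-supports'' to $|\mathrm{supp}(p)|\ge N^d=(1/\alpha)^d$. Finally, for $\varepsilon$-CE/CCE, applying the two correlated-deviation constraints exactly as above shows that the $\mathcal X_1$-marginal $\sigma_1$ of any $\varepsilon$-CCE satisfies $\min_y\mathbb{E}_{x\sim\sigma_1}[u(x,y)]\ge v-2\varepsilon$, so $\sigma_1$ is a $2\varepsilon$-maximin strategy and the previous paragraph forces $|\mathrm{supp}(\sigma_1)|\ge(1/\alpha)^d$, hence $|\mathrm{supp}(\sigma)|\ge(1/\alpha)^d$; since an $\varepsilon$-CE is in particular an $\varepsilon$-CCE, this covers both cases.

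The main obstacle I expect is the combinatorics in the support step: with genuinely smooth bumps the sets $\{\beta(\cdot-g)>0\}$ are forced to overlap, so a naive covering count lets a single atom of $p$ be ``charged'' to up to $2^d$ grid points and only gives $|\mathrm{supp}(p)|\ge N^d/2^d$. Recovering the clean $(1/\alpha)^d$ requires engineering the bump/lattice so each atom is charged to $O(1)$ grid points (or replacing the crude cover with a matching/system-of-distinct-representatives argument), while simultaneously keeping the Lipschitz and smoothness constants at $O(1/\alpha)$ and $O(d/\alpha^2)$ and dealing with the boundary of the grid; the value computation and the zero-sum reductions are routine by comparison.
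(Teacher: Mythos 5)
Your overall architecture (exhibit a hide-and-seek game, lower-bound the value guarantee of any approximate equilibrium's marginal via the zero-sum deviation inequalities, then convert that guarantee into a support bound) is sound, and the reductions for $\varepsilon$-MNE and $\varepsilon$-CE/CCE are correct; but the crucial counting step has a genuine gap, which you yourself flag and which I do not think is repairable within your construction at the stated constants. Because $\sum_{g}\beta(\cdot-g)\equiv 1$ with $\beta$ continuous on a connected domain, the open sets $\{\beta(\cdot-g)>0\}$ cannot be pairwise disjoint: near cell boundaries (and, with a tensor-product bump, near corners) up to $2^d$ bumps are simultaneously positive. Your necessary condition --- that $\mathrm{supp}(p)$ meets every one of the $N^d$ bump supports, or even the quantitative version $\mathbb{E}_{x\sim p}[\beta(x-g_0)]\ge \alpha^d-2\varepsilon$ --- can therefore be satisfied by distributions with only about $(N/2)^d$ atoms (place atoms in deep corners so each serves $2^d$ bumps), so the argument as written yields $|\mathrm{supp}(p)|\ge (1/(2\alpha))^d$, not $(1/\alpha)^d$. ``Engineering the bump so each atom is charged to $O(1)$ grid points'' is impossible for a smooth partition of unity, and a Hall/SDR argument would need information that your derived condition simply does not contain; rescaling $\alpha\mapsto\alpha/2$ recovers the support bound but shrinks the admissible range to $\varepsilon<\alpha^d/2^{d+1}$, which is weaker than the claimed $\varepsilon<\alpha^d/2$. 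Two smaller points also need care: the claimed $O(1/\alpha)$-Lipschitz and $O(d/\alpha^2)$-smoothness of the tensor-product sum are not automatic (worst-case gradient/Hessian bounds pick up dimension-dependent factors unless the bump is chosen carefully and one exploits $\sum_g\beta(y-g)=1$ as a weighting), and ``WLOG $1/\alpha\in\mathbb{N}$'' together with the boundary padding silently perturbs the value away from exactly $\alpha^d$, which is what the threshold $\varepsilon<\alpha^d/2$ is calibrated against.

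For contrast, the paper avoids all of this bookkeeping by using a single radial bump: $\X_1=\X_2=B_d(1)$ and $u_1(x_1,x_2)=1+\cos\bigl(\tfrac{\pi}{\alpha^2}\|x_1-x_2\|^2\bigr)$ for $\|x_1-x_2\|\le\alpha$ and $0$ otherwise. No value computation or partition of unity is needed: if both players' supports had size below the covering number $(1/\alpha)^d$ of the ball, player 2 could move to a point $\alpha$-far from all of player 1's support and get utility exactly $0$, while player 1 could match an atom of player 2 carrying mass at least $1/l>\alpha^d$ and get at least $1/l$; adding the two deviation inequalities contradicts the zero-sum identity. The exact vanishing outside radius $\alpha$ plus the covering-number argument is what delivers the clean $(1/\alpha)^d$ bound that your overlapping-bumps count cannot.
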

\begin{proof}
    Let $\X_1 = \X_2 = B_d(1)$ be the $d$-dimension ball centered at the origin. The utility functions are defined as 
    \begin{align*}
        u_1(x_1, x_2) = -u_2(x_1, x_2) := \begin{cases}
            1+\cos\InParentheses{\frac{\pi}{\alpha^2}\InNorms{x_1 - x_2}^2}  & \InNorms{x_1 -x_2} \le \alpha \\
            0 & \mathrm{otherwise}
        \end{cases}
    \end{align*}
    We can verify that for $i \in \{1,2\}$ $\InNorms{\nabla_{x_i} u_1(x_i)} \le \frac{2\pi}{\alpha}$ and $\InNorms{H_{u_1}(x_i)}_2 \le \InNorms{H_{u_1}(x_i)}_F \le \frac{6\pi^2 d}{\alpha^2}$ ($\InNorms{\cdot}_F$ is the Frobenius norm). Thus the game is $O(\frac{1}{\alpha})$-Lispchitz and $O(\frac{d}{\alpha^2})$-smooth and satisfies \Cref{assumption:smooth games}. We note that the covering number of $B_d(1)$ using ball with radius $\alpha$ is at least $(\frac{1}{\alpha})^d$. This means that for $N < (\frac{1}{\alpha})^d$ points, there exists a point $x \in B_d(1)$ that is $\alpha$-away from all $N$ points.
    
    Let $\varepsilon < \frac{\alpha^d}{2}$ and $(p_1, p_2)$ be any $\varepsilon$-MNE of the game. We assume $p_1$ is supported on $k$ points denoted as $x_1^1, x_1^2, \ldots, x_1^k$ and $p_2$ is supported on $l$ points denoted as $x_2^1, x_2^2, \ldots, x_2^l$. For the sake of contradiction, we assume both $k$ and $l$ is smaller than $(\frac{1}{\alpha})^d$, the covering number of $B_d(1)$. Then there exists one point $y_2 \in \X_2$ that is $\alpha$-away from all points in $\{x_1^i\}_{i \in [k]}$, i.e., $\InNorms{x_1^i - y_2} \ge \alpha$ for all $i \in [k]$.  By definition, we have $u_2(p_1, y_2) = 0$\footnote{We slightly abuse notation here and denote $u(p,p')= \mathbb{E}_{x_1 \sim p, x_2 \sim p'}[u(x_1, x_2)]$.}. Since $(p_1, p_2)$ is an $\varepsilon$-MNE, we get $u_2(p_1,p_2) \ge u_2(p_1, y_2) -\varepsilon = -\varepsilon$. On the other hand, since $p_2$ supports on $l$ points, there exists one point $y_1 \in \{x_2^j\}_{j \in [l]}$ such that $p_2(y_1) \ge \frac{1}{l}$. Thus $u_1(p_1, p_2) \ge u_1(y_1, p_2) - \varepsilon \ge \frac{1}{l} - \varepsilon$. Combining the above gives $u_1(p_1,p_2) + u_2(p_1, p_2) \ge \frac{1}{l} - 2\varepsilon = \frac{1}{l} - \alpha^d > 0$ which contradicts the fact that the game is zero-sum. Consequently, any $\varepsilon$-MNE has support size at least $(\frac{1}{\alpha})^d$. Similar reasoning also give that any $\varepsilon$-CE ($\varepsilon$-CCE, respectively) of the game supports on at least $(\frac{1}{\alpha})^d$ joint strategy profiles. In fact, since the marginal distribution of a $\varepsilon$-CE ($\varepsilon$-CCE, respectively) is a $\varepsilon$-MNE in two-player zero-sum games, the support size of any $\varepsilon$-CE ($\varepsilon$-CCE, respectively) must be at least $(\frac{1}{\alpha})^d$. 
\end{proof}

An important implication of \Cref{thm:representation of approx} is that the representation complexity of an $\varepsilon$-MNE (CE, CCE, respectively) is at least $\Omega(1/\varepsilon)$ even for two-player zero-sum non-concave games with simple independent constraints. Therefore, there is no $\poly(\log(1/\varepsilon))$-time algorithm for computing an $\varepsilon$-CCE. This is in stark contrast to concave games where computing an $\varepsilon$-CCE or even $\varepsilon$-linear CE has a $\poly(\log(1/\varepsilon))$-time algorithm~\citep{daskalakis2024efficient}. We formally present the results below.

\begin{definition}[\textsc{$\varepsilon$-MixedNash}] Given $\varepsilon, L, G \in \mathbb{R}_+$, two circuits implementing a $L$-smooth and $G$-Lipschitz function $u: B_d(1) \times B_d(1) \rightarrow [0,1]$ and its gradients $\nabla u: B_d(1) \times B_d(1) \rightarrow \mathbb{R}^{2d}$, find a distribution $(p_1, p_2)$ such that 
\begin{align*}
    u(x_1, p_2) - \varepsilon \le u(p_1, p_2) \le u(p_1, x_2) + \varepsilon, \forall x_1, x_2 \in B_d(1).
\end{align*}
    
\end{definition}

\begin{corollary}
    \textsc{$\varepsilon$-MixedNash} requires $\Omega(1/\varepsilon)$ time even if $G = O(\frac{1}{\varepsilon})$ and $L = O(\frac{1}{\varepsilon^2})$ and $d=1$.
\end{corollary}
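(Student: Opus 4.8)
The plan is to derive the corollary directly from \Cref{thm:representation of approx}, taking the per-player dimension $d = 1$ and coupling the granularity parameter $\alpha$ to $\varepsilon$, and then closing with a trivial output-size argument. Fix a small $\varepsilon > 0$ and set $\alpha := 5\varepsilon$ (so $\alpha \in (0,1)$ whenever $\varepsilon < 1/5$; for larger $\varepsilon$ the conclusion $\Omega(1/\varepsilon)$ is vacuous, and in fact any $\alpha = \Theta(\varepsilon)$ with $\alpha > 4\varepsilon$ works). \Cref{thm:representation of approx} with $d = 1$ produces a two-player zero-sum non-concave game on $\X_1 = \X_2 = B_1(1) = [-1,1]$ whose utility $u_1(x_1,x_2) = 1 + \cos(\frac{\pi}{\alpha^2}\InNorms{x_1-x_2}^2)$ for $\InNorms{x_1-x_2}\le \alpha$ (and $0$ otherwise) is $O(1/\alpha) = O(1/\varepsilon)$-Lipschitz, $O(1/\alpha^2) = O(1/\varepsilon^2)$-smooth, and is computed, along with its gradient, by constant-size arithmetic circuits. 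Since $u_1$ takes values in $[0,2]$, I would feed the rescaled utility $\tilde u := u_1/2 \in [0,1]$ to the \textsc{$\varepsilon$-MixedNash} solver; rescaling by $1/2$ halves the Lipschitz and smoothness constants (so they remain $O(1/\varepsilon)$ and $O(1/\varepsilon^2)$), and an $\varepsilon$-MNE of $\tilde u$ is exactly a $2\varepsilon$-MNE of $u_1$.

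The key structural input is item~1 of \Cref{thm:representation of approx}: since $2\varepsilon < \alpha^{d}/2 = \alpha/2$ by the choice $\alpha = 5\varepsilon$, every $2\varepsilon$-MNE of $u_1$ --- equivalently, every $\varepsilon$-MNE of $\tilde u$, equivalently every valid output of any algorithm solving \textsc{$\varepsilon$-MixedNash} on this instance --- is supported on at least $(1/\alpha)^{d} = 1/(5\varepsilon)$ distinct strategy profiles. Any explicit description of a distribution on $[-1,1]$ that puts positive mass on $k$ atoms must list at least $k$ (atom, probability) pairs and hence occupies $\Omega(k)$ space and takes $\Omega(k)$ time to write down; with $k = \Omega(1/\varepsilon)$ this yields the claimed $\Omega(1/\varepsilon)$ running-time lower bound, with $d = 1$, $G = O(1/\varepsilon)$, and $L = O(1/\varepsilon^2)$, matching the statement.

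The argument is short, and the only point needing care --- and the closest thing to an obstacle --- is the computational model: the bound relies on the standard convention that the returned distribution is presented explicitly, as a finite list of weighted atoms, so that its description length is at least linear in its support size. Under this convention no actual computation is required for the hardness: the obstruction is information-theoretic, about the size of any witness, rather than about the difficulty of finding one. One should additionally verify the routine constant bookkeeping in the $u_1 \mapsto u_1/2$ rescaling and confirm that the circuits for $\tilde u$ and $\nabla \tilde u$ have size $O(1)$ (in particular $\poly(\log(1/\varepsilon))$), so that the constructed instance is a legitimate, polynomially-sized input to \textsc{$\varepsilon$-MixedNash}.
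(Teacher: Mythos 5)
Your proposal is correct and follows essentially the same route as the paper: instantiate the hard game of \Cref{thm:representation of approx} with $d=1$ and $\alpha = \Theta(\varepsilon)$ (the paper takes $\alpha = 2\varepsilon$), invoke the support-size lower bound of $(1/\alpha)^d = \Omega(1/\varepsilon)$, and conclude via the observation that merely writing down any valid output already costs $\Omega(1/\varepsilon)$ time. Your extra bookkeeping (rescaling $u_1/2$ into $[0,1]$ and choosing $\alpha = 5\varepsilon$ so the strict inequality $2\varepsilon < \alpha/2$ in the theorem genuinely holds) is a harmless tightening of the same argument, not a different approach.
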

\begin{proof}
    Fix any $\varepsilon < \frac{1}{2}$. We consider the game $\+G$ defined in the proof of \Cref{thm:representation of approx} with $\alpha = 2\varepsilon < 1$. The game $\+G$ is $O(\frac{1}{\varepsilon})$-Lipschitz and $O(\frac{1}{\varepsilon^2})$-smooth. However, any $\varepsilon$-mixed Nash equilibrium has support size at least $\frac{1}{\alpha} = \frac{1}{2\varepsilon}$.
\end{proof}

\begin{corollary}
    \textsc{$\varepsilon$-MixedNash} requires $\Omega(2^d, 1/\varepsilon)$ time even if $G = O(1)$ and $L = O(d)$.
\end{corollary}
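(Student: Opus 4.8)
The plan is to reuse the two-player zero-sum non-concave game $\+G$ constructed in the proof of \Cref{thm:representation of approx}, but to choose its resolution parameter $\alpha$ so as to keep the Lipschitz and smoothness constants as small as the corollary demands while still forcing an exponentially (or $1/\varepsilon$-) large support. Recall that for any $\alpha \in (0,1)$ that game is $2d$-dimensional, $O(1/\alpha)$-Lipschitz, $O(d/\alpha^2)$-smooth, and has the property that for every $\varepsilon < \alpha^d/2$, every $\varepsilon$-MNE (and hence every $\varepsilon$-CE and $\varepsilon$-CCE) is supported on at least $(1/\alpha)^d$ strategy profiles. Since even writing down a distribution on $N$ profiles takes $\Omega(N)$ time, a lower bound on the support size of \emph{every} $\varepsilon$-MNE is automatically a running-time lower bound for \textsc{$\varepsilon$-MixedNash}.

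First I would fix the input $\varepsilon \in (0, 1/3)$ and set $\alpha := \max\bigl\{\tfrac12,\, (3\varepsilon)^{1/d}\bigr\}$, so that $\alpha \in [\tfrac12, 1)$. With $\alpha$ bounded away from $0$ by a constant, the game is $O(1)$-Lipschitz and $O(d)$-smooth, matching the hypotheses $G = O(1)$ and $L = O(d)$. Next I would verify the admissibility condition $\varepsilon < \alpha^d/2$ in the two cases: if $\alpha = (3\varepsilon)^{1/d}$ then $\alpha^d/2 = \tfrac32\varepsilon > \varepsilon$; if $\alpha = \tfrac12 > (3\varepsilon)^{1/d}$ then $3\varepsilon < 2^{-d}$, whence $\varepsilon < 2^{-d}/2 = \alpha^d/2$. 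In either case \Cref{thm:representation of approx} applies and yields a support-size lower bound of $(1/\alpha)^d = \min\{2^d,\ 1/(3\varepsilon)\} = \Omega(\min\{2^d, 1/\varepsilon\})$, which is exactly the running-time bound the corollary abbreviates as $\Omega(2^d, 1/\varepsilon)$.

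The only part that is not completely routine is the choice of the threshold at which one switches between $\alpha = \tfrac12$ (the small-$\varepsilon$ regime, where the support is necessarily capped at $\Theta(2^d)$ because we refuse to let $G$ grow) and $\alpha = (3\varepsilon)^{1/d}$ (the moderate-$\varepsilon$ regime, where we can afford a finer grid and extract $\Theta(1/\varepsilon)$); the constant $3$, rather than $2$, is used purely to make the inequality $\varepsilon < \alpha^d/2$ strict rather than an equality. I do not anticipate a genuine obstacle here — this corollary is essentially a repackaging of \Cref{thm:representation of approx} under a different constraint on the pair $(G,L)$, exactly parallel to how the previous corollary repackages it by taking $\alpha = 2\varepsilon$ and $d = 1$.
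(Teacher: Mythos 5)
Your proposal is correct and follows essentially the same route as the paper: both invoke \Cref{thm:representation of approx} with $\alpha$ bounded below by a constant so that $G=O(1)$ and $L=O(d)$, and convert the resulting support-size lower bound into a running-time lower bound. The paper simply fixes $\alpha=\tfrac12$ and the single hard value $\varepsilon=2^{-(d+1)}$ (so that $2^d=\tfrac{1}{2\varepsilon}$), whereas you let $\alpha=\max\{\tfrac12,(3\varepsilon)^{1/d}\}$ to cover every $\varepsilon\in(0,\tfrac13)$ and obtain $\Omega(\min\{2^d,1/\varepsilon\})$; this is a mild refinement of the same argument, not a different one.
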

\begin{proof}
    Let $\varepsilon $ We consider the game in \Cref{thm:representation of approx} with $\alpha = \frac{1}{2}$ so that $G = O(1)$ and $L=O(d)$. Let $\varepsilon = \frac{1}{2^{d+1}}$. By \Cref{thm:representation of approx}, every $\varepsilon$-mixed Nash equilibrium has support size at least $2^d = \frac{1}{2\varepsilon}$.
\end{proof}
According to  \Cref{thm:representation of approx}, the same lower bounds also hold for approximate CE and CCE.

\section{Proof of \Cref{lemma:no-regret-2-CE}}
\label{app:lemma:no-regret-2-CE}
Since the utility $u_i$ of each player $i$ is $L$-smooth, we have 
\begin{align*}
    u_i(\phi(x_i), x_{-i}) - u_i(x_i) \le \InAngles{\nabla_{x_i} u_i(x), \phi(x_i) -  x_i } + \frac{L\delta^2}{2}, \forall x, \phi \in \Phi^{\+X_i} (\delta).
\end{align*}
Therefore, we have $\forall i\in [n], \forall \phi \in \Phi^{\+X_i}(\delta)$,
\begin{align*}
    &\-E_{x \sim \sigma} \InBrackets{ u_i(\phi(x_i), x_{-i}) - u_i(x_i) }\\
    &\le \-E_{x \sim \sigma} \InBrackets{\InAngles{\nabla_{x_i} u_i(x), \phi(x_i) -  x_i }} + \frac{L\delta^2}{2} \\
    & \le \varepsilon + \frac{L\delta^2}{2}.
\end{align*}
By definition, $\sigma$ is an $(\varepsilon + \frac{L\delta^2}{2})$-approximate $\Phi(\delta)$-equilibrium. This completes the proof of the first part.

Let each player $i \in [n]$ employ algorithm $\+A$ in a smooth game independently and produce iterates $\{x^t\}$. The averaged joint strategy profile $\sigma^T$ that chooses $x^t$ uniformly at random from $t\in [T]$ satisfies for any player $i \in [n]$,
\begin{align*}
    &\max_{\phi \in \Phi^{\X_i}(\delta)} \-E_{x \sim \sigma} \InBrackets{u_i(\phi(x_i), x_{-i})} - \-E_{x \sim \sigma} \InBrackets{u_i(x)}\\
    &= \max_{\phi \in \Phi^{\X_i}(\delta)} \frac{1}{T} \sum_{t=1}^T \InParentheses{ u_i(\phi(x^t_i), x^t_{-i}) - u_i(x^t)} \\
    &\le \max_{\phi \in \Phi^{\X_i}(\delta)} \frac{1}{T} \sum_{t=1}^T \InParentheses{ \InAngles{\nabla u_i(x), \phi(x_i) - x_i} + \frac{\delta^2L}{2}} \\
    &= \frac{\reg_{\Phi^{\X_i}(\delta)}^T}{T} + \frac{\delta^2L}{2}.
\end{align*}
Thus $\sigma^T$ is a $( \max_{i\in[n]}\{\reg_{\Phi^{\X_i}(\delta)}^T\}\cdot T^{-1} + \frac{\delta^2L}{2})$-approximate $\Phi(\delta))$-\lce. This completes the proof of the second part.

\section{Missing Details in \Cref{sec:proximal regret}}
\label{app:proofs proj regret}

\subsection{Online Gradient Descent Minimizes Symmetric Linear Swap Regret}
\label{sec:symmetric linear swap regret}
Let $\+X \subset \-R^d$. We consider symmetric affine endomorphism $\phi(x) = Ax + b$ such that $A \in \-R^{d \times d}$ is symmetric. We first show sufficient conditions under which $\phi = \prox_f$ for a quadratic function $f$.
\begin{proposition}\label{prop:affine transformation}
    For any symmetric affine endomorphism $\phi(x) = Ax +b$ over $\+X$, we have $\phi=\prox_f$ for $f(x) = \frac{1}{2} x^\top (A^{-1}-I)x - (A^{-1}b)^\top x$ if $A$ satisfies one of the two conditions below:
\begin{itemize}
\item $A$ is positive semidefinite (PSD) with the largest eigenvalue $\sigma_{\max}(A) \le 1$
\item $A$ is positive definite (PD) with the smallest eigenvalue $\sigma_{\min}(A) > \frac{1}{2}$
\end{itemize}
\end{proposition}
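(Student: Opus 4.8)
The plan is to recognize $\prox_f$ for a quadratic $f(x) = \frac{1}{2} x^\top Q x + c^\top x$ as an affine map and solve for $Q, c$ so that this map equals the prescribed $\phi(x) = Ax + b$. Writing $\prox_f(x) = \argmin_{x' \in \+X} \{ \frac{1}{2}(x')^\top Q x' + c^\top x' + \frac{1}{2}\InNorms{x'-x}^2 \}$, the unconstrained minimizer of the objective (ignoring $\+X$ for a moment) satisfies $Q x' + c + (x' - x) = 0$, i.e.\ $x' = (Q+I)^{-1}(x - c)$. Matching with $\phi(x) = Ax+b$ forces $(Q+I)^{-1} = A$ and $-(Q+I)^{-1}c = b$, hence $Q = A^{-1} - I$ and $c = -A^{-1}b$, which is exactly the $f$ in the statement. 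So the first step is this algebraic identification, and the only real content is justifying (i) that the minimization problem is well-posed (convex with a unique minimizer), and (ii) that the unconstrained formula actually gives the constrained $\prox$, i.e.\ the minimizer over $\+X$ coincides with the global minimizer — which holds automatically because $\phi$ is assumed to be an \emph{endomorphism} of $\+X$, so $\phi(x) \in \+X$ for every $x \in \+X$, and the global minimizer lands inside $\+X$.

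For well-posedness I would invoke the two sufficient conditions from the excerpt for $\prox_f$ to be well-defined: either $f \in \+F_{\lscc}$ (lower semicontinuous and convex) or $f \in \+F_{\sm}$ ($1^-$-smooth). Under the first hypothesis, $A$ PSD with $\sigma_{\max}(A) \le 1$: then $A^{-1}$ has all eigenvalues $\ge 1$ (here $A$ must be PD for $A^{-1}$ to exist — so strictly speaking one needs $A$ PD with $\sigma_{\max}(A)\le 1$, and the boundary eigenvalue $1$ is allowed since it only makes $Q = A^{-1}-I$ PSD), so $Q = A^{-1} - I \succeq 0$ and $f$ is convex; being a quadratic it is continuous, hence lsc, so $f \in \+F_{\lscc}$ and $\prox_f$ is well-defined. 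Under the second hypothesis, $A$ PD with $\sigma_{\min}(A) > \frac{1}{2}$: then $A^{-1}$ has eigenvalues $< 2$, so $Q = A^{-1} - I$ has eigenvalues in the open interval $(-1, 1)$, meaning $\InNorms{Q}_2 < 1$; a quadratic with Hessian $Q$ is $\InNorms{Q}_2$-smooth, hence $1^-$-smooth, so $f \in \+F_{\sm}$ and again $\prox_f$ is well-defined. In both cases the objective $x' \mapsto \frac{1}{2}(x')^\top(Q+I)x' + (c - x)^\top x' $ has Hessian $Q + I = A^{-1} \succ 0$ (strictly positive definite in both cases, since eigenvalues are $\ge 1$ or $> \frac12$), so it is strongly convex and has the unique minimizer $(Q+I)^{-1}(x-c) = A(x - c) = Ax + b = \phi(x)$.

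The last step is to confirm that this unconstrained minimizer is the constrained $\prox$: for $x \in \+X$, since $\phi$ maps $\+X$ into $\+X$ we have $\phi(x) \in \+X$, and since $\phi(x)$ already minimizes the strongly convex objective over all of $\R^d$, it \emph{a fortiori} minimizes it over $\+X$; hence $\prox_f(x) = \phi(x)$ for all $x \in \+X$, which is the claim. The main obstacle — really the only subtlety — is bookkeeping the eigenvalue conditions to land $f$ in the right function class ($\+F_{\lscc}$ vs.\ $\+F_{\sm}$) and noting that the two listed conditions are precisely what make $A$ invertible \emph{and} place $\InNorms{Q}$ or the sign of $Q$ in the admissible range; the endomorphism hypothesis does all the work of reconciling constrained and unconstrained optimization, so no KKT/projection argument is needed.
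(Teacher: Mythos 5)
Your proposal is correct and follows essentially the same route as the paper's proof: verify that the eigenvalue conditions place $f$ in $\+F_{\lscc}$ (PSD case) or $\+F_{\sm}$ (PD case) so $\prox_f$ is well-defined, then note that the prox objective is strictly convex with its stationary point at $Ax+b$, which lies in $\+X$ by the endomorphism hypothesis, hence equals $\prox_f(x)$. Your observation that the PSD case implicitly requires invertibility of $A$ is a fair (and minor) precision the paper also glosses over.
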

    \begin{proof}
         We first show that $\prox_f$ is well-defined. We note that $f$ is a quadratic function.
        \begin{itemize}
            \item if $A$ is PSD with $\sigma_{\max}(A) \le 1$, then $\nabla^2f=A^{-1}-I$ is PSD. Therefore, $f$ is convex;
            \item if $A$ is PD with $\sigma_{\min}(A) > \frac{1}{2}$, then $\InNorms{\nabla^2 f}_2=\InNorms{A^{-1}-I}_2 \in (\frac{1}{\sigma_{\max}}-1, \frac{1}{\sigma_{\min}}-1) \in (-1,1)$. Therefore, $f$ is $L$-smooth with $L < 1$.
        \end{itemize}
        Therefore, $\prox_f$ is well-defined. Fix any $x \in \+X$. Recall that 
        \begin{align*}
            \prox_f(x) = \argmin_{x' \in \+X} F_x(x'): = f(x') + \frac{1}{2}\InNorms{x' - x}^2
        \end{align*}
        Let us denote $x^* = \phi(x) = Ax + b \in \+X$. Then
        \begin{align*}
            \nabla F_x(x^*) &= (A^{-1} - I) x^* - A^{-1} b + x^* - x \\
            & = A^{-1} (x^* - b)  - x\\
            & = x - x = 0.
        \end{align*}
        Since $F_x(x')$ is a strictly convex function, we know $x^* = Ax + b \in \+X$ is its unique minimizer. Thus $\prox_f(x) = Ax+b = \phi(x)$. 
    \end{proof}
    Given any symmetric affine endomorphism $\phi(x) = Ax +b$, we consider a new symmetric affine endomorphism $\phi_\alpha := (1-\alpha) \mathrm{Id} + \alpha \phi$ such that $\phi_\alpha(x) = A_\alpha x+ \alpha b$ where $A_\alpha = (1-\alpha)I + \alpha A$. Now for any $\alpha \in (0, \frac{1}{2(1+\InNorms{A})})$, we have the eigenvalues of $A_\alpha$ are at least $1-\alpha - \alpha \InNorms{A} > \frac{1}{2}$. Thus $A_\alpha$ is PD and satisfies item 2 in \Cref{prop:affine transformation} and can be instantiated as proximal regret with a smooth $f$ as defined in \Cref{prop:affine transformation}. By \Cref{theorem:GD proximal regret}, we know \hyperref[GD]{GD} minimizes $\reg_{\phi_\alpha}(T)$. Since
    \begin{align*}
       \reg_{\phi}(T) \le \sum_{t=1}^T \InAngles{\nabla \ell^t(x^t) ,x^t - \phi(x^t)} = \frac{1}{\alpha} \sum_{t=1}^T \InAngles{\nabla \ell^t(x^t) ,x^t - \phi_\alpha(x^t)},
       \end{align*}
    \hyperref[GD]{GD} also minimizes $\reg_{\phi}(T)$.

\subsection{Differences between External Regret and $\Phiproj$-regret}\label{sec:diff external-proj}
The following two examples show that $\Phiproj(\delta)$-regret is incomparable with external regret for convex loss functions. A sequence of actions may suffer high $\reg^T$ but low $\regproj^T$ (Example~\ref{ex:H-R, L-LCCR}), and vise versa (Example~\ref{ex:H-LLR, L-R}). Since $\Phiproj(\delta)$-regret is a special case of the proximal regret, we know the proximal regret is \emph{strictly} more general than the external regret.
\begin{example}
\label{ex:H-R, L-LCCR}
    Let $f^1(x) = f^2(x) = |x|$ for $x \in \X = [-1, 1]$. Then the $\Phiproj(\delta)$-regret of the sequence $\{x^1 =  \frac{1}{2},  x^2 =-\frac{1}{2}\}$ for any $\delta \in (0, \frac{1}{2})$ is $0$. However, the external regret of the same sequence is $1$. By repeating the construction for $\frac{T}{2}$ times, we conclude that there exists a sequence of actions with $\regproj^T = 0$ and $\reg^T = \frac{T}{2}$ for all $T \ge 2$.
\end{example}
\begin{example}
\label{ex:H-LLR, L-R}
    Let $f^1(x) = -2x$ and $f^2(x) = x$ for $x \in \X = [-1, 1]$. Then the $\Phiproj(\delta)$-regret of the sequence $\{x^1 = \frac{1}{2},  x^2 =0\}$ for any $\delta \in (0, \frac{1}{2})$ is $\delta$. However, the external regret of the same sequence is $0$. By repeating the construction for $\frac{T}{2}$ times, we conclude that there exists a sequence of actions with $\regproj^T = \frac{\delta T}{2}$ and $\reg^T = 0$ for all $T \ge 2$.
\end{example}

\subsection{Lower Bounds for $\Phiproj$-Regret}\label{sec:lower bound proj regret}
\begin{theorem}[Lower bound for $\Phiproj(\delta)$-regret against convex losses] \label{thm:lower Bound for convex function} 
For any $T \ge 1$, $\Dx>0$, $0 < \delta \le \Dx$, and $G\geq 0$, there exists a distribution $\mathcal{D}$ on  $G$-Lipschitz linear loss functions $f^1, \ldots, f^T$ over $\+X = [-\Dx, \Dx]$ such that for any online algorithm, its $\Phiproj(\delta)$-regret on the loss sequence satisfies 
$
    \mathbb{E}_{\mathcal{D}} \InBrackets{\regproj^T} = \Omega(\delta G\sqrt{T}). 
$
\end{theorem}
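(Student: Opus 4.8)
## Proof Proposal for Theorem~\ref{thm:lower Bound for convex function}

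The plan is to construct a classical randomized adversary argument, adapting the standard $\Omega(G\Dx\sqrt{T})$ lower bound for external regret in online linear optimization over an interval, but measuring regret against the proximal deviation class $\Phiproj(\delta)$ rather than against all constant deviations. The key point is that a $\delta$-local projection deviation $\phi_v(x) = \Pi_\X[x - v]$ with $\InNorms{v}\le \delta$ is powerful enough to shift the entire action sequence by a fixed displacement of magnitude $\delta$ (in the interior, and the projection only helps the benchmark when we push toward the boundary), so the benchmark gets to exploit hindsight knowledge of the average gradient direction with a ``budget'' of $\delta$. Concretely, I would take the loss functions to be $f^t(x) = \sigma^t G x$ where $\sigma^1,\dots,\sigma^T$ are i.i.d.\ Rademacher signs ($\pm 1$ with probability $1/2$); these are $G$-Lipschitz linear functions on $\+X = [-\Dx,\Dx]$.

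First I would fix any online algorithm and any realization, and choose the deviation $\phi_v$ with $v = \delta \cdot \mathrm{sign}\!\left(\sum_{t=1}^T \sigma^t\right)$, so that $\InNorms{v} = \delta \le \Dx$ and hence $\phi_v \in \Phiproj(\delta)$. For this deviation, $f^t(\phi_v(x^t)) = \sigma^t G \,\Pi_\X[x^t - v]$. I would lower bound the $\Phiproj(\delta)$-regret by
\[
\regproj^T \;\ge\; \sum_{t=1}^T \sigma^t G\bigl(x^t - \Pi_\X[x^t - v]\bigr).
\]
The term $x^t - \Pi_\X[x^t-v]$ equals $v$ whenever $x^t - v$ stays in $[-\Dx,\Dx]$, and otherwise has the same sign as $v$ with magnitude at most $\InNorms{v}=\delta$; in all cases $\mathrm{sign}(x^t - \Pi_\X[x^t-v]) = \mathrm{sign}(v)$ and, crucially, when we sum against $\sigma^t G$ and take expectation, the contribution is controlled from below. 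The cleanest route: condition on the sign pattern, note $x^t$ is chosen \emph{before} seeing $\sigma^t$ (so $x^t$ is independent of $\sigma^t$), take expectation over $\sigma^t$ for the algorithm's term to get $\bbE[\sum_t \sigma^t G x^t] = 0$, and separately handle the benchmark term. Actually the standard trick is simpler: $\regproj^T \ge \max_{v': \InNorms{v'}\le\delta}\sum_t \sigma^t G(x^t - \Pi_\X[x^t-v']) \ge -G\sum_t\sigma^t x^t + G\,\delta\bigl|\tfrac{1}{\cdots}\bigr|$, and since a small step $v'$ in the interior realizes displacement exactly $v'$, pick $v' = \delta\,\mathrm{sign}(\sum_t\sigma^t)$ to get the benchmark term $\ge G\delta\,\bigl|\sum_t\sigma^t\bigr| \cdot c$ for a constant fraction of rounds where $x^t$ lies in $[-\Dx+\delta,\Dx-\delta]$ — but to avoid case analysis on which rounds are near the boundary, I would instead argue that since $\mathrm{sign}(x^t-\Pi_\X[x^t-v']) = \mathrm{sign}(v')$ always and the magnitude is $\delta$ on all rounds with $|x^t|\le \Dx-\delta$, and on the remaining rounds the magnitude is still nonnegative in the ``right'' direction. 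Taking expectations and using $\bbE\bigl|\sum_{t=1}^T\sigma^t\bigr| = \Theta(\sqrt{T})$ gives $\bbE_{\mathcal D}[\regproj^T] = \Omega(G\delta\sqrt{T})$.

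The main obstacle, and the step requiring the most care, is the boundary effect: the projection $\Pi_\X$ can shrink the effective displacement $x^t - \Pi_\X[x^t - v]$ below $\delta$ when $x^t$ is within distance $\delta$ of the boundary of $[-\Dx,\Dx]$. To handle this cleanly I would argue that the displacement always \emph{points in the direction that helps the benchmark} (since $v$ is chosen adaptively in hindsight to align with $-\mathrm{sign}(\sum_t\sigma^t)$, and projection never reverses the sign of the step), so every round contributes nonnegatively in expectation; then I would show that at least a constant fraction of rounds must have $|x^t| \le \Dx - \delta$ — if instead the algorithm places $x^t$ at distance $<\delta$ from a boundary on almost all rounds, one can use a \emph{different}, simpler constant deviation (a near-boundary point $x^* = \mp(\Dx-\delta)$, which lies in the range of $\phi_v$) to extract $\Omega(G\delta\sqrt T)$ regret directly, since then $x^t$ is essentially pinned and the Rademacher losses punish it. Combining the two cases yields the bound in all regimes of $\delta \le \Dx$. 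Since $\Phiproj(\delta)$-regret is a special case of proximal regret, this also lower-bounds proximal regret, matching the upper bound in Theorem~\ref{theorem:GD proximal regret} up to the $\sqrt\delta$ versus $\delta$ gap noted in Example~\ref{ex:projection}; I would remark on this at the end but not attempt to close the gap here.
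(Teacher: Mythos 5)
Your construction (i.i.d.\ unbiased Rademacher linear losses, comparator $v=\delta\,\mathrm{sign}(\sum_t\sigma^t)$ chosen in hindsight) is the classical external-regret lower bound, but the step you yourself flag as needing the most care — the boundary effect — is exactly where the argument has a genuine gap, and the patch you propose does not close it. The comparator term here is $G\sum_t\sigma^t d^t(v)$ with $d^t(v)=x^t-\Pi_{\+X}[x^t-v]$, and unlike the external-regret case this quantity depends on the algorithm's plays: an adaptive algorithm can choose, as a function of the past signs $\sigma^{1:t-1}$, whether to sit within $\delta$ of the left boundary, the right boundary, or the interior, thereby correlating the effective displacements $d^t(\pm\delta)\in[0,\delta]$ with the sign process. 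Your dichotomy ("a constant fraction of rounds are $\delta$-interior, else the plays are pinned near a boundary") is not exhaustive for such adaptive interleaving, and neither branch is sound as stated: (i) on the interior branch, the set of interior rounds is itself adaptively chosen and correlated with the $\sigma^t$'s, so you cannot invoke $\bbE|\sum_{t\in I}\sigma^t|=\Theta(\sqrt{|I|})$ for that random subset without a martingale argument you have not supplied; (ii) on the boundary branch, the "constant deviation $x^*=\mp(\Dx-\delta)$" is not an element of $\Phiproj(\delta)$ — being in the range of $\phi_v$ is not the same as being realizable by the map $x\mapsto\Pi_{\+X}[x-v]$ — and with unbiased losses "the Rademacher losses punish a pinned sequence" is not a valid conclusion, since any predictable play sequence and any fixed comparator both have zero expected loss; the gain must again come from a hindsight maximum over a random, adaptively influenced sum.

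The paper avoids this difficulty entirely by the Auer-style two-environment argument: the gradients are $\pm G$ with bias $\pm\varepsilon$, in environment $1$ one tests only the fixed deviation $v=\delta$ (whose displacement is exactly $\delta$ on every round with $x^t\ge 0$, using $\delta\le\Dx$), in environment $2$ only $v=-\delta$ on rounds with $x^t<0$, and Pinsker's inequality shows no algorithm can keep $N_+$ small in environment $1$ and large in environment $2$ simultaneously; choosing $\varepsilon\asymp 1/\sqrt{T}$ gives $\Omega(\delta G\sqrt{T})$. If you want to rescue your unbiased-Rademacher route, the correct completion is different from your patch: writing $a^t=\min(x^t+\Dx,\delta)$, $b^t=\min(\Dx-x^t,\delta)$, one has $\regproj^T\ge G\max(S_a,-S_b)=G\big(\tfrac{S_a-S_b}{2}+\tfrac{|S_a+S_b|}{2}\big)$ with $S_a=\sum_t\sigma^ta^t$, $S_b=\sum_t\sigma^tb^t$; the first term has zero expectation because $a^t,b^t$ are predictable, and since $a^t+b^t\ge\delta$ always, a second/fourth-moment (Paley--Zygmund-type) bound for the martingale $\sum_t\sigma^t(a^t+b^t)$ gives $\bbE|S_a+S_b|=\Omega(\delta\sqrt{T})$. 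Without some such argument handling the adaptively chosen displacements, the proposal as written does not establish the theorem.
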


This lower bound suggests that \hyperref[GD]{GD} achieves near-optimal $\Phiproj(\delta)$-regret for convex losses.
For $L$-smooth \emph{non-convex} loss functions, we provide another $\Omega(\delta^2LT)$ lower bound for algorithms that satisfy the linear span assumption. The \emph{linear span} assumption states that the algorithm produces $x^{t+1} \in \{\Pi_{\X}[\sum_{i\in[t]} a_i\cdot x^i +b_i\cdot \nabla f^i(x^i)]: a_i, b_i \in \R, \forall i \in [t]\}$ as essentially the linear combination of the previous iterates and their gradients. 
Many online algorithms, such as online gradient descent and optimistic gradient, satisfy the linear span assumption. 
Combining with \Cref{lemma:no-regret-2-CE}, this lower bound suggests that \hyperref[GD]{GD} attains nearly optimal $\Phiproj(\delta)$-regret, even in the non-convex setting, among a natural family of gradient-based algorithms. 
\begin{proposition}[Lower bound for $\Phiproj(\delta)$-regret against non-convex losses]
\label{prop:lower bound for non-convex function}
For any $T \ge 1$, $\delta \in (0,1)$, and $L\geq 0$, there exists a 
sequence of $L$-Lipschitz and $L$-smooth non-convex loss functions $f^1, \ldots, f^T$ on $\+X = [-1, 1]$ such that for any algorithm that satisfies the linear span assumption, its $\Phiproj(\delta)$-regret on the loss sequence is $ \regproj^T \ge \frac{\delta^2LT}{2}$.
\end{proposition}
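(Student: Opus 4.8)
The plan is to take a single strongly concave quadratic, repeat it for all $T$ rounds, observe that the linear span assumption pins every such algorithm to the origin, and then certify the lower bound with a single projection deviation. Concretely, I would set $f^t(x) := -\frac{L}{2}x^2$ for every $t \in [T]$ on $\+X = [-1,1]$ (if one prefers nonnegative losses, use $\frac{L}{2}(1-x^2)$ instead; the additive constant cancels in every regret term). Then $\nabla f^t(x) = -Lx$, so $|\nabla f^t(x)| \le L$ on $[-1,1]$ and $|(f^t)''| = L$, which gives $L$-Lipschitzness and $L$-smoothness, and $f^t$ is strictly concave (hence non-convex) whenever $L > 0$, while the claimed bound is trivial when $L = 0$. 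The key structural feature is that $\nabla f^t(0) = 0$; note also that this instance does not depend on $\delta$.

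Next I would show that any algorithm obeying the linear span assumption plays $x^t = 0$ for all $t$. Applied with $t = 0$, where the collection of past iterates and gradients is empty, the assumption forces $x^1 \in \{\Pi_{\+X}[0]\} = \{0\}$. Inductively, suppose $x^1 = \cdots = x^t = 0$; then $\nabla f^i(x^i) = \nabla f^i(0) = 0$ for every $i \le t$, so the linear span set degenerates to $\{\Pi_{\+X}[0]\} = \{0\}$ and therefore $x^{t+1} = 0$. Hence the whole trajectory sits at the origin: the vanishing gradient there never supplies a direction along which a span-constrained learner could move, even though the origin is the \emph{worst} point of the (concave) loss.

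Finally I would lower-bound $\regproj^T$ using the deviation $\phi_v \in \Phiproj(\delta)$ with $v = \delta$, which is admissible since $\InNorms{v} = \delta$. Because $\delta < 1$ we have $\phi_v(x^t) = \Pi_{\+X}[0 - \delta] = -\delta \in \+X$, and consequently
\[
\regproj^T \;\ge\; \sum_{t=1}^T \bigl( f^t(x^t) - f^t(\phi_v(x^t)) \bigr) \;=\; \sum_{t=1}^T \bigl( f^t(0) - f^t(-\delta) \bigr) \;=\; \sum_{t=1}^T \frac{L\delta^2}{2} \;=\; \frac{L\delta^2 T}{2},
\]
which is precisely the asserted bound.

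The main (indeed essentially the only nontrivial) obstacle is the trapping argument of the second step: one must be careful that the base case $x^1 = 0$ is genuinely covered by the empty-span convention, and that no new search direction is ever generated because $\nabla f^t$ vanishes exactly where the play sits. This ``gradient-blindness at a bad point'' is what separates $\Phiproj(\delta)$-regret minimization from ordinary online convex optimization: unlike the external-regret benchmark, the $\delta$-local projection benchmark is allowed to shift the played point and thereby exploit the concavity that a passive gradient-based learner cannot detect.
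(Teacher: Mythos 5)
Your proposal is correct and is essentially the paper's own proof: the same hard instance $f^t(x)=-\frac{L}{2}x^2$ repeated, the same observation that a linear-span algorithm initialized (or forced by the empty-span convention) at the origin stays at the origin because the gradient vanishes there, and the same $\delta$-shift deviation yielding $\frac{L\delta^2 T}{2}$. The only difference is cosmetic — the paper simply stipulates $x^1=0$ rather than deriving it from the empty-span base case — so no further changes are needed.
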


\subsubsection{Proof of \Cref{thm:lower Bound for convex function}}
Our proof technique comes from the standard one used in multi-armed bandits \citep[Theorem 5.1]{auer2002nonstochastic}. Suppose that $f^t(x) = g^t x$. 
We construct two possible environments. In the first environment, $g^t=G$ with probability $\frac{1+\varepsilon}{2}$ and $g^t=-G$ with probability $\frac{1-\varepsilon}{2}$; in the second environment, $g^t=G$ with probability $\frac{1-\varepsilon}{2}$ and $g^t = -G$ with probability $\frac{1+\varepsilon}{2}$. We use $\bbE_i$ and $\bbP_i$ to denote the expectation and probability measure under environment $i$, respectively, for $i=1, 2$. Suppose that the true environment is uniformly chosen from one of these two environments. Below, we show that the expected regret of the learner is at least $\Omega(\delta G\sqrt{T})$. 

Define $N_+=\sum_{t=1}^T \mathbb{I}\{x^t\geq 0\}$ be the number of times $x^t$ is non-negative, and define $f^{1:T}=(f^1 ,\ldots, f^T)$. Then we have 
\begin{align}
   \left| \bbE_1[N_+] - \bbE_2[N_+] \right| 
   &= \left| \sum_{f^{1:T}}\Big(\bbP_1(f^{1:T})\bbE\left[N_+\mid f^{1:T}\right] - \bbP_2(f^{1:T})\bbE\left[N_+\mid f^{1:T}\right]\Big)\right| \tag{enumerate all possible sequences of $f^{1:T}$} \\
   &\leq T \sum_{f^{1:T}}\left|\bbP_1(f^{1:T}) - \bbP_2(f^{1:T})\right| \nonumber \\
   &= T \|\bbP_1 - \bbP_2\|_{\mathrm{TV}}\nonumber  \\
   &\leq T\sqrt{(2\ln 2) \KL(\bbP_1, \bbP_2)}    \tag{Pinsker's inequality} \\
   &= T\sqrt{(2\ln 2) T \cdot \KL\left(\text{Bernoulli}\left(\frac{1+\varepsilon}{2}\right), \text{Bernoulli}\left(\frac{1-\varepsilon}{2}\right)\right)} \nonumber \\
   &= T\sqrt{(2\ln 2) T \varepsilon \ln \frac{1+\varepsilon}{1-\varepsilon}} \leq T\sqrt{(4\ln 2) T\varepsilon^2}.    \label{eq: KL dist}
\end{align}
In the first environment, we consider the regret with respect to $v=\delta$. Then we have 
\begin{align*}
    \bbE_1 \left[\regproj^T\right] 
    &\geq \bbE_1 \left[\sum_{t=1}^T f^t(x^t) - f^t(\Pi_{\X}[x^t -\delta]) \right] =  \bbE_1 \left[\sum_{t=1}^T g^t (x^t - \Pi_{\X}[x^t -\delta]) \right] \\
    &= \bbE_1 \left[\sum_{t=1}^T \varepsilon  G(x^t - \Pi_{\X}[x^t -\delta]) \right] \geq \varepsilon\delta G\bbE_1  \left[\sum_{t=1}^T \mathbb{I}\{x^t \geq 0\} \right] = \varepsilon\delta G\bbE_1  \left[N_+\right],  
\end{align*}
where in the last inequality, we use the fact that if $x^t\geq 0$, then $x^t - \Pi_{\X}[x^t-\delta] = x^t - (x^t-\delta) = \delta$ because $D\geq \delta$. 
In the second environment, we consider the regret with respect to $v=-\delta$. Then similarly, we have 
\begin{align*}
    \bbE_2 \left[\regproj^T\right] 
    &\geq \bbE_2 \left[\sum_{t=1}^T f^t(x^t) - f^t(\Pi_{\X}[x^t +\delta]) \right] = \bbE_2 \left[\sum_{t=1}^T g^t (x^t - \Pi_{\X}[x^t +\delta]) \right] \\
    &= \bbE_2 \left[\sum_{t=1}^T -\varepsilon G(x^t - \Pi_{\X}[x^t + \delta]) \right] \geq \varepsilon\delta G\bbE_2  \left[\sum_{t=1}^T \mathbb{I}\{x^t < 0\} \right] = \varepsilon\delta G\left( T - \bbE_2  \left[N_+\right]\right). 
\end{align*}
Summing up the two inequalities, we get  
\begin{align*}
    \frac{1}{2}\left(\bbE_1 \left[\regproj^T\right] + \bbE_2 \left[\regproj^T\right]\right) 
    &\geq \frac{1}{2}\left(\varepsilon \delta GT + \varepsilon\delta G(\bbE_{1}[N_+] - \bbE_{2}[N_+])\right) \\
    &\geq \frac{1}{2}\left(\varepsilon\delta GT - \varepsilon\delta GT\varepsilon \sqrt{(4\ln 2) T}\right).     \tag{by \eqref{eq: KL dist}}
\end{align*}
Choosing $\varepsilon = \frac{1}{\sqrt{(16\ln 2)T}}$, we can lower bound the last expression by $\Omega(\delta G\sqrt{T})$. The theorem is proven by noticing that $\frac{1}{2}\left(\bbE_1 \left[\regproj^T\right] + \bbE_2 \left[\regproj^T\right]\right)$ is the expected regret of the learner. 

\subsubsection{Proof of \Cref{prop:lower bound for non-convex function}}
\begin{proof}
    Consider $f : [-1,1] \rightarrow \R$ such that $f(x) = - \frac{L}{2}x^2$ and let $f^t = f$ for all $t \in [T]$. Then any first-order methods that satisfy the linear span assumption with initial point $x^1 = 0$ will produce $x^t = 0$ for all $t \in [T]$. The $\Phiproj(\delta)$-regret is thus $\sum_{t=1}^T (f(0) - f(\delta)) = \frac{\delta^2LT}{2}$.    
\end{proof}

\subsection{Improved Proximal Regret in the Game Setting and Proof of \Cref{thm:game regret of OG} }
We first prove an $O(\sqrt{T})$ upper bound for \ref{OG} in the adversarial setting.  
\begin{theorem}[Adversarial Regret Bound for \ref{OG}]
\label{thm:adversaril regret of OG}
    Let $\+X \subseteq \-R^d$ be a closed convex set and $\{\ell^t: \+X \rightarrow \-R\}$ be a sequence of convex loss functions. For any convex function $f \in \+F_{\mathrm{lsc,c}}(\+X)$, denote $p^t = \prox_f(x^t)$, \ref{OG} guarantees for all $T \ge 1$,
    \begin{align*}
        \sum_{t=1}^T \InAngles{\nabla \ell^t(x^t), x^t - \prox_f(x^t)} \le \frac{D^2 + 2B_f - \InNorms{w^T-p^T}^2}{2\eta_T} + \sum_{t=1}^T\eta_t\InNorms{g^t - g^{t-1}}^2 - \sum_{t=1}^{T} \frac{1}{2\eta_t} \InNorms{x^t - w^{t}}^2,
    \end{align*}
    where $D = \max_{0\le t\le T-1} \InNorms{w^t-p^{t+1}}$ and $B_f = \max_{t \in [T]}f(p^t)) - \min_{t \in [T]} f(p^t)$. If the step size is constant $\eta_t = \eta$, then the above two bounds hold with $D = \InNorms{w^0 - p^1}$ and $B_f = f(p^1) - f(p^T)$.
\end{theorem}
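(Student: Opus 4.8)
The plan is to mirror the proof of \Cref{theorem:GD proximal regret}, adapting it to the two-sequence structure of \ref{OG} and to the moving comparator $p^t := \prox_f(x^t)$. Write $g^t = \nabla \ell^t(x^t)$, $g^0 = 0$ (so $x^1 = w^0$), fix $f\in\+F_{\lscc}$, and assume $\{\eta_t\}$ non-increasing (the displayed bound already forces this; the constant case is a specialization). By convexity the left side of the claim is unchanged, so it suffices to bound $\sum_t \InAngles{g^t, x^t - p^t}$.

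\textbf{Step 1: a per-step inequality.} From the first-order optimality conditions of the two Euclidean projections defining $x^t = \Pi_\X[w^{t-1} - \eta_t g^{t-1}]$ and $w^t = \Pi_\X[w^{t-1} - \eta_t g^t]$ I get, for every $u \in \X$,
\begin{align*}
\eta_t\InAngles{g^t, w^t-u} &\le \tfrac12\InNorms{w^{t-1}-u}^2 - \tfrac12\InNorms{w^t-u}^2 - \tfrac12\InNorms{w^{t-1}-w^t}^2,\\
\eta_t\InAngles{g^{t-1}, x^t-w^t} &\le \tfrac12\InNorms{w^{t-1}-w^t}^2 - \tfrac12\InNorms{w^{t-1}-x^t}^2 - \tfrac12\InNorms{x^t-w^t}^2.
\end{align*}
Adding these, writing $\InAngles{g^t, x^t-u} = \InAngles{g^t, w^t-u} + \InAngles{g^{t-1}, x^t-w^t} + \InAngles{g^t - g^{t-1}, x^t-w^t}$, and bounding the last cross term by $\eta_t\InAngles{g^t-g^{t-1},x^t-w^t}\le \eta_t\InNorms{g^t-g^{t-1}}\InNorms{x^t-w^t}\le \eta_t^2\InNorms{g^t-g^{t-1}}^2$, where I use $\InNorms{x^t-w^t}\le \eta_t\InNorms{g^t-g^{t-1}}$ by nonexpansiveness of $\Pi_\X$, I obtain after dividing by $\eta_t$
\begin{align*}
\InAngles{g^t, x^t-u} \le \tfrac{1}{2\eta_t}\InParentheses{\InNorms{w^{t-1}-u}^2 - \InNorms{w^t-u}^2} - \tfrac{1}{2\eta_t}\InNorms{w^{t-1}-x^t}^2 - \tfrac{1}{2\eta_t}\InNorms{x^t-w^t}^2 + \eta_t\InNorms{g^t-g^{t-1}}^2.
\end{align*}
The point of bounding the cross term this way is to keep \emph{both} negative squared terms intact; $-\tfrac{1}{2\eta_t}\InNorms{x^t-w^t}^2$ will survive to the final bound, and $-\tfrac{1}{2\eta_t}\InNorms{w^{t-1}-x^t}^2$ will be spent in Step 3.

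\textbf{Step 2: sum against the moving comparator.} Put $u = p^t$ and sum over $t \in [T]$. The sum $\sum_t\tfrac{1}{2\eta_t}(\InNorms{w^{t-1}-p^t}^2 - \InNorms{w^t-p^t}^2)$ does not telescope since $p^t$ changes; reindexing the positive part produces a boundary term $\tfrac{1}{2\eta_1}\InNorms{w^0-p^1}^2 - \tfrac{1}{2\eta_T}\InNorms{w^T-p^T}^2$, step-size-gap terms $(\tfrac{1}{2\eta_{t+1}}-\tfrac{1}{2\eta_t})\InNorms{w^t-p^{t+1}}^2 \le (\tfrac{1}{2\eta_{t+1}}-\tfrac{1}{2\eta_t})D^2$ that telescope with the first boundary piece into $\tfrac{D^2}{2\eta_T} - \tfrac{1}{2\eta_T}\InNorms{w^T-p^T}^2$, and the genuinely non-telescoping remainder $\sum_{t\le T-1}\tfrac{1}{2\eta_t}(\InNorms{w^t-p^{t+1}}^2 - \InNorms{w^t-p^t}^2)$.

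\textbf{Step 3: the crux.} In \Cref{theorem:GD proximal regret} the comparator changed at exactly the point $x^{t+1}$ whose prox is $p^{t+1}$, so \Cref{lemma:telescope} applied at once; here the point is $w^t \ne x^{t+1}$. I bridge the gap with the exact identity
\begin{align*}
\InNorms{w^t-p^{t+1}}^2 - \InNorms{w^t-p^t}^2 = \InNorms{x^{t+1}-p^{t+1}}^2 - \InNorms{x^{t+1}-p^t}^2 + 2\InAngles{w^t-x^{t+1},\,p^t-p^{t+1}},
\end{align*}
bound the first difference by item 1 of \Cref{lemma:telescope} (with $x = x^{t+1}$, $p_x = p^{t+1}$, $p = p^t$) by $2(f(p^t)-f(p^{t+1})) - \InNorms{p^t-p^{t+1}}^2$, and absorb the cross term with $2\InAngles{w^t-x^{t+1},p^t-p^{t+1}} \le \InNorms{w^t-x^{t+1}}^2 + \InNorms{p^t-p^{t+1}}^2$, which cancels the stray $-\InNorms{p^t-p^{t+1}}^2$. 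This leaves $\InNorms{w^t-x^{t+1}}^2 + 2(f(p^t)-f(p^{t+1}))$. The residual $\sum_{t\le T-1}\tfrac{1}{2\eta_t}\InNorms{w^t-x^{t+1}}^2$ is precisely the lead-sequence stability term \ref{OG} already produces: after reindexing $s = t+1$ and using $\eta_{s-1}\ge\eta_s$ and $x^1 = w^0$, it is dominated by $\sum_s\tfrac{1}{2\eta_s}\InNorms{w^{s-1}-x^s}^2$ from Step 1, so their difference is non-positive and drops. The weighted $f$-telescope $\sum_{t\le T-1}\tfrac{1}{\eta_t}(f(p^t)-f(p^{t+1}))$ is handled verbatim as in \Cref{theorem:GD proximal regret}: subtract $\min_t f(p^t)$, split, use $f(p^t)-\min_s f(p^s)\le B_f$ and monotonicity of $1/\eta_t$, contributing $\tfrac{B_f}{\eta_T} = \tfrac{2B_f}{2\eta_T}$.

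\textbf{Step 4: assemble.} Collecting Steps 2--3 gives exactly
\begin{align*}
\sum_{t=1}^T\InAngles{\nabla \ell^t(x^t), x^t-\prox_f(x^t)} \le \frac{D^2 + 2B_f - \InNorms{w^T-p^T}^2}{2\eta_T} + \sum_{t=1}^T \eta_t\InNorms{g^t-g^{t-1}}^2 - \sum_{t=1}^T \frac{1}{2\eta_t}\InNorms{x^t-w^t}^2.
\end{align*}
For constant $\eta_t = \eta$ the step-size-gap sum is empty, so only $\InNorms{w^0-p^1}^2$ is needed in the boundary term (hence $D = \InNorms{w^0-p^1}$), and the $f$-telescope collapses to $f(p^1)-f(p^T)$ (hence $B_f = f(p^1)-f(p^T)$). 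I expect Step 3 — the identity relabelling the comparator change from $w^t$ back to $x^{t+1}$, together with the observation that the error it creates is exactly metered by the $\InNorms{w^{t-1}-x^t}^2$ slack already present in \ref{OG} — to be the only genuine obstacle; the rest is the \hyperref[GD]{GD} analysis with extra subscripts.
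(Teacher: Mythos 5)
Your proposal is correct and follows essentially the same route as the paper's proof: the same standard per-step \ref{OG} inequality applied with the moving comparator $u=p^t$, the same reindexing that produces the $D^2/2\eta_T$ boundary and step-size-gap terms, the same key drift bound $\InNorms{w^t-p^{t+1}}^2-\InNorms{w^t-p^t}^2\le 2\InParentheses{f(p^t)-f(p^{t+1})}+\InNorms{w^t-x^{t+1}}^2$ absorbed into the $-\frac{1}{2\eta_t}\InNorms{x^t-w^{t-1}}^2$ stability terms, and the same $B_f$-telescoping. The only (immaterial) organizational difference is that you obtain the drift bound by invoking \Cref{lemma:telescope} at $x^{t+1}$ and bridging to $w^t$ via an exact polarization identity plus Young's inequality, whereas the paper derives the identical bound inline from the prox optimality condition at $p^{t+1}$ evaluated at $w^t$ together with $2\InAngles{a,b}-\InNorms{b}^2\le\InNorms{a}^2$.
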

\begin{remark}
   Compared to the regret of \hyperref[GD]{GD} which has dependence on $\sum_{t=1}^T\eta_t \InNorms{g^t}^2$, the main improvement of \ref{OG} is the dependence is only $\sum_{t=1}^T\eta_t \InNorms{g^t - g^{t-1}}^2$. 
\end{remark}

\begin{proof}
Fix a function $f \in \+F_{\lscc}(\+X)$. We defined $p^t = \prox_f(x^t)$. Following standard analysis of \ref{OG}~\citep{rakhlin2013optimization}, we have
\begin{align}
    & \sum_{t=1}^T \ell^t(x^t) - \ell^t(p^t) \le \sum_{t=1}^T \InAngles{\nabla \ell^t(x^t), x^t - p^t}  \nonumber \\
    & \le \sum_{t=1}^T \frac{1}{2\eta_t} \InParentheses{ \InNorms{w^{t-1} - p^t}^2 - \InNorms{w^t - p^t}^2 } + \eta_t \InNorms{g^t - g^{t-1}}^2 - \frac{1}{2\eta_t}\InParentheses{  \InNorms{x^t - w^t}^2 + \InNorms{x^t - w^{t-1}}^2}  \nonumber \\
    & = \frac{\InNorms{w^0 - p^1}^2}{2\eta_1} + \sum_{t=1}^{T-1} \frac{1}{2\eta_t}\InParentheses{ \InNorms{w^{t} - p^{t+1}}^2 - \InNorms{w^{t} - p^{t}}^2} + \sum_{t=1}^{T-1} \InParentheses{ \frac{1}{2\eta_{t+1}} - \frac{1}{2\eta_t}} \InNorms{w^t-p^{t+1}}^2\nonumber\\
    & \quad \quad - \frac{1}{2\eta_T} \InNorms{w^T - p^T}^2 + \sum_{t=1}^T \eta_t \InNorms{g^t - g^{t-1}}^2 - \sum_{t=1}^T\frac{1}{2\eta_t}\InParentheses{\InNorms{x^t - w^t}^2 + \InNorms{x^t - w^{t-1}}^2} \nonumber \\
    &\le  \frac{D^2 - \InNorms{w^T-p^T}^2}{2\eta_T} + \sum_{t=1}^{T-1} \frac{1}{2\eta_t}\InParentheses{ \InNorms{w^{t} - p^{t+1}}^2 - \InNorms{w^{t} - p^{t}}^2}+ \sum_{t=1}^T \eta_t \InNorms{g^t - g^{t-1}}^2 \nonumber  \\
    &\quad \quad -\sum_{t=1}^T \frac{1}{2\eta_t}\InParentheses{  \InNorms{x^t - w^t}^2 + \InNorms{x^t - w^{t-1}}^2}.\label{eq:OG-1}
\end{align}
In the last inequality, we use $\{\eta_t\}$ is non-increasing and $D:= \max_{0\le t\le T-1} \InNorms{w^t - p^{t+1}}$. If $\eta_t = \eta$ is constant, then the above inequality holds with $D = \InNorms{w^0 - p^1}$.

Now we apply a similar analysis from \Cref{lemma:telescope} to 
upper bound the term $\InNorms{w^{t} - p^{t+1}}^2 - \InNorms{w^{t} - p^{t}}^2$. Since $p^{t+1} = \prox_f(x^{t+1})$, we know there exists $v \in \partial f(p^{t+1})$ such that $x^{t+1} - v - p^{t+1} \in N_\+X(p^{t+1})$. This implies $\InAngles{x' - p^{t+1},x^{t+1} - v - p^{t+1} } \le 0$ for any $x' \in \+X$. 
\begin{align*}
    & \InNorms{w^{t} - p^{t+1}}^2 - \InNorms{w^{t} - p^{t}}^2 \\
     &= 2\InAngles{p^{t} - p^{t+1},  w^{t} - p^{t+1}} - \InNorms{p^t -  p^{t+1}}^2 \\
     &= 2\InAngles{p^{t} - p^{t+1}, v} + 2\InAngles{p^{t} - p^{t+1},  w^{t} - v - p^{t+1}} - \InNorms{p^{t} - p^{t+1}}^2 \\
     &= 2\InAngles{p^{t} - p^{t+1}, v} + 2\InAngles{p^{t} - p^{t+1},  x^{t+1} - v - p^{t+1}} + 2\InAngles{p^{t} - p^{t+1}, w^{t} -  x^{t+1}}- \InNorms{p^{t} - p^{t+1}}^2 \\
     &\le 2\InAngles{p^{t} - p^{t+1}, v}  + 2\InAngles{p^{t} - p^{t+1}, w^{t} -  x^{t+1}}- \InNorms{p^{t} - p^{t+1}}^2 \\
     & \le 2(f(p^t) - f(p^{t+1})) + \InNorms{w^t - x^{t+1}}^2,
\end{align*}
where in the second last-inequality we use $\InAngles{p^{t} - p^{t+1},  x^{t+1} - v - p^{t+1}} \le 0$; in the last inequality, we use convexity of $f$ and the basic inequality $2\InAngles{a,b} - \InNorms{b}^2 \le \InNorms{a}^2$.

Now we combine the above two inequalities and get
\begin{align*}
&\sum_{t=1}^T \InAngles{\nabla \ell^t(x^t), x^t - p^t} \\
&\le \frac{D^2 - \InNorms{w^T-p^T}^2}{2\eta_T} + \sum_{t=1}^{T-1}\frac{1}{2\eta_t}\InParentheses{f(p^t) - f(p^{t+1})} + \sum_{t=1}^T \eta_t \InNorms{g^t - g^{t-1}}^2  -\sum_{t=1}^T \frac{1}{2\eta_t} \InNorms{x^t - w^t}^2.
\end{align*}
Similar as \Cref{theorem:GD proximal regret}, we can use $B_f:= \max_{t \in [T]} f(p^t) - \min_{t \in [T]} f(p^t)$ to telescope the second term and get

\begin{align*}
&\sum_{t=1}^T \InAngles{\nabla \ell^t(x^t), x^t - p^t} \\
&\le \frac{D^2 + 2B_f - \InNorms{w^T-p^T}^2}{2\eta_T} + \sum_{t=1}^T \eta_t \InNorms{g^t - g^{t-1}}^2 -\sum_{t=1}^T \frac{1}{2\eta_t} \InNorms{x^t - w^t}^2. 
\end{align*}
When the step size is constant, then the above inequality holds with $B_f= f(p^1) - f(p^T)$.
\end{proof}
\subsubsection{Proof of \Cref{thm:game regret of OG}}
\begin{proof}
Let us fix any player $i \in [n]$ in the smooth game. In every step $t$, player $i$'s loss function $\ell_i^t: \X_i \rightarrow \R$  is $\InAngles{- \nabla_{x_i}u_i(x^t), \cdot }$ determined by their utility function $u_i$ and all players' actions $x^t$. Therefore,  their gradient feedback is $g^t = -\nabla_{x_i} u_i(x^t)$. 
For all $t \ge 2$, we have
\begin{align*}
    \InNorms{g^t - g^{t-1}}^2 &= \InNorms{\nabla u_i(x^t) - \nabla u_i(x^{t-1})}^2 \\
    & \le L^2\InNorms{x^t - x^{t-1}}^2 \\
    & = L^2\sum_{i=1}^n \InNorms{x^t_i - x^{t-1}_i}^2 \\
    & \le 3 L^2\sum_{i=1}^n \InParentheses{ \InNorms{x^t_i - w^t_i}^2 + \InNorms{w^t_i - w^{t-1}_i}^2 + \InNorms{w^{t-1}_i - x^{t-1}_i}^2} \\
    & \le 3n L^2 \eta^2 G^2,
\end{align*}
where we use $L$-smoothness of the utility function $u_i$ in the first inequality; we use the update rule of \ref{OG} and the fact that gradients are bounded by $G$ in the last inequality.

Applying the above inequality to the regret bound obtained in \Cref{thm:adversaril regret of OG}, the individual $\Phiproj(\delta)$-regret of player $i$ is upper bounded by
\begin{align*}
     \sum_{t=1}^T \InAngles{\nabla_{x_i} u_i(x^t), \prox_f(x^t_i)-x^t_i }\le \frac{D^2 + 2B_f}{\eta} +  2\eta G^2 + 3n L^2 G^2 \eta^3 T, \forall f \in \+F_{\lscc} 
\end{align*}
Choosing $\eta = T^{-\frac{1}{4}}$, the left hand side is bounded by $ \le (D^2+2B_f + 4nL^2G^2) T^{\frac{1}{4}}$ for any $f \in \+F_{\lscc}$. Using \Cref{lemma:no-regret-2-CE}, we have the empirical distribution of played strategy profiles converge to an $(\varepsilon + \frac{\delta^2L}{2})$-approximate $\Phiproxeq(\delta))$-\lce in $O(1/\varepsilon^{\frac{4}{3}})$ iterations.
\end{proof}

\subsection{Generalization to Bregman Setup}\label{sec:bregman proximal}
A \emph{distance-generating function} is $\phi:\+X \rightarrow \-R$ be a continuous differentiable and strictly convex function. The \emph{Bregman divergence} associated with $\phi$ is defined as 
\begin{align*}
    D_\phi(x|y) = \phi(x) - \phi(y) - \InAngles{\nabla \phi(y), x - y}, \forall x,y \in \+X.
\end{align*}

\begin{definition}[Bregman Proximal Operator]
    The \emph{Bregman proximal operator} of $f$ associated with $\phi$ is 
    \begin{align*}
        \prox^\phi_f(x) = \argmin_{y \in \+X} \{f(y) + D_\phi(y|x) \}.
    \end{align*}
\end{definition}

Either of the following conditions on $f$ and $\phi$ guarantees $\prox_f^\phi$ is single-valued and well-defined. 
\begin{itemize}
    \item $f \in \+F_{\mathrm{lsc, c}}$ is convex and lower semicontinuous.
    \item $\phi$ is $1$-strongly convex and $f \in \+F_{\mathrm{sm}}$ is $L$-smooth with $L < 1$.
\end{itemize}
For simplicity, we assume $\phi$ is $1$-strongly convex with respect to some norm $\InNorms{\cdot}$\footnote{Here the norm $\InNorms{\cdot}$ is not necessarilly $\ell_2$-norm and we denote its dual norm as $\InNorms{\cdot}_\star$.} in this section\footnote{For $\alpha$-strongly convex $\phi$, we can use $\frac{1}{\alpha}\phi$.}. Then $\prox_f$ is well defined for both $\+F_{\mathrm{lsc, c}}$ and $\+F_{\mathrm{sm}}$. Examples of $1$-strongly convex distance generated functions include
\begin{itemize}
    \item $\phi(x) = \sum_{i=1}^d x_i \log x_i$ defined on $\-R^d_{+}$. Then $D_\phi(x|y) = \sum_{i=1}^d x_i \log\frac{x_i}{y_i} + x_i - y_i$ is the KL divergence defined for all $x \ge 0$ and $y > 0$.
    \item $\phi(x) = \frac{1}{2}\InNorms{x}^2$. Then $D_\phi(x|y) = \frac{1}{2}\InNorms{x-y}^2$ is the squared Euclidean norm.
\end{itemize}
The proximal operator in \Cref{dfn:proximal operator} is a special case of Bregman proximal operator with $\phi(x) = \frac{1}{2}\InNorms{x}^2$. 

Similarly, we define the proximal regret associated with $\phi$ and $f$ as follows:
\begin{align}
    \reg^\phi_f(T) := \sum_{t=1}^T \ell^t(x^t) - \ell^t(\prox^\phi_f(x^t)). \tag{Bregman proximal regret}
\end{align}
We show that \hyperref[MD]{Mirror Descent (MD)} associated with $\phi$ has Bregman proximal regret $\reg^\phi_f(T) = O(\sqrt{T})$ for all $f \in \+F_{\mathrm{lsc, c}} \cup \+F_{\mathrm{sm}}$ simultaneously.

\begin{algorithm}[!ht]\label{MD}
    \KwIn{strategy space $\+X$, distance generating function $\phi$, step sizes $\eta > 0$} 
    \caption{Mirror Descent (MD)}
    Initialize $x^1 \in \+X$ arbitrarilly\\
    \For{$t = 1,2, \ldots,$}{
    play $x^t$ and receive $\nabla \ell^t(x^t)$.\\
    update $x^{t+1} = \argmin_{x \in \+X}\{ \InAngles{ \eta \nabla \ell^t(x^t), x} +D_\phi(x|x^t) \}$.
    }
\end{algorithm}

\begin{theorem}\label{theorem:MD proximal regret}
    Let $\+X \subseteq \-R^d$ be a closed convex set and $\phi$ a $1$-strongly convex distance generating function w.r.t. norm $\InNorms{\cdot}$. Let $\{\ell^t: \+X \rightarrow \-R\}$ be a sequence of convex loss functions that is $G$-Lipschitz w.r.t. $\InNorms{\cdot}_\star$. Let $\{x^t\}$ be the sequence generated by  \hyperref[MD]{Mirror Descent} with $\phi$ and denote $p^t = \prox^\phi_f(x^t)$. Then we have for all $f \in \+F_{\mathrm{lsc,c}}(\+X) \cup \+F_{\mathrm{sm}}(\+X)$, for all $T \ge 1$,
    \begin{align*}
        \sum_{t=1}^T \ell^t(x^t) - \ell^t(\prox^\phi_f(x^t)) \le \sum_{t=1}^T \InAngles{\nabla \ell^t(x^t), x^t - \prox^\phi_f(x^t)} \le \frac{D_\phi(p^1|x^1)+ f(p^1)- f(p^{T+1})}{\eta} + \frac{\eta G^2 T}{2}
    \end{align*}
\end{theorem}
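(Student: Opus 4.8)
The plan is to mimic the Euclidean argument of \Cref{theorem:GD proximal regret} but replace the squared-Euclidean norm bookkeeping with Bregman-divergence bookkeeping, using the three-point identity for Bregman divergences in place of the cosine rule. First I would write down the standard one-step inequality for \hyperref[MD]{Mirror Descent}: for any comparator $u \in \+X$,
\[
\eta \InAngles{\nabla \ell^t(x^t), x^t - u} \le D_\phi(u|x^t) - D_\phi(u|x^{t+1}) + \frac{\eta^2 G^2}{2},
\]
which follows from the optimality condition defining $x^{t+1}$, the three-point identity $D_\phi(u|x^t) - D_\phi(u|x^{t+1}) - D_\phi(x^{t+1}|x^t) = \InAngles{\nabla\phi(x^{t+1}) - \nabla\phi(x^t), u - x^{t+1}}$, $1$-strong convexity of $\phi$ (to absorb $-D_\phi(x^{t+1}|x^t) \le -\frac12\InNorms{x^{t+1}-x^t}^2$), and $G$-Lipschitzness of $\ell^t$ in the dual norm (to bound the cross term via Young's inequality). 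The key twist is that the comparator is \emph{not fixed}: at time $t$ we want to compare against $p^t = \prox_f^\phi(x^t)$, so the telescoping sum $\sum_t \big(D_\phi(p^t|x^t) - D_\phi(p^t|x^{t+1})\big)$ does not collapse, exactly as in the Euclidean case where $\InNorms{x^{t+1}-p^{t+1}}^2 - \InNorms{x^{t+1}-p^t}^2$ did not telescope.

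Second, I would prove the Bregman analogue of \Cref{lemma:telescope}: for $p_x := \prox_f^\phi(x)$ and any $p \in \+X$,
\[
D_\phi(p_x|x) - D_\phi(p|x) \le f(p) - f(p_x) \qquad (f \in \+F_{\mathrm{lsc,c}} \text{ or } f \in \+F_{\mathrm{sm}}),
\]
with an extra $-\,\tfrac12\InNorms{p-p_x}^2$ on the right when $f$ is convex. This is where most of the work sits. The proof uses the first-order optimality condition for $p_x$: there is $v \in \partial f(p_x)$ with $\InAngles{v + \nabla\phi(p_x) - \nabla\phi(x), p - p_x} \ge 0$ for all $p \in \+X$; then combine this with the three-point identity $D_\phi(p_x|x) - D_\phi(p|x) = -D_\phi(p|p_x) + \InAngles{\nabla\phi(p_x) - \nabla\phi(x), p_x - p}$, bound $-D_\phi(p|p_x) \le -\tfrac12\InNorms{p-p_x}^2$ by strong convexity, and use $\InAngles{v, p - p_x} \le f(p) - f(p_x)$ by convexity (respectively, $\InAngles{v, p-p_x} - \tfrac12\InNorms{p-p_x}^2 \le f(p) - f(p_x)$ by $1$-smoothness together with strong convexity of $\phi$). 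Applying this lemma at $x = x^{t+1}$, $p_x = p^{t+1}$, $p = p^t$ converts $D_\phi(p^{t+1}|x^{t+1}) - D_\phi(p^t|x^{t+1})$ into $f(p^t) - f(p^{t+1})$ (minus a nonnegative term), which now telescopes cleanly.

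Third, I would assemble the pieces: sum the one-step inequality over $t = 1,\dots,T$ with comparator $p^t$, regroup the $D_\phi(p^t|x^t) - D_\phi(p^t|x^{t+1})$ terms into $D_\phi(p^1|x^1) + \sum_{t=1}^{T-1}\big(D_\phi(p^{t+1}|x^{t+1}) - D_\phi(p^t|x^{t+1})\big) - D_\phi(p^{T+1}|x^{T+1})$, plug in the Bregman telescoping lemma to turn the middle sum into $f(p^1) - f(p^T)$ (or, keeping one more endpoint, $f(p^1) - f(p^{T+1})$ as in the statement), drop the nonnegative leftover terms, divide by $\eta$, and add $\tfrac{\eta G^2 T}{2}$ from the noise terms; finally use convexity of $\ell^t$ to pass from $\InAngles{\nabla\ell^t(x^t), x^t - p^t}$ to $\ell^t(x^t) - \ell^t(p^t)$. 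The main obstacle I anticipate is getting the Bregman telescoping lemma exactly right for the smooth (non-convex $f$) case — specifically, verifying that the interaction between $1$-strong convexity of $\phi$ and $1^-$-smoothness of $f$ still yields a clean bound without a spurious sign, since here both the ``regularizer'' $D_\phi$ and the perturbation in $f$ are measured in the same non-Euclidean geometry rather than both being Euclidean squared norms. Secondary care is needed to track which comparator version of $D$ appears (the statement's $D_\phi(p^1|x^1)$ rather than a ``diameter''), and to note that the $-D_\phi(p^{T+1}|x^{T+1}) \le 0$ term is simply discarded.
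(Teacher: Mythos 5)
Your proposal is correct and follows essentially the same route as the paper's proof: the standard one-step Mirror Descent bound with the moving comparator $p^t=\prox^\phi_f(x^t)$, followed by a Bregman analogue of \Cref{lemma:telescope} proved exactly as the paper does inline (prox first-order optimality, three-point identity, $1$-strong convexity of $\phi$ to absorb $-D_\phi(p^t|p^{t+1})$, then convexity or $1$-smoothness of $f$), and finally telescoping and discarding the nonnegative leftover divergence term. The only quibbles are trivial endpoint bookkeeping (the discarded term in your regrouping should be $D_\phi(p^T|x^{T+1})$ unless you extend the middle sum to $t=T$, which is the variant that yields the statement's $f(p^{T+1})$), which you already essentially acknowledge.
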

\begin{proof}
    Denote $p^t = \prox^\phi_f(x^t)$. By convexity of $\ell^t$, we have $\sum_{t=1}^T \ell^t(x^t) - \ell^t(p^t) \le \sum_{t=1}^T \InAngles{\nabla \ell^t(x^t), x^t - p^t}$. By standard analysis of \hyperref[MD]{Mirror Descent}, we have
    \begin{align*}
        \sum_{t=1}^T \InAngles{\nabla \ell^t(x^t), x^t - p^t} &\le \sum_{t=1}^T \frac{1}{\eta} \InParentheses{D_\phi(p^t|x^t) - D_\phi(p^t|x^{t+1})} + \frac{\eta G^2 T}{2} \\
        &\le \frac{\eta G^2 T}{2} + \frac{D_\phi(p^1|x^1)}{\eta} + \frac{1}{\eta}\sum_{t=1}^{T-1} D_\phi(p^{t+1}|x^{t+1}) - D_\phi(p^t|x^{t+1})
    \end{align*}
    We focus on the term $D_\phi(p^{t+1}|x^{t+1}) - D_\phi(p^t|x^{t+1})$ that does not telescope. Recall that $p^{t+1} = \prox_f(x^{t+1}) = \argmin_{x \in \+X} \{f(x) + D_\phi(x|x^{t+1})\}$. We slightly abuse the notation and denote $\partial f(p^{t+1})$ the (sub)gradient such that $x^{t+1} - \partial f(p^{t+1}) - p^{t+1} \in N_\+X(p^{t+1})$. 
    \begin{align*}
        &D_\phi(p^{t+1}|x^{t+1}) - D_\phi(p^t|x^{t+1}) \\
        &= \InAngles{\nabla \phi(x^{t+1}) - \nabla \phi(p^{t+1}), p^t - p^{t+1}} - D_\phi(p^t|p^{t+1}) \tag{three-point identity} \\
        &= \InAngles{\partial f(p^{t+1}), p^t-p^{t+1}} + \InAngles{\nabla \phi(x^{t+1}) - \partial f(p^{t+1}) - \nabla \phi(p^{t+1}), p^t - p^{t+1}} - D_\phi(p^t|p^{t+1}) \\
        &\le \InAngles{\partial f(p^{t+1}), p^t-p^{t+1}} -D_\phi(p^t|p^{t+1}) \tag{$x^{t+1} - \partial f(p^{t+1}) - p^{t+1} \in N_\+X(p^{t+1})$} \\
        &\le  \InAngles{\partial f(p^{t+1}), p^t-p^{t+1}} -\frac{1}{2}\InNorms{p^t-p^{t+1}}^2\\
        &\le f(p^t) - f(p^{t+1}),
    \end{align*}
    where the last inequality holds for convex $f$ because $\InAngles{\partial f(p^{t+1}), p^t-p^{t+1}} \le f(p^t) - f(p^{t+1})$; the last inequality holds for $1$-smooth $f$ because $f(p^t) \ge f(p^{t+1}) + \InAngles{\partial f(p^{t+1}), p^t-p^{t+1}} -\frac{1}{2} \InNorms{p^t-p^{t+1}}^2$.

    Combining the above two inequalities gives
    \begin{align*}
        \sum_{t=1}^T \InAngles{\nabla \ell^t(x^t), x^t - p^t} &\le \frac{\eta G^2 T}{2} + \frac{D_\phi(p^1|x^1)}{\eta} + \frac{1}{\eta}\sum_{t=1}^{T-1} f(p^t) - f(p^{t+1})  \\
        &= \frac{D_\phi(p^1|x^1)+ f(p^1) -f(p^{T+1})}{\eta} + \frac{\eta G^2 T}{2}.
    \end{align*}
    This completes the proof.
\end{proof}

\section{Missing Details in \Cref{sec:convex-phi}}
\label{app:convex-phi}

\begin{algorithm}[!ht]
    \KwIn{$x_{1} \in \X$, $K \ge 2$, a no-external-regret algorithm $\mathfrak{R}_{\Phi}$ against linear reward over $\Delta(\Phi)$} 
    \KwOut{A $\Conv(\Phi)$-regret minimization algorithm over $\+X$}
    \caption{$\Conv(\Phi)$-regret minimization for Lipschitz smooth non-concave rewards}
    \label{alg:convex phi-reg}
    \Fn{\textsc{NextStrategy()}}{
    $p^t \leftarrow$ $\mathfrak{R}_\Phi$.\textsc{NextStrategy}(). Note that $p_t$ is a distribution over $\Phi$.\\
    $x_k \leftarrow \phi_{p^t}(x_{k-1})$, for all $2 \le k \le K$ \\
    \textbf{return} $x^t \leftarrow$ uniformly at random from $\{x_1, \ldots, x_K\}$. \\
    }
    \Fn{\textsc{ObserveReward}$(\nabla_x u^t(x^t))$}{
    $u^t_\Phi(\cdot) \leftarrow$ a linear reward over $\Delta(\Phi)$ with $u^t_{\Phi}(\phi) = \InAngles{\nabla_x u^t(x^t), \phi(x^t) - x^t}$ for all $\phi \in \Phi$. \\
    $\mathfrak{R}_{\Phi}$.\textsc{ObserveReward}($u^t_\Phi(\cdot)$).
    }
\end{algorithm}

\subsection{Proof of \Cref{theorem:convex finite-phi-regret}}
\begin{proof}
    For a sequence of strategies $\{x^t\}_{t \in [T]}$, its $\Conv(\Phi)$-regret is 
    \begin{align*}
        \reg^T_{\Conv(\Phi)}&=  \max_{\phi \in \Conv(\Phi)}\left\{ \sum_{t=1}^T \InParentheses{ u^t(\phi(x^t)) - u^t(x^t)} \right\}\\
        &= \underbrace{\max_{p \in \Delta(\Phi)}\left\{ \sum_{t=1}^T u^t(\phi_p(x^t)) -  u^t(\phi_{p^t}(x^t)) \right\}}_{\text{I: external regret over $\Delta(\Phi)$}} +  \underbrace{\sum_{t=1}^T  u^t(\phi_{p^t}(x^t)) - u^t(x^t)}_{\text{II: approximation error of fixed point}}.
    \end{align*}
    \paragraph{Bounding External Regret over $\Delta(\Phi)$} We can define a new reward function $f^t(p) := u^t(\phi_p(x^t))$ over $p \in \Delta(\Phi)$. Since $u^t$ is non-concave, the reward $f^t$ is also non-concave, and it is computationally intractable to minimize external regret. We use locality to avoid computational barriers. Here we use the fact that $\Phi = \Phi(\delta)$ contains only $\delta$-local strategy modifications. Then by $L$-smoothness of $u^t$, we know for any $p \in \Delta(\Phi)$
    \begin{align*}
        \left| u^t(\phi_{p}(x^t) - u^t(x^t) - \InAngles{\nabla u^t(x^t), \phi_{p}(x^t) - x^t}) \right| \le \frac{L}{2} \InNorms{\phi_{p}(x^t) - x^t}^2 \le \frac{\delta^2 L}{2}.
    \end{align*}
    Thus we can approximate the non-concave optimization problem by a linear optimization problem over $\Delta(\Phi)$ with only second-order error $\frac{\delta^2 L}{2}$. Here we use the notation $a = b \pm c$ to mean $b - c\le a \le b+ c$.
    \begin{align*}
         u^t(\phi_{p}(x^t) - u^t(x^t) &= \InAngles{\nabla u^t(x^t), \phi_{p}(x^t) - x^t} \pm  \frac{\delta^2 L}{2} \\
         & = \InAngles{\nabla u^t(x^t), \sum_{\phi \in \Phi}p(\phi)\phi(x^t) - x^t } \pm   \frac{\delta^2 L}{2} \\
         & = \sum_{\phi \in \Phi}p(\phi) \InAngles{\nabla u^t(x^t), \phi(x^t) - x^t   } \pm   \frac{\delta^2 L}{2}. 
    \end{align*}
    We can then instantiate the external regret $\mathfrak{R}_\Phi$ as the Hedge algorithm over reward \[f^t(p) = \sum_{\phi \in \Phi}p(\phi) \InAngles{\nabla u^t(x^t), \phi(x^t) - x^t}\] and get 
    \begin{align*}
        &\max_{p \in \Delta(\Phi)}\left\{ \sum_{t=1}^T u^t(\phi_p(x^t)) -  u^t(\phi_{p^t}(x^t)) \right\} \\
        &\le \max_{p \in \Delta(\Phi)}\left\{ \sum_{t=1}^T \sum_{\phi\in \Phi} (p(\phi) - p^t(\phi)) \InAngles{\nabla u^t(x^t), \phi(x^t) - x^t} ) \right\} + \delta^2LT \\
        &\le 2G\delta\sqrt{T\log|\Phi|} + \delta^2 LT,
    \end{align*}
    where we use the fact that $ \InAngles{\nabla u^t(x^t), \phi(x^t) - x^t} \le \InNorms{\nabla u^t(x^t)} \cdot \InNorms{\phi(x^t) - x^t} \le G\delta$.

    \paragraph{Bounding error due to sampling from a fixed point in expectation} We choose $x_{1}$ as an arbitrary point in $\+X$. Then we recursively apply $\phi_{p^t}$ to get 
    \begin{align*}
        x_k = \phi_{p^t}(x_{k-1}) = \sum_{\phi \in \Phi} p^t(\phi) \phi(x_{k-1}), \forall 2 \le k \le K.
    \end{align*}
    We denote $\mu^t = \mathrm{Uniform}\{x_k : 1\le k \le K\}$. Then the strategy $x^t \sim \mu^t$ is sampled from $\mu^t$. We have that $\mu^t$ is an approximate fixed-point in expectation / stationary distribution in the sense that 
    \begin{align*}
        \-E_{\mu^t}\InBrackets{u^t(\phi_{p^t}(x^t)) - u^t(x^t)} &= \frac{1}{K}\sum_{k=1}^K u^t(\phi_{p^t}(x_k) - u^t(x_k)) \\
        & = \frac{1}{K} \InParentheses{ u^t(\phi_{p^t}(x_{K})) - u^t(x_1)} \\
        & \le \frac{1}{K}.
    \end{align*}
    Thanks to the boundedness of $u^t$, we can use Hoeffding-Azuma's inequality to conclude that 
    \begin{align}\label{eq:cc azuma}
        \Pr\InBrackets{\sum_{t=1}^T  \InParentheses{u^t(\phi_{p^t}(x^t)) - u^t(x^t) - \frac{1}{K}} \ge \varepsilon} \le \exp\InParentheses{-\frac{\varepsilon^2}{8T}}.
    \end{align}
    for any $\varepsilon > 0$.
    Combining the above with $\varepsilon = \sqrt{8T\log(1/\beta)}$ and $K = \sqrt{T}$, we get with probability at least $1-\beta$,
    \begin{align*}
        \reg^T_{\Conv(\Phi)}&\le 2G\delta\sqrt{T\log|\Phi|} + \delta^2 LT + \sqrt{T} + \sqrt{8T\log(1/\beta)} \\
        & \le 8\sqrt{T} \InParentheses{ G\delta \sqrt{\log |\Phi|} + \sqrt{\log(1/\beta)}} + \delta^2L T.
    \end{align*}

    \paragraph{Convergence to $\Phi$-equilibrium} If all players in a non-concave continuous game employ \Cref{alg:phi-reg}, then we know for each player $i$,  with probability $1 - \frac{\beta}{n}$, its $\Phi^{\+X_i}$-regret is upper bounded by
    \begin{align*}
        8\sqrt{T} \InParentheses{ G\delta \sqrt{\log |\Phi^{\+X_i}|} + \sqrt{\log(n/\beta)}} + \delta^2L T.
    \end{align*}
    By a union bound over all $n$ players, we get with probability $1 - \beta$, every player $i$'s $\Phi^{\+X_i}$-regret is upper bounded by $8\sqrt{T} \InParentheses{ G\delta \sqrt{\log |\Phi^{+\X_i}|} + \sqrt{\log(n/\beta)}} + \delta^2L T$. Now by \Cref{theorem: no-phi-regret-2-phi-eq}, we know the empirical distribution of strategy profiles played forms an $(\varepsilon+\delta^2 L)$-approximate $\Phi = \Pi_{i=1}^n \Phi^{\mathcal{X}_i}$-equilibrium, as long as $T \ge  \frac{128(G^2\delta^2\log |\Phi^{\+X_i}| + \log(n/\beta))}{\varepsilon^2}$ iterations.
\end{proof}

\section{Missing details in \Cref{sec:phi-int-regret minization}}
\label{app:phi-int-regret minization}
We introduce a natural set of local strategy modifications and the corresponding local equilibrium notion. Given any set of (possibly non-local) strategy modifications $\Psi = \{ \psi: \X \rightarrow \X\}$, we define a set of \emph{local} strategy modifications as follows: for $\delta \le D_\X$ and $\lambda \in [0,1]$, each strategy modification $\phi_{\lambda, \psi}$ interpolates the input strategy $x$ with the modified strategy $\psi(x)$: formally, 
\begin{align*}
    \Phi^\X_{\Int, \Psi}(\delta):=\left\{ \phi_{\lambda, \psi}(x) := (1- \lambda)x + \lambda \psi(x): \psi \in \Psi, \lambda \le \delta / D_\X\right\}.
\end{align*}
Note that for any $\psi \in \Psi$ and $\lambda \le \frac{\delta}{D_\X}$, we have $\InNorms{\phi_{\lambda, \psi}(x) - x} = \lambda \InNorms{x - \psi(x)} \le \delta$, respecting the locality constraint. The induced $\Phi^\X_{\Int, \Psi}(\delta)$-regret can be written as 
\[   \reg^T_{\Int,\Psi, \delta} :=   \max_{\psi \in \Psi, \lambda \le \frac{\delta}{D_\X}} \sum_{t=1}^T \InParentheses{f^t(x^t) - f^t\InParentheses{(1-\lambda)x^t + \lambda \psi(x^t)}}.
\]
We now define the corresponding $\Phi_{\Int, \Psi}(\delta)$-equilibrium.
\begin{definition}
    Define $\Phi_{\Int, \Psi}(\delta) = \Pi_{j=1}^n  \Phi_{\Int, \Psi_j}^{\X_j}(\delta)$. In a continuous game, a distribution $\sigma$ over strategy profiles is an $\varepsilon$-approximate $ \Phi_{\Int, \Psi}(\delta)$-equilibrium if and only if for all player $i \in [n]$, 
    \[ 
    \max_{\psi \in \Psi_i, \lambda \le \delta/D_{\X_i}} 
    \-E_{x \sim \sigma} \InBrackets{u_i((1-\lambda)x_i + \lambda \psi(x_i), x_{-i})} \le \-E_{x \sim \sigma} \InBrackets{u_i(x)} + \varepsilon.
    \]
\end{definition}
Intuitively speaking, when a correlation device recommends strategies to players according to an $\varepsilon$-approximate $\Phi_{\Int, \Psi}(\delta)$-equilibrium, no player can increase their utility by more than $\varepsilon$ through a local deviation by interpolating with a (possibly global) strategy modification $\psi \in \Psi$. The richness of $\Psi$ determines the incentive guarantee provided by an $\varepsilon$-approximate $\Phi_{\Int, \Psi}(\delta)$-equilibrium and its computational complexity. When we choose $\Psi$ to be the set of all possible strategy modifications, the corresponding notion of local equilibrium—limiting a player's gain by interpolating with any strategy—resembles that of a \emph{correlated equilibrium}.

\paragraph{Computation of  $\varepsilon$-approximate 
$\Phi_{\Int, \Psi}(\delta)$-Equilibrium.}
By \Cref{lemma:no-regret-2-CE}, we know computing an $\varepsilon$-approximate $\Phi_{\Int, \Psi}(\delta)$-equilibrium reduces to minimizing $\Phi^\X_{\Int, \Psi}(\delta)$-regret against convex loss functions. We show that minimizing $\Phi^\X_{\Int, \Psi}(\delta)$-regret against convex loss functions further reduces to $\Psi$-regret minimization against linear loss functions. 

\begin{theorem}
\label{thm:int reduction}
    Let $\+A$ be an algorithm with $\Psi$-regret $\reg_{\Psi}^T(G,\Dx)$ for linear and $G$-Lipschitz loss functions over $\X$. Then, for any $\delta> 0$, the $\Phi^\X_{\Int, \Psi}(\delta)$-regret of $\+A$ for convex and $G$-Lipschitz loss functions over $\X$ is at most $\frac{\delta}{\Dx}\cdot \InBrackets{\reg_\Psi^T(G,\Dx)}^+$. 
\end{theorem}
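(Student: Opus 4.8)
\textbf{Proof proposal for \Cref{thm:int reduction}.}

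The plan is to reduce $\Phi^\X_{\Int, \Psi}(\delta)$-regret against convex losses to $\Psi$-regret against \emph{linear} losses, in two conceptual moves. First I would exploit convexity of each $\ell^t$ (here I write $f^t$ as in the theorem statement; the argument only uses convexity) to linearize: for any $\lambda \in [0, \delta/\Dx]$ and any $\psi \in \Psi$, write the interpolated point as $(1-\lambda)x^t + \lambda \psi(x^t) = x^t + \lambda(\psi(x^t) - x^t)$, and use the subgradient inequality
\[
f^t\InParentheses{x^t + \lambda(\psi(x^t)-x^t)} \ge f^t(x^t) + \lambda \InAngles{\nabla f^t(x^t), \psi(x^t) - x^t},
\]
so that $f^t(x^t) - f^t\InParentheses{(1-\lambda)x^t + \lambda\psi(x^t)} \le \lambda \InAngles{\nabla f^t(x^t), x^t - \psi(x^t)} = \lambda\, g^t(x^t) - \lambda\, g^t(\psi(x^t))$ where $g^t(y) := \InAngles{\nabla f^t(x^t), y}$ is a fixed linear function. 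Summing over $t$, the $\Phi^\X_{\Int, \Psi}(\delta)$-regret of $\+A$ on the sequence $\{f^t\}$ is bounded by $\max_{\psi \in \Psi,\ \lambda \le \delta/\Dx} \lambda \sum_{t=1}^T \InParentheses{g^t(x^t) - g^t(\psi(x^t))}$.

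Second, I would observe that $\+A$'s iterates $\{x^t\}$ depend only on the feedback, and since the $\Phi^\X_{\Int,\Psi}(\delta)$-regret framework gives $\+A$ the gradients $\nabla f^t(x^t)$ as feedback, these are exactly the feedback it would receive if it were being run against the linear losses $g^t$. Hence $\sum_{t=1}^T \InParentheses{g^t(x^t) - g^t(\psi(x^t))} \le \reg_\Psi^T(G, \Dx)$ for every $\psi \in \Psi$, because the $g^t$ are linear and $G$-Lipschitz (note $\InNorms{\nabla g^t} = \InNorms{\nabla f^t(x^t)} \le G$). Plugging this in, and taking the optimal $\lambda$: if $\reg_\Psi^T(G,\Dx) \ge 0$ the best choice is $\lambda = \delta/\Dx$, giving the claimed bound $\frac{\delta}{\Dx} \InBrackets{\reg_\Psi^T(G,\Dx)}^+$; if $\reg_\Psi^T(G,\Dx) < 0$ the best choice is $\lambda = 0$, giving bound $0 = \frac{\delta}{\Dx}\InBrackets{\reg_\Psi^T(G,\Dx)}^+$. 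Either way the $[\cdot]^+$ truncation is what accounts for the possibility of negative $\Psi$-regret, since $\lambda$ ranges over a nonnegative interval including $0$.

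The main subtlety — rather than a true obstacle — is being careful about the feedback model: I need the reduction to be purely ``in the head'' of the analysis, i.e., the same run of $\+A$ simultaneously certifies small $\Psi$-regret (against the linear $g^t$) and small $\Phi^\X_{\Int,\Psi}(\delta)$-regret (against the convex $f^t$), because the gradient feedback coincides. I should also double-check that $\max_{\psi,\lambda}$ commutes with the per-$t$ linearization the way I want: the linear upper bound $\lambda\InAngles{\nabla f^t(x^t), x^t - \psi(x^t)}$ holds termwise for \emph{every} $\lambda$ and $\psi$, so taking the max over $(\psi,\lambda)$ of the summed left side is bounded by the max over $(\psi,\lambda)$ of the summed right side, which then factors as $\max_\lambda \lambda \cdot \max_\psi(\cdots)$ since $\lambda \ge 0$. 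The Lipschitz bookkeeping ($G$ on the gradients, $\Dx$ entering through $\lambda \le \delta/\Dx$) is routine. This establishes \Cref{thm:int reduction}, and combined with \Cref{lemma:no-regret-2-CE} it yields efficient computation of $\varepsilon$-approximate $\Phi_{\Int,\Psi}(\delta)$-equilibria whenever $\Psi$-regret against linear losses can be minimized efficiently.
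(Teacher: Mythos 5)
Your proposal is correct and follows essentially the same route as the paper's proof: linearize via the convexity (subgradient) inequality, factor out $\lambda$, maximize over $\lambda \in [0, \delta/\Dx]$ to obtain the $\frac{\delta}{\Dx}[\cdot]^+$ factor, and identify the inner maximum with the $\Psi$-regret of the same iterates against the $G$-Lipschitz linear losses $\InAngles{\nabla f^t(x^t), \cdot}$. The feedback-coincidence point you spell out is left implicit in the paper but is the same mechanism.
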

\begin{proof}
    By definition and convexity of $f^t$, we get 
    \begin{align*}
        \max_{\phi \in \Phi^\X_{\Int, \Psi}(\delta)} \sum_{t=1}^T f^t(x^t) - f^t(\phi(x^t)) &= \max_{\psi\in \Psi, \lambda\le \frac{\delta}{D_\X}} \sum_{t=1}^T  f^t(x^t) - f^t((1-\lambda)x^t + \lambda \psi(x^t)) \\
        &\le \frac{\delta}{D_\X} \left[ \max_{\psi \in \Psi} \sum_{t=1}^T \InAngles{\nabla f^t(x^t), x^t - \psi(x^t)}  \right]^{+}.
    \end{align*} 
\end{proof}
Note that when $f^t$ is linear, the reduction is without loss. Thus, any worst-case $\Omega(r(T))$-lower bound for $\Psi$-regret implies a $\Omega(\frac{\delta}{D_\X}\cdot r(T))$ lower bound for $\Phi_{\Int,\Psi}(\delta)$-regret. Moreover, for any set $\Psi$ that admits efficient $\Psi$-regret minimization algorithms such as swap transformations over the simplex and more generally any set such that (i) all modifications in the set can be represented as linear transformations in some finite-dimensional space and (ii) fixed point computation can be carried out efficiently for any linear transformations~\citep{gordon2008no}, we also get an efficient algorithm for computing an $\varepsilon$-approximate $\Phi_{\Int, \Psi}(\delta)$-equilibrium in the first-order stationary regime.

\paragraph{CCE-like Instantiation} In the special case where $\Psi$ contains only \emph{constant} strategy modifications (i.e., $\psi(x) = x^*$ for all $x$), we get a coarse correlated equilibrium (CCE)-like instantiation of local equilibrium, which limits the gain by interpolating with any \emph{fixed} strategy. We denote the resulting set of local strategy modification simply as $\Phiint$. We can apply any no-external regret algorithm for efficient $\Phiint$-regret minimization and computation of $\varepsilon$-approximate $\Phiinteq(\delta)$-\lce in the first-order stationary regime as summarized in \Cref{thm:lce_int}.

The above $\Phiint(\delta)$-regret bound of $O(\sqrt{T})$ is derived for the adversarial setting. In the game setting, where each player employs the same algorithm, players may have substantially lower external regret~\citep{syrgkanis2015fast,chen2020hedging,daskalakis2021near-optimal,anagnostides2022near-optimal, anagnostides2022uncoupled, farina2022near} 
but we need a slightly stronger smoothness assumption than \Cref{assumption:smooth games}. This assumption is naturally satisfied by finite normal-form games and is also made for results about concave games~\citep{farina2022near}. Using \Cref{assumption:stronger smoothness} and \Cref{lemma:no-regret-2-CE}, the no-regret learning dynamics of \citep{farina2022near} that guarantees  $O(\log T)$ individual external regret in concave games can be applied to smooth non-concave games so that the individual  $\Phiint(\delta)$-regret of each player is at most $O(\log T) + \frac{\delta^2 LT}{2}$. This gives an algorithm with faster $\Tilde{O}(1/\varepsilon)$ convergence to an $(\varepsilon + \frac{\delta^2L}{2})$-approximate $\Phiinteq(\delta)$-\lce than \hyperref[GD]{Online Gradient Descent}.

\section{Beam-Search Local Strategy Modifications and Local Equilibria}\label{sec:beam search}
In \Cref{sec:proximal regret} and \Cref{sec:phi-int-regret minization}, we have shown that \hyperref[GD]{GD}  achieves near-optimal performance for both $\Phiint(\delta)$-regret and $\Phiprox(\delta)$-regret. In this section, we introduce another natural set of local strategy modifications, $\Phibeam(\delta)$, which is similar to $\Phiproj(\delta)$. Specifically, the set $\Phibeam(\delta)$ contains deviations that try to move as far as possible in a fixed direction (see \Cref{fig:beam-proj} for an illustration of the difference between $\phi_{\Beam, v}(x)$ and $\phi_{\Proj, v}(x)$): \[\Phibeam(\delta):= \{ \phi_{\Beam, v}(x) = x - \lambda^* v: v\in B_d(\delta), \lambda^* = \max \{\lambda: x-\lambda v\in \X, \lambda\in[0,1]\}
\}.\]
It is clear that $\InNorms{\phi_{\Beam, v}(x) - x} \le \InNorms{v} \le \delta$. We can similarly derive the notion of $\Phibeam$-regret and $(\varepsilon, \Phibeameq(\delta))$-\lce. Surprisingly, we show that \hyperref[GD]{GD}  suffers linear $\Phibeam(\delta)$-regret (proof deferred to \Cref{sec:proof beam search}).

\begin{figure}[t]
    \caption{Illustration of $\phi_{\Proj, v}(x)$ and $\phi_{\Beam, v}(x)$}
    \centering
    \includegraphics[width=.5\textwidth]{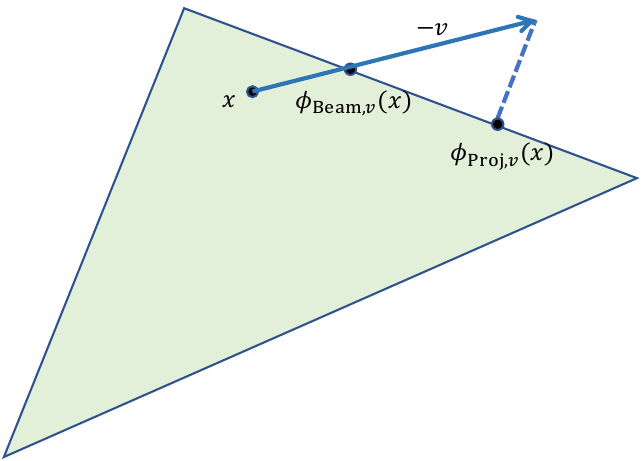}
    \label{fig:beam-proj}
\end{figure}

\begin{theorem}\label{thm:beam,GD linear regret}
    For any $\delta, \eta < \frac{1}{2}$ and $T \ge 1$, there exists a sequence of linear loss functions $\{f^t: \X \subseteq [0,1]^2 \rightarrow \R\}_{ t\in [T]}$ such that \hyperref[GD]{GD}  with step size $\eta$ suffers $\Omega(\delta T)$ $\Phibeam(\delta)$-regret.
\end{theorem}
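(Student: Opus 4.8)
The plan is to exhibit a single wedge‑shaped feasible set, a periodic sequence of linear losses, and one fixed beam deviation $\psi\in\Phibeam(\delta)$, and to show that against this sequence GD's iterates trace a closed loop along which $\psi$ beats GD at rate $\Omega(\delta)$ per constant‑length cycle. I would take $\+X=\{(x_1,x_2)\in[0,1]^2:|x_2-\tfrac12|\le \tfrac{x_1}{2}\}$, the isoceles triangle with apex $a=(0,\tfrac12)$, corner $c=(1,1)$, and base on $x_1=1$; set $\alpha=\tfrac1{4\eta}$; and make the loss sequence periodic with period $11$, each period split into Phase A ($5$ steps, $g^t=(\alpha,0)$, pushing left), Phase B ($4$ steps, $g^t=(-\alpha,-\alpha)$, pushing up‑right), and Phase C ($2$ steps, $g^t=(0,\alpha)$, pushing down). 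Take $\psi=\phi_{\Beam,v}$ with $v=(\delta,0)\in B_2(\delta)$; since each $f^t$ is linear, $f^t(x^t)-f^t(\psi(x^t))=\lambda^*(x^t)\,\langle g^t,v\rangle$ with $\lambda^*(x)=\max\{\lambda\in[0,1]:x-\lambda v\in\+X\}$. Three elementary facts do all the work: (i) on the centerline $\{(x_1,\tfrac12)\}$ one has $\lambda^*=\min(1,x_1/\delta)$, which equals $1$ whenever $x_1\ge\delta$ (so at $x_1\in\{1,\tfrac34,\tfrac12\}$, using $\delta<\tfrac12$); (ii) on the slanted edge $e_+=\{x_2=\tfrac12+\tfrac{x_1}{2}\}$ one has $\lambda^*=0$, since stepping in direction $-v$ leaves $\+X$ immediately; (iii) a purely vertical loss is orthogonal to $v$.

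First I would verify by direct projection computations, with $\eta\alpha=\tfrac14$, that starting from $(1,\tfrac12)$ GD traces a fixed $11$‑step loop: Phase A slides it left along the centerline to the apex, $(1,\tfrac12)\to(\tfrac34,\tfrac12)\to(\tfrac12,\tfrac12)\to(\tfrac14,\tfrac12)\to(0,\tfrac12)$ (idling at $a$ on the fifth step); Phase B snaps it onto $e_+$ and glides it up to $c=(1,1)$; Phase C walks it straight down the base back to $(1,\tfrac12)$. By (i)--(iii), the per‑step comparator gain $\lambda^*(x^t)\langle g^t,v\rangle$ equals $\alpha\delta=\tfrac{\delta}{4\eta}\ge\tfrac\delta2$ at the three centerline positions with $\lambda^*=1$ and equals $0$ everywhere in Phases B and C, so one full loop contributes at least $3\alpha\delta\ge\tfrac{3\delta}{2}$ to $\sum_t(f^t(x^t)-f^t(\psi(x^t)))$. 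For an arbitrary initialization $x^1$, the leftward push of Phase A drives GD to the apex within its five steps, and, crucially, every step under a leftward‑push loss has $\langle g^t,v\rangle=\alpha\delta>0$ and $\lambda^*\ge0$, so the single possibly‑irregular first loop contributes nonnegatively. Summing over loops, $\reg^T_{\Phibeam(\delta)}\ge\sum_t(f^t(x^t)-f^t(\psi(x^t)))\ge(\lfloor T/11\rfloor-1)\cdot\tfrac{3\delta}{2}=\Omega(\delta T)$.

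The only real technical content is the GD‑trajectory bookkeeping: using the Euclidean‑projection formula to confirm that GD stays on the centerline during Phase A away from the apex (the wedge is wide enough there), that the projected gradient steps onto $e_+$ keep GD on $e_+$ and advance it monotonically toward $c$, and that $a$ and $c$ act as attractors at the phase transitions; and checking that Phases B and C contribute exactly $0$ rather than something negative, so that the Phase‑A gain is not cancelled across a loop. This last point is precisely where beam and projection part ways — $\phi_{\Proj,v}$ would move points of $e_+$ along the edge and incur a cancelling loss in Phase B, whereas $\phi_{\Beam,v}$ is frozen there ($\lambda^*=0$) — which is why the same GD sequence has $O(\sqrt T)$ regret against $\Phiproj(\delta)$ but $\Omega(\delta T)$ regret against $\Phibeam(\delta)$. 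Small‑$T$ and initialization issues are absorbed by the leftward‑push nonnegativity observation above.
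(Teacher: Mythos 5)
Your construction is correct: I verified the projection bookkeeping (the Phase-B snaps onto the edge $e_+$, the corner captures at the apex and at $(1,1)$, and the fact that five leftward steps of length $1/4$ reach the apex from any initialization), and with $v=(\delta,0)$ the identity $f^t(x^t)-f^t(\psi(x^t))=\lambda^*(x^t)\langle g^t,v\rangle$ together with your facts (i)--(iii) does give at least $3\alpha\delta\ge \tfrac{3\delta}{2}$ per $11$-round period and exactly zero in Phases B and C, hence $\Omega(\delta T)$ regret. The mechanism is the same as the paper's: cycle GD around the boundary of a triangular region and use a single horizontal beam deviation that gains $\Theta(\delta)$ on one leg and is frozen ($\lambda^*=0$) on the leg where it would otherwise lose. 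The execution differs, and the differences cut both ways. The paper takes the triangle with vertices $(0,0)$, $(1,1)$, $(\delta,0)$, \emph{unit-norm} linear losses chosen adaptively along the edge directions, phases of length $\Theta(1/\eta)$, a fixed initial point, and a short return leg whose negative contribution ($\Theta(\delta/\eta)$ rounds at $-O(\delta)$ each) must be offset against the gains; the accounting is done only up to constants. Your sequence is oblivious and exactly periodic (period $11$, step length exactly $1/4$ via $\alpha=1/(4\eta)$), produces no negative terms at all (the adverse leg is absorbed by the frozen edge and the vertical leg is orthogonal to $v$), and handles arbitrary initialization explicitly, so in these respects it is tighter and cleaner than the paper's argument. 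The trade-offs are minor: your gradients have norm $\Theta(1/\eta)$, whereas the paper's unit-norm losses show the linear $\Phibeam(\delta)$-regret is not an artifact of large gradients; and, as with the paper's proof, for $T$ below a couple of periods your bound degenerates to nonnegativity (e.g.\ an apex initialization yields zero gain against your particular $\psi$ over the first cycle), so $\Omega(\delta T)$ should be read for $T$ beyond a fixed constant, or patched for small $T$ by switching to a deviation aligned with the Phase-B iterates. Your closing remark correctly isolates where beam and projection deviations diverge, which is exactly the point of contrast with the $\Phiproj(\delta)$ guarantee.
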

Our results show that even for simple local strategy modification sets $\Phi(\delta)$, the landscape of efficient local $\Phi(\delta)$-regret minimization is already quite rich, and many basic and interesting questions remain open.

\subsection{Proof of Theorem~\ref{thm:beam,GD linear regret}}\label{sec:proof beam search} 

Let $\X \subset \mathbb{R}^2$ be a triangle region with vertices $A=(0,0)$, $B=(1,1)$, $C=(\delta, 0)$. Consider $v=(-\delta, 0)$. The initial point is $x_1=(0,0)$. 

The adversary will choose $\ell_t$ adaptively so that $x_t$ remains on the boundary of $\X$ and cycles clockwise (i.e., $A\rightarrow \cdots \rightarrow B \rightarrow \cdots \rightarrow C \rightarrow \cdots \rightarrow A\rightarrow \cdots$). To achieve this, the adversary will repeat the following three phases: 
\begin{enumerate}
    \item Keep choosing $\ell_t=u_{\overrightarrow{BA}}$ ($u_{\overrightarrow{BA}}$ denotes the unit vector in the direction of $\overrightarrow{BA}$) until $x_{t+1}$ reaches~$B$. 
    \item Keep choosing $\ell_t=u_{\overrightarrow{CB}}$ until $x_{t+1}$ reaches $C$. 
    \item Keep choosing $\ell_t=u_{\overrightarrow{AC}}$ until $x_{t+1}$ reaches $A$. 
\end{enumerate}

In Phase 1, $x_t\in \overline{AB}$. By the choice of $v=(-\delta, 0)$, we have $x_t - \phi_v(x_t)= (-\delta(1-x_{t,1}), 0) $, and the instantaneous regret is $\frac{\delta(1-x_{t,1})}{\sqrt{2}} \geq 0$. 

In Phase 2, $x_t\in \overline{BC}$. By the choice of $v=(-\delta, 0)$, we have $x_t - \phi_v(x_t)= (0, 0)$, and the instantaneous regret is $0$. 

In Phase 3, $x_t\in \overline{CA}$. By the choice of $v=(-\delta, 0)$, we have $x_t - \phi_v(x_t) = (-\delta + x_{t,1}, 0)$, and the instantaneous regret is $-\delta + x_{t,1} \leq 0$. 

In each cycle, the number of rounds in Phase 1 is of order $\Theta(\frac{\sqrt{2}}{\eta})$, the number of rounds in Phase 2 is between $O(\frac{1}{\eta})$ and $O(\frac{\sqrt{2}}{\eta})$, the number of rounds in Phase 3 is of order $\Theta(\frac{\delta}{\eta})$. 

Therefore, the cumulative regret in each cycle is roughly 
\begin{align*}
    &\frac{\sqrt{2}}{\eta} \times \frac{0.5\delta}{\sqrt{2}} + 0 + \frac{\delta}{\eta} \left(-0.5\delta\right) = \frac{0.5\delta - 0.5\delta^2}{\eta}. 
\end{align*}
On the other hand, the number of cycles is no less than $\frac{T}{\frac{\sqrt{2}}{\eta} + \frac{\sqrt{2}}{\eta} + \frac{\delta}{\eta}} = \Theta(\eta T)$.  
Overall, the cumulative regret is at least $\frac{0.5\delta- 0.5\delta^2}{\eta}\times \Theta(\eta T) = \Theta(\delta T)$ as long as $\delta < 0.5$.

\section{Hardness for Approximate $\Phiprojeq(\delta)/ \Phiinteq(\delta)$-Equilibrium when $\delta = D$}\label{sec:hardness in global regime}
In the first-order stationary regime $\delta \le \sqrt{2\varepsilon / L}$, $(\varepsilon, \delta)$-local Nash equilibrium is intractable, and we have shown polynomial-time algorithms for computing the weaker notions of $\varepsilon$-approximate $\Phiinteq(\delta))$-equilibrium and $\varepsilon$-approximate $\Phiprojeq(\delta))$-equilibrium.  A natural question is whether correlation enables efficient computation of $\varepsilon$-approximate $\Phi(\delta))$-equilibrium when $\delta$ is in the global regime, i.e., $\delta = \Omega(\sqrt{d})$. In this section, we prove both computational hardness and a query complexity lower bound for both notions in the global regime

To prove the lower bound results, we only require a single-player game. The problem of computing an $\varepsilon$-approximate $\Phi(\delta)$-equilibrium becomes: given scalars $\varepsilon, \delta, G, L > 0$ and a polynomial-time Turing machine $\+C_f$ evaluating a $G$-Lipschitz and $L$-smooth function $f : [0,1]^d \rightarrow [0,1]$ and its gradient $\nabla f: [0,1]^d \rightarrow \R^d$, we are asked to output a distribution $\sigma$ that is an $\varepsilon$-approximate $\Phi(\delta)$-equilibrium or $\perp$ if such equilibrium does not exist. 

\paragraph{Hardness of finding $\varepsilon$-approximate $\Phiint(\delta)$-equilibria in the global regime} When $\delta = \sqrt{d}$, which equals to the diameter $D$ of $[0,1]^d$, then the problem of finding an $\varepsilon$-approximate $\Phiint(\delta)$-equilibrium is equivalent to finding a $(\varepsilon, \delta)$-local minimizer of $f$: assume $\sigma$ is an $\varepsilon$-approximate $\Phiint(\delta)$-equilibrium of $f$, then there exists $x\in [0,1]^d$ in the support of $\sigma$ such that 
\[
f(x) - \min_{x^* \in [0,1]^d \cap B_d(x^*,\delta)} f(x^*) \le \varepsilon.
\]
Then hardness of finding an $\varepsilon$-approximate $\Phiint(\delta)$-equilibrium follows from hardness of finding a $(\varepsilon, \delta)$-local minimizer of $f$~\citep{daskalakis2021complexity}. The following Theorem is a corollary of Theorem 10.3 and 10.4 in \citep{daskalakis2021complexity}.
\begin{theorem}[Hardness of finding $\varepsilon$-approximate $\Phiint(\delta)$-equilibria in the global regime]
\label{thm:lce-hardness-2}
    In the worst case, the following two holds.
    \begin{itemize}
        \item Computing an $\varepsilon$-approximate $\Phiint(\delta)$-equilibrium for a game on $\X = [0,1]^d$ with $G = \sqrt{d}$, $L = d$, $\varepsilon \le \frac{1}{24}$, $\delta = \sqrt{d}$ is NP-hard. 
        \item $\Omega(2^d /d)$ value/gradient queries are needed to determine an $\varepsilon$-approximate $\Phiint(\delta)$-equilibrium for a game on $\X = [0,1]^d$ with $G = \Theta(d^{15})$, $L = \Theta(d^{22})$, $\varepsilon < 1$, $\delta = \sqrt{d}$. 
    \end{itemize}
\end{theorem}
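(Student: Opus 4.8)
The plan is to reduce, in the global regime $\delta = D_{\X} = \sqrt{d}$, the problem of computing an $\varepsilon$-approximate $\Phiint(\delta)$-equilibrium of a single-player game to the problem of computing an $(\varepsilon,\delta)$-local minimizer of the underlying loss function, and then to invoke Theorems~10.3 and~10.4 of \citep{daskalakis2021complexity} essentially verbatim. Given a $G$-Lipschitz, $L$-smooth function $f:[0,1]^d \to [0,1]$, I would consider the single-player game over $\X = [0,1]^d$ with utility $u(x) = 1 - f(x)$: this is a legitimate instance satisfying \Cref{assumption:smooth games} with the same $G$ and $L$, and its $\Phiint(\delta)$-deviations are exactly the maps $\phi_{\lambda,x'}(x) = (1-\lambda)x + \lambda x'$ with $\lambda \le \delta/D_{\X}$.

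The key observation is that, because $\delta = D_{\X}$, one may take $\lambda = 1$, so every constant map $\phi_{1,x'}(x) = x'$ lies in $\Phiint(\delta)$. Hence if $\sigma$ is an $\varepsilon$-approximate $\Phiint(\delta)$-equilibrium and $x^\star \in \argmin_{[0,1]^d} f$, then applying the equilibrium inequality (\Cref{def:local CE}) to $\phi_{1,x^\star}$ gives
\[
\-E_{x\sim\sigma}\InBrackets{f(x)} \;=\; 1 - \-E_{x\sim\sigma}\InBrackets{u(x)} \;\le\; 1 - \-E_{x\sim\sigma}\InBrackets{u(\phi_{1,x^\star}(x))} + \varepsilon \;=\; f(x^\star) + \varepsilon ,
\]
so some atom $x$ of $\sigma$ satisfies $f(x) \le \min_{[0,1]^d} f + \varepsilon$; since $\delta = \sqrt{d}$ equals the diameter of $[0,1]^d$, this point is an $(\varepsilon,\delta)$-local minimizer of $f$. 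Conversely, if $x^\star$ is an $(\varepsilon,\delta)$-local minimizer, the point mass at $x^\star$ is an $\varepsilon$-approximate $\Phiint(\delta)$-equilibrium, because every $\phi_{\lambda,x'} \in \Phiint(\delta)$ obeys $\InNorms{\phi_{\lambda,x'}(x^\star) - x^\star} = \lambda\InNorms{x' - x^\star} \le \delta$ and therefore $u(\phi_{\lambda,x'}(x^\star)) = 1 - f(\phi_{\lambda,x'}(x^\star)) \le u(x^\star) + \varepsilon$. Thus any algorithm computing an $\varepsilon$-approximate $\Phiint(\delta)$-equilibrium yields one computing an $(\varepsilon,\delta)$-local minimizer, at the additional cost of evaluating $f$ at the (explicitly returned) atoms of $\sigma$, and with the parameters $G,L,\varepsilon,\delta$ unchanged.

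Part~1 then follows by substituting $G=\sqrt d$, $L=d$, $\varepsilon\le\tfrac{1}{24}$, $\delta=\sqrt d$ into Theorem~10.3 of \citep{daskalakis2021complexity}, which establishes NP-hardness of computing an $(\varepsilon,\delta)$-local minimizer over $[0,1]^d$ in exactly this regime; Part~2 follows identically from Theorem~10.4 there, which gives an $\Omega(2^d/d)$ value/gradient query lower bound for $G=\Theta(d^{15})$, $L=\Theta(d^{22})$, $\varepsilon<1$, $\delta=\sqrt d$, noting that the reduction above is query-preserving up to the queries spent inspecting the support of the returned distribution (and that hardness of the single-player instance implies hardness of the general $n$-player problem). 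I expect the only delicate point to be bookkeeping: verifying that the parameter ranges in the two cited theorems match the stated ones and, in particular, that $\delta=\sqrt d$ is genuinely the global regime $\delta=D_{\X}$ for $[0,1]^d$, so that ``$(\varepsilon,\delta)$-local minimization'' collapses to ``additively $\varepsilon$-approximate global minimization''. Beyond this, there is no substantive new obstacle, since all the hardness content is imported from \citep{daskalakis2021complexity}.
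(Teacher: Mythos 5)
Your proposal is correct and follows essentially the same route as the paper: in the global regime $\delta = D_{\X} = \sqrt{d}$ the constant deviations lie in $\Phiint(\delta)$, so an $\varepsilon$-approximate $\Phiint(\delta)$-equilibrium yields (and is yielded by) an $(\varepsilon,\delta)$-local minimizer, and the hardness then transfers directly from Theorems~10.3 and~10.4 of \citep{daskalakis2021complexity}. Your write-up is in fact slightly more explicit than the paper's, which states this equivalence in one line and cites the same two theorems.
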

\paragraph{Hardness of finding $\varepsilon$-approximate $\Phiproj(\delta)$-equilibria in the global regime} 
\begin{theorem}[Hardness of of finding $\varepsilon$-approximate $\Phiproj(\delta)$-equilibria in the global regime]
\label{thm:lce-hardness}
    In the worst case, the following two holds.
    \begin{itemize}
        \item Computing an $\varepsilon$-approximate $\Phiproj(\delta)$-equilibrium for a game on $\X = [0,1]^d$ with $G = \Theta(d^{15})$, $L = \Theta(d^{22})$, $\varepsilon < 1$, $\delta = \sqrt{d}$ is NP-hard. 
        \item $\Omega(2^d /d)$ value/gradient queries are needed to determine an $\varepsilon$-approximate $\Phiproj(\delta)$-equilibrium for a game on $\X = [0,1]^d$ with $G = \Theta(d^{15})$, $L = \Theta(d^{22})$, $\varepsilon < 1$, $\delta = \sqrt{d}$. 
    \end{itemize}
    
\end{theorem}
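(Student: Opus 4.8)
\textbf{Proof plan for \Cref{thm:lce-hardness} (Hardness of $\varepsilon$-approximate $\Phiproj(\delta)$-equilibria in the global regime).}

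The plan is to reduce from the hardness of finding an $(\varepsilon,\delta)$-local minimizer established in \citet{daskalakis2021complexity}, exactly as in the proof of \Cref{thm:lce-hardness-2}, but with an extra step that controls the difference between $\Phiproj(\delta)$-deviations and the interpolation-type deviations underlying $\Phiint(\delta)$. First I would recall that when $\delta = D = \sqrt{d}$ is the full diameter of the cube $[0,1]^d$, an interpolation deviation $\phi_{\lambda,x^*}$ with $\lambda=1$ can send any point to any target $x^*$; hence an $\varepsilon$-approximate $\Phiint(\delta)$-equilibrium must place support on a point $x$ with $f(x)\le \min_{[0,1]^d} f + \varepsilon$, i.e.\ an $(\varepsilon,\delta)$-local (in fact global) minimizer. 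For $\Phiproj(\delta)$, the deviation $\phi_v(x)=\Pi_{[0,1]^d}[x-v]$ with $\InNorms{v}\le \delta=\sqrt{d}$ can likewise be made to reach \emph{any} vertex-to-vertex displacement inside the cube, but the projection truncates at the boundary, so a single $v$ does not uniformly translate all points. The key observation is that for the specific hard instances of \citet{daskalakis2021complexity} the function is, after an affine rescaling, a function on $[0,1]^d$ whose (approximate) minimizer detection is NP-hard, and a direction $v = x - x^{\min}$ with $\InNorms{v}\le\sqrt{d}$ achieves $\phi_v(x) = x^{\min}$ whenever $x^{\min}$ and $x$ lie in the cube and the segment between them does too — which it does, by convexity of $[0,1]^d$. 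Thus on any point $x$ in the support of a $\Phiproj(\delta)$-equilibrium we again get $f(x) - f(x^{\min}) \le \varepsilon$, recovering an $(\varepsilon,\delta)$-local minimizer.

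The steps, in order, are: (1) Take the family of $G$-Lipschitz, $L$-smooth functions $f:[0,1]^d\to[0,1]$ with $G=\Theta(d^{15})$, $L=\Theta(d^{22})$ from Theorem~10.3/10.4 of \citet{daskalakis2021complexity}, for which deciding existence of / finding an $(\varepsilon,\delta)$-local minimizer with $\varepsilon<1$, $\delta=\sqrt{d}$ is NP-hard (resp.\ requires $\Omega(2^d/d)$ queries). (2) Treat this $f$ as the utility in a single-player game $\+G$ over $\+X=[0,1]^d$; note $\+G$ satisfies \Cref{assumption:smooth games} with the stated $G,L$. (3) Given any $\varepsilon$-approximate $\Phiproj(\delta)$-equilibrium $\sigma$ of $\+G$, argue by an averaging/extraction argument that there must exist $x$ in $\supp(\sigma)$ with $\max_{\InNorms{v}\le\delta} \big(f(x)-f(\Pi_\+X[x-v])\big)\le\varepsilon$; otherwise, deviating every point in the support toward its nearest global minimizer $x^{\min}$ via $v=x-x^{\min}$ (a valid $\delta$-local projection deviation because $\InNorms{x-x^{\min}}\le D=\sqrt{d}=\delta$ and $\Pi_\+X[x-v]=x^{\min}\in\+X$) would strictly exceed $\varepsilon$ expected gain, contradicting the equilibrium condition. (4) Conclude this $x$ is an $(\varepsilon,\delta)$-local minimizer, so the reduction runs in polynomial time (resp.\ with polynomially many queries to sample from $\sigma$), giving NP-hardness and the $\Omega(2^d/d)$ query lower bound respectively. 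For the query lower bound, I would additionally observe that the simulation of $\sigma$ only needs value/gradient access to $f$ at the points it samples, so the number of oracle calls is preserved up to polynomial factors.

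The main obstacle I anticipate is handling the boundary-truncation subtlety of the projection operator and making the extraction in step (3) fully rigorous: one must ensure that a \emph{single} direction $v$ need not work for all support points simultaneously — instead, the argument should be made pointwise, choosing for each $x$ in the support its own $v=x-x^{\min}$, and then noting that the $\Phiproj(\delta)$-equilibrium condition quantifies over a single $\phi_v$ applied to the whole distribution. This requires care: the equilibrium only forbids a uniform deviation, so I would instead argue that if \emph{every} $x$ in $\supp(\sigma)$ had $f(x)>f(x^{\min})+\varepsilon$, then picking any fixed vertex-directed $v$ large enough still yields gain exceeding $\varepsilon$ on a positive-probability subset, or — cleaner — directly invoke that the \emph{marginal} guarantee forces at least one support point to be near-optimal, which is exactly what the local-minimizer reduction needs. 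Reconciling "uniform deviation forbidden" with "extract one good point" is the delicate part; the resolution is that we do not need a uniform improving deviation, we only need the \emph{existence} of one near-minimizer in the support, and a short argument shows the $\varepsilon$-equilibrium condition applied to the deviation toward a fixed global minimizer $x^{\min}$ (using $v$ chosen to reach $x^{\min}$ from the "typical" support point) forces $\-E_{x\sim\sigma}[f(x)]\le f(x^{\min})+\varepsilon+o(1)$, hence a near-minimizer exists. I expect this to mirror the $\Phiint(\delta)$ case almost verbatim once the projection-reaches-any-point-in-the-cube fact is in hand.
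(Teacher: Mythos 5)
There is a genuine gap at the central step of your plan, and it is exactly the point you flag as ``delicate'' in your last paragraph. The set $\Phiproj(\delta)$ only contains deviations $\phi_v(x)=\Pi_{[0,1]^d}[x-v]$ with a \emph{single fixed} $v$ applied to the entire distribution, so your step (3) — ``deviating every point in the support toward its nearest global minimizer via $v=x-x^{\min}$'' — is not an admissible deviation, and the fallback arguments you sketch do not close the hole: the equilibrium condition bounds only the \emph{expected} gain of one fixed $\phi_v$, so ``gain exceeding $\varepsilon$ on a positive-probability subset'' proves nothing, and the assertion that the marginal guarantee ``forces at least one support point to be near-optimal'' (or the $\varepsilon+o(1)$ bound via a $v$ aimed at a ``typical'' point) is precisely what needs proof and does not follow. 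The missing idea, which is the heart of the paper's argument, is that the boundary truncation you treat as an obstacle is in fact the mechanism: if the target is a \emph{vertex} $v^*\in\{0,1\}^d$ and one takes $e\in\{\pm 1\}^d$ with $e[i]=+1$ when $v^*[i]=0$ and $e[i]=-1$ when $v^*[i]=1$, then $\InNorms{e}=\sqrt{d}=\delta$ and, coordinatewise, $x[i]-e[i]$ lies outside $[0,1]$ on the correct side for every $x\in[0,1]^d$, so $\Pi_{[0,1]^d}[x-e]=v^*$ for \emph{all} $x$ simultaneously. The single projection deviation $\phi_e$ is thus a constant map onto the satisfying corner, the equilibrium condition gives $\-E_{x\sim\sigma}[f_\phi(x)]\le f_\phi(v^*)+\varepsilon=\varepsilon<1$, and a support point with value below $1$ must exist. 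Without this observation your reduction from $(\varepsilon,\delta)$-local-minimizer hardness does not go through for $\Phiproj(\delta)$.

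Two secondary differences from the paper's proof are worth noting. First, the paper does not reduce generically from local-minimizer hardness; it reduces directly from SAT using the explicit hard instances $f_\phi$ of \citet{daskalakis2021complexity} (corner values $1-\phi(v)$, interpolation with at most $d+1$ active corners per point), which is also what lets it convert a queried point $x^*$ with $f_\phi(x^*)<1$ into an actual satisfying assignment with $d+1$ extra queries — a step your plan would still need to make the NP-hardness and the $\Omega(2^d/d)$ query bound complete. Second, your description of the $\Phiint(\delta)$ case (constant-target interpolation with $\lambda=1$ is a constant deviation, hence a near-minimizer must appear in the support) is fine; the asymmetry you correctly sensed between $\Phiint$ and $\Phiproj$ is resolved not by per-point directions but by the clamping identity above.
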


The hardness of computing $\varepsilon$-approximate $\Phiproj(\delta)$-equilibrium also implies a lower bound on $\Phiproj(\delta)$-regret in the global regime.
\begin{corollary}[Lower bound of $\Phiproj(\delta)$-regret against non-convex functions]\label{coro:lower bounds for global rerget}
    In the worst case, the $\Phiproj(\delta)$-regret of any online algorithm is at least $\Omega(2^d / d, T)$ even for loss functions $f: [0,1]^d \rightarrow [0,1]$ with $G, L = \poly(d)$ and $\delta = \sqrt{d}$.
\end{corollary}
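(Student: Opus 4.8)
The plan is to obtain this lower bound by combining the query-complexity lower bound for approximate $\Phiproj(\delta)$-equilibria in \Cref{thm:lce-hardness} with the online-to-equilibrium reduction of \Cref{theorem: no-phi-regret-2-phi-eq}, used in its contrapositive direction. Recall that \Cref{thm:lce-hardness} supplies a family $\+F$ of single-player instances $f:[0,1]^d\to[0,1]$ with $G,L=\poly(d)$ and $\delta=\sqrt{d}$ such that any procedure issuing fewer than $Q=\Omega(2^d/d)$ value/gradient queries fails, on some $f\in\+F$, to output an $\varepsilon$-approximate $\Phiproj(\delta)$-equilibrium, for every constant $\varepsilon<1$. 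The key observation is that running \emph{any} online algorithm $\+A$ for $m$ rounds against the \emph{fixed} loss sequence $\ell^t=1-f$ is exactly such a query procedure (it requests at most one value and one gradient of $f$ per round, hence $O(m)$ queries total), and by \Cref{theorem: no-phi-regret-2-phi-eq} with $n=1$ its empirical distribution $\sigma=\mathrm{Uniform}\{x^1,\dots,x^m\}$ is a $(\regproj^m/m)$-approximate $\Phiproj(\delta)$-equilibrium. So a small $\Phiproj(\delta)$-regret would yield a good equilibrium from few queries, contradicting \Cref{thm:lce-hardness}.

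Concretely, I would fix an arbitrary online algorithm $\+A$ and horizon $T$, set $m=\min\{T,\lfloor c\cdot 2^d/d\rfloor\}$ for a constant $c$ small enough that the $O(m)$ queries made over $m$ rounds stay strictly below the threshold $Q$, and run $\+A$ on the loss sequence equal to $1-f$ for $t\le m$ and to the constant zero function for $m<t\le T$ (each loss is $[0,1]$-valued, $\poly(d)$-Lipschitz and $\poly(d)$-smooth, as required). The zero losses add $0$ to both $\sum_t\ell^t(x^t)$ and $\sum_t\ell^t(\phi(x^t))$ for every $\phi\in\Phiproj(\delta)$, so the $\Phiproj(\delta)$-regret over $T$ rounds on this sequence equals that over the first $m$ rounds against $1-f$; moreover the first $m$ iterates of $\+A$ on this sequence coincide with those of $\+A$ run on the fixed instance $f$. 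By \Cref{thm:lce-hardness} there is an $f\in\+F$ for which $\sigma$ is not a $\tfrac12$-approximate $\Phiproj(\delta)$-equilibrium, so the contrapositive of \Cref{theorem: no-phi-regret-2-phi-eq} forces $\regproj^m>\tfrac12 m$. Hence on this loss sequence $\regproj^T=\regproj^m=\Omega(m)=\Omega(\min\{T,2^d/d\})$, which is the claimed bound.

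The main obstacle I anticipate is purely the bookkeeping: pinning down the constant $c$ in $m=\Theta(2^d/d)$ so that $\+A$'s per-round value/gradient evaluations over the first $m$ rounds are genuinely fewer than the (fixed but unspecified) constant hidden inside the $\Omega(2^d/d)$ of \Cref{thm:lce-hardness}, and checking that the zero-padding really freezes the regret at $\regproj^m$ with no cancellation for all larger horizons $T$, so that the ``for all $T$'' quantifier in the statement is handled. If \Cref{thm:lce-hardness}'s lower bound is stated only for deterministic query algorithms, one either restricts attention to deterministic $\+A$ or invokes the randomized version of the query lower bound; otherwise no new ideas beyond \Cref{thm:lce-hardness} and \Cref{theorem: no-phi-regret-2-phi-eq} are needed.
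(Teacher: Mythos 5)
Your overall strategy---run the online algorithm on the fixed hard instance from \Cref{thm:lce-hardness}, turn small $\Phiproj(\delta)$-regret into an approximate equilibrium via \Cref{theorem: no-phi-regret-2-phi-eq}, and contradict the $\Omega(2^d/d)$ query lower bound, padding or freezing the sequence to handle horizons beyond $2^d/d$---is exactly the route the paper takes. However, there is one concrete step that fails as written: you feed the learner the losses $\ell^t = 1-f$. Small $\Phiproj(\delta)$-regret against $1-f$ certifies that no deviation \emph{increases} $\-E_{\sigma}[f]$ by much, i.e., an approximate equilibrium for the game in which $f$ is \emph{maximized}. But the hardness in \Cref{thm:lce-hardness} is established for the opposite convention: in its proof, the deviation $\phi_e(x)=\Pi_{[0,1]^d}[x-e]=v^*$ sends every point to the satisfying corner, where $f_\phi(v^*)=0$, and what an approximate equilibrium must guarantee is that deviations cannot \emph{decrease} $\-E_\sigma[f_\phi]$ by more than $\varepsilon$ (this is why the support must contain a point with $f_\phi<1$). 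Under your convention the constructed instances are in fact easy: a point mass at any non-satisfying corner has $f_\phi=1$, the global maximum, and is already an exact equilibrium, so the step ``by \Cref{thm:lce-hardness} there is an $f\in\+F$ for which $\sigma$ is not a $\tfrac12$-approximate equilibrium'' is not available, and no contradiction (hence no regret lower bound) follows. The repair is immediate: run the algorithm on $\ell^t=f$ itself (as the paper's adversary does), so that $\regproj^m/m$ upper-bounds precisely $\max_{\phi}\-E_\sigma\InBrackets{f(x)-f(\phi(x))}$, the quantity that \Cref{thm:lce-hardness} says cannot be driven below $1/2$ with $O(m)=o(2^d/d)$ value/gradient queries.

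With that sign fixed, your argument coincides with the paper's proof. The only cosmetic difference is the treatment of large $T$: the paper keeps playing $f_\phi$ and invokes the fact that, for the deviation to $v^*$, each round contributes $f_\phi(x^t)-f_\phi(v^*)\ge 0$, so $\regproj^T$ is non-decreasing in $T$; your zero-padding after round $m=\min\{T,\lfloor c\cdot 2^d/d\rfloor\}$ is an equally valid (arguably cleaner) way to handle the ``for all $T$'' quantifier, since the zero losses contribute nothing to the regret and do not affect the first $m$ iterates. Your caveats about the hidden constant in the query bound and about randomized algorithms are at the same level of rigor as the paper's own treatment and need no new ideas.
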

The proofs of \Cref{thm:lce-hardness} and \Cref{coro:lower bounds for global rerget} can be found in the next two sections.

\subsection{Proof of \Cref{thm:lce-hardness}}
We will reduce the problem of finding an $\varepsilon$-approximate $\Phiproj(\delta)$-equilibrium in smooth games to finding a satisfying assignment of a boolean function, which is NP-complete.
\begin{fact}
    \label{fact:boolean SAT-lower bound}
    Given only \emph{black-box access} to a boolean formula $\phi: \{0,1\}^d \rightarrow \{0,1\}$, at least $\Omega(2^d)$ queries are needed in order to determine whether $\phi$ admits a satisfying assignment $x^*$ such that $\phi(x^*) = 1$. The term \emph{black-box access} refers to the fact that the clauses of the formula are not given, and the only way to determine whether a specific boolean assignment is satisfying is by querying the specific binary string. Moreover, the problem of finding a satisfying assignment of a general boolean function is NP-hard.
\end{fact}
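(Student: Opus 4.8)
The plan is to establish the $\Omega(2^d)$ query bound by a hiding/indistinguishability argument, and the \textsc{NP}-hardness by invoking the Cook--Levin theorem together with search-to-decision self-reducibility.

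\textbf{Query lower bound.} I would work with a family of $2^d+1$ instances: the identically-false formula $\phi_\bot$ (say $x_1 \wedge \neg x_1$), which is unsatisfiable, and, for each $a \in \{0,1\}^d$, the ``needle'' formula $\phi_a(x) = \ind\{x = a\}$, which is a conjunction of $d$ literals and is satisfiable only at $a$. For a \emph{deterministic} algorithm $\mathcal{A}$ that always answers correctly using at most $q$ black-box queries: run $\mathcal{A}$ on $\phi_\bot$; it queries some set $Q$ with $|Q|\le q$, sees the answer $0$ on every query, and must output ``unsatisfiable''. If $q < 2^d$, choose any $a \notin Q$; then, by a step-by-step induction on the query count, $\mathcal{A}$ behaves identically on $\phi_a$ (each successive query is determined by the identical history of $0$-answers, and the answer is again $0$ since $a \notin Q$), so it outputs ``unsatisfiable'' on the satisfiable instance $\phi_a$ --- a contradiction. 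Hence $q \ge 2^d$.

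\textbf{Randomized case.} To handle randomized algorithms I would apply Yao's minimax principle with the hard distribution that places probability $\tfrac12$ on $\phi_\bot$ and probability $\tfrac{1}{2^{d+1}}$ on each $\phi_a$. For any deterministic $\mathcal{A}$ making $q$ queries: if its output on $\phi_\bot$ is ``satisfiable'' it already errs with probability $\ge \tfrac12$; otherwise, the run on $\phi_\bot$ queries a set $Q$ of size $\le q$, and for every $a \notin Q$ the run on $\phi_a$ is identical and outputs the wrong answer ``unsatisfiable'', contributing error probability $\ge \tfrac12 \cdot \tfrac{2^d - q}{2^d}$. Either way the error probability is at least $\tfrac12 - \tfrac{q}{2^{d+1}}$, so achieving error below $\tfrac13$ forces $q = \Omega(2^d)$; Yao's principle then transfers the bound to randomized algorithms with constant success probability.

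\textbf{\textsc{NP}-hardness and the main difficulty.} When the formula is presented explicitly rather than as a black box, deciding whether it has a satisfying assignment is exactly \textsc{Sat}, which is \textsc{NP}-complete by the Cook--Levin theorem; finding a satisfying assignment is then \textsc{NP}-hard because a search oracle trivially solves the decision problem, and standard self-reducibility shows the decision problem solves the search problem with polynomially many oracle calls. There is no genuine obstacle here --- both halves are textbook --- and the only points that require a little care are (i) making sure $\phi_\bot$ and the needles $\phi_a$ are bona fide boolean formulas of size $\poly(d)$, so the lower bound really concerns the stated object, and (ii) choosing the weights in the hard distribution so that the resulting bound is $\Omega(2^d)$ rather than merely super-constant.
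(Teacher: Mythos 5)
Your proposal is correct. The paper states this as a known Fact and gives no proof of its own (it is the folklore needle-in-a-haystack query bound plus the NP-completeness of \textsc{Sat}), and your argument — the adversary/indistinguishability argument with the all-false formula and the $2^d$ needle formulas for deterministic algorithms, Yao's minimax principle for the randomized case, and Cook--Levin with search-to-decision reducibility for the hardness claim — is exactly the standard justification one would supply; the only remark worth adding is that the paper's downstream uses (\Cref{thm:lce-hardness}, \Cref{coro:lower bounds for global rerget}) only need the deterministic query bound, so your Yao step is a harmless strengthening rather than a necessity.
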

We revisit the construction of the hard instance in the proof of \citep[Theorem 10.4]{daskalakis2021complexity} and use its specific structures. Given black-box access to a boolean formula $\phi$ as described in \Cref{fact:boolean SAT-lower bound}, following \citep{daskalakis2021complexity}, we construct the function $f_\phi(x): [0,1]^d \rightarrow [0,1]$ as follows:
\begin{itemize}
    \item[1.] for each corner $v \in V = \{0, 1\}^d$ of the $[0, 1]^d$ hypercube, we set $f_\phi(x) = 1 - \phi(x)$.
    \item[2.] for the rest of the points $x \in [0, 1]^d / V$, we set $f_\phi(x) = \sum_{v \in V} P_v(x) \cdot f_\phi(v)$ where $P_v(x)$ are non-negative coefficients defined in \citep[Definition 8.9]{daskalakis2021complexity}.
\end{itemize}
The function $f_\phi$ satisfies the following properties: 
\begin{itemize}
    \item[1.] if $\phi$ is not satisfiable, then $f_\phi(x) = 1$ for all $x \in [0, 1]^d$ since $f_\phi(v) = 1$ for all $v \in V$; if $\phi$ has a satisfying assignment $v^*$, then $f_\phi(v^*) = 0$.
    \item[2.] $f_\phi$ is $\Theta(d^{12})$-Lipschitz and $\Theta(d^{25})$-smooth. 
    \item[3.] for any point $x \in [0,1]^d$, the set $V(x): = \{ v \in V: P_v(x) \ne 0\}$ has cardinality at most $d+1$ while $\sum_{v \in V}P_v(x) = 1$; any value / gradient query of $f_\phi$ can be simulated by $d+1$ queries on $\phi$. 
\end{itemize}

In the case there exists a satisfying argument $v^*$, then $f_\phi(v^*) = 0$. Define the deviation $e$ so that $e[i] = 1$ if $v^*[i] = 0$ and $e[i] = -1$ if $v^*[i] = 1$. It is clear that $\InNorms{e} = \sqrt{d} = \delta$. By properties of projection on $[0,1]^d$, for any $x \in [0,1]^d$, we have $\Pi_{[0,1]^n}[x - v] = v^*$. Then any $\varepsilon$-approximate $\Phiproj(\delta)$-equilibrium $\sigma$ must include some $x^* \in \X$ with $f_\phi(x^*) < 1$ in the support, since $\varepsilon < 1$. In case there exists an algorithm $\+A$ that computes an $\varepsilon$-approximate $\Phiproj(\delta)$-equilibrium, $\+A$ must have queried some $x^*$ with $f_\phi(x^*) < 1$. Since $f_\phi(x^*) = \sum_{v \in V(x^*)} P_v(x^*) f_\phi(v) < 1$, there exists $\hat{v} \in V(x^*)$ such that $f_{\phi}(\hat{v}) = 0$. Since $|V(x^*)| \le d+1$, it takes addition $d+1$ queries to find $\hat{v}$ with $f_\phi(\hat{v}) = 0$. By  \cref{fact:boolean SAT-lower bound} and the fact that we can simulate every value/gradient query of $f_\phi$ by $d+1$ queries on $\phi$, $\+A$ makes at least $\Omega(2^d/ d)$ value/gradient queries.  

Suppose there exists an algorithm $\+B$ that outputs an $\varepsilon$-approximate $\Phiproj(\delta)$-equilibrium $\sigma$ in time $T(\+B)$ for $\varepsilon < 1$ and $\delta = \sqrt{d}$. We construct another algorithm $\+C$ for SAT that terminates in time $T(\+B) \cdot \poly(d)$.   $\+C$: (1) given a boolean formula $\phi$, construct $f_\phi$ as described above; (2) run $\+B$ and get output $\sigma$ (3) check the support of $\sigma$  to find $v \in \{0,1\}^d$ such that $f_\phi(v) = 0$; (3) if finds $v \in \{0,1\}^d$ such that $f_\phi(v) = 0$, then $\phi$ is satisfiable, otherwise $\phi$ is not satisfiable. Since we can evaluate $f_\phi$ and $\nabla f_\phi$ in $\poly(d)$ time and the support of $\sigma$ is smaller than $T(\+B)$, the algorithm $\+C$ terminates in time $O(T(\+B) \cdot \poly(d))$. The above gives a polynomial time reduction from SAT to finding an $\varepsilon$-approximate $\Phiproj(\delta)$-equilibrium and proves the NP-hardness of the latter problem.

\subsection{Proof of \Cref{coro:lower bounds for global rerget}}
Let $\phi:\{0,1\}^d \rightarrow \{0,1\}$ be a boolean formula and define $f_\phi: [0,1]^d \rightarrow [0,1]$ the same as that in \Cref{thm:lce-hardness}. We know $f_\phi$ is $\Theta(\poly(d))$-Lipschitz and $\Theta(\poly(d))$-smooth. 
 Now we let the adversary pick $f_\phi$ each time. For any $T \le O(2^d /d)$, in case there exists an online learning algorithm with $ \regproj^T < \frac{T}{2}$, then $\sigma := \frac{1}{T}\sum_{t=1}^T 1_{x^t}$ is an $(\frac{1}{2}, \delta)$-\lce{}. Applying \Cref{thm:lce-hardness} and the fact that in this case, $\regproj^T$ is non-decreasing with respect to $T$ concludes the proof. 
\notshow{
If $\phi$ is satisfiable with $\phi(v^*) = 1$, then any online algorithm with $\delta$-local regret smaller than $T$ implies
\begin{align*}
    T &> \sum_{t=1}^T f_\phi(x^t) -   \min_{\InNorms{v} \le \delta} \sum_{t=1}^T f_\phi(\Pi_{[0,1]^d}[x^t - v])\\
    &\ge \sum_{t=1}^T f_\phi(x^t) -    \sum_{t=1}^T f_\phi(v^*) \tag{Take $v$ such that $v[i] = -1$ if $v^*[i] = 1$ and $v[i] = 1$ if $v^*[i] = 0$} \\
    & = \sum_{t=1}^T f_\phi(x^t).
\end{align*}
Thus, at least one $x^t$ with $f_\phi(x^t) < 1$ exists. In case there exists an online learning algorithm with $\delta$-local regret $< T$ for some $T$ less than $O(2^d / d)$,  the following algorithm determines the satisfiability of $\phi$ using less than $O(2^d)$ queries:
\begin{itemize}
    \item[1.] simulate the online learning algorithm against loss function $f_\phi$ and answer each value / gradient query of $f_\phi(x^t)$ using at most $d + 1$ queries of $\phi$. 
    \item[2.] if before time $T$, the algorithm queries $x^t$ with $f_\phi(x^t) < 1$, then $\phi$ is satisfiable; otherwise, $\phi$ is not satisfiable.
\end{itemize}
Invoking \Cref{fact:boolean SAT-lower bound} and properties of the $f_\phi$ concludes the proof.}

\end{document}